\documentclass[sigconf,nonacm,pdfa]{acmart}

\newif\ifarxiv
\arxivtrue

\ifarxiv
    \newcommand{\new}[1]{#1}
    \newcommand{\apx}[1]{Appendix~\ref{#1}}
    \newcommand{\Apx}[1]{Appendix~\ref{#1}}
\else
    \newcommand{\new}[1]{#1}
    \newcommand{\apx}[1]{our technical report~\cite{DivaArxiv}}
    \newcommand{\Apx}[1]{Our technical report~\cite{DivaArxiv}}
\fi

\newcommand\vldbdoi{10.14778/3749646.3749664}
\newcommand\vldbpages{3923 - 3936}
\newcommand\vldbvolume{18}
\newcommand\vldbissue{11}
\newcommand\vldbyear{2025}
\newcommand\vldbauthors{\authors}
\newcommand\vldbtitle{\shorttitle} 
\newcommand\vldbavailabilityurl{https://github.com/n3slami/Diva.git}
\newcommand\vldbpagestyle{empty} 

\usepackage{array}
\usepackage{breqn}
\usepackage{booktabs,makecell}
\usepackage{enumitem}
\usepackage[normalem]{ulem}
\usepackage{tikz,pgfplots}
\usetikzlibrary{positioning}
\usetikzlibrary{arrows}
\usetikzlibrary{patterns}
\usetikzlibrary{shapes.geometric}
\usetikzlibrary{decorations.pathreplacing}
\usetikzlibrary{calligraphy}
\usetikzlibrary{hobby}
\usetikzlibrary{calc}

\usepackage{pifont}
\newcommand{\cmark}{\ding{51}}

\usepackage{thm-restate}
\newtheorem{theorem}{Theorem}[section]

\newtheorem{lemma}[theorem]{Lemma}
\newtheorem{claim}[theorem]{Claim}

\usepackage{colorprofiles}
\usepackage[a-2b,mathxmp]{pdfx}[2018/12/22]

\graphicspath{{./figures/}}

\usepackage{xspace}
\newcommand{\ST}{S{\nobreakdash-}Trie\xspace}
\newcommand{\DT}{D{\nobreakdash-}Trie\xspace}
\newcommand{\DTs}{{\DT}s\xspace}

\begin{document}
\title{Diva++: Dynamic Range Filtering over Hard Workloads}

\author{Navid Eslami}
\email{navideslami@cs.toronto.edu}
\affiliation{%
  \institution{University of Toronto}
  \city{Toronto}
  \country{Canada}
}

\author{Ioana O. Bercea}
\email{bercea@kth.se}
\affiliation{%
  \institution{KTH Royal Institute of Technology}
  \city{Stockholm}
  \country{Sweden}
}

\author{Niv Dayan}
\email{nivdayan@cs.toronto.edu}
\affiliation{%
  \institution{University of Toronto}
  \city{Toronto}
  \country{Canada}
}

\begin{abstract}
    Range filters are compact probabilistic data structures that answer
    approximate range emptiness queries. They are used in many domains, e.g.,
    in key-value stores, to quickly rule out the existence of keys in a given
    query range and avoid searching for them in storage. However, all existing
    range filters exhibit at least one of three shortcomings: (1)~they do not
    provide any false positive rate or performance guarantees, (2)~they do not
    support variable-length keys and query ranges, and (3)~they do not allow
    dynamic updates.

    We introduce Diva, the first range filter to address all the above
    challenges simultaneously. Diva learns the dataset's distribution by
    sampling keys and storing them in a cache-efficient trie. It compresses
    keys in-between samples by removing their longest common prefix and
    truncating their suffixes while leaving enough bits in the middle (i.e., an
    infix) to differentiate the keys in sorted order. It stores infixes in
    constant-time dynamic data blocks, which it splits to handle insertions and
    expansions. It processes a range query by traversing the trie and checking
    for the inclusion of infixes in the target query range.

    We mathematically prove that Diva provides the best possible trade-off
    between memory and false positive rate on many common real-world data
    distributions. We extend these benefits to a wider range of real-world
    workloads by introducing Diva++, an enhanced Diva variant. Diva++ saves
    memory by removing redundancies among infixes using order-preserving
    entropy encoding. It then removes any remaining identical infixes and uses
    the freed space to store more bits of the original keys \mbox{within
    compact binary tries}.

    We compare Diva and Diva++ to all prior range filters, and show that they
    achieve a false positive rate on par with the state of the art on
    real-world datasets while supporting dynamicity and variable-length queries
    and keys.
\end{abstract}

\maketitle

\pagestyle{\vldbpagestyle}
\begingroup\small\noindent\raggedright\textbf{PVLDB Reference Format:}\\
\vldbauthors. \vldbtitle. PVLDB, \vldbvolume(\vldbissue): \vldbpages, \vldbyear.\\
\href{https://doi.org/\vldbdoi}{doi:\vldbdoi}
\endgroup
\begingroup
\renewcommand\thefootnote{}\footnote{\noindent
This work is licensed under the Creative Commons BY-NC-ND 4.0 International
License. Visit \url{https://creativecommons.org/licenses/by-nc-nd/4.0/} to view
a copy of this license. For any use beyond those covered by this license,
obtain permission by emailing \href{mailto:info@vldb.org}{info@vldb.org}.
Copyright is held by the owner/author(s). Publication rights licensed to the
VLDB Endowment. \\
\raggedright Proceedings of the VLDB Endowment, Vol. \vldbvolume, No. \vldbissue\ %
ISSN 2150-8097. \\
\href{https://doi.org/\vldbdoi}{doi:\vldbdoi} \\
}\addtocounter{footnote}{-1}\endgroup

\ifdefempty{\vldbavailabilityurl}{}{
\vspace{.3cm}
\begingroup\small\noindent\raggedright\textbf{PVLDB Artifact Availability:}\\
The source code, data, and/or other artifacts have been made available at
\url{\vldbavailabilityurl}.
\endgroup
}

\section{Introduction}~\label{sec:introduction}
\textbf{What is a Filter?}
A filter is a memory-efficient probabilistic data structure that answers
whether a query key exists in a given set. Its compactness allows it to fit in
a higher level of the memory hierarchy than the set it represents, making it
fast to query. A filter never returns a false negative but may return a false
positive with a probability called the \emph{false positive rate~(FPR)} that
depends on its memory footprint. Due to these qualities, filters are used in
many applications to avoid redundant disk reads~\cite{RocksDB} and network
hops~\cite{FiltersNetworking} when a query happens to target a nonexistent key.

\textbf{Range Filters and Applications.}
Traditional filters answer membership queries for a single key. In contrast, a
range filter accepts two keys as the end-points of a range and determines if at
least one key in the set is in-between them. As with a traditional filter, a
range filter does not return false negatives but may return false positives.
Range filters are used in many domains, especially in the context of
LSM-Trees~\cite{LSMTree}, to avoid accessing files that do not contain any keys
within a range predicate, significantly boosting
performance~\cite{AdaptiveRangeFilter,SuRF,Rosetta,bloomRF,Proteus,SNARF,Oasis,Memento}.
They also aid in preventing redundant I/Os to SQL
tables~\cite{AdaptiveRangeFilter} and B-Tree indices~\cite{Memento}, and they
have potential applications in social web analytics~\cite{YCSB} and replication
in distributed key-value stores~\cite{Rose}.

\textbf{Range Filtering Goals.}
The ideal general-purpose range filter must simultaneously support (G1)~the
lowest possible FPR under a stringent memory budget, (G2)~range queries of any
length, (G3)~variable-length keys, (G4)~dynamic modification operations such as
insertions, deletions, expansions, and contractions, and the best possible
(G5)~query and (G6)~construction performance.

\textbf{Design Contentions.}
Every existing range filter only fulfills a subset of these
goals~\cite{AdaptiveRangeFilter,SuRF,Rosetta,REncoder,REncoder_Journal,bloomRF,Proteus,SNARF,Oasis,Grafite,Memento,Aeris}.
In fact, almost none of the existing ones attain~(G3) or
(G4)~\cite{Rosetta,REncoder,REncoder_Journal,bloomRF,Proteus,SNARF,Oasis,Grafite}.
Moreover, Goswami et al. have proven an information-theoretic lower bound on
the memory footprint of range filters~\cite{Goswami}, stating that it is
impossible for any range filter to achieve~(G1) and (G2) at the same time.
However, this lower bound only holds for worst-case
\new{workloads}, meaning that it may still be possible to
achieve~(G1) and (G2) for ``common'' \new{workloads}. As such, we
pose the following research question: \emph{Is it possible to design a range
filter that simultaneously fulfills all six goals for common
\new{workloads}?} This paper presents an affirmative answer.

\textbf{Core Contribution: Diva.}
We introduce Diva, the first range filter to support
\underline{d}ynam\underline{i}c operations, \underline{va}riable-length queries
and keys, and high performance, all at the same time. Diva learns the dataset's
distribution by sampling keys and storing them in a cache-efficient sample trie
(\ST). For all keys in-between two samples of the \ST, Diva removes the longest
common prefix. It also truncates their suffixes while keeping enough bits in
the middle of each key (i.e., an infix) to differentiate them in most cases,
thus achieving~(G1). At the same time, the \ST separates dense and sparse
regions of the key space. This allows for handling short range queries over
densely populated regions and long range queries over sparse regions, thus
meeting~(G2). By discretizing all keys into fixed-length infixes without the
use of hashing, Diva achieves~(G3). This discretization also allows for storing
the infixes of adjacent groups of keys within a constant time data structure
called an Infix Store, fulfilling~(G5). As Diva derives infixes without hashing
and stores them in the original sorted order of the keys, it does a single
efficient sequential pass over the dataset during construction, making it the
fastest range filter to construct. It therefore attains~(G6). Diva handles
dynamic updates by splitting Infix Stores, thereby meeting~(G4).

\begin{table}
    \centering
    \caption{Definitions of terms and symbols.}
    \vspace{0.9mm}
    \bgroup
    \def\arraystretch{1.02}
    \small
    \begin{tabular}{cc}
        \toprule
        \textbf{Symbol} & \textbf{Definition} \\
        \midrule
        $q = [q_l, q_r]$ & An inclusive query range. \\
        $\epsilon$ & Target FPR, i.e., probability of a false positive. \\
        $L$ & Average key length. \\
        $(a)_2$ & Binary value represented by $a$. \\
        \cmidrule(lr){1-2}
        \ST & The trie storing a sample of the dataset. \\
        $T$ & Number of keys between two samples. \\
        $m^i_{\text{shared}}$ & \makecell[c]{Length of the longest common prefix of \\[-2pt] the $i$-th and $(i+1)$-th samples, in bits.} \\
        $m_{\text{infix}}$ & Length of the infixes. \\
        $m^i_{\text{redundant}}$ & \makecell[c]{Number of redundant bits corresponding \\[-2pt] to the $i$-th and $(i+1)$-th samples.} \\
        \DT & A binary trie differentiating keys with identical infixes. \\
        $\alpha$ & Load factor of the Infix Stores. \\
        $x_q$ & Quotient of the infix $x$. \\
        $x_r$ & Remainder of the infix $x$. \\
        \bottomrule
    \end{tabular}
    \egroup
    \label{tab:term_and_symbol_definitions}
\end{table}

\new{
As described above, Diva learns many data distributions common in practice
using its \ST. Nevertheless, we observe that the string data present in certain
applications (e.g., emails or URLs) follow ``jagged'' distributions that cannot
be learned with the \ST alone. Such distributions introduce redundancies among
Diva's infixes, making them less likely to differentiate the keys from a range
query's endpoints and increasing the FPR.
}

\new{
To support such
distributions, we introduce Diva++, an enhanced variant of Diva. Diva++
achieves this by eliminating redundancies among infixes using two complementary
techniques: (1)~It applies order-preserving entropy encoding to the input keys.
Doing so removes bits common among adjacent keys and ``flattens'' the data
distribution, thus allowing more of each key to be included in its infix and
increasing the likelihood of infixes being differentiable. (2)~For infixes that
remain identical, Diva++ avoids storing redundant copies of them and instead
uses their slots to fully differentiate their keys by encoding more of their
truncated, less significant bits. In particular, it achieves this by storing a
binary data trie (\DT) that encodes just enough bits from each key to
\mbox{differentiate it from the rest}.
}

\textbf{Additional Contributions.}
\begin{itemize}
    \item We quantify the necessary conditions on the data distribution for
        Diva \new{and Diva++} to achieve a low FPR and memory
        footprint while supporting variable-length range queries.

    \item \new{For Diva++, we design a novel \underline{bi}nary
        \underline{tr}ie encoding (\emph{BITR}), which jointly encodes both the
        topology and edge labels of a trie in optimal space{\textemdash}less
        than previous approaches.}

    \item We mathematically prove \new{the low FPR and high
        performance properties of our range filter designs.}

    \item 
        \ifarxiv
        \new{By proving several memory footprint lower bounds, we show that it
        is impossible to extend Diva (and more generally, any filter with
        ``common-case'' guarantees) to support stronger FPR guarantees at the
        same memory footprint.}
        \else
        \new{By proving several memory footprint lower bounds, our technical
        report~\cite{DivaArxiv} shows that it is impossible to extend Diva (and
        more generally, any filter with ``common-case'' guarantees) to support
        stronger FPR guarantees at the same memory footprint.}
        \fi

    \item We empirically evaluate Diva \new{and Diva++} against
        other range filters in static and dynamic settings. We also conduct
        end-to-end experiments on top of WiredTiger~\cite{WiredTiger}, a
        popular B-Tree-based key-value store.
\end{itemize}

\section{Problem Analysis}~\label{sec:problem_analysis}
This section shows that no current range filter satisfies all of the six goals
outlined in Section~\ref{sec:introduction}. In-depth summaries of these filters
are presented in \cite{Memento,BeyondBloomTutorial}.

\begin{table}
    \centering
    \caption{Diva is the first range filter to simultaneously support
    variable-length queries and keys, as well as dynamicity.}
    \vspace{-3mm}
    \def\arraystretch{1.0687}
    \setlength{\tabcolsep}{5.5pt}
    \begin{tabular}{cccccc}
        \toprule

        \textbf{\footnotesize Filter} 
        & \makecell[c]{\scriptsize \textbf{Robust} \\[-4pt] \scriptsize \textbf{FPR (G1)}}
        & \makecell[c]{\scriptsize \textbf{Semi-Robust} \\[-4pt] \scriptsize \textbf{FPR (G1)}}
        & \makecell[c]{\scriptsize \textbf{Var.-Len.} \\[-4pt] \scriptsize \textbf{Queries (G2)}} 
        & \makecell[c]{\scriptsize \textbf{Var.-Len.} \\[-4pt] \scriptsize \textbf{Keys (G3)}}
        & \makecell[c]{\scriptsize \textbf{Dynamic} \\[-4pt] \scriptsize \textbf{(G4)}} \\

        \midrule

        \footnotesize \texttt{SuRF} & & & \cmark & \cmark & \\

        \footnotesize \texttt{Rosetta} & \cmark & \cmark & & & \\
        
        \footnotesize \texttt{REncoder} & & & & & \\

        \footnotesize \texttt{bloomRF} & & & & & \\

        \footnotesize \texttt{Proteus} & & & & & \\

        \footnotesize \texttt{SNARF} & & \cmark & \cmark & & \\

        \footnotesize \texttt{Oasis+} & & \cmark & \cmark & & \\

        \footnotesize \texttt{Grafite} & \cmark & \cmark & & & \\

        \footnotesize \texttt{Memento} & \cmark & \cmark & & & \cmark \\

        \footnotesize \texttt{Aeris} & \cmark & \cmark & & & \cmark \\

        \cmidrule(lr){1-6}

        \footnotesize \texttt{Diva} & & \cmark & \cmark & \cmark & \cmark \\

        \footnotesize \texttt{Diva++} & & \cmark & \cmark & \cmark & \cmark \\

        \bottomrule
    \end{tabular}
    \label{tab:method_stats}
\end{table}

\begin{table*}
    \centering
    \caption{We compare the range filters' time and space complexities assuming
    an FPR of~$\epsilon$, $N$ keys, and queries of length~$R$ with endpoint
    keys of length $L$. For SuRF and Proteus, $z$ is the number of internal
    nodes in their trie, while $m$ denotes the length of the fingerprints
    stored at the leaves. For REncoder, $k$ is its number of hash functions,
    which is roughly an~$O(\log \frac{1}{\epsilon})$ value. For SNARF, $B$ is
    the block size. For Memento, Aeris, and Diva, $\alpha$ is the load factor.
    \new{Finally, for Diva++, $t$ represents the total number of
    extra differentiating bits (over the original space allocated for the
    infixes) stored in its D-Tries. As discussed in
    Section~\ref{sec:static_bit-level_full_key_diff},
    we often found this quantity to be~0 in practice.} The operation costs are
    measured as the expected number of cache misses. The construction analysis
    assumes that the filters are constructed on sorted keys. The range query
    columns focus on empty and non-empty range queries, respectively. The
    filters marked with * do not provide mathematical bounds on their memory
    consumption. Thus, following \cite{Memento}, we give a conservative
    estimate of their memory to enable a comparison.}
    \def\arraystretch{1.1}
    \begin{tabular}{ccccccc}
        \toprule

        \small \textbf{Filter} & \small \textbf{Construction} & \small
        \textbf{Insert} & \small \textbf{Delete} & \small \textbf{Range Query
        (-)} & \small \textbf{Range Query (+)} & \small \textbf{BPK} \\

        \midrule

        \small \texttt{SuRF}~\cite{SuRF} & $O(NL)$ & - & - & $O(L)$ & $O(L)$ &
        $10 + \frac{10z}{N} + m + o(1)$ \\

        \small \texttt{Rosetta}~\cite{Rosetta} & $O(N\log {\frac{R}{\epsilon}})$
        & $O(\log {\frac{R}{\epsilon}})$ & - & $O(\log R)$ & $O(\log
        {\frac{R}{\epsilon}})$ & $1.44 \cdot \log_2 {\frac{R}{\epsilon}}$ \\

        \small \texttt{REncoder}~\cite{REncoder,REncoder_Journal} * & $O(Nk)$ &
        $k$ & - & $k$ & $k$ & $O(k + \log \frac{1}{\epsilon})$ \\

        \small \texttt{bloomRF}~\cite{bloomRF} * & $O(N (L - \log N))$ & $O(L -
        \log N)$ & - & $O(L - \log N)$ & $O(L - \log N)$ & $\approx 1.2 \cdot
        \log_2 \frac{R}{\epsilon}$ \\

        \small \texttt{Proteus}~\cite{Proteus} & $O(N (L + \log
        \frac{1}{\epsilon}))$ & - & - & $O(L)$ & $O(L + \log
        \frac{1}{\epsilon})$ & $\frac{10z}{N} + 1.44 \cdot \log_2
        \frac{1}{\epsilon}$ \\

        \small \texttt{SNARF}~\cite{SNARF} & $O(N)$ & $O(\log N+B)$ & $O(\log
        N+B)$ & $O(\log N)$ & $O(\log N)$ & $2.4 + \log_2 \frac{1}{\epsilon}$
        \\

        \small \texttt{Oasis+}~\cite{Oasis} & $O(N)$ & - & - & $O(\log N)$ &
        $O(\log N)$ & $\approx 2.4 + \log_2 \frac{1}{\epsilon}$ \\

        \small \texttt{Grafite}~\cite{Grafite} & $O(N \log N)$ & - & - & $3$ &
        $3$ & $2 + \log_2 \frac{R}{\epsilon} + o(1)$ \\

        \small \texttt{Memento}~\cite{Memento} & $O(N)$ & $O(1)$ & $O(1)$ & $1-4$
        & $1-4$ & $\frac{1}{\alpha} (3.125 + \log_2 \frac{R}{\epsilon})$ \\

        \small \texttt{Aeris}~\cite{Aeris} & $O(N)$ & $O(1)$ & $O(1)$ & $1-4$
        & $1-4$ & $\frac{1}{\alpha} (4.125 + \log_2 \frac{R}{\epsilon})$ \\

        \cmidrule(lr){1-7}

        \small \texttt{Diva} \small{[Static]} & $O(N)$ & - & - & $O(\log
        L)$ & $O(\log L)$ & $3.19 + \log_2 \frac{1}{\epsilon}$ \\

        \small \texttt{Diva} \small{[Dynamic]} & $O(N)$ & $O(\log L)$ &
        $O(\log L)$ & $O(\log L)$ & $O(\log L)$ & $1.19 + \frac{1}{\alpha^2} (2
        + \log_2 \frac{1}{\epsilon})$ \\

        \small \new{\texttt{Diva++} [Static]} & $O(N)$ & - & - & $O(\log
        L + t/N)$ & $O(\log L + t/N)$ & $3.19 + t/N + \log_2 \frac{1}{\epsilon}$ \\

        \small \new{\texttt{Diva++} [Dynamic]} & $O(N)$ & $O(\log L + t/N)$ &
        $O(\log L + t/N)$ & $O(\log L + t/N)$ & $O(\log L + t/N)$ & \small $1.19 + t/N
        + \frac{1}{\alpha^2} (2 + \log_2 \frac{1}{\epsilon})$ \\

        \bottomrule \\
    \end{tabular}
    \label{tab:method_complexities}
\end{table*}

\textbf{Range Filtering Definitions.}
A range filter represents a set $S$ of keys coming from a universe of size~$u$.
Given a range query of the form $q=[q_l,q_r]$, the filter checks if the range
is empty by answering whether $q \cap S = \emptyset$ with an FPR of at
most~$\epsilon$, where $0 < \epsilon < 1$. The top three rows of
Table~\ref{tab:term_and_symbol_definitions} summarize terms describing the
range filtering problem throughout the paper.

\textbf{Memory Lower Bound.}
A range filter is \emph{robust} if it guarantees an FPR of at most~$\epsilon$
for any dataset \new{and workload}. A non-robust range filter often
drops all information about the keys' lower-order bits. This can lead to a high
FPR when range queries predicate over these lower-order bits. It is known that
any robust range filter supporting queries of length up to $R$ must use at
least $\log_2 \frac{R}{\epsilon} - O(1)$ \emph{bits per key
(BPK)}~\cite{Goswami}, meaning that answering longer range queries requires
more memory. Intuitively, this is because a robust range filter supporting
longer queries must carry more information about which areas of the key space
are empty.

\textbf{Robustly Achieving (G1), (G2), and (G3) is Impossible.}
In most applications, filters are allotted a stringent memory budget
of~8-16~BPK to fit in memory (with a higher budget, one may as well store the
full keys). If we rearrange the above lower bound in terms of range query
length $R$ and plug in 16~BPK as the memory budget and $\epsilon=0.01$ as the
target FPR, we find that a robust range filter can only answer range queries of
length at most~512. Such short query lengths limit the applicability of the
filter. 

The lower bound also implies that a robust range filter cannot support
variable-length keys. With variable-length keys, there can be infinitely many
possible keys within a range, meaning~$R=\infty$. Plugging~$R=\infty$ into the
lower bound, we find that infinite memory is needed to support variable-length
keys. In sum, it is impossible to attain all of~(G1), (G2), and (G3) with a
robust range filter.

Indeed, none of the existing robust range
filters~\cite{Rosetta,Grafite,Memento,Aeris} support variable-length
queries~(G2) or keys~(G3), as they assume range query lengths bounded by~$R$.
Rosetta~\cite{Rosetta} stores prefixes of keys based on length in a hierarchy
of Bloom filters of depth~$\left\lceil \log_2 R \right\rceil$, with lower
levels storing longer prefixes. Grafite~\cite{Grafite} encodes~$N$ integer keys
in a bitmap of size~$\frac{NR}{\epsilon}$ by mapping each to a bit using a
locality-preserving hash function and setting it to~1. Memento
filter~\cite{Memento} and Aeris filter~\cite{Aeris} split each key into a
prefix and a~$\left\lceil \log_2 R \right\rceil$-bit suffix, hashing each
prefix to compute a fingerprint key to a compact hash table and storing its
corresponding suffix in it. In general, issuing longer range queries forces
each of these filters to issue more Bloom filter probes (in Rosetta), check
more bits (in Grafite), or check more fingerprints (in Memento filter and Aeris
filter). If the range query length exceeds the intended value of~$R$ used to
construct these filters, the FPR grows beyond~$\epsilon$ and approaches~$1$.

\textbf{Semi-Robust FPR Guarantee.}
This paper observes that, despite the aforementioned impossibility, one can
achieve~(G1), (G2), and (G3) by providing a semi-robust FPR guarantee:
\emph{Any query\footnote{\new{To reflect how filters are used in
real-world systems, this query is chosen before the dataset is revealed to the
user. Mathematically, this is equivalent to the query being statistically
independent of the dataset.}} must be answered with an FPR of at
most~$\epsilon$ when the dataset comes from a ``well-behaved'' distribution.}
Intuitively, a well-behaved distribution is one for which the cumulative
distribution function is smooth. This property is commonly satisfied in
practice since input datasets typically follow well-known, smooth distributions
(e.g., \new{uniform, normal, multi-modal, Zipfian, power law}). We
formally define well-behaved data distributions and show that Diva provides the
above FPR guarantee for them in
Section~\ref{sec:theoretical_analysis_and_results}.

\new{
Interestingly, SuRF~\cite{SuRF} and the learning-augmented
filters~\cite{LearnedIndexes,DaisyBloomFilters} SNARF~\cite{SNARF} and
Oasis+~\cite{Oasis} offer a semi-robust FPR guarantee for variable-length
queries~(G2), though their publications do not formalize
it.\footnote{\new{One can adapt the proof of Diva's semi-robustness
in Section~\ref{sec:theoretical_analysis_and_results}
to these filters.}} SuRF achieves this by storing the dataset's keys within a
compact trie, truncating each key as much as possible while ensuring it is
associated with a unique leaf node. Range queries search the trie for a key
falling within the range. Using this trie allows SuRF to also support
variable-length keys and meet~(G3).
}

\new{
However, SuRF's memory footprint is suboptimal, violating~(G1). This is because
SuRF's trie stores a full prefix of each key explicitly. Traversing this trie
to answer a range query is also slow~(G5), since it incurs one cache miss per
level. Moreover, the trie's tightly packed, pointer-less encoding precludes
efficient dynamic updates~(G4). 
}

\new{
The learning-augmented filters SNARF~\cite{SNARF} and Oasis+~\cite{Oasis} fit a
linear spline model to the keys' CDF, use the model to map each key to a bit in
a large bitmap, and set that bit to~1. These filters handle variable-length
queries by exploiting the mapping's monotonicity: the query maps to a
contiguous bit range, which is scanned for a~1. Unlike SuRF, these filters do
not force full differentiation of the keys, as multiple keys can be mapped to
the same bit within the bitmap. Thus, by compactly encoding the bitmap, these
filters bound their memory foorprint while providing a semi-robust FPR
guarantee, thereby meeting~(G1). 
}

Nevertheless, SNARF and Oasis+ do not meet~(G3), (G4), and (G5). They assume
fixed-length keys when learning the dataset's distribution, not attaining~(G3).
Although SNARF provides insertion and deletion APIs, they are prohibitively
slow. This is because SNARF (and Oasis+) encode their bitmaps using the
Elias-Fano scheme~\cite{EliasFanoElias,EliasFanoFano} and partition them into
blocks of~$B$ entries to save space and speed up decoding. SNARF's insertion
and deletion APIs are slow since they rewrite an entire block and change its
size, potentially increasing it to~$O(N)$ over time. Oasis+ disallows dynamic
operations since it prunes away empty regions of the key space to improve its
FPR. Thus, both do not achieve~(G4). They are also slow to query due to using
binary search, thereby not meeting~(G5).

\textbf{Challenges of Attaining Dynamicity (G4).}
Range filters exhibit an intrinsic contention between support for fast dynamic
operations and memory efficiency. Many existing range filters optimize for
memory by operating as Bloom filters in their core, hashing keys into a bitmap
and setting bits from 0s to
1s~\cite{Rosetta,REncoder,REncoder_Journal,bloomRF,Proteus}. As with standard
Bloom filters, such filters cannot support deletes (by resetting a bit back
to~0) or expansions (by remapping the~1s to a larger bitmap) without
introducing false negatives~\cite{Memento}.

Other range filters utilize compact encoding schemes (e.g., succinct
tries~\cite{SuRF} or Elias-Fano~\cite{SNARF,Oasis,Grafite}). Such formats are
difficult to update, as they tightly pack their data to avoid storing pointers
and offsets. This necessitates changing their entire representation in memory
to make room for new insertions.

As of today, Memento filter~\cite{Memento} and Aeris filter~\cite{Aeris} are
the only filters that support insertions and deletions in expected constant
time. However, since they doubles in size to expand, they wastes as much as
50\% of their capacity.

Diva overcomes this contention between memory efficiency and dynamicity by
establishing a one-to-one mapping between keys and the metadata it stores to
facilitate deletions, similarly to Memento filter and other expandable
filters~\cite{TaffyFilters,InfiniFilter,AlephFilter,Memento,Aeris}.
Furthermore, It slightly overprovisions memory to absorb insertions and avoids
wasting memory by expanding in small increments.

\textbf{Contention between Supporting Variable-Length Queries (G2) and Query Speed (G5).}
It is an open question whether achieving fast operations while supporting
variable-length range queries is possible. Grafite~\cite{Grafite}, Memento
filter~\cite{Memento}, and Aeris filter~\cite{Aeris} are the only filters that
provide constant time queries. They do so by localizing partitions of size~$R$
(i.e., the maximum range query length) of the key space to the same memory
region. It is unclear how to achieve a similar localization with
variable-length range queries. As such, these three filters do not
fulfill~(G2). All other filters that attain~(G1) and (G2), i.e.,
SNARF~\cite{SNARF} and Oasis+~\cite{Oasis}, use predecessor search to handle
queries, which requires super-constant
time~\cite{PredecessorLowerBound1,PredecessorLowerBound2}. In practice, these
filters use binary search to compute predecessors. Diva alleviates this
contention by employing a $y$-Fast trie~\cite{xyTries}, leading to faster
predecessor searches.

\textbf{Construction Speed (G6).}
Existing range filters extensively use
hashing~\cite{Rosetta,REncoder,REncoder_Journal,bloomRF,Proteus,Grafite,Memento,Aeris},
floating point operations~\cite{SNARF,Oasis}, or trie
traversal~\cite{SuRF,Proteus} during construction. Such operations lead to high
CPU and cache miss costs. In contrast, Diva makes limited use of hashing,
avoids floating point operations, and has good cache locality, attaining the
best construction speed and~(G6).

\textbf{Other Range Filters.}
Proteus~\cite{Proteus} employs a SuRF~\cite{SuRF} instance with a tunable
height and a Bloom filter storing key prefixes of a tunable length. It co-tunes
these structures based on a sample of the query workload. While it performs
well when the workload remains the same, it does not bound the FPR in the face
of workload shifts. As it also inherits the limitations of SuRF and Bloom
filters discussed earlier, it does not address any of the goals.

REncoder~\cite{REncoder,REncoder_Journal} and bloomRF~\cite{bloomRF} are
variants of Rosetta~\cite{Rosetta} that improve its speed by co-locating bits
representing similar prefixes in a single bitmap. However, in doing so, they
assign the same FPR to all levels in the hierarchy, breaking the robust FPR
guarantee. Since they inherit Rosetta's other limitations, they do not attain
the other goals either.

\textbf{Summary.}
Table~\ref{tab:method_stats} summarizes the goals each range filter attains,
and Table~\ref{tab:method_complexities} outlines their operation costs and
memory footprint. They demonstrate that no range filter achieves all of~(G1),
(G2), (G3), and (G4). They also highlight the question of how fast a range
filter can be while meeting these goals, i.e., (G5) and (G6).

\begin{figure}
    \centering
    \begin{tikzpicture}
        \def\circlediam{0.6mm}
        \def\divwidth{1}
        \def\storesep{0.2}
        \def\storeh{0.2}

        \fill (0.0*\divwidth,0.0) circle (\circlediam);
        \draw plot[hobby] coordinates { (0.0*\divwidth,0.0) (-0.45*\divwidth,-0.3) (-2.6*\divwidth,-1) (-3.0*\divwidth,-1.5) };
        \fill (-2.25*\divwidth,-0.8) circle (\circlediam);
        \draw plot[hobby] coordinates { (-2.25*\divwidth,-0.8) (-2.05*\divwidth,-1.2) (-2.0*\divwidth,-1.5) };
        \fill (-0.9*\divwidth,-0.47) circle (\circlediam);
        \draw plot[hobby] coordinates { (-0.9*\divwidth,-0.47) (-0.95*\divwidth,-0.8) (-1.0*\divwidth,-1.5) };
        \draw plot[hobby] coordinates { (0.0*\divwidth,0.0) (0.45*\divwidth,-0.3) (2.6*\divwidth,-1) (3.0*\divwidth,-1.5) };
        \draw plot[hobby] coordinates { (0.0*\divwidth,0.0) (0.0*\divwidth,-1.5) };
        \fill (1.5*\divwidth,-0.6) circle (\circlediam);
        \draw plot[hobby] coordinates { (1.5*\divwidth,-0.6) (1.25*\divwidth,-0.9) (1.0*\divwidth,-1.5) };
        \draw plot[hobby] coordinates { (1.2*\divwidth,-1.0) (1.75*\divwidth,-1.2) (2.0*\divwidth,-1.5) };
        \foreach \x in {-3, ..., 2} {
            \draw[fill=gray!20] (\x * \divwidth + 0.5 * \storesep, -1.5 - 0.5 * \storeh) rectangle (\x * \divwidth + \divwidth - 0.5 * \storesep, -1.5 + 0.5 * \storeh);
        }
        \draw plot[hobby] coordinates { (1.2*\divwidth,-1.0) (1.3*\divwidth,-1.2) (1.5*\divwidth,-1.5) };
        \node[inner sep=0pt] at (1.5*\divwidth,-1.65-0.5*\storeh) {\small $k$};
        \node[inner sep=1pt] (ub) at (4.0*\divwidth,-1.95) {\smaller Successor};
        \draw[-stealth] (2.0*\divwidth,-1.5-0.5*\storeh) |- (ub.west);
        \node[inner sep=1pt,below right=1pt and -34.25pt of ub] (lb) {\smaller Predecessor};
        \draw[-stealth] (1.0*\divwidth,-1.5-0.5*\storeh) |- (lb.west);
        \fill (1.2*\divwidth,-1.0) circle (\circlediam);

        \node[inner sep=1pt] (sample_trie_label) at (1.75*\divwidth,-0.13) {\small \ST};
        \draw[-stealth] plot[hobby] coordinates { (0.7*\divwidth,-0.35) (0.85*\divwidth,-0.25) (sample_trie_label.west) };
        \draw[decorate,decoration={brace,raise=1pt,amplitude=2pt}] (3.05*\divwidth,-1.5+0.75*\storeh) -- (3.05*\divwidth,-1.5-0.75*\storeh) 
            node[pos=0.5,right=3pt,inner sep=1pt] (infix_store_label) {\small Infix Stores};

        \node[inner sep=0pt] at (-3.0*\divwidth,-1.75-0.5*\storeh) {\small $0$};
        \node[inner sep=0pt,align=center] at (-2.0*\divwidth,-2.02-0.5*\storeh) {\small $T$ \\[-4pt] \small $=$ \\[-4pt] \small 1024};
        \node[inner sep=0pt,align=center] at (-1.0*\divwidth,-2.02-0.5*\storeh) {\small $2T$ \\[-4pt] \small $=$ \\[-4pt] \small 2048};
        \node[inner sep=0pt] at (0.0*\divwidth,-1.75-0.5*\storeh) {\small $\dots$};
    \end{tikzpicture}
    \caption{Diva learns the dataset's distribution by storing every $T$-th key
    as a sample in the S-Trie. It manages all keys between two adjacent samples in
    an Infix Store.}
    \label{fig:diva_overview}
\end{figure}

\section{Diva}~\label{sec:diva_static} 
We present Diva: the first range filter to simultaneously support (G1)~low FPR
and memory consumption, (G2)~arbitrarily sized range queries,
(G3)~variable-length keys, (G4)~dynamic updates, deletes, and resizability, and
high (G5)~query and (G6)~construction performance. Diva picks every $T$-th key
in the ordered key set as samples and stores them in a cache-efficient trie,
dubbed the \emph{\ST}, to approximate the data distribution. It inserts all
keys between two samples into a compact data structure called an \emph{Infix
Store}, as shown in
Figure~\ref{fig:diva_overview}.
For all keys in each Infix Store, Diva removes the longest common prefix. It
also homogenizes the keys' lengths by truncating a suffix from each while
leaving enough bits in the middle (i.e., an infix) to differentiate the keys
and guarantee a given FPR. \new{This is a \emph{semi-robust} FPR
guarantee, meaning that it holds when the dataset follows a ``well-behaved''
(i.e., smooth) distribution, as proven in
Section~\ref{sec:theoretical_analysis_and_results}.}
Figure~\ref{fig:diva_intro}
illustrates the truncation and mapping of keys into an Infix Store. To answer a
range query, Diva checks for the absence of overlapping keys and infixes across
the \ST and the relevant Infix Stores, as discussed in
Section~\ref{sec:diva_queries}. We describe
a static version of Diva in
Sections~\ref{sec:diva_static_sampling_keys_and_deriving_infixes}
to \ref{sec:diva_static_construction} and
generalize it to dynamic sets in Section~\ref{sec:diva_dynamic}.
\new{Additionally, Section~\ref{sec:diva++}
presents Diva++, a variant of Diva enhanced to handle datasets from ``jagged''
distributions as well.} The bottom part of
Table~\ref{tab:term_and_symbol_definitions}
summarizes the terms used throughout the paper.

\subsection{Sampling Keys and Deriving Infixes}\label{sec:diva_static_sampling_keys_and_deriving_infixes}
The static variant of Diva requires the input to arrive in sorted order. This
requirement is commonly satisfied in practice. For instance, many database
tables and indices are structured as sorted column
files~\cite{BigQuery,Cassandra,Druid,MariaDB,Snowflake},
B-Trees~\cite{BTree,B+Tree}, or LSM-Trees~\cite{RocksDB,SplinterDB,Chucky}. If
the input is unsorted, users must sort it before construction.

\begin{figure}
    \centering
    \begin{tikzpicture}
        \def\commonprefixwidth{1.0}
        \def\infixiw{0.6}
        \def\infixew{0.9}
        \def\keywidth{0.25}
        \def\keysep{0.45}
        \def\diffbitwidth{0.3}
        \def\slotw{0.5}
        \def\storeh{0.30}
        \def\storeoffsetx{2.6}
        \def\storeoffsety{0.1}
        \def\slotcnt{5}

        \node[inner sep=0pt,align=center] (sample_1) at (-\commonprefixwidth-0.8,0.125) {\small Predecessor};
        \draw[fill=gray!20] (-\commonprefixwidth,0.0) rectangle (0.0,\keywidth);
        \node[inner sep=0pt] (diff_1) at (0.5*\diffbitwidth,0.5*\keywidth) {0};
        \draw (\diffbitwidth,0.0) rectangle (\infixiw+\infixew,\keywidth);
        \draw[pattern=north east lines,pattern color=gray!50] (\infixiw+\infixew,0.0) rectangle (\infixiw+\infixew+0.5,\keywidth);

        \node[inner sep=0pt,align=center] (sample_2) at (-\commonprefixwidth-0.8,-5*\keysep+0.5*\keywidth) {\small Successor};
        \draw[fill=gray!20] (-\commonprefixwidth,-5*\keysep) rectangle (0.0,-5*\keysep+\keywidth);
        \node[inner sep=0pt] (diff_2) at (0.5*\diffbitwidth,-5*\keysep+0.5*\keywidth) {1};
        \draw (\diffbitwidth,-5*\keysep) rectangle (\infixiw+\infixew,-5*\keysep+\keywidth);
        \draw[pattern=north east lines,pattern color=gray!50] (\infixiw+\infixew,-5*\keysep) rectangle (\infixiw+\infixew+0.8,-5*\keysep+\keywidth);

        \node[inner sep=0pt] (k1) at (-\commonprefixwidth-0.2,-1*\keysep+0.5*\keywidth) {\small $a$};
        \draw[fill=gray!20] (-\commonprefixwidth,-1*\keysep) rectangle (0.0,-1*\keysep+\keywidth);
        \draw (0.0,-1*\keysep) rectangle (\infixiw,-1*\keysep+\keywidth);
        \draw (\infixiw,-1*\keysep) rectangle (\infixiw+\infixew,-1*\keysep+\keywidth);
        \draw[pattern=north east lines,pattern color=gray!50] (\infixiw+\infixew,-1*\keysep) rectangle (\infixiw+\infixew+0.2,-1*\keysep+\keywidth);
        \node[inner sep=0] (a_q) at (0.5*\infixiw,-1*\keysep+0.5*\keywidth) {\tiny $a_q$};
        \node[inner sep=0,right=0.5*\infixiw+0.5*\infixew of a_q.center,anchor=center] (a_r) {\tiny $a_r$};

        \node[inner sep=0pt] (k2) at (-\commonprefixwidth-0.2,-2*\keysep+0.5*\keywidth) {\small $b$};
        \draw[fill=gray!20] (-\commonprefixwidth,-2*\keysep) rectangle (0.0,-2*\keysep+\keywidth);
        \draw (0.0,-2*\keysep) rectangle (\infixiw,-2*\keysep+\keywidth);
        \draw (\infixiw,-2*\keysep) rectangle (\infixiw+\infixew,-2*\keysep+\keywidth);
        \draw[pattern=north east lines,pattern color=gray!50] (\infixiw+\infixew,-2*\keysep) rectangle (\infixiw+\infixew+0.5,-2*\keysep+\keywidth);
        \node[inner sep=0] (b_q) at (0.3,-2*\keysep+0.5*\keywidth) {\tiny $b_q$};
        \node[inner sep=0,right=0.5*\infixiw+0.5*\infixew of b_q.center,anchor=center] (b_r) {\tiny $b_r$};

        \node[inner sep=0pt] (k3) at (-\commonprefixwidth-0.2,-4*\keysep+0.5*\keywidth) {\small $z$};
        \draw[fill=gray!20] (-\commonprefixwidth,-4*\keysep) rectangle (0.0,-4*\keysep+\keywidth);
        \draw (0.0,-4*\keysep) rectangle (\infixiw,-4*\keysep+\keywidth);
        \draw (\infixiw,-4*\keysep) rectangle (\infixiw+\infixew,-4*\keysep+\keywidth);
        \draw[pattern=north east lines,pattern color=gray!50] (\infixiw+\infixew,-4*\keysep) rectangle (\infixiw+\infixew+0.35,-4*\keysep+\keywidth);
        \node[inner sep=0] (z_q) at (0.3,-4*\keysep+0.5*\keywidth) {\tiny $z_q$};
        \node[inner sep=0,right=0.5*\infixiw+0.5*\infixew of z_q.center,anchor=center] (z_r) {\tiny $z_r$};

        \node[inner sep=0] at (-1.2,-3*\keysep+0.8*\keywidth) {\footnotesize $\vdots$};
        \draw[decorate,decoration={brace,mirror,raise=1pt,amplitude=2pt}] ($(k1.north west)+(-2pt,0pt)$) -- ($(k3.south west)+(-2pt,0pt)$)
            node[pos=0.5,left=4pt,inner sep=1pt,align=center] (shared_label) {\small Keys};

        \draw[decorate,decoration={brace,raise=1pt,amplitude=2pt}] (-\commonprefixwidth,\keywidth) -- (0.0,\keywidth) 
            node[pos=0.5,above=4pt,inner sep=1pt,align=center] (shared_label) {\small Longest Common \\[-4pt] \small Prefix};
        \draw[decorate,decoration={brace,raise=1pt,mirror,amplitude=2pt}] (0.0,-5*\keysep) -- (\infixiw+\infixew,-5*\keysep) 
            node[pos=0.5,below=4pt,inner sep=1pt] (reduced_string_label) {\small Differentiator};
        \draw[dotted] (0.0,-5*\keysep) -- (0.0,0.0);
        \draw[dotted] (\infixiw+\infixew,-5*\keysep) -- (\infixiw+\infixew,0.0);
        \draw[decorate,decoration={brace,raise=1pt,amplitude=2pt}] (\infixiw+\infixew,\keywidth) -- (\infixiw+\infixew+0.8,\keywidth) 
            node[pos=0.5,above=4pt,inner sep=1pt,align=center] (reduced_string_label) {\small Truncated \\[-4pt] \small Suffix};
        \draw[dotted] (\infixiw+\infixew+0.8,-5*\keysep) -- (\infixiw+\infixew+0.8,\keywidth);

        \draw (\storeoffsetx,\storeoffsety) rectangle (\storeoffsetx+\slotcnt*\slotw,\storeoffsety+\storeh);
        \foreach \x in {1, ..., \slotcnt} {
            \draw (\storeoffsetx+\x*\slotw,\storeoffsety) -- (\storeoffsetx+\x*\slotw,\storeoffsety+\storeh);
        }
        \node[inner sep=0] (stored_a_r) at (\storeoffsetx+0*\slotw+0.5*\slotw,\storeoffsety+0.5*\storeh) {\small $a_r$};
        \node[inner sep=0] (stored_b_r) at (\storeoffsetx+1*\slotw+0.5*\slotw,\storeoffsety+0.5*\storeh) {\small $b_r$};
        \node[inner sep=0] (dots) at (\storeoffsetx+2*\slotw+0.5*\slotw,\storeoffsety+0.5*\storeh) {\small $\dots$};
        \node[inner sep=0] (stored_z_r) at (\storeoffsetx+3*\slotw+0.5*\slotw,\storeoffsety+0.5*\storeh) {\small $z_r$};
        \node[inner sep=0,align=center] (infix_store_label) at (\storeoffsetx+5.8*\slotw,\storeoffsety+0.5*\storeh) {\small Infix \\[-4pt] \small Store};

        \draw ($(a_q.north) + (0.0,0.05)$) -- ($(a_q.north) + (0.0,0.15)$);
        \draw[-stealth] ($(a_q.north) + (0.0,0.15)$) -| ($(stored_a_r.south) + (0.0,-0.06)$);
        \draw ($(b_q.north) + (0.0,0.01)$) -- ($(b_q.north) + (0.0,0.11)$);
        \draw[-stealth] ($(b_q.north) + (0.0,0.11)$) -| ($(stored_b_r.south) + (0.0,-0.03)$);
        \draw ($(z_q.north) + (0.0,0.04)$) -- ($(z_q.north) + (0.0,0.14)$);
        \draw[-stealth] ($(z_q.north) + (0.0,0.14)$) -| ($(stored_z_r.south) + (0.0,-0.06)$);

        \node[inner sep=1pt] (query) at (\storeoffsetx+3.5*\slotw,2*\storeh) {\small Query};
        \draw[-stealth] (query) to[bend right=20] ($(stored_b_r.north)+(0.0,0.05)$);
        \draw[-stealth] (query) to[bend right=20] ($(dots.north)+(0.0,0.15)$);

    \end{tikzpicture}
    \caption{For all keys in-between a pair of adjacent samples in the trie,
    Diva removes the longest common prefix and truncates the longest possible
    suffix while still ensuring that there are enough bits left to distinguish
    between the keys.}
    \label{fig:diva_intro}
\end{figure}

Diva samples every $T$-th key from the ordered key set and inserts them into
its \ST. This trie approximates the dataset's distribution. The reason is that
adjacent samples that are lexicographically close to each other correspond to
denser regions of the key space, while faraway samples correspond to sparser
regions. Thus, the \ST ``learns'' the data distribution. The smaller $T$ is,
the more accurate the approximate distribution becomes, yet the larger the
\ST's memory footprint grows. We quantify the relationship between $T$ and the
approximation accuracy in
Section~\ref{sec:theoretical_analysis_and_results}.
We have empirically found that $T=1024$ provides good all-round accuracy and
performance with little memory overhead ($\approx 1\%$ of the total memory
footprint).

\textbf{Removing the Longest Common Prefix.}
Consider a pair of consecutive samples. We refer to them as the
\emph{predecessor} and \emph{successor} of the keys between them.
Figure~\ref{fig:diva_overview}
shows an example of predecessor and successor for a key $k$. For all keys
in-between such a pair of samples, Diva removes their longest common prefix
since it can be inferred from the \ST and is thus redundant. 

Keys in the denser regions of the key space have longer common prefixes.
Therefore, removing the longest common prefixes saves more memory in these
regions. As we will see shortly, Diva exploits this and encodes the keys in
such regions at a higher resolution by also storing their lower-order bits,
losing less information. This enables accurately answering fine-grained queries
over dense regions while supporting coarse-grained queries over sparse regions.

Given the $i$-th and $(i+1)$-th samples as the predecessor and successor, we
denote the length of their longest common prefix, in bits, by
$m^i_{\text{shared}}$.
Figure~\ref{fig:shared_redundant_infix_example}{\nobreakdash-}A shows an
example where the longest common prefix of the predecessor and successor is
$m^i_{\text{shared}}=4$ bits long. If the samples have different lengths, Diva
treats the shorter one as having trailing zeros to match the length of the
longer one. 

\textbf{Deriving Infixes.}
To further curb memory footprint, Diva also truncates suffixes for all keys
between two samples. What remains of each key is called an \emph{infix} since
it is a sequence of adjacent bits in the middle of the original key. All
infixes in the filter have the same length, denoted by $m_{\text{infix}}$. Due
to the removed common prefix, infixes represent less significant bits of the
keys in dense regions of the key space and more significant bits in sparser
ones. Truncating keys into fixed-length infixes discretizes the range between
two samples while removing the variation in length of the keys. As we will see,
Diva supports range queries over these infixes by correspondingly discretizing
the query boundaries and checking for the inclusion of infixes between them.
The fixed length of the infixes allows for quickly checking for inclusion using
integer arithmetic.
\new{
Figure~\ref{fig:shared_redundant_infix_example}{\nobreakdash-}A
shows an example with keys of varying lengths and $m_{\text{infix}} = 7$. As
$m^i_{\text{shared}}=4$, Diva derives~$k$'s infix as the~7 bits following its
first~4 bits, i.e., the string~$(0101001)_2$.
}

\begin{figure}
    \centering

    \pgfdeclarelayer{background layer}
    \pgfsetlayers{background layer,main}
    \begin{tikzpicture}
        \node[inner sep=0pt] (fig_label) at (-2.5,0.25) {\textbf{A)}};
        \node[inner sep=4.85pt,align=center] (s2) at (0.0,0.0) {\small Predecessor};
        \node[inner sep=2pt,left=6pt of s2,align=center] (sample_i) {\small Sample $i$};
        \begin{scope}[name prefix=s2_]
            \node[inner sep=1pt,right=8pt of s2] (shared) {0101};
            \node[inner sep=1pt,right=4pt of shared] (diff_bit) {0};
            \node[inner sep=1pt,right=-2pt of diff_bit] (infix) {011011};
            \node[inner sep=1pt,right=4pt of infix] (rest) {$101$};
        \end{scope}

        \node[inner sep=8.35pt,below=-5.5pt of s2.south,align=center,anchor=north] (s1) {\small Successor};
        \node[inner sep=2pt,left=0pt of s1,align=center] (sample_i1) {\small Sample $i+1$};
        \begin{scope}[name prefix=s1_]
            \node[inner sep=1pt,right=8pt of s1] (shared) {0101};
            \node[inner sep=1pt,right=4pt of shared] (diff_bit) {1};
            \node[inner sep=1pt,right=-2pt of diff_bit] (infix) {110111};
            \node[inner sep=1pt,right=4pt of infix] (rest) {$0011$};
        \end{scope}

        \draw[decorate,decoration={brace,raise=1pt,amplitude=2pt}] (s2_shared.north west) -- (s2_shared.north east) 
            node[pos=0.5,above=4pt,inner sep=1pt] (n_shared_label) {\small Shared};

        \draw[decorate,decoration={brace,raise=1pt,amplitude=2pt}] (s2_diff_bit.north west) -- (s2_infix.north east) 
            node[pos=0.5,above=10pt,inner sep=1pt,align=center] (n_diff_bits_label) {\small Differentiating \\[-4pt] \small Bits};
        \draw[stealth-] (n_diff_bits_label.south) -- ($(n_diff_bits_label.south)+(0,-6pt)$);

        \node[inner sep=23.0pt,below=-7pt of s2] (k) {$k$};
        \begin{scope}[name prefix=k_]
            \node[inner sep=1pt,right=7.5pt of k] (shared) {0101};
            \node[inner sep=1pt,right=3.75pt of shared] (diff_bit) {0};
            \node[inner sep=1pt,right=-2pt of diff_bit] (infix) {101001};
            \node[inner sep=1pt,right=4pt of infix] (rest) {$11$};
        \end{scope}

        \begin{pgfonlayer}{background layer}
            \draw[white,fill=gray!20] ($(s2_diff_bit.north west) + (-1pt,1pt)$) rectangle ($(k_infix.south east) + (1pt,-1pt)$);
        \end{pgfonlayer}
        \node[inner sep=1pt,below right=6pt and -15pt of k_shared] (res_infix) {\small $k$'s infix};
        \node[inner sep=1pt,right=1pt of res_infix] {\small $= (0101001)_2 = 41$};
        \draw[-stealth] (k_infix.235) -- ($(res_infix.north east) + (0.85,0.0)$);

        \node[inner sep=1pt,right=3.5 of k_shared] { };
        \node at (4.75,0.0) {};
    \end{tikzpicture}

    \pgfdeclarelayer{background layer}
    \pgfsetlayers{background layer,main}
    \begin{tikzpicture}
        \node[inner sep=0pt] (fig_label) at (-2.5,0.25) {\textbf{B)}};
        \node[inner sep=4.85pt,align=center] (s2) at (0.0,0.0) {\small Predecessor};
        \node[inner sep=2pt,left=6pt of s2,align=center] (sample_i) {\small Sample $i$};
        \begin{scope}[name prefix=s2_]
            \node[inner sep=1pt,right=8pt of s2] (shared) {0101};
            \node[inner sep=1pt,right=4pt of shared] (diff_bit) {0};
            \node[inner sep=1pt,right=4pt of diff_bit] (redundant) {11\dots11};
            \node[inner sep=1pt,right=4pt of redundant] (infix) {011011};
            \node[inner sep=1pt,right=4pt of infix] (rest) {$101$};
        \end{scope}

        \node[inner sep=8.35pt,below=-5.5pt of s2.south,align=center,anchor=north] (s1) {\small Successor};
        \node[inner sep=2pt,left=0pt of s1,align=center] (sample_i1) {\small Sample $i+1$};
        \begin{scope}[name prefix=s1_]
            \node[inner sep=1pt,right=8pt of s1] (shared) {0101};
            \node[inner sep=1pt,right=4pt of shared] (diff_bit) {1};
            \node[inner sep=1pt,right=4pt of diff_bit] (redundant) {00\dots00};
            \node[inner sep=1pt,right=4pt of redundant] (infix) {110111};
            \node[inner sep=1pt,right=4pt of infix] (rest) {$0011$};
        \end{scope}

        \draw[decorate,decoration={brace,raise=1pt,amplitude=2pt}] (s2_shared.north west) -- (s2_shared.north east) 
            node[pos=0.5,above=4pt,inner sep=1pt] (n_shared_label) {\small Shared};
        \draw[decorate,decoration={brace,raise=1pt,amplitude=2pt}] (s2_redundant.north west) -- (s2_redundant.north east) 
            node[pos=0.5,above=4pt,inner sep=1pt] (n_shared_label) {\small Redundant};

        \node[inner sep=23.0pt,below=-7pt of s2] (k) {$k$};
        \begin{scope}[name prefix=k_]
            \node[inner sep=1pt,right=7pt of k] (shared) {0101};
            \node[inner sep=1pt,right=3.75pt of shared] (diff_bit) {0};
            \node[inner sep=1pt,right=4pt of diff_bit] (redundant) {11\dots11};
            \node[inner sep=1pt,right=4pt of redundant] (infix) {101001};
            \node[inner sep=1pt,right=4pt of infix] (rest) {$11$};
        \end{scope}

        \begin{pgfonlayer}{background layer}
            \draw[white,fill=gray!20] ($(s2_diff_bit.north west) + (-1pt,1pt)$) rectangle ($(k_diff_bit.south east) + (1pt,-1pt)$);
            \draw[white,fill=gray!20] ($(s2_infix.north west) + (-1pt,1pt)$) rectangle ($(k_infix.south east) + (1pt,-1pt)$);
        \end{pgfonlayer}
        \node[inner sep=1pt,below right=6pt and -14pt of k_shared] (res_infix) {\small $k$'s infix};
        \node[inner sep=1pt,right=1pt of res_infix] {\small $=(0101001)_2=41$};

        \draw[-stealth] (k_diff_bit.south) -- ($(res_infix.north east) + (0.425,0.0)$);
        \draw[-stealth] (k_infix.south) -- ($(res_infix.north east) + (1.25,0.0)$);
    \end{tikzpicture}
    \caption{Diva derives key $k$'s infix by removing the $m^i_{\text{shared}}$
    common prefix bits and the $m^i_{\text{redundant}}$ bits based on its
    predecessor and successor, and also truncating its suffix.}
    \label{fig:shared_redundant_infix_example}
\end{figure}

\new{
\textbf{FPR.}
We prove in Section~\ref{sec:theoretical_analysis_and_results}
that Diva achieves an FPR of~$\epsilon$ by retaining infixes of
length~$m_{\text{infix}} = \left\lceil \log_2 \frac{2T}{\epsilon} \right\rceil$
bits from each key. This guarantee is semi-robust; it holds when the dataset
follows a well-behaved (i.e., smooth) distribution. In
Section~\ref{sec:diva_static_infix_stores}, we
show how Diva encodes each infix in only~$3+\left\lceil \log_2
\frac{1}{\epsilon} \right\rceil$ bits. This allows it to provide a similar FPR
vs. memory trade-off to that of traditional point filters such as
Bloom~\cite{Bloom} and Cuckoo~\cite{CuckooFilter} filters. For example, by
spending one byte per infix, Diva achieves an FPR of~$\approx 3\%$.
}

\textbf{Redundant Bits.}
In well-behaved data distributions, infixes derived with the above method often
contain redundant bits that do not contribute to filtering. Intuitively, if the
considered predecessor and successor are still close to each other after
removing their longest common prefix, the discretized key space between them
becomes smaller than desired. As a result, infixes of keys in-between can
become less distinguishable, reducing filtering accuracy. This phenomenon
occurs when many consecutive bits after the first differentiating bit in the
predecessor and successor are all 1s and 0s, respectively. In particular, in
data distributions such as \new{normal} and Zipfian, there is at
least one such bit on average due to the randomness in the samples after their
longest common prefix. The positions within each key that correspond to these
bits contain redundant bits, as these bits do not carry useful information for
comparing the keys between the~$i$-th and $(i+1)$-th samples beyond what the
first differentiating bit provides. If not removed, these redundant bits
increase the FPR by~$2\times$ compared to the case where infixes are
well-differentiated. The predecessor and successor in
Figure~\ref{fig:shared_redundant_infix_example}{\nobreakdash-}B
represent an example.

Diva prevents the aforementioned issue by identifying the longest matching
sequence of 1s from the predecessor and 0s from the successor after their first
differentiating bit. We denote the length of this sequence for the $i$-th and
$(i+1)$-th samples by $m^i_{\text{redundant}}$. Diva removes the bits
corresponding to this sequence from each key between the samples. Diva derives
the infix of each key as the first differentiating bit concatenated with the
first~$m_{\text{infix}}-1$ bits after the redundant bits. Intuitively, each of
these bits doubles the size of the discretized space between two samples,
allowing to better differentiate the infixes in-between. The following lemma
\new{(proven in \apx{proof:total_infixes})} formalizes this intuition:
\begin{lemma}[Redundant Bits]
    After removing the $m^i_{\text{redundant}}$ bits, there are at least
    $\frac{T}{\epsilon}$ and at most $\frac{2T}{\epsilon}$ different and valid
    values that infixes of keys between two samples can take. 
    \label{lemma:total_infixes} 
\end{lemma}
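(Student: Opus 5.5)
We work with the predecessor $p$ (the $i$-th sample) and successor $s$ (the $(i+1)$-th sample), both viewed as bit strings. The shorter one is padded with trailing zeros, as in the paper's convention. We write $m_{\text{infix}} = \lceil \log_2 \frac{2T}{\epsilon} \rceil$, so that
\[
\frac{2T}{\epsilon} \le 2^{m_{\text{infix}}} < \frac{4T}{\epsilon}.
\]
A key $k$ lies between the samples, so $p \le k \le s$, and $k$ shares the first $m^i_{\text{shared}}$ bits with both. Its infix consists of the first differentiating bit $d$ followed by the $m_{\text{infix}}-1$ bits that come after the redundant run. The plan is to characterize exactly which $m_{\text{infix}}$-bit values are \emph{valid}, meaning they are attainable by some key $x$ with $p \le x \le s$. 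We then count them and show the count lies in $[\frac{T}{\epsilon}, \frac{2T}{\epsilon}]$, up to the usual slack from the ceiling.

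\textbf{Step 1: characterize valid infixes.} After the shared prefix, $p$ reads $0\,1^{r}\,a\ldots$ and $s$ reads $1\,0^{r}\,b\ldots$, with $r = m^i_{\text{redundant}}$. Here $r$ is maximal, so at position $r+1$ after the differentiating bit we do not have $p=1$ and $s=0$ simultaneously. Consider a key whose differentiating bit is $0$. Since it must be $\ge p$ and the remaining bits of $p$ in the run are all $1$, it must itself contain $1^{r}$ in those positions, which is why those bits are redundant. Its remaining $m_{\text{infix}}-1$ bits must then be $\ge p$'s corresponding bits $P$ (the truncated window of $p$). Symmetrically, a key with differentiating bit $1$ has $0^{r}$ in the run, and its window is $\le S$, the truncated window of $s$. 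So the valid infixes form the contiguous integer interval $[\,0P,\ 1S\,]$, read as $m_{\text{infix}}$-bit numbers. Their count is
\[
C = (2^{m_{\text{infix}}-1} - P) + (S + 1).
\]

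\textbf{Step 2: bound $C$.} The upper bound is immediate, since $0 \le P$ and $S \le 2^{m_{\text{infix}}-1} - 1$ give $C \le 2^{m_{\text{infix}}}$. For the lower bound we use maximality of $r$: the leading bits of the windows satisfy either $P$'s top bit $=0$ or $S$'s top bit $=1$. In the first case $P < 2^{m_{\text{infix}}-2}$, and in the second $S \ge 2^{m_{\text{infix}}-2}$. Either way,
\[
C > 2^{m_{\text{infix}}-2} + 2^{m_{\text{infix}}-2} \cdot 0 \ge 2^{m_{\text{infix}}-2}\cdot 2 = 2^{m_{\text{infix}}-1}.
\]
More carefully, $C \ge 2^{m_{\text{infix}}-1} + 1$. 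Combining with $2^{m_{\text{infix}}} \ge \frac{2T}{\epsilon}$ gives $C \ge \frac{T}{\epsilon}$.

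\textbf{Main obstacle.} The hard part is matching the upper bound $\frac{2T}{\epsilon}$ exactly. The natural count $2^{m_{\text{infix}}}$ can exceed $\frac{2T}{\epsilon}$ because of the ceiling. I would handle this in one of two ways. Either read the lemma as being stated for $\frac{2T}{\epsilon}$ a power of two, equivalently with the ceiling absorbed. Or define $m_{\text{infix}}$ so that $2^{m_{\text{infix}}} = \frac{2T}{\epsilon}$ and state the bounds as $2^{m_{\text{infix}}-1} \le C \le 2^{m_{\text{infix}}}$, which is what the argument actually proves.

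Two further points need care. First, the edge cases where a sample is shorter than the window, handled by the zero padding. Second, verifying that every value in the interval is realized by some bit string between $p$ and $s$. This holds because arbitrary suffixes may be appended, so any window strictly inside the interval is achievable, and the endpoints are achieved by $p$ and $s$ themselves.
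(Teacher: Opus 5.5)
Your Step~1 is sound, and it is a cleaner route than the paper's bit-by-bit recursion. The valid infixes are exactly the interval $[0P,1S]$, so $C=(2^{m_{\text{infix}}-1}-P)+(S+1)$.

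The gap is the lower bound in Step~2. Your chain $2^{m_{\text{infix}}-2}+2^{m_{\text{infix}}-2}\cdot 0\ge 2\cdot 2^{m_{\text{infix}}-2}$ is false. Maximality of $m^i_{\text{redundant}}$ only rules out the pattern (predecessor bit $1$, successor bit $0$) at a single position. So your case split makes only one summand large: either $2^{m_{\text{infix}}-1}-P\ge 2^{m_{\text{infix}}-2}+1$ or $S+1\ge 2^{m_{\text{infix}}-2}+1$, while the other summand is merely $\ge 1$. What you can actually conclude is $C\ge 2^{m_{\text{infix}}-2}+2$, and this is tight. Suppose that, after the shared prefix, the predecessor reads $0\,0\,1^{m_{\text{infix}}-2}\cdots$ and the successor reads $1\,0^{m_{\text{infix}}-1}\cdots$. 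Then $m^i_{\text{redundant}}=0$, $P=2^{m_{\text{infix}}-2}-1$, $S=0$, and $C=2^{m_{\text{infix}}-2}+2$. Even when $2^{m_{\text{infix}}}=2T/\epsilon$ exactly, this equals $\frac{T}{2\epsilon}+2$, which is below $\frac{T}{\epsilon}$ whenever $\frac{T}{\epsilon}>4$. So your claimed $C\ge 2^{m_{\text{infix}}-1}+1$ and your conclusion $C\ge\frac{T}{\epsilon}$ both fail. Your fallback restatement $2^{m_{\text{infix}}-1}\le C$ also fails on this example once $m_{\text{infix}}>3$. The bound $2^{m_{\text{infix}}-2}+2$ is the same one the paper's proof derives: its recursion from two-bit prefixes gives $2^k+2$ after $k=m_{\text{infix}}-2$ steps.

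So the real obstruction is the lower bound, not the ceiling. For a fixed infix length, the count ranges over $[2^{m_{\text{infix}}-2}+2,\,2^{m_{\text{infix}}}]$ across sample pairs, a spread of about a factor of four. No single global $m_{\text{infix}}$ can therefore place every Infix Store inside a window of ratio two such as $[\frac{T}{\epsilon},\frac{2T}{\epsilon}]$.

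The missing idea is the one the paper invokes at the end of its proof: $m_{\text{infix}}$ is incremented for the offending Infix Store only. Your closed form finishes this neatly. Appending one more bit to each window gives $P'=2P+b_p$ and $S'=2S+b_s$, hence $C'=2C-1+(b_s-b_p)\in\{2C-2,\,2C-1,\,2C\}$. The count therefore grows without bound and at most doubles per extra bit. For each Infix Store, take the smallest length at which $C\ge\frac{T}{\epsilon}$ (for $\frac{T}{\epsilon}>4$, so the shortest lengths do not already qualify). The count at the previous length was below $\frac{T}{\epsilon}$, so now $C<\frac{2T}{\epsilon}$. This yields both bounds at once, without any power-of-two assumption.
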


Diva is able to reconstruct the redundant bits removed from an infix. It does
so by recomputing~$m^i_{\text{redundant}}$ from its predecessor and successor
in the \ST. It adds $m^i_{\text{redundant}}$ bits between the infix's first and
second bits, each equal to the negation of its first bit. This results in the
original sequence of bits in the key since the redundant bits of an infix
always equal the negation of its first bit.

Figure~\ref{fig:shared_redundant_infix_example}{\nobreakdash-}B shows an
example derivation of a key $k$'s infix with $m_{\text{infix}} = 7$. Here, the
longest common prefix of the predecessor and successor is
$m^i_{\text{shared}}=4$ bits long. After removing the $m^i_{\text{redundant}}$
bits, Diva appends the next 6 bits in $k$ (i.e., 101001) to the first
differentiating bit (i.e., 0), yielding the infix $(0101001)_2=41$.

\textbf{Infix Uniformity.}
The samples in the \ST behave similarly to the boundaries of an equi-depth
histogram. When the dataset's distribution is well-behaved, the higher-order
bits of the samples capture the overall shape of the distribution and the
lower-order bits of the keys in-between behave like noise. Thus, keys falling
into the same bin and are almost uniformly distributed, implying that their
infixes are also uniformly distributed. We formally prove this property in
Section~\ref{sec:theoretical_analysis_and_results}.
This uniformity allows infixes to be treated as hash digests. We will use this
property to design an efficient data structure for storing the infixes
in-between a pair of samples.

\subsection{Infix Stores}\label{sec:diva_static_infix_stores}
An Infix Store is a random-access array of slots that leverages the uniformity
of the infixes in-between two samples to efficiently and succinctly store them.
We describe a static Infix Store variant and extend it to support dynamic
operations in Section~\ref{sec:diva_dynamic} by
utilizing techniques inspired by Cleary hash tables~\cite{Cleary} and the
Rank-and-Select Quotient Filter~\cite{GQF}.

\textbf{Quotienting.}
An Infix Store applies Knuth's quotienting technique~\cite{Quotienting} to its
infixes, splitting them into \emph{quotients} and \emph{remainders}.
Specifically, it takes the~$\left\lceil \log_2 \frac{2}{\epsilon} \right\rceil$
least-significant bits of an infix $x$ as its remainder $x_r$ while taking the
rest of its bits as its quotient~$x_q$.
Figure~\ref{fig:infix_store}{\nobreakdash-}A
illustrates this split for an infix~$x=(11011)_2$.

\textbf{Storing Quotients.}
Splitting each infix allows an Infix Store to succinctly encode its quotient
within a dense bitmap called the \texttt{occupieds}
bitmap.\footnote{\new{The dense encoding of the \texttt{occupieds}
bitmap allows Diva to store a large portion of each key implicitly. This is in
contrast with SuRF~\cite{SuRF}, which attains a semi-robust FPR guarantee by
explicitly storing a prefix of each key (as described in
Section~\ref{sec:problem_analysis}).}}
The~$i$-th bit in this bitmap is set to~1 if an infix with a quotient equal
to~$i$ exists and is set to~0 otherwise. For example, in
Figure~\ref{fig:infix_store}{\nobreakdash-}B,
infix~$x$ causes the bit at offset~3 (or $(011)_2$ in binary) of the
\texttt{occupieds} bitmap to be set to~1. This bitmap needs as many bits as
there are possible quotients, which varies in the range from~$\frac{T}{2}$ to
$T$ based on the predecessor and successor of the Infix Store. One can show
this by applying Lemma~\ref{lemma:total_infixes}
and removing the $\left\lceil \log_2 \frac{2}{\epsilon} \right\rceil$
least-significant bits belonging to remainders from the possible infixes to
derive the range of all possible quotients.

\begin{figure}
    \centering
    
    \pgfdeclarelayer{background layer}
    \pgfsetlayers{background layer,main}
    \begin{tikzpicture}
        \def\implicitw{20pt}
        \def\explicitw{55pt}
        \def\slotw{0.6}
        \def\sloth{0.35}
        \def\storeoffx{-1.25}

        \node[inner sep=0pt] (x) at (-1.2,0.35) {Infix $x$};
        \node[inner sep=2pt,right=16pt of x] (x_i) {$x_q=011$};
        \node[inner sep=2pt,right=1.33 of x_i] (x_e) {$x_r=011$};
        \begin{pgfonlayer}{background layer}
            \draw[fill=gray!20] ($(x_i.north west) + (-0.5*\implicitw,-0.75pt)$) rectangle ($(x_i.south east) + (0.5*\implicitw,0.75pt)$);
            \draw ($(x_e.north west) + (-0.5*\explicitw,0.1pt)$) rectangle ($(x_e.south east) + (0.5*\explicitw,-0.1pt)$);
        \end{pgfonlayer}
        \draw[decorate,decoration={brace,raise=4pt,amplitude=2pt}] ($(x_e.north west) + (-0.5*\explicitw,-0.5pt)$) -- ($(x_e.north east) + (0.5*\explicitw,-0.5pt)$)
            node[pos=0.5,above=8pt,inner sep=1pt] (remainder_size_label) {$\left\lceil \log_2 \frac{1}{\epsilon} \right\rceil$ bits};
        \node[below=0.05cm of x_i,inner sep=1pt] (quotient_label) {Quotient};
        \node[below=0.1cm of x_e,inner sep=1pt] (remainder_label) {Remainder};

        \node[inner sep=0pt] (fig_label) at (\storeoffx-2*\slotw,1.0*\sloth) {\textbf{A)}};
        \node[inner sep=0pt] at (\storeoffx+11.75*\slotw,0.4*\sloth) { };

        \draw[-stealth] (quotient_label.south) -- (\storeoffx+3*\slotw+0.5*\slotw,-2.75*\sloth);
    \end{tikzpicture}
    \vspace*{-23pt}

    \pgfdeclarelayer{background layer}
    \pgfsetlayers{background layer,main}
    \begin{tikzpicture}
        \def\slotw{0.6}
        \def\sloth{0.35}
        \def\occupiedcnt{8}
        \def\slotcnt{7}
        \def\storeoffx{-1.25}
        \def\storegap{0.1}

        \node[inner sep=0pt] (occupieds_label) at (\storeoffx-0.72,-0.5*\sloth-\storegap) {\small \texttt{occupieds}};
        \draw[fill=gray!20] (\storeoffx,-\storegap) rectangle (\storeoffx+\occupiedcnt*\slotw,-\storegap-\sloth);
        \foreach \x in {1, ..., \occupiedcnt} {
            \draw[dotted] (\storeoffx+\slotw*\x,-\storegap-\sloth) -- (\storeoffx+\slotw*\x,-\storegap);
        }
        \foreach \i in {1, ..., \occupiedcnt} {
            \def\x{\storeoffx+\i*\slotw-0.5*\slotw}
            \def\y{-0.5*\sloth-\storegap}
            \ifnum \i = 1 {
                \node[inner sep=0.5pt] (occ_0) at (\x,\y) {\small 1};
            }
            \else \ifnum \i = 2 {
                \node[inner sep=0pt] (occ_1) at (\x,\y) {\small 1};
            }
            \else \ifnum \i = 4 {
                \node[inner sep=0pt] (occ_2) at (\x,\y) {\small 1};
            }
            \else \ifnum \i = 5 {
                \node[inner sep=0pt] (occ_3) at (\x,\y) {\small 1};
            }
            \else \ifnum \i = 7 {
                \node[inner sep=0.5pt] (occ_4) at (\x,\y) {\small 1};
            }
            \else {
                \node at (\x,\y) {\small 0};
            }
            \fi \fi \fi \fi \fi
        }

        \node[inner sep=0pt] (runends_label) at (\storeoffx-0.6,-1.5*\sloth-2*\storegap) {\small \texttt{runends}};
        \draw[fill=gray!20] (\storeoffx,-\sloth-2*\storegap) rectangle (\storeoffx+\slotcnt*\slotw,-2*\sloth-2*\storegap);
        \foreach \x in {1, ..., \slotcnt} {
            \draw[dotted] (\storeoffx+\slotw*\x,-2*\sloth-2*\storegap) -- (\storeoffx+\slotw*\x,-\sloth-2*\storegap);
        }
        \foreach \i in {1, ..., \slotcnt} {
            \def\x{\storeoffx+\i*\slotw-0.5*\slotw}
            \def\y{-1.5*\sloth-2*\storegap}
            \ifnum \i = 1 {
                \node[inner sep=0.5pt] (run_0) at (\x,\y) {\small 1};
            }
            \else \ifnum \i = 3 {
                \node[inner sep=0pt] (run_1) at (\x,\y) {\small 1};
            }
            \else \ifnum \i = 5 {
                \node[inner sep=0pt] (run_2) at (\x,\y) {\small 1};
            }
            \else \ifnum \i = 6 {
                \node[inner sep=0pt] (run_3) at (\x,\y) {\small 1};
            }
            \else \ifnum \i = 7 {
                \node[inner sep=0.5pt] (run_4) at (\x,\y) {\small 1};
            }
            \else {
                \node at (\x,\y) {\small 0};
            }
            \fi \fi \fi \fi \fi
        }

        \node[inner sep=0pt] (slots_label) at (\storeoffx-0.4,-2.5*\sloth-3*\storegap) {\small Slots};
        \draw (\storeoffx,-2*\sloth-3*\storegap) rectangle (\storeoffx+\slotcnt*\slotw,-3*\sloth-3*\storegap);
        \foreach \x in {1, ..., \slotcnt} {
            \draw[dotted] (\storeoffx+\slotw*\x,-3*\sloth-3*\storegap) -- (\storeoffx+\slotw*\x,-2*\sloth-3*\storegap);
        }
        \node at (\storeoffx+0*\slotw+0.5*\slotw,-2.5*\sloth-3*\storegap) {\small 010};
        \draw[very thick] (\storeoffx+0*\slotw,-3*\sloth-3*\storegap) rectangle (\storeoffx+1*\slotw,-2*\sloth-3*\storegap);

        \node at (\storeoffx+1*\slotw+0.5*\slotw,-2.5*\sloth-3*\storegap) {\small 110};
        \node at (\storeoffx+2*\slotw+0.5*\slotw,-2.5*\sloth-3*\storegap) {\small 111};
        \draw[very thick] (\storeoffx+1*\slotw,-3*\sloth-3*\storegap) rectangle (\storeoffx+3*\slotw,-2*\sloth-3*\storegap);

        \node at (\storeoffx+3*\slotw+0.5*\slotw,-2.5*\sloth-3*\storegap) {\small 001};
        \node at (\storeoffx+4*\slotw+0.5*\slotw,-2.5*\sloth-3*\storegap) {\small 011};
        \draw[very thick] (\storeoffx+3*\slotw,-3*\sloth-3*\storegap) rectangle (\storeoffx+5*\slotw,-2*\sloth-3*\storegap);

        \node at (\storeoffx+5*\slotw+0.5*\slotw,-2.5*\sloth-3*\storegap) {\small 000};
        \draw[very thick] (\storeoffx+5*\slotw,-3*\sloth-3*\storegap) rectangle (\storeoffx+6*\slotw,-2*\sloth-3*\storegap);

        \node at (\storeoffx+6*\slotw+0.5*\slotw,-2.5*\sloth-3*\storegap) {\small 100};
        \draw[very thick] (\storeoffx+6*\slotw,-3*\sloth-3*\storegap) rectangle (\storeoffx+7*\slotw,-2*\sloth-3*\storegap);

        \foreach \i in {0, ..., \slotcnt} {
            \ifnum \i < \slotcnt {
                \node[inner sep=0pt] at (\storeoffx+\i*\slotw+0.5*\slotw,-3.5*\sloth-3*\storegap) {\small \i};
            }
            \fi
        }

        \draw[stealth-stealth] (occ_0.south) -- (run_0.north);
        \draw[stealth-stealth] (occ_1.south east) -- (run_1.north west);
        \draw[stealth-stealth] (occ_2.south east) -- (run_2.north west);
        \draw[stealth-stealth] (occ_3.south east) -- (run_3.north west);
        \draw[stealth-stealth] (occ_4.south) -- (run_4.north);

        \node[inner sep=0pt] (rank_label) at (\storeoffx+6.0*\slotw,0.25*\sloth) {\small $\text{rank}(3,$~\texttt{occupieds}$)=2$};
        \node[inner sep=0pt] (select_label) at (\storeoffx+\slotcnt*\slotw+2.5*\slotw,-2.1*\sloth) {\small $\text{select}(2,$~\texttt{runends}$)=4$};
        \draw[-stealth] ($(rank_label.east)+(0pt,0.75pt)$) -| ($(select_label.163)+(-0.5pt,0pt)$);

        \node[inner sep=0pt] (fig_label) at (\storeoffx-2*\slotw,0.4*\sloth) {\textbf{B)}};
    \end{tikzpicture}
    \vspace*{8pt}

    \pgfdeclarelayer{background layer}
    \pgfsetlayers{background layer,main}
    \begin{tikzpicture}
        \def\slotw{0.6}
        \def\sloth{0.35}
        \def\occupiedcnt{8}
        \def\slotcnt{7}
        \def\storeoffx{-1.25}
        \def\storegap{0.1}

        \node[inner sep=0pt] (occupieds_label) at (\storeoffx-0.72,-0.5*\sloth-\storegap) {\small \texttt{occupieds}};
        \draw[fill=gray!20] (\storeoffx,-\storegap) rectangle (\storeoffx+\occupiedcnt*\slotw,-\storegap-\sloth);
        \foreach \x in {1, ..., \occupiedcnt} {
            \draw[dotted] (\storeoffx+\slotw*\x,-\storegap-\sloth) -- (\storeoffx+\slotw*\x,-\storegap);
        }
        \foreach \i in {1, ..., \occupiedcnt} {
            \def\x{\storeoffx+\i*\slotw-0.5*\slotw}
            \def\y{-0.5*\sloth-\storegap}
            \ifnum \i = 1 {
                \node[inner sep=0.5pt] (occ_0) at (\x,\y) {\small 1};
            }
            \else \ifnum \i = 2 {
                \node[inner sep=0pt] (occ_1) at (\x,\y) {\small 1};
            }
            \else \ifnum \i = 4 {
                \node[inner sep=0pt] (occ_2) at (\x,\y) {\small 1};
            }
            \else \ifnum \i = 5 {
                \node[inner sep=0pt] (occ_3) at (\x,\y) {\small 1};
            }
            \else \ifnum \i = 7 {
                \node[inner sep=0.5pt] (occ_4) at (\x,\y) {\small 1};
            }
            \else {
                \node at (\x,\y) {\small 0};
            }
            \fi \fi \fi \fi \fi
        }

        \node[inner sep=0pt] (runends_label) at (\storeoffx-0.6,-1.5*\sloth-2*\storegap) {\small \texttt{runends}};
        \draw[fill=gray!20] (\storeoffx,-\sloth-2*\storegap) rectangle (\storeoffx+\slotcnt*\slotw,-2*\sloth-2*\storegap);
        \foreach \x in {1, ..., \slotcnt} {
            \draw[dotted] (\storeoffx+\slotw*\x,-2*\sloth-2*\storegap) -- (\storeoffx+\slotw*\x,-\sloth-2*\storegap);
        }
        \foreach \i in {1, ..., \slotcnt} {
            \def\x{\storeoffx+\i*\slotw-0.5*\slotw}
            \def\y{-1.5*\sloth-2*\storegap}
            \ifnum \i = 1 {
                \node[inner sep=0.5pt] (run_0) at (\x,\y) {\small 1};
            }
            \else \ifnum \i = 3 {
                \node[inner sep=0pt] (run_1) at (\x,\y) {\small 1};
            }
            \else \ifnum \i = 5 {
                \node[inner sep=0pt] (run_2) at (\x,\y) {\small 1};
            }
            \else \ifnum \i = 6 {
                \node[inner sep=0pt] (run_3) at (\x,\y) {\small 1};
            }
            \else \ifnum \i = 7 {
                \node[inner sep=0.5pt] (run_4) at (\x,\y) {\small 1};
            }
            \else {
                \node at (\x,\y) {\small 0};
            }
            \fi \fi \fi \fi \fi
        }

        \draw[-stealth] (occ_1.south east) -- (run_1.north west);
        \draw[-stealth] (occ_2.south east) -- (run_2.north west);
        \draw[-stealth] (occ_4.south) -- (run_4.north);

        \node[inner sep=0pt] (slots_label) at (\storeoffx-0.4,-2.5*\sloth-3*\storegap) {\small Slots};
        \draw (\storeoffx,-2*\sloth-3*\storegap) rectangle (\storeoffx+\slotcnt*\slotw,-3*\sloth-3*\storegap);
        \foreach \x in {1, ..., \slotcnt} {
            \draw[dotted] (\storeoffx+\slotw*\x,-3*\sloth-3*\storegap) -- (\storeoffx+\slotw*\x,-2*\sloth-3*\storegap);
        }
        \node at (\storeoffx+0*\slotw+0.5*\slotw,-2.5*\sloth-3*\storegap) {\small 010};
        \draw[very thick] (\storeoffx+0*\slotw,-3*\sloth-3*\storegap) rectangle (\storeoffx+1*\slotw,-2*\sloth-3*\storegap);

        \node at (\storeoffx+1*\slotw+0.5*\slotw,-2.5*\sloth-3*\storegap) {\small 110};
        \node at (\storeoffx+2*\slotw+0.5*\slotw,-2.5*\sloth-3*\storegap) {\small 111};
        \draw[very thick] (\storeoffx+1*\slotw,-3*\sloth-3*\storegap) rectangle (\storeoffx+3*\slotw,-2*\sloth-3*\storegap);
        \begin{pgfonlayer}{background layer}
            \draw[white,fill=gray!20] (\storeoffx+2*\slotw,-3*\sloth-3*\storegap) rectangle (\storeoffx+3*\slotw,-2*\sloth-3*\storegap);
        \end{pgfonlayer}

        \node at (\storeoffx+3*\slotw+0.5*\slotw,-2.5*\sloth-3*\storegap) {\small 001};
        \node at (\storeoffx+4*\slotw+0.5*\slotw,-2.5*\sloth-3*\storegap) {\small 011};
        \draw[very thick] (\storeoffx+3*\slotw,-3*\sloth-3*\storegap) rectangle (\storeoffx+5*\slotw,-2*\sloth-3*\storegap);

        \node at (\storeoffx+5*\slotw+0.5*\slotw,-2.5*\sloth-3*\storegap) {\small 000};
        \draw[very thick] (\storeoffx+5*\slotw,-3*\sloth-3*\storegap) rectangle (\storeoffx+6*\slotw,-2*\sloth-3*\storegap);

        \node at (\storeoffx+6*\slotw+0.5*\slotw,-2.5*\sloth-3*\storegap) {\small 100};
        \draw[very thick] (\storeoffx+6*\slotw,-3*\sloth-3*\storegap) rectangle (\storeoffx+7*\slotw,-2*\sloth-3*\storegap);

        \foreach \i in {0, ..., \slotcnt} {
            \ifnum \i < \slotcnt {
                \node[inner sep=0pt] at (\storeoffx+\i*\slotw+0.5*\slotw,-3.5*\sloth-3*\storegap) {\small \i};
            }
            \fi
        }

        \node[inner sep=0pt,align=center] (query_1) at (\storeoffx+9.4*\slotw,\sloth+1.25*\storegap) {\small Query $q=[l, r]$ \\ $l_q=(110)_2$, $l_r=(101)_2$ \\ $r_q=(110)_2$, $r_r=(110)_2$};
        \draw ($(query_1.west)+(0.0,0.05)$) -| ($(query_1.west)+(-0.1,-0.35)$);
        \draw[-stealth] ($(query_1.west)+(-0.02,-0.35)$) -| (\storeoffx+6*\slotw+0.5*\slotw,-\storegap);
        \draw[decorate,decoration={brace,mirror,raise=1pt,amplitude=2pt}] (\storeoffx+6*\slotw,-3.5*\sloth-4*\storegap) -- (\storeoffx+7*\slotw,-3.5*\sloth-4*\storegap) 
            node[pos=0.5,below=3pt,inner sep=1pt,align=center] (target_label) {\small Scanned \\[-4pt] \small for Inclusion \\[-4pt] \small by $q$};

        \node[inner sep=0pt,align=center] (query_2) at (\storeoffx+4.15*\slotw,2.0*\sloth+0.5*\storegap) {\small Query $q'=[l', r']$ \\ $l'_q=(001)_2$, $l'_r=(111)_2$ \\ $r'_q=(011)_2$, $r'_r=(000)_2$};
        \draw[-stealth] ($(query_2.west)+(0.0,0.05)$) -| (\storeoffx+1*\slotw+0.5*\slotw,-\storegap);
        \draw[-stealth] ($(query_2.west)+(0.15,-0.6)$) |- (\storeoffx+3.5*\slotw,0.5*\storegap) -- (\storeoffx+3.5*\slotw,-\storegap);
        \draw[decorate,decoration={brace,mirror,raise=1pt,amplitude=2pt}] (\storeoffx+1*\slotw,-3.5*\sloth-4*\storegap) -- (\storeoffx+3*\slotw,-3.5*\sloth-4*\storegap) 
            node[pos=0.5,below=3pt,inner sep=1pt,align=center] (target_label) {\small Scanned \\[-4pt] \small for Max \\[-4pt] \small by $q'$};
        \draw[decorate,decoration={brace,mirror,raise=1pt,amplitude=2pt}] (\storeoffx+3*\slotw,-3.5*\sloth-4*\storegap) -- (\storeoffx+5*\slotw,-3.5*\sloth-4*\storegap) 
            node[pos=0.5,below=3pt,inner sep=1pt,align=center] (target_label) {\small Scanned \\[-4pt] \small for Min \\[-4pt] \small by $q'$};

        \node[inner sep=0pt] (fig_label) at (\storeoffx-2*\slotw,0.4*\sloth) {\textbf{C)}};
    \end{tikzpicture}
    \caption{Diva splits an infix into a remainder and a quotient (Part A). It
    encodes the quotient in the \texttt{occupieds} bitmap while storing the
    remainder in contiguous runs in the array of slots, delineated using thick
    lines (Part B). Diva handles a query by locating the corresponding runs
    using rank and select operations, followed by scanning them (Part C). The
    remainders that satisfy the queries are highlighted.}
    \label{fig:infix_store}
\end{figure}

\textbf{Storing Remainders.}
An Infix Store places infix remainders in its array of slots. As there are
exactly $T-1$ keys in-between two samples, an Infix Store allocates $T-1$ slots
in this array, each $\left\lceil \log_2 \frac{2}{\epsilon} \right\rceil$ bits
wide, to accommodate their remainders. It stores remainders of infixes sharing
the same quotient in a set of contiguous slots called a \emph{run}. Runs are
stored in increasing order of their quotients. Runs are 1-2 slots long in
expectation since the infixes are uniformly distributed, as described at the
end of
Section~\ref{sec:diva_static_sampling_keys_and_deriving_infixes},
and have between $\frac{T}{2}$ and $T$ distinct quotients.

An Infix Store employs a $(T-1)$-bit bitmap with one bit per slot, called the
\texttt{runends} bitmap, to delimit runs in an Infix Store. The $i$-th bit of
this bitmap is 1 if the $i$-th slot in the Infix Store is the last slot of a
run and is 0 otherwise.
Figure~\ref{fig:infix_store}{\nobreakdash-}B
shows an example of runs, depicted as boxes, and the \texttt{runends} bitmap
delimiting them.

In total, the \texttt{occupieds} and \texttt{runends} bitmaps consume~$2$ bits
per slot in the array of slots, or equivalently,~$2$ bits per infix.

\textbf{Matching Invariant.}
Each run in an Infix Store is associated with exactly one 1 in the
\texttt{occupieds} bitmap and one 1 in the \texttt{runends} bitmap. Since runs
are in increasing quotient order, a \emph{matching invariant} holds: the run
associated with the $i$-th~1 in the \texttt{occupieds} bitmap ends at the slot
corresponding to the $i$-th~1 in the \texttt{runends} bitmap.
Figure~\ref{fig:infix_store}{\nobreakdash-}B
illustrates this invariant with bidirectional arrows.

\textbf{Locating Runs.}
Diva locates an infix's run by first checking the bit corresponding to its
quotient in the \texttt{occupieds} bitmap. If it is 0, the infix could not have
been inserted and thus does not have a run. Otherwise, Diva locates the end of
the infix's run by leveraging the one-to-one correspondences in the matching
invariant, allowing for processing queries via a right-to-left scan of an
average of 1-2 slots. Following a similar process, Diva is able to reconstruct
all the infixes by concatenating each quotient in the \texttt{occupieds} bitmap
with the remainders in its run.

Diva speeds up locating an infix's run by employing rank and select
primitives~\cite{GQF}. Formally, let $\text{rank}(i,B)$ be the number of 1s
before the $i$-th bit in a bitmap~$B$ and $\text{select}(i,B)$ be the position
of the $i$-th 1 bit in~$B$. Diva finds the last slot in an infix~$x$'s run by
\mbox{evaluating}
$\text{select}(\text{rank}(x_q,$~\texttt{occupieds}$),$~\texttt{runends}$)$. In
other words, it determines the run number it must jump to by computing rank
over the \texttt{occupieds} bitmap and uses the result to locate the
corresponding \texttt{runends} bits by computing select.
Figure~\ref{fig:infix_store}{\nobreakdash-}B
shows an example of this derivation with $x_q=(011)_2$. Diva employs
specialized hardware instructions~\cite{GQF} to quickly apply rank and
select. 

\subsection{Query Processing}~\label{sec:diva_queries}
Diva processes a range query by searching for the query's endpoints in its \ST.
A range query intersecting at least one sample in the \ST immediately returns a
positive since that sample represents a key within the range. In contrast, a
range query that falls between two adjacent samples searches the corresponding
Infix Store and returns a positive if at least one overlapping key exists.

\textbf{Infix Store Range Queries.}
An Infix Store processes a range query over infixes $q=[l,r]$ by finding the
quotients and remainders of the endpoints and considering two
cases:

\textit{If the range spans multiple quotients} ($l_q < r_q$),
Diva checks if any quotient strictly between the endpoint quotients exists
using the occupieds bitmap. If there is such a quotient, Diva answers with a
positive, as all of the infixes in its run are strictly in the query range.
Otherwise, it checks if any remainder in the left endpoint's run is larger than
its remainder $l_r$, or if any remainder in the right endpoint's run is smaller
than its remainder $r_r$. If either condition holds, Diva reports a positive
and a negative otherwise. Query $q$ in
Figure~\ref{fig:infix_store}{\nobreakdash-}C
shows an example. Here, Diva first checks the range of bits
$[l_q+1,r_q-1]=[(010)_2,(010)_2]$ in the \mbox{\texttt{occupieds}} bitmap for
ones. As there is no bit set to 1 in that range, Diva scans the remainders
in~$l_q$ and $r_q$'s runs and compares them to $l_r$ and $r_r$. As the former
run contains a remainder equal to $l_r=(111)_2$, Diva returns a positive.

\textit{If the range spans a single quotient} ($l_q=r_q$),
Diva checks if a run corresponding to that quotient exists. If not, Diva
returns a negative. Otherwise, it scans this run for a remainder between the
remainders of the endpoints $l_r$ and $r_r$. If found, it returns a positive
and a negative otherwise. This scan is efficient, as the expected size of a run
is 1-2 slots. Query~$q'$ in
Figure~\ref{fig:infix_store}{\nobreakdash-}C
is an example. Here, $l'_q=r'_q=(110)_2$ and the bit at position 6 of the
\mbox{\texttt{occupieds}} bitmap is one. Thus, Diva searches the relevant run
for a remainder between $l'_r=(101)_2$ and $r'_r=(110)_2$. Since
$\text{rank}(6,$~\mbox{\texttt{occupieds}}$)=3$ and
$\text{select}(3,$~\mbox{\texttt{runends}}$)=6$, Diva checks the slots from
Slot~$6=(110)_2$ backward until it exhausts the run. It returns a negative, as
it does not find a remainder in the desired range.

False positives occur when query and key infixes collide, which is most likely
when queries follow the dataset's distribution. Diva's \ST learns this
distribution, enabling a low FPR in this setting and potentially a better FPR
in others.

\textbf{Point Queries.}
Point queries are equivalent to specialized range queries with equal endpoints.
Thus, they are handled by searching the \ST and the corresponding Infix Store
using the second case of the range query algorithm above.

\subsection{Trie Choice}
Diva can use any data structure supporting predecessor and successor searches
as its \ST. This structure must also associate a pointer to an Infix Store with
each sample. The samples and the extra pointers result in a negligible overhead
of~$O(L)+64$ bits per sample, where $L$ is the average key length. This
translates to a total overhead of $\frac{O(L)+64}{T} \approx 0.06+O(L/T)$ BPK.
Using a traditional trie~\cite{ART,MassTree,HOT} for this purpose yields poor
performance for predecessor and successor queries, as it incurs $O(L)$ cache
misses for~tree~traversal.

\textbf{$y$-Fast Tries.} 
Instead of a traditional trie, one can employ a $y${\nobreakdash-}Fast
trie~\cite{xyTries} to achieve an~$O(\log_2 L)$ number of cache misses for
predecessor and successor queries. A $y${\nobreakdash-}Fast trie enables this
by partitioning the ordered set of keys into groups of~$\approx L$ consecutive
keys and storing each group in a balanced binary search tree. It efficiently
accesses these groups by storing their boundary keys in a binary trie and doing
predecessor/successor searches over it.

The nodes in this binary trie are represented as a hash table storing all
prefixes of the boundary keys. Each inner node without a left (resp. right)
child stores a pointer to the predecessor (resp. successor) of the minimum
(resp. maximum) key in its subtree, allowing for fast predecessor and successor
searches. This binary trie finds the predecessor/successor of a key by
determining its longest prefix with a node in the trie via binary search. If
the found node is a leaf node, it is returned as the answer. Otherwise, the
predecessor/successor pointers are followed to answer the query. 

A $y${\nobreakdash-}Fast trie processes a predecessor/successor query by doing a
predecessor/successor search over its binary trie to find the corresponding
binary search tree and searching in it.

Searching the binary trie and the associated binary search tree each incurs
$O(\log_2 L)$ cache misses, resulting in $O(\log_2 L)$ cache misses for a
predecessor/successor query. This cost applies out-of-the-box to Diva's queries
since searching an Infix Store incurs only a constant number of cache misses
due to its small size, leaving the search cost of the trie as the main
bottleneck.

A~$y${\nobreakdash-}Fast trie supports amortized~$O(\log_2 L)$ insertions and
deletions by merging and splitting the binary search trees to maintain a size
of~$\approx L$ for each and updating its binary trie.

\textbf{Wormhole.} 
As our~$y${\nobreakdash-}Fast trie implementation, we use
Wormhole~\cite{Wormhole}, a~$y${\nobreakdash-}Fast trie with a high fanout. We
extended Wormhole with bidirectional iterators to quickly traverse the trie.

\subsection{Construction}\label{sec:diva_static_construction}
The static variant of Diva is constructed using an efficient bulk-loading
procedure where a sorted set of keys is scanned once. Here, Diva inserts every
$T$-th key into the \ST. Since it encounters the keys in-between in increasing
order, it also sees their infixes in increasing order. During this scan, Diva
sets the bits in the \texttt{occupieds} bitmap corresponding to the quotients
it sees to 1 while sequentially creates the runs. Whenever it encounters a new
quotient or fills the Infix Store, it marks the end of the current run in the
\texttt{runends} bitmap by setting the bit corresponding to the last filled
slot to 1.

This process incurs $O(N)$ cache misses for scanning the sorted key set and
inserting infixes into Infix Stores, and $O\left(N \cdot \frac{\log_2
L}{T}\right)$ cache misses for inserting every $T$-th key as a sample into the
\ST, totaling to $O\left(N \cdot \left(1+\frac{\log_2 L}{T}\right)\right)$
cache misses. Since $\frac{\log_2 L}{T} \ll 1$, we can rewrite the cost as
$O(N)$. The leftmost column in Table~\ref{tab:method_complexities}
compares the construction times of the filters and the penultimate row
expresses this cost for Diva.

Practically, Diva leverages the hardware prefetcher to the fullest during
bulk-loading, as all its memory accesses are sequential. Its CPU overhead is
also minimal, as it makes no use of hashing, save for the hashing of the \ST,
allowing Diva to enjoy a much faster construction time than its competitors. 

\subsection{Dynamicity}~\label{sec:diva_dynamic}
In addition to the static variant of Diva described in previous sections, we
provide an equally performant design that supports dynamicity in exchange
for~$\approx 1$ BPK more memory. This variant of Diva employs four core
techniques, some of which draw on prior works on dynamic
filters~\cite{InfiniFilter,AlephFilter}. To better illustrate these techniques,
we have reorganized the presentation of an Infix Store's metadata bitmaps and
its array of slots in Figure~\ref{fig:dynamic_infix_store}.

\begin{figure}
    \centering
    \pgfdeclarelayer{background layer}
    \pgfsetlayers{background layer,main}
    \begin{tikzpicture}
        \def\slotw{0.7}
        \def\sloth{0.35}
        \def\occupiedcnt{8}
        \def\slotcnt{8}
        \def\storeoffx{-0.8}
        \def\storegap{0.1}
        \def\circlediam{0.6mm}

        \node[inner sep=0pt] (occupieds_label) at (\storeoffx-0.72,-0.5*\sloth-\storegap) {\small \texttt{occupieds}};
        \draw[fill=gray!20] (\storeoffx,-\storegap) rectangle (\storeoffx+\occupiedcnt*\slotw,-\storegap-\sloth);
        \foreach \x in {1, ..., \occupiedcnt} {
            \draw[dotted] (\storeoffx+\slotw*\x,-\storegap-\sloth) -- (\storeoffx+\slotw*\x,-\storegap);
        }
        \foreach \i in {1, ..., \occupiedcnt} {
            \def\x{\storeoffx+\i*\slotw-0.5*\slotw}
            \def\y{-0.5*\sloth-\storegap}
            \ifnum \i = 1 {
                \node[inner sep=0pt] (occ_0) at (\x,\y) {\small 1};
            }
            \else \ifnum \i = 3 {
                \node[inner sep=2.25pt] (occ_1) at (\x,\y) {\small 1};
            }
            \else \ifnum \i = 4 {
                \node[inner sep=0pt] (occ_2) at (\x,\y) {\small 0};
            }
            \else \ifnum \i = 5 {
                \node[inner sep=2.25pt] (occ_3) at (\x,\y) {\small 1};
            }
            \else \ifnum \i = 7 {
                \node[inner sep=0pt] (occ_4) at (\x,\y) {\small 1};
            }
            \else \ifnum \i = 8 {
                \node[inner sep=0pt] (occ_5) at (\x,\y) {\small 1};
            }
            \else {
                \node at (\x,\y) {\small 0};
            }
            \fi \fi \fi \fi \fi \fi
        }

        \node[inner sep=0pt] (slots_label) at (\storeoffx-0.45,-1.5*\sloth-2*\storegap) {\small Slots};
        \draw (\storeoffx,-1*\sloth-2*\storegap) rectangle (\storeoffx+\slotcnt*\slotw,-2*\sloth-2*\storegap);
        \foreach \x in {1, ..., \slotcnt} {
            \draw[dotted] (\storeoffx+\slotw*\x,-2*\sloth-2*\storegap) -- (\storeoffx+\slotw*\x,-1*\sloth-2*\storegap);
        }
        \node at (\storeoffx+0*\slotw+0.5*\slotw,-1.5*\sloth-2.15*\storegap) {\small 0\underline{100}};
        \node at (\storeoffx+1*\slotw+0.5*\slotw,-1.5*\sloth-2.15*\storegap) {\small 01\underline{10}};
        \draw[very thick] (\storeoffx+0*\slotw,-2*\sloth-2*\storegap) rectangle (\storeoffx+2*\slotw,-1*\sloth-2*\storegap);

        \node at (\storeoffx+2*\slotw+0.5*\slotw,-1.5*\sloth-2.15*\storegap) {\small 01\underline{10}};
        \draw[very thick] (\storeoffx+2*\slotw,-2*\sloth-2*\storegap) rectangle (\storeoffx+3*\slotw,-1*\sloth-2*\storegap);

        \node at (\storeoffx+4*\slotw+0.5*\slotw,-1.5*\sloth-2.15*\storegap) {\small 00\underline{10}};
        \node at (\storeoffx+5*\slotw+0.5*\slotw,-1.5*\sloth-2.15*\storegap) {\small 111\underline{1}};
        \draw[very thick] (\storeoffx+4*\slotw,-2*\sloth-2*\storegap) rectangle (\storeoffx+6*\slotw,-1*\sloth-2*\storegap);

        \node at (\storeoffx+6*\slotw+0.5*\slotw,-1.5*\sloth-2.15*\storegap) {\small 010\underline{1}};
        \draw[very thick] (\storeoffx+6*\slotw,-2*\sloth-2*\storegap) rectangle (\storeoffx+7*\slotw,-1*\sloth-2*\storegap);

        \node at (\storeoffx+7*\slotw+0.5*\slotw,-1.5*\sloth-2.15*\storegap) {\small \underline{1000}};
        \draw[very thick] (\storeoffx+7*\slotw,-2*\sloth-2*\storegap) rectangle (\storeoffx+8*\slotw,-1*\sloth-2*\storegap);

        \node[inner sep=0pt] (runends_label) at (\storeoffx-0.6,-2.5*\sloth-3*\storegap) {\small \texttt{runends}};
        \draw[fill=gray!20] (\storeoffx,-2*\sloth-3*\storegap) rectangle (\storeoffx+\slotcnt*\slotw,-3*\sloth-3*\storegap);
        \foreach \x in {1, ..., \slotcnt} {
            \draw[dotted] (\storeoffx+\slotw*\x,-3*\sloth-3*\storegap) -- (\storeoffx+\slotw*\x,-2*\sloth-3*\storegap);
        }
        \foreach \i in {1, ..., \slotcnt} {
            \def\x{\storeoffx+\i*\slotw-0.5*\slotw}
            \def\y{-2.5*\sloth-3*\storegap}
            \ifnum \i = 2 {
                \node[inner sep=0pt] (run_0) at (\x,\y) {\small 1};
            }
            \else \ifnum \i = 3 {
                \node[inner sep=0pt] (run_1) at (\x,\y) {\small 1};
            }
            \else \ifnum \i = 6 {
                \node[inner sep=0pt] (run_2) at (\x,\y) {\small 1};
            }
            \else \ifnum \i = 7 {
                \node[inner sep=0pt] (run_3) at (\x,\y) {\small 1};
            }
            \else \ifnum \i = 8 {
                \node[inner sep=0pt] (run_3) at (\x,\y) {\small 1};
            }
            \else {
                \node at (\x,\y) {\small 0};
            }
            \fi \fi \fi \fi \fi
        }

        \node[inner sep=0pt] at (\storeoffx+0.5*\slotw,-3.5*\sloth-3*\storegap) {\small 000};
        \node[inner sep=0pt] at (\storeoffx+1.5*\slotw,-3.5*\sloth-3*\storegap) {\small 001};
        \node[inner sep=0pt] at (\storeoffx+2.5*\slotw,-3.5*\sloth-3*\storegap) {\small 010};
        \node[inner sep=0pt] at (\storeoffx+3.5*\slotw,-3.5*\sloth-3*\storegap) {\small 011};
        \node[inner sep=0pt] at (\storeoffx+4.5*\slotw,-3.5*\sloth-3*\storegap) {\small 100};
        \node[inner sep=0pt] at (\storeoffx+5.5*\slotw,-3.5*\sloth-3*\storegap) {\small 101};
        \node[inner sep=0pt] at (\storeoffx+6.5*\slotw,-3.5*\sloth-3*\storegap) {\small 110};
        \node[inner sep=0pt] at (\storeoffx+7.5*\slotw,-3.5*\sloth-3*\storegap) {\small 111};

        \node[inner sep=2pt,align=center] (si) at (\storeoffx-2.24*\slotw,-2*\sloth-2*\storegap) {\small \textbf{0} \\[-4pt] \small \textbf{1} \\[-4pt] \small \textbf{0} \\[-4pt] \small 0 \\[-4pt] \small 0 \\[-4pt] \small 0 \\[-4pt] \small $\vdots$};
        \node[inner sep=2pt,rotate=90,above left=-2.25mm and 0mm of si] (si_label) {\small Predecessor};
        \node[inner sep=2pt,align=center] (si1) at (\storeoffx+\slotcnt*\slotw+2.75*\storegap,-2*\sloth-2*\storegap) {\small \textbf{0} \\[-4pt] \small \textbf{1} \\[-4pt] \small \textbf{0} \\[-4pt] \small 1 \\[-4pt] \small 1 \\[-4pt] \small 1 \\[-4pt] \small $\vdots$};
        \node[inner sep=2pt,rotate=90,below right=-3.5mm and -0.5mm of si1] (si1_label) {\small Successor};

        \node[inner sep=0pt,anchor=west] (fig_label) at (\storeoffx-2.6*\slotw,\sloth-0.5*\storegap) {\small \textbf{A) Unstretched Infix Store}};
        \node[inner sep=0pt,anchor=west] (centering_dummy) at (\storeoffx+\slotcnt*\slotw+1.0*\slotw,\sloth+\storegap) { };
    \end{tikzpicture}
    \vspace{4pt}

    \pgfdeclarelayer{background layer}
    \pgfsetlayers{background layer,main}
    \begin{tikzpicture}
        \def\slotw{0.7}
        \def\sloth{0.35}
        \def\occupiedcnt{8}
        \def\slotcnt{10}
        \def\storeoffx{-1.5}
        \def\storegap{0.1}
        \def\circlediam{0.6mm}

        \draw[fill=gray!20] (\storeoffx,-\storegap) rectangle (\storeoffx+\occupiedcnt*\slotw,-\storegap-\sloth);
        \foreach \x in {1, ..., \occupiedcnt} {
            \draw[dotted] (\storeoffx+\slotw*\x,-\storegap-\sloth) -- (\storeoffx+\slotw*\x,-\storegap);
        }
        \foreach \i in {1, ..., \occupiedcnt} {
            \def\x{\storeoffx+\i*\slotw-0.5*\slotw}
            \def\y{-0.5*\sloth-\storegap}
            \ifnum \i = 1 {
                \node[inner sep=0pt] (occ_0) at (\x,\y) {\small 1};
            }
            \else \ifnum \i = 3 {
                \node[inner sep=0pt] (occ_1) at (\x,\y) {\small 1};
            }
            \else \ifnum \i = 5 {
                \node[inner sep=0pt] (occ_2) at (\x,\y) {\small 1};
            }
            \else \ifnum \i = 7 {
                \node[inner sep=0pt] (occ_3) at (\x,\y) {\small 1};
            }
            \else \ifnum \i = 8 {
                \node[inner sep=0pt] (occ_4) at (\x,\y) {\small 1};
            }
            \else {
                \node at (\x,\y) {\small 0};
            }
            \fi \fi \fi \fi \fi
        }

        \draw (\storeoffx,-1*\sloth-3*\storegap) rectangle (\storeoffx+\slotcnt*\slotw,-2*\sloth-3*\storegap);
        \foreach \x in {1, ..., \slotcnt} {
            \draw[dotted] (\storeoffx+\slotw*\x,-2*\sloth-3*\storegap) -- (\storeoffx+\slotw*\x,-1*\sloth-3*\storegap);
        }
        \node at (\storeoffx+0*\slotw+0.5*\slotw,-1.5*\sloth-3.15*\storegap) {\small 0\underline{100}};
        \node at (\storeoffx+1*\slotw+0.5*\slotw,-1.5*\sloth-3.15*\storegap) {\small 01\underline{10}};
        \draw[very thick] (\storeoffx+0*\slotw,-2*\sloth-3*\storegap) rectangle (\storeoffx+2*\slotw,-1*\sloth-3*\storegap);

        \node at (\storeoffx+2*\slotw+0.5*\slotw,-1.5*\sloth-3.15*\storegap) {\small 01\underline{10}};
        \draw[very thick] (\storeoffx+2*\slotw,-2*\sloth-3*\storegap) rectangle (\storeoffx+3*\slotw,-1*\sloth-3*\storegap);

        \node at (\storeoffx+5*\slotw+0.5*\slotw,-1.5*\sloth-3.15*\storegap) {\small 00\underline{10}};
        \node at (\storeoffx+6*\slotw+0.5*\slotw,-1.5*\sloth-3.15*\storegap) {\small 111\underline{1}};
        \draw[very thick] (\storeoffx+5*\slotw,-2*\sloth-3*\storegap) rectangle (\storeoffx+7*\slotw,-1*\sloth-3*\storegap);

        \node at (\storeoffx+7*\slotw+0.5*\slotw,-1.5*\sloth-3.15*\storegap) {\small 010\underline{1}};
        \draw[very thick] (\storeoffx+7*\slotw,-2*\sloth-3*\storegap) rectangle (\storeoffx+8*\slotw,-1*\sloth-3*\storegap);

        \node at (\storeoffx+9*\slotw+0.5*\slotw,-1.5*\sloth-3.15*\storegap) {\small \underline{1000}};
        \draw[very thick] (\storeoffx+9*\slotw,-2*\sloth-3*\storegap) rectangle (\storeoffx+10*\slotw,-1*\sloth-3*\storegap);

        \draw[fill=gray!20] (\storeoffx,-2*\sloth-4*\storegap) rectangle (\storeoffx+\slotcnt*\slotw,-3*\sloth-4*\storegap);
        \foreach \x in {1, ..., \slotcnt} {
            \draw[dotted] (\storeoffx+\slotw*\x,-3*\sloth-4*\storegap) -- (\storeoffx+\slotw*\x,-2*\sloth-4*\storegap);
        }
        \foreach \i in {1, ..., \slotcnt} {
            \def\x{\storeoffx+\i*\slotw-0.5*\slotw}
            \def\y{-2.5*\sloth-4*\storegap}
            \ifnum \i = 2 {
                \node[inner sep=0pt] (run_0) at (\x,\y) {\small 1};
            }
            \else \ifnum \i = 3 {
                \node[inner sep=0pt] (run_2) at (\x,\y) {\small 1};
            }
            \else \ifnum \i = 7 {
                \node[inner sep=0pt] (run_6) at (\x,\y) {\small 1};
            }
            \else \ifnum \i = 8 {
                \node[inner sep=0pt] (run_7) at (\x,\y) {\small 1};
            }
            \else \ifnum \i = 10 {
                \node[inner sep=0pt] (run_9) at (\x,\y) {\small 1};
            }
            \else {
                \node at (\x,\y) {\small 0};
            }
            \fi \fi \fi \fi \fi
        }

        \node[inner sep=0pt] at (\storeoffx+0.5*\slotw,-3.5*\sloth-4*\storegap) {\small 0000};
        \node[inner sep=0pt] at (\storeoffx+1.5*\slotw,-3.5*\sloth-4*\storegap) {\small 0001};
        \node[inner sep=0pt] at (\storeoffx+2.5*\slotw,-3.5*\sloth-4*\storegap) {\small 0010};
        \node[inner sep=0pt] at (\storeoffx+3.5*\slotw,-3.5*\sloth-4*\storegap) {\small 0011};
        \node[inner sep=0pt] at (\storeoffx+4.5*\slotw,-3.5*\sloth-4*\storegap) {\small 0100};
        \node[inner sep=0pt] at (\storeoffx+5.5*\slotw,-3.5*\sloth-4*\storegap) {\small 0101};
        \node[inner sep=0pt] at (\storeoffx+6.5*\slotw,-3.5*\sloth-4*\storegap) {\small 0110};
        \node[inner sep=0pt] at (\storeoffx+7.5*\slotw,-3.5*\sloth-4*\storegap) {\small 0111};
        \node[inner sep=0pt] at (\storeoffx+8.5*\slotw,-3.5*\sloth-4*\storegap) {\small 1000};
        \node[inner sep=0pt] at (\storeoffx+9.5*\slotw,-3.5*\sloth-4*\storegap) {\small 1001};

        \draw[-stealth] (\storeoffx+0.5*\slotw,-1*\sloth-1*\storegap) -- (\storeoffx+0.5*\slotw,-1*\sloth-3*\storegap);
        \draw[-stealth] (\storeoffx+1.5*\slotw,-1*\sloth-1*\storegap) -- (\storeoffx+1.5*\slotw,-1*\sloth-3*\storegap);
        \draw[-stealth] (\storeoffx+2.5*\slotw,-1*\sloth-1*\storegap) -- (\storeoffx+2.5*\slotw,-1*\sloth-3*\storegap);
        \draw[-stealth] (\storeoffx+3.5*\slotw,-1*\sloth-1*\storegap) -- (\storeoffx+3.5*\slotw,-1*\sloth-3*\storegap);
        \draw[-stealth] (\storeoffx+4.5*\slotw,-1*\sloth-1*\storegap) -- (\storeoffx+5.5*\slotw,-1*\sloth-3*\storegap);
        \draw[-stealth] (\storeoffx+5.5*\slotw,-1*\sloth-1*\storegap) -- (\storeoffx+6.5*\slotw,-1*\sloth-3*\storegap);
        \draw[-stealth] (\storeoffx+6.5*\slotw,-1*\sloth-1*\storegap) -- (\storeoffx+7.5*\slotw,-1*\sloth-3*\storegap);
        \draw[-stealth] (\storeoffx+7.5*\slotw,-1*\sloth-1*\storegap) -- (\storeoffx+9.5*\slotw,-1*\sloth-3*\storegap);

        \node[inner sep=2pt,align=center] (si) at (\storeoffx-2.25*\storegap,-2*\sloth-2*\storegap) {\small \textbf{0} \\[-4pt] \small \textbf{1} \\[-4pt] \small \textbf{0} \\[-4pt] \small 0 \\[-4pt] \small 0 \\[-4pt] \small 0 \\[-4pt] \small $\vdots$};
        \node[inner sep=2pt,align=center] (si1) at (\storeoffx+\slotcnt*\slotw+2.25*\storegap,-2*\sloth-2*\storegap) {\small \textbf{0} \\[-4pt] \small \textbf{1} \\[-4pt] \small \textbf{0} \\[-4pt] \small 1 \\[-4pt] \small 1 \\[-4pt] \small 1 \\[-4pt] \small $\vdots$};
        \node[inner sep=1pt] (k) at (\storeoffx+7.5*\slotw,\storegap) {\small $k=0101000\dots$};
        \draw[-stealth,thick] (k.west) -| node[pos=0.25,above=-2pt] (split_label) {\small Split} (\storeoffx+4.5*\slotw,-\storegap);

        \node[inner sep=0pt,anchor=west] (fig_label) at (\storeoffx-0.6*\slotw,\sloth-0*\storegap) {\small \textbf{B) Stretched Infix Store}};
        \node[inner sep=0pt,anchor=west] (centering_dummy) at (\storeoffx+\slotcnt*\slotw+0.5*\slotw,\sloth+\storegap) { };
    \end{tikzpicture}
    \vspace{4pt}

    \pgfdeclarelayer{background layer}
    \pgfsetlayers{background layer,main}
    \begin{tikzpicture}
        \def\slotw{0.7}
        \def\sloth{0.35}
        \def\occupiedcnt{5}
        \def\slotcnt{6}
        \def\storeoffx{-1.5}
        \def\storegap{0.1}
        \def\circlediam{0.6mm}

        \draw[fill=gray!20] (\storeoffx,-\storegap) rectangle (\storeoffx+\occupiedcnt*\slotw,-\storegap-\sloth);
        \foreach \x in {1, ..., \occupiedcnt} {
            \draw[dotted] (\storeoffx+\slotw*\x,-\storegap-\sloth) -- (\storeoffx+\slotw*\x,-\storegap);
        }
        \foreach \i in {1, ..., \occupiedcnt} {
            \def\x{\storeoffx+\i*\slotw-0.5*\slotw}
            \def\y{-0.5*\sloth-\storegap}
            \ifnum \i = 1 {
                \node[inner sep=0pt] (occ_0) at (\x,\y) {\small 1};
            }
            \else \ifnum \i = 3 {
                \node[inner sep=0pt] (occ_2) at (\x,\y) {\small 1};
            }
            \else {
                \node at (\x,\y) {\small 0};
            }
            \fi \fi
        }

        \draw (\storeoffx,-1*\sloth-3*\storegap) rectangle (\storeoffx+\slotcnt*\slotw,-2*\sloth-3*\storegap);
        \foreach \x in {1, ..., \slotcnt} {
            \draw[dotted] (\storeoffx+\slotw*\x,-2*\sloth-3*\storegap) -- (\storeoffx+\slotw*\x,-1*\sloth-3*\storegap);
        }
        \node at (\storeoffx+0*\slotw+0.5*\slotw,-1.5*\sloth-3.15*\storegap) {\small 0\underline{100}};
        \node at (\storeoffx+1*\slotw+0.5*\slotw,-1.5*\sloth-3.15*\storegap) {\small 01\underline{10}};
        \draw[very thick] (\storeoffx+0*\slotw,-2*\sloth-3*\storegap) rectangle (\storeoffx+2*\slotw,-1*\sloth-3*\storegap);

        \node at (\storeoffx+2*\slotw+0.5*\slotw,-1.5*\sloth-3.15*\storegap) {\small 01\underline{10}};
        \draw[very thick] (\storeoffx+2*\slotw,-2*\sloth-3*\storegap) rectangle (\storeoffx+3*\slotw,-1*\sloth-3*\storegap);

        \draw[fill=gray!20] (\storeoffx,-2*\sloth-4*\storegap) rectangle (\storeoffx+\slotcnt*\slotw,-3*\sloth-4*\storegap);
        \foreach \x in {1, ..., \slotcnt} {
            \draw[dotted] (\storeoffx+\slotw*\x,-3*\sloth-4*\storegap) -- (\storeoffx+\slotw*\x,-2*\sloth-4*\storegap);
        }
        \foreach \i in {1, ..., \slotcnt} {
            \def\x{\storeoffx+\i*\slotw-0.5*\slotw}
            \def\y{-2.5*\sloth-4*\storegap}
            \ifnum \i = 2 {
                \node[inner sep=0pt] (run_1) at (\x,\y) {\small 1};
            }
            \else \ifnum \i = 3 {
                \node[inner sep=0pt] (run_2) at (\x,\y) {\small 1};
            }
            \else {
                \node at (\x,\y) {\small 0};
            }
            \fi \fi
        }

        \node[inner sep=0pt] at (\storeoffx+0.5*\slotw,-3.5*\sloth-4*\storegap) {\small 000};
        \node[inner sep=0pt] at (\storeoffx+1.5*\slotw,-3.5*\sloth-4*\storegap) {\small 001};
        \node[inner sep=0pt] at (\storeoffx+2.5*\slotw,-3.5*\sloth-4*\storegap) {\small 010};
        \node[inner sep=0pt] at (\storeoffx+3.5*\slotw,-3.5*\sloth-4*\storegap) {\small 011};
        \node[inner sep=0pt] at (\storeoffx+4.5*\slotw,-3.5*\sloth-4*\storegap) {\small 100};
        \node[inner sep=0pt] at (\storeoffx+5.5*\slotw,-3.5*\sloth-4*\storegap) {\small 101};

        \draw[-stealth] (\storeoffx+0.5*\slotw,-1*\sloth-1*\storegap) -- (\storeoffx+0.5*\slotw,-1*\sloth-3*\storegap);
        \draw[-stealth] (\storeoffx+1.5*\slotw,-1*\sloth-1*\storegap) -- (\storeoffx+1.5*\slotw,-1*\sloth-3*\storegap);
        \draw[-stealth] (\storeoffx+2.5*\slotw,-1*\sloth-1*\storegap) -- (\storeoffx+2.5*\slotw,-1*\sloth-3*\storegap);
        \draw[-stealth] (\storeoffx+3.5*\slotw,-1*\sloth-1*\storegap) -- (\storeoffx+3.5*\slotw,-1*\sloth-3*\storegap);
        \draw[-stealth] (\storeoffx+4.5*\slotw,-1*\sloth-1*\storegap) -- (\storeoffx+5.5*\slotw,-1*\sloth-3*\storegap);

        \node[inner sep=2pt,align=center] (si) at (\storeoffx-2.75*\storegap,-2*\sloth-2*\storegap) {\small \textbf{0} \\[-4pt] \small \textbf{1} \\[-4pt] \small \textbf{0} \\[-4pt] \small 0 \\[-4pt] \small 0 \\[-4pt] \small 0 \\[-4pt] \small $\vdots$};
        \node[inner sep=2pt,align=center] (si1) at (\storeoffx+\slotcnt*\slotw+2.75*\storegap,-2*\sloth-2*\storegap) {\small \textbf{0} \\[-4pt] \small \textbf{1} \\[-4pt] \small \textbf{0} \\[-4pt] \small 1 \\[-4pt] \small 0 \\[-4pt] \small 0 \\[-4pt] \small $\vdots$};
        \node[inner sep=2pt,align=center,right=-0.1mm of si1] (k_label) {\small $k$};

        \node[inner sep=0pt,anchor=west] (fig_label) at (\storeoffx-2.75*\slotw,\sloth-1*\storegap) {\small \textbf{C) Left Infix Store After Split}};
        \node[inner sep=0pt,anchor=west] (centering_dummy) at (\storeoffx+\slotcnt*\slotw+2.5*\slotw,\sloth+\storegap) { };
    \end{tikzpicture}
    \vspace{2pt}

    \pgfdeclarelayer{background layer}
    \pgfsetlayers{background layer,main}
    \begin{tikzpicture}
        \def\slotw{0.7}
        \def\sloth{0.35}
        \def\occupiedcnt{8}
        \def\slotcnt{6}
        \def\storeoffx{-1.5}
        \def\storegap{0.1}
        \def\circlediam{0.6mm}

        \draw[fill=gray!20] (\storeoffx,-\storegap) rectangle (\storeoffx+\occupiedcnt*\slotw,-\storegap-\sloth);
        \foreach \x in {1, ..., \occupiedcnt} {
            \draw[dotted] (\storeoffx+\slotw*\x,-\storegap-\sloth) -- (\storeoffx+\slotw*\x,-\storegap);
        }
        \foreach \i in {1, ..., \occupiedcnt} {
            \def\x{\storeoffx+\i*\slotw-0.5*\slotw}
            \def\y{-0.5*\sloth-\storegap}
            \ifnum \i = 1 {
                \node[inner sep=0pt] (occ_0) at (\x,\y) {\small 1};
            }
            \else \ifnum \i = 2 {
                \node[inner sep=0pt] (occ_1) at (\x,\y) {\small 1};
            }
            \else \ifnum \i = 5 {
                \node[inner sep=0pt] (occ_2) at (\x,\y) {\small 1};
            }
            \else \ifnum \i = 7 {
                \node[inner sep=0pt] (occ_3) at (\x,\y) {\small 1};
            }
            \else \ifnum \i = 8 {
                \node[inner sep=0pt] (occ_4) at (\x,\y) {\small 1};
            }
            \else {
                \node at (\x,\y) {\small 0};
            }
            \fi \fi \fi \fi \fi
        }

        \draw (\storeoffx,-1*\sloth-3*\storegap) rectangle (\storeoffx+\slotcnt*\slotw,-2*\sloth-3*\storegap);
        \foreach \x in {1, ..., \slotcnt} {
            \draw[dotted] (\storeoffx+\slotw*\x,-2*\sloth-3*\storegap) -- (\storeoffx+\slotw*\x,-1*\sloth-3*\storegap);
        }
        \node at (\storeoffx+0*\slotw+0.5*\slotw,-1.5*\sloth-3.15*\storegap) {\small 0\underline{100}};
        \draw[very thick] (\storeoffx+0*\slotw,-2*\sloth-3*\storegap) rectangle (\storeoffx+1*\slotw,-1*\sloth-3*\storegap);

        \node at (\storeoffx+1*\slotw+0.5*\slotw,-1.5*\sloth-3.15*\storegap) {\small 11\underline{10}};
        \draw[very thick] (\storeoffx+1*\slotw,-2*\sloth-3*\storegap) rectangle (\storeoffx+2*\slotw,-1*\sloth-3*\storegap);

        \node at (\storeoffx+2*\slotw+0.5*\slotw,-1.5*\sloth-3.15*\storegap) {\small 10\underline{10}};
        \draw[very thick] (\storeoffx+2*\slotw,-2*\sloth-3*\storegap) rectangle (\storeoffx+3*\slotw,-1*\sloth-3*\storegap);

        \node at (\storeoffx+4*\slotw+0.5*\slotw,-1.5*\sloth-3.15*\storegap) {\small \underline{1000}};
        \draw[very thick] (\storeoffx+4*\slotw,-2*\sloth-3*\storegap) rectangle (\storeoffx+5*\slotw,-1*\sloth-3*\storegap);
        \begin{pgfonlayer}{background layer}
            \draw[white,fill=gray!20] (\storeoffx+4*\slotw,-2*\sloth-3*\storegap) rectangle (\storeoffx+5*\slotw,-1*\sloth-3*\storegap);
        \end{pgfonlayer}

        \node at (\storeoffx+5*\slotw+0.5*\slotw,-1.5*\sloth-3.15*\storegap) {\small \underline{1000}};
        \draw[very thick] (\storeoffx+5*\slotw,-2*\sloth-3*\storegap) rectangle (\storeoffx+6*\slotw,-1*\sloth-3*\storegap);
        \begin{pgfonlayer}{background layer}
            \draw[white,fill=gray!20] (\storeoffx+5*\slotw,-2*\sloth-3*\storegap) rectangle (\storeoffx+6*\slotw,-1*\sloth-3*\storegap);
        \end{pgfonlayer}

        \draw[fill=gray!20] (\storeoffx,-2*\sloth-4*\storegap) rectangle (\storeoffx+\slotcnt*\slotw,-3*\sloth-4*\storegap);
        \foreach \x in {1, ..., \slotcnt} {
            \draw[dotted] (\storeoffx+\slotw*\x,-3*\sloth-4*\storegap) -- (\storeoffx+\slotw*\x,-2*\sloth-4*\storegap);
        }
        \foreach \i in {1, ..., \slotcnt} {
            \def\x{\storeoffx+\i*\slotw-0.5*\slotw}
            \def\y{-2.5*\sloth-4*\storegap}
            \ifnum \i = 1 {
                \node[inner sep=0pt] (run_0) at (\x,\y) {\small 1};
            }
            \else \ifnum \i = 2 {
                \node[inner sep=0pt] (run_1) at (\x,\y) {\small 1};
            }
            \else \ifnum \i = 3 {
                \node[inner sep=0pt] (run_2) at (\x,\y) {\small 1};
            }
            \else \ifnum \i = 5 {
                \node[inner sep=0pt] (run_4) at (\x,\y) {\small 1};
            }
            \else \ifnum \i = 6 {
                \node[inner sep=0pt] (run_5) at (\x,\y) {\small 1};
            }
            \else {
                \node at (\x,\y) {\small 0};
            }
            \fi \fi \fi \fi \fi
        }

        \node[inner sep=0pt] (duplicated_label) at (\storeoffx+7.1*\slotw,-1.55*\sloth-3*\storegap) {\small Duplicated};

        \node[inner sep=0pt] at (\storeoffx+0.5*\slotw,-3.5*\sloth-4*\storegap) {\small 000};
        \node[inner sep=0pt] at (\storeoffx+1.5*\slotw,-3.5*\sloth-4*\storegap) {\small 001};
        \node[inner sep=0pt] at (\storeoffx+2.5*\slotw,-3.5*\sloth-4*\storegap) {\small 010};
        \node[inner sep=0pt] at (\storeoffx+3.5*\slotw,-3.5*\sloth-4*\storegap) {\small 011};
        \node[inner sep=0pt] at (\storeoffx+4.5*\slotw,-3.5*\sloth-4*\storegap) {\small 100};
        \node[inner sep=0pt] at (\storeoffx+5.5*\slotw,-3.5*\sloth-4*\storegap) {\small 101};
        \node[inner sep=0pt] at (\storeoffx+6.5*\slotw,-3.5*\sloth-4*\storegap) {\small 110};
        \node[inner sep=0pt] at (\storeoffx+7.5*\slotw,-3.5*\sloth-4*\storegap) {\small 111};

        \draw[-stealth] (\storeoffx+0.5*\slotw,-1*\sloth-1*\storegap) -- (\storeoffx+0.5*\slotw,-1*\sloth-3*\storegap);
        \draw[-stealth] (\storeoffx+1.5*\slotw,-1*\sloth-1*\storegap) -- (\storeoffx+0.5*\slotw,-1*\sloth-3*\storegap);
        \draw[-stealth] (\storeoffx+2.5*\slotw,-1*\sloth-1*\storegap) -- (\storeoffx+1.5*\slotw,-1*\sloth-3*\storegap);
        \draw[-stealth] (\storeoffx+3.5*\slotw,-1*\sloth-1*\storegap) -- (\storeoffx+2.5*\slotw,-1*\sloth-3*\storegap);
        \draw[-stealth] (\storeoffx+4.5*\slotw,-1*\sloth-1*\storegap) -- (\storeoffx+2.5*\slotw,-1*\sloth-3*\storegap);
        \draw[-stealth] (\storeoffx+5.5*\slotw,-1*\sloth-1*\storegap) -- (\storeoffx+3.5*\slotw,-1*\sloth-3*\storegap);
        \draw[-stealth] (\storeoffx+6.5*\slotw,-1*\sloth-1*\storegap) -- (\storeoffx+4.5*\slotw,-1*\sloth-3*\storegap);
        \draw[-stealth] (\storeoffx+7.5*\slotw,-1*\sloth-1*\storegap) -- (\storeoffx+5.5*\slotw,-1*\sloth-3*\storegap);

        \node[inner sep=2pt,align=center] (si) at (\storeoffx-3.25*\storegap,-2*\sloth-2*\storegap) {\small \textbf{0} \\[-4pt] \small \textbf{1} \\[-4pt] \small \textbf{0} \\[-4pt] \small \textbf{1} \\[-4pt] \small 0 \\[-4pt] \small 0 \\[-4pt] \small $\vdots$};
        \node[inner sep=2pt,align=center,left=-0.1mm of si] (k_label) {\small $k$};
        \node[inner sep=2pt,align=center] (si1) at (\storeoffx+\occupiedcnt*\slotw+3.5*\storegap,-2*\sloth-2*\storegap) {\small \textbf{0} \\[-4pt] \small \textbf{1} \\[-4pt] \small \textbf{0} \\[-4pt] \small \textbf{1} \\[-4pt] \small 1 \\[-4pt] \small 1 \\[-4pt] \small $\vdots$};

        \node[inner sep=0pt,anchor=west] (fig_label) at (\storeoffx-1.67*\slotw,\sloth-1*\storegap) {\small \textbf{D) Right Infix Store After Split}};
        \node[inner sep=0pt,anchor=west] (centering_dummy) at (\storeoffx+\slotcnt*\slotw+3.5*\slotw,\sloth+\storegap) { };
    \end{tikzpicture}
    \vspace{-2mm}
    \caption{When an Infix Store runs out of overprovisioned space~(Part~A),
    Diva stretches it to make room for new insertions~(Part~B). The insertion
    of a new sample~$k$ into an Infix Store~(Part~B) splits it into a
    left~(Part~C) and a right~(Part~D) Infix Store. As a result of this split,
    old infixes may lose bits, elongating the unary paddings (underlined) and
    duplicating infixes with empty remainders. The predecessor and successor's
    longest common prefix is highlighted in bold.}
    \label{fig:dynamic_infix_store}
    \vspace{-5mm}
\end{figure}

\textbf{(1)~Overprovisioning Infix Stores.}
Diva overprovisions slots within each Infix Store to absorb insertions, as
depicted in
Figure~\ref{fig:dynamic_infix_store}{\nobreakdash-}A.
It inserts a new infix by mapping it to the slot at the offset equal to its
quotient. Doing so uniformly scatters the runs across the Infix Store while
leaving space in-between them. Diva stores each infix's run in its mapped slot
if there is space. If not, it makes room for the new infix by shifting other
runs to the right until hitting an empty slot (and shifting to the left if
there is no room to the right), similarly to Robin-Hood
Hashing~\cite{RobinHoodHashing,GQF}. The amount of shifting is moderate, since
empty slots are spread evenly across the Infix Store.

\textbf{(2)~Stretching Infix Stores.} 
Once an Infix Store is~95\% full, Diva expands it by a small factor ($\approx
1.05\times$). It does this using a technique we developed for Zeno
Filter~\cite{Zeno} called ``\emph{stretching},'' whereby we stretch the
distance between adjacent runs using a linear function to create more empty
slots between them and accommodate more insertions.\footnote{In an Infix Store
with~$n$ slots, Diva maps an infix~$x$ with a quotient of~$x_q$ to the slot at
offset~$\lfloor x_q \cdot n / (r-l) \rfloor$, where $l$ and $r$ denote the
value of the bits in the predecessor and successor corresponding to the
quotient~$x_q$.} For example, stretching the Infix Store in
Figure~\ref{fig:dynamic_infix_store}{\nobreakdash-}A
results in
Figure~\ref{fig:dynamic_infix_store}{\nobreakdash-}B,
which has two more empty slots. Stretching ensures a space utilization of at
least~90\% overall, as opposed to the~50\% space utilization of other dynamic
filters or hash tables that expand by a factor
of~2~\cite{BeyondBloomTutorial,GQF,InfiniFilter,AlephFilter,Memento,Aeris}.

\textbf{(3)~Sampling Keys and Splitting Infix Stores.}
After~$\approx T$ insertions into an Infix Store (and multiple stretching
operations), the Infix Store should ``\emph{split}'' in two, with a new sample
in-between added to the \ST. Splitting keeps runs short and queries efficient.
It also refines the \ST's approximation of the data's distribution,
particularly for the parts of the key space that become denser.

An invariant of our design is to only store full, untruncated keys within the
\ST to allow for refining the key space in-between samples indefinitely. This
invariant requires that only new insertions into the filter can be picked as
samples, since all previously inserted keys have been truncated. Yet, this
potentially leads to a vulnerability: deterministically inserting
every~$T${\nobreakdash-}th key as a sample into the \ST would allow an
adversary to create pathologically uneven splits and degrade performance. We
address this risk by randomly picking each new insertion as a sample with a
probability of~$\frac{1}{T}$. This strategy ensures that, in expectation,
every~$T${\nobreakdash-}th key inserted into an Infix Store splits it and is
taken as a sample. As such, the \ST is still kept small ($\approx \frac{1}{T}$
of the data's size). Moreover, since the samples collectively constitute a
random~$\frac{1}{T}$ fraction of the dataset, they lead to approximately even
splits across the board, even though some splits may occasionally be uneven.

Figure~\ref{fig:dynamic_infix_store}{\nobreakdash-}B
illustrates an example of this procedure, which yields the left and right Infix
Stores depicted in
Figures~\ref{fig:dynamic_infix_store}{\nobreakdash-}C
and
\ref{fig:dynamic_infix_store}{\nobreakdash-}D.
The bit strings on the left and right of each Infix Store represent its
predecessor and successor, respectively.

\textbf{(4)~Variable-Length Infixes.}
Splitting an Infix Store causes the predecessor and successor of one of the
newly created Infix Stores to have a longer common prefix. For example, the
longest common prefix in
Figure~\ref{fig:dynamic_infix_store}{\nobreakdash-}D
(shown in bold) is one bit longer than that of
Figures~\ref{fig:dynamic_infix_store}{\nobreakdash-}A
and
\ref{fig:dynamic_infix_store}{\nobreakdash-}B.

This longer common prefix takes over the higher-order bit(s) of the portion of
each key previously taken as an infix, thereby shortening the original infix.
To ensure each of these shorter infixes has an unambiguous quotient of the same
length as before, Diva repurposes the higher-order bits of the infix's old
remainder to become the lower-order bits of its quotient. Doing so splits each
old run containing remainders with different higher-order bits into shorter
runs.
Figures~\ref{fig:dynamic_infix_store}{\nobreakdash-}B
and
\ref{fig:dynamic_infix_store}{\nobreakdash-}D
present an example whereby the run previously in Slots~0101 and 0110 is split
into two runs stored in Slots~000 and 001.

Despite old remainders becoming shorter, Diva still stores the infixes of new
insertions with full-length remainders. This means that we truncate newer keys
to a lesser extent as their region within the key space becomes denser.
However, doing so necessitates storing variable-length remainders within an
Infix Store. Following our work on InfiniFilter~\cite{InfiniFilter}, Diva does
this by padding each remainder (on the right) to its original slot length with
unary codes consisting of~0s and a delimiting~1. For example, the two-bit
remainder in Slot~100 of
Figure~\ref{fig:dynamic_infix_store}-A
is stored with the two-bit padding~\underline{10}, whereas the full-length
remainder in Slot~101 is stored with a padding of~\underline{1}. Due to the
delimiting~1 bit, each slot must be one bit wider, adding a memory overhead
of~1 BPK. 

Each split reduces the length of the remainders within one Infix Store by one
bit on average. Since each Infix Store is split after roughly~$T$ insertions
(doubling its expected size), one can expect half of its remainders to be of
full length, a quarter (from one split ago) to be shorter by one bit, an eighth
(from two splits ago) to be shorter by two bits, etc. Taking a weighted average
of the remainders' FPRs based on their lengths and proportion, we obtain an FPR
that increases logarithmically with data size, in line with our past work on
expandable filers. One can maintain a constant FPR by elongating new
remainders, similarly to InfiniFilter's Widening Regime or Aleph Filter's
Predictive
Regime~\cite{InfiniFilter,AlephFilter,pagh2013approximatesetknowingsize}.

After many splits, some infixes' remainders may run out of bits, preventing
Diva from computing an unambiguous quotient for them. Diva duplicates such
infixes to account for all possible missing quotient bits and prevent false
negatives, similarly to our work on Aleph Filter~\cite{AlephFilter}. For
example, it duplicates the rightmost remainder in
Figure~\ref{fig:dynamic_infix_store}{\nobreakdash-}B
to create the highlighted remainders in
Figure~\ref{fig:dynamic_infix_store}{\nobreakdash-}D
in two different runs. Duplicating infixes complicates deletions, as there may
be multiple duplicates to clean up. Our past work offers several methods for
tackling this complexity~\cite{AlephFilter}.

\begin{figure}
    \centering
    \pgfdeclarelayer{background layer}
    \pgfsetlayers{background layer,main}
    \vspace{-6mm}
    \begin{tikzpicture}
        \def\samplesep{20pt}
        \def\keysep{4pt}
        \def\keylabelsep{4pt}

        \node[inner sep=2pt,align=center,rotate=-90] (predecessor) at (0,0) {\small cbc.ca/bc/ fe\hspace{1.5pt} atures/roadtothegames}; 
        \node[inner sep=2pt,above=\keylabelsep of predecessor.west] (predecessor_label) {\small Predecessor};

        \begin{scope}[name prefix=key_]
            \node[inner sep=2pt,align=center,rotate=-90,above right=0pt and \samplesep of predecessor.west] (1) {\small cbc.ca/bc/ fe\hspace{1.5pt} atures/roadtothegames/quiz};
            \node[inner sep=2pt,align=center,rotate=-90,above right=0pt and \keysep of 1.west] (2) {\small cbc.ca/bc/ fe\hspace{1.5pt} atures/yourbottomline};
            \node[inner sep=2pt,align=center,rotate=-90,above right=0pt and \keysep of 2.west] (3) {\small cbc.ca/bc/ fe\hspace{1.5pt} atures/yourbottomline/media};
            \node[inner sep=2pt,align=center,rotate=-90,above right=0pt and \keysep of 3.west] (4) {\small cbc.ca/bc/ ga sprices};
            \node[inner sep=2pt,align=center,rotate=-90,above right=0pt and \keysep of 4.west] (5) {\small cbc.ca/bc/ ne ws};
            \node[inner sep=2pt,align=center,rotate=-90,above right=0pt and \keysep of 5.west] (6) {\small cbc.ca/bc/ ne ws/scholarships};
            \node[inner sep=2pt,align=center,rotate=-90,above right=0pt and \keysep of 6.west] (7) {\small cbc.ca/bc/ ne ws/yourstory};
            \node[inner sep=2pt,align=center,rotate=-90,above right=0pt and \keysep of 7.west] (8) {\small cbc.ca/bc/ ne ws/yourstory/features/q...}; 
        \end{scope}
        \node[inner sep=2pt] (keys_label) at ($(key_4.west)+(\keysep,\keylabelsep+4pt)$) {\small Keys};

        \node[inner sep=2pt,align=center,rotate=-90,above right=0pt and \samplesep of key_8.west] (successor) {\small cbc.ca/bc/ ne ws/yourstory/features/v...}; 
        \node[inner sep=2pt,above=\keylabelsep of successor.west] (successor_label) {\small Successor};

        \begin{pgfonlayer}{background layer}
            \def\sharedprefixdiffy{-35pt}
            \def\truncatedsuffixdiffy{-45pt}
            \def\truncatedsuffixAdiffy{-68.5pt}
            \def\truncatedsuffixBdiffy{-55.5pt}
            \draw[decorate,decoration={brace,mirror,raise=1pt,amplitude=2pt}] ($(predecessor.south west)+(0,-2pt)$) -- ($(predecessor.south west)+(0,\sharedprefixdiffy)$)
                node[pos=0.5,left=4pt,inner sep=1pt,align=center] (longest_common_prefix_label) {\small Longest \\[-3pt] \small Common \\[-3pt] \small Prefix};
            \draw[decorate,decoration={brace,mirror,raise=1pt,amplitude=2pt}] ($(predecessor.south west)+(0,\sharedprefixdiffy)$) -- ($(predecessor.south west)+(0,\truncatedsuffixdiffy)$)
                node[pos=0.5,left=4pt,inner sep=1pt,align=right] (infix_label) {\small 2-Byte Infix};
            \fill[fill=gray!20] ($(key_1.south west)+(2pt,\sharedprefixdiffy)$) rectangle ($(key_3.north west)+(-2pt,\truncatedsuffixAdiffy)$);
            \fill[fill=gray!20] ($(key_5.south west)+(0.5pt,\sharedprefixdiffy)$) rectangle ($(key_8.north west)+(-2pt,\truncatedsuffixBdiffy)$);
            \draw[densely dotted] ($(predecessor.south west)+(0,\sharedprefixdiffy)$) -- ($(successor.north west)+(0,\sharedprefixdiffy)$);
            \draw[densely dotted] ($(predecessor.south west)+(0,\truncatedsuffixdiffy)$) -- ($(successor.north west)+(0,\truncatedsuffixdiffy)$);
        \end{pgfonlayer}

        \node[inner sep=0pt,anchor=west] (centering_dummy) at (5.25,3.5) { };
    \end{tikzpicture}
    \vspace{-2mm}
    \caption{The length of the longest common prefix of adjacent keys within a
    dataset coming from a jagged distribution is jumpy, leading to the creation
    of identical infixes.}
    \label{fig:jagged_distribution}
\end{figure}

\begin{figure*}
    \noindent
    \begin{minipage}[t]{0.6\textwidth}
        \centering
        \pgfdeclarelayer{background layer}
        \pgfsetlayers{background layer,main}
        \begin{tikzpicture}[level distance=1.3cm,
                            level 1/.style={sibling distance=4cm},
                            level 2/.style={sibling distance=1.5cm},
                            level 3/.style={sibling distance=1.0cm},
                            level 4/.style={sibling distance=1.0cm},
                            edge label/.style={inner sep=2pt, font=\small},
                            normal edge/.style={thin},
                            thick edge/.style={very thick}]
            \def\corpusx{-1.30}
            \def\corpusy{-4.75}
            \def\queryx{-1.1674}
            \def\queryy{-6.5}
            \def\figlabelx{-2.25}
            \def\figlabely{1.0}

            \node[inner sep=2pt,anchor=south,align=center] (bdgiklmosv) at (2.5,0) {\small $\{B,D,G, \; I,K,L,M,O,S,V\}$ \\ \small $100\%$}
                child[sibling distance = 3.6cm] {
                    node[inner sep=2pt,anchor=south,align=center] (BDG) {\small $\{B, D, \; G\}$ \\ \small $44\%$}
                    child {
                        node[inner sep=2pt,align=center] (BD) {\small $\{B, \; D\}$ \\ \small $31\%$}
                        child {
                            node[inner sep=1pt,anchor=south,align=center] (B) {\small $\{B\}$ \\ \small $6\%$}
                            edge from parent[thick edge] node[pos=0.56,above left=-3pt and 0pt,edge label] {0}
                        }
                        child {
                            node[inner sep=1pt,anchor=south,align=center] (D) {\small $\{D\}$ \\ \small $25\%$}
                            edge from parent[normal edge] node[pos=0.5,above right=-3pt and 0pt,edge label] {1}
                        }
                        edge from parent[thick edge] node[pos=0.5,above left=-1pt,edge label] {0}
                    }
                    child {
                        node[inner sep=1pt,align=center] (G) {\small $\{G\}$ \\ \small $13\%$}
                        edge from parent[thick edge] node[pos=0.5,above right=-1pt,edge label] {1}
                    }
                    edge from parent[thick edge] node[pos=0.5,above left=-1pt,edge label] {0}
                }
                child[edge from parent path={(\tikzparentnode.south) -- (\tikzchildnode.north)}] {
                    node[inner sep=2pt,anchor=south,align=center] (IKLMOSV) {\small $\{I,K,L, \; M,O,S,V\}$ \\ \small $56\%$}
                    child[sibling distance = 3cm] {
                        node[inner sep=1pt,align=center] (IKL) {\small $\{I,K, \; L\}$ \\ \small $25\%$}
                        child {
                            node[inner sep=1pt,anchor=south,align=center] (I) {\small $\{I\}$ \\ \small $13\%$}
                            edge from parent[edge from parent path={
                                (\tikzparentnode) --node[pos=0.56,above left=-3pt and 0pt,edge label] {0} (\tikzchildnode)
                            }] 
                        }
                        child {
                            node[inner sep=1pt,anchor=south,align=center] (KL) {\small $\{K, \; L\}$ \\ \small $12\%$}
                            child {
                                node[inner sep=1pt,anchor=south,align=center] (K) {\small $\{K\}$ \\ \small $6\%$}
                                edge from parent[edge from parent path={
                                    (\tikzparentnode) --node[pos=0.5,above left=-3pt and 0pt,edge label] {0} (\tikzchildnode)
                                }] 
                            }
                            child {
                                node[inner sep=1pt,anchor=south,align=center] (L) {\small $\{L\}$ \\ \small $6\%$}
                                edge from parent[edge from parent path={
                                    (\tikzparentnode) --node[pos=0.5,above right=-3pt and 0pt,edge label] {1} (\tikzchildnode)
                                }] 
                            }
                            edge from parent[edge from parent path={
                                (\tikzparentnode) --node[pos=0.56,above right=-3pt and 0pt,edge label] {1} (\tikzchildnode)
                            }] 
                        }
                        edge from parent[normal edge, edge from parent path={
                            (\tikzparentnode) --node[pos=0.5,above left=-1pt,edge label] {0} (\tikzchildnode)
                        }] 
                    }
                    child[sibling distance = 4cm] {
                        node[inner sep=2pt,align=center] (MOSV) {\small $\{M,O, \; S,V\}$ \\ \small $31\%$}
                        child[sibling distance = 2cm] {
                            node[inner sep=1pt,anchor=south,align=center] (MO) {\small $\{M, \; O\}$ \\ \small $12\%$}
                            child {
                                node[inner sep=1pt,anchor=south,align=center] (M) {\small $\{M\}$ \\ \small $6\%$}
                                edge from parent[edge from parent path={
                                    (\tikzparentnode) --node[pos=0.5,above left=-3pt and 0pt,edge label] {0} (\tikzchildnode)
                                }] 
                            }
                            child {
                                node[inner sep=1pt,anchor=south,align=center] (O) {\small $\{O\}$ \\ \small $6\%$}
                                edge from parent[edge from parent path={
                                    (\tikzparentnode) --node[pos=0.5,above right=-3pt and 0pt,edge label] {1} (\tikzchildnode)
                                }] 
                            }
                            edge from parent[edge from parent path={
                                (\tikzparentnode) --node[pos=0.5,above left=-3pt and 0pt,edge label] {0} (\tikzchildnode)
                            }] 
                        }
                        child[sibling distance = 2cm] {
                            node[inner sep=1pt,anchor=south,align=center] (SV) {\small $\{S, \; V\}$ \\ \small $19\%$}
                            child {
                                node[inner sep=1pt,anchor=south,align=center] (S) {\small $\{S\}$ \\ \small $13\%$}
                                edge from parent[edge from parent path={
                                    (\tikzparentnode) --node[pos=0.5,above left=-3pt and 0pt,edge label] {0} (\tikzchildnode)
                                }] 
                            }
                            child {
                                node[inner sep=1pt,anchor=south,align=center] (V) {\small $\{V\}$ \\ \small $6\%$}
                                edge from parent[edge from parent path={
                                    (\tikzparentnode) --node[pos=0.5,above right=-3pt and 0pt,edge label] {1} (\tikzchildnode)
                                }] 
                            }
                            edge from parent[edge from parent path={
                                (\tikzparentnode) --node[pos=0.5,above right=-3pt and 0pt,edge label] {1} (\tikzchildnode)
                            }] 
                        }
                        edge from parent[normal edge, edge from parent path={
                            (\tikzparentnode) --node[pos=0.5,above right=-1pt,edge label] {1} (\tikzchildnode)
                        }] 
                    }
                    edge from parent[normal edge, edge from parent path={
                        (\tikzparentnode) --node[pos=0.5,above right=-1pt,edge label] {1} (\tikzchildnode)
                    }] 
                };
            \node[inner sep=1pt,above left=-10.5pt and 8pt of B] (symbol_label) {\small Symbol};
            \node[inner sep=1pt,below left=-8.5pt and 8pt of B] (freq_label) {\small Frequency};
            \draw[stealth-] ([yshift=1pt]symbol_label.east) -- ([xshift=8pt,yshift=1pt]symbol_label.east);
            \draw[stealth-] (freq_label.east) -- ([xshift=8pt]freq_label.east);

            \node[inner sep=1pt,above right=4pt and -81.01pt of bdgiklmosv] (bisection_label) {\small Bisection};
            \draw[densely dotted] (bisection_label.south) -- ([xshift=0,yshift=-15pt]bisection_label.south);
            \draw[densely dotted] ([xshift=5pt,yshift=-1pt]BDG.north) -- ([xshift=5pt,yshift=10pt]BDG.south);
            \draw[densely dotted] ([xshift=0.25pt,yshift=-1pt]BD.north) -- ([xshift=0.25pt,yshift=9pt]BD.south);
            \draw[densely dotted] ([xshift=-6.0pt,yshift=-1pt]IKLMOSV.north) -- ([xshift=-6.0pt,yshift=10pt]IKLMOSV.south);
            \draw[densely dotted] ([xshift=4.5pt,yshift=-1pt]IKL.north) -- ([xshift=4.5pt,yshift=9pt]IKL.south);
            \draw[densely dotted] ([xshift=1.5pt,yshift=-1pt]KL.north) -- ([xshift=1.5pt,yshift=9pt]KL.south);
            \draw[densely dotted] ([xshift=2.25pt,yshift=-1pt]MOSV.north) -- ([xshift=2.25pt,yshift=9pt]MOSV.south);
            \draw[densely dotted] ([xshift=1.5pt,yshift=-1pt]MO.north) -- ([xshift=1.5pt,yshift=9pt]MO.south);
            \draw[densely dotted] ([xshift=0.5pt,yshift=-1pt]SV.north) -- ([xshift=0.5pt,yshift=9pt]SV.south);

            \node[inner sep=0pt,align=left] (corpus_text) at (\corpusx,\corpusy) {\small Keys: SIGKDD \\[-4pt] \small \hspace*{0.617cm} SIGMOD \\[-4pt] \small \hspace*{0.617cm} VLDB};
            \node[inner sep=0pt,right=32pt of corpus_text.north east,anchor=north west,align=left] (corpus_ascii) {\small ASCII: 01010011 01001001 01000111 01001011 01000100 01000100, \\[-4pt] \small \hspace*{\widthof{ASCII:}} 01010011 01001001 01000111 01001101 01001111 01000100, \\[-4pt] \small \hspace*{0.7470cm} 01010110 01001100 01000100 01000010};
            \node[inner sep=0pt,below right=24pt and -8.67pt of corpus_ascii.west,anchor=west,align=left] (corpus_encoded) {\small Encoded: 1110 100 01 1010 001 001, \\[-4pt] \small \hspace*{\widthof{Encoded:}} 1110 100 01 1100 1101 001, \\[-4pt] \small \hspace*{1.0512cm} 1111 1011 001 000};

            \node[inner sep=0pt] (query_text) at (\queryx,\queryy) {\small Query: [SIGO, SIGS]};
            \node[inner sep=0pt,right=19.5pt of query_text.north east,anchor=north west,align=left] (query_ascii) {\small ASCII: 01010011 01001001 01000111 01001111, \\[-4pt] \small \hspace*{\widthof{ASCII:}} 01010011 01001001 01000111 01010011};
            \node[inner sep=0pt,below right=20pt and -8.67pt of query_ascii.west,anchor=west,align=left] (query_encoded) {\small Encoded: 1110 100 01 1101, \\[-4pt] \small \hspace*{\widthof{Encoded:}} 1110 100 01 1110};

            \node[inner sep=0pt,anchor=west] (fig_label) at (\figlabelx,\figlabely) {\small \textbf{A) Mehlhorn's Bisection}};
        \end{tikzpicture}
    \end{minipage}
    \hspace{12pt}
    \begin{minipage}[t]{0.35\textwidth}
        \vspace*{-8.635cm}
        \pgfdeclarelayer{background layer}
        \pgfsetlayers{background layer,main}
        \begin{tikzpicture}
            \def\samplesep{32pt}
            \def\keysep{4pt}
            \def\keylabelsep{0pt}
            \def\figlabelx{-1.45}
            \def\figlabely{2.63}

            \node[inner sep=2pt,align=center,rotate=-90] (predecessor) at (0,0) {\small 01010 0110100100101000 $\cdots$};
            \node[inner sep=2pt,above=\keylabelsep of predecessor.west,align=center] (predecessor_label) {\small Predecessor \\[-2pt] \small SIGKDD};

            \node[inner sep=2pt,align=center,rotate=-90,above right=0pt and \samplesep of predecessor.west] (key) {\small 01010 0110100100101000 $\cdots$};
            \node[inner sep=2pt,align=center] (key_label) at ($(key.west)+(0,\keylabelsep+9.4pt)$) {\small Key \\[-2pt] \small SIGMOD};
            \node[inner sep=2pt,align=center,rotate=-90,above right=0pt and 8*\keysep of key.west] (query_l) {\small 01010 0110100100101000 $\cdots$};
            \node[inner sep=2pt,align=center,rotate=-90,above right=0pt and \keysep of query_l.west] (query_r) {\small 01010 0110100100101000 $\cdots$};
            \node[inner sep=2pt,align=center] (query_label) at ($(query_l.west)+(\keysep,\keylabelsep+8.7pt)$) {\small Query \\[-2pt] \small [SIGO, SIGS]};

            \node[inner sep=2pt,align=center,rotate=-90,above right=0pt and \samplesep of query_r.west] (successor) {\small 01010 1100100110001000 $\cdots$};
            \node[inner sep=2pt,above=\keylabelsep of successor.west,align=center] (successor_label) {\small Successor \\[-2pt] \small VLDB};

            \begin{pgfonlayer}{background layer}
                \def\sharedprefixdiffy{-21.75pt}
                \def\truncatedsuffixdiffy{-83.4pt}
                \draw[decorate,decoration={brace,mirror,raise=1pt,amplitude=2pt}] ($(predecessor.south west)+(0,-2pt)$) -- ($(predecessor.south west)+(0,\sharedprefixdiffy)$)
                    node[pos=0.5,left=4pt,inner sep=1pt,align=center] (longest_common_prefix_label) {\small Longest \\[-3pt] \small Common \\[-3pt] \small Prefix};
                \draw[decorate,decoration={brace,mirror,raise=1pt,amplitude=2pt}] ($(predecessor.south west)+(0,\sharedprefixdiffy)$) -- ($(predecessor.south west)+(0,\truncatedsuffixdiffy)$)
                    node[pos=0.5,left=4pt,inner sep=1pt,align=center] (infix_label) {\small 2-Byte \\[-2pt] \small Infix};
                \fill[fill=gray!20] ($(key.south west)+(0.5pt,\sharedprefixdiffy)$) rectangle ($(query_l.north west)+(-1pt,\truncatedsuffixdiffy)$);
                \draw[densely dotted] ($(predecessor.south west)+(0,\sharedprefixdiffy)$) -- ($(successor.north west)+(0,\sharedprefixdiffy)$);
                \draw[densely dotted] ($(predecessor.south west)+(0,\truncatedsuffixdiffy)$) -- ($(successor.north west)+(0,\truncatedsuffixdiffy)$);
            \end{pgfonlayer}
            \node[inner sep=0pt,anchor=west] (fig_label) at (\figlabelx,\figlabely) {\small \textbf{B) Before Entropy Encoding: False Positive}};
        \end{tikzpicture}
        \\[12pt]
        \pgfdeclarelayer{background layer}
        \pgfsetlayers{background layer,main}
        \begin{tikzpicture}
            \def\samplesep{32pt}
            \def\keysep{4pt}
            \def\keylabelsep{4pt}
            \def\figlabelx{-1.45}
            \def\figlabely{2.0}

            \node[inner sep=2pt,align=center,rotate=-90] (predecessor) at (0,0) {\small 111 0100011010001001 $\cdots$};

            \node[inner sep=2pt,align=center,rotate=-90,above right=0pt and \samplesep of predecessor.west] (key) {\small 111 0100011100110100 $\cdots$};
            \node[inner sep=2pt,align=center,rotate=-90,above right=0pt and 8*\keysep of key.west] (query_l) {\small 111 0100011101000000 $\cdots$};
            \node[inner sep=2pt,align=center,rotate=-90,above right=0pt and \keysep of query_l.west] (query_r) {\small 111 0100011110000000 $\cdots$};

            \node[inner sep=2pt,align=center,rotate=-90,above right=0pt and \samplesep of query_r.west] (successor) {\small 111 1011001000000000 $\cdots$};

            \begin{pgfonlayer}{background layer}
                \def\sharedprefixdiffy{-13.75pt}
                \def\truncatedsuffixdiffy{-75.9pt}
                \draw[densely dotted] ($(predecessor.south west)+(0,\sharedprefixdiffy)$) -- ($(successor.north west)+(0,\sharedprefixdiffy)$);
                \draw[densely dotted] ($(predecessor.south west)+(0,\truncatedsuffixdiffy)$) -- ($(successor.north west)+(0,\truncatedsuffixdiffy)$);
            \end{pgfonlayer}
            \node[inner sep=0pt,anchor=west] (fig_label) at (\figlabelx,\figlabely) {\small \textbf{C) After Entropy Encoding: True Negative}};
        \end{tikzpicture}
    \end{minipage}
    \caption{Diva++ replaces each byte within a key by the root-to-leaf
    sequence of bits in its Mehlhorn tree (Part A). The symbols are shown in
    lexicographic order from left to right. Applying the entropy encoding
    distinguishes the infixes and turns false positives (Part B) into true
    negatives (Part C).}
    \label{fig:mehlhorn_example}
\end{figure*}

\textbf{Deletions.}
Diva deletes a key by first searching for it in its \ST. If found, it removes
the sample and merges the neighboring Infix Stores. \new{If not
found, it searches the Infix Store the key falls into for infixes whose
remaining bits match the key. Since the matching infixes may be of different
lengths, we must carefully choose which to remove to maintain the filter's
semantics. To illustrate, we consider the case where there are two matches. One
cannot remove the shorter infix without risking introducing false negatives.
The reason is that the shorter infix may correspond to a key different from the
deleted key that also disagrees with the longer infix. Removing the shorter
infix would leave the corresponding key without a representative, causing a
query targeting that key to result in a false negative. Diva avoids such issues
by always removing the match with the \emph{longest matching infix}, similarly
to other expandable filters~\cite{InfiniFilter,AlephFilter,Zeno}.}

\new{Note that, as in the case of all other dynamic
filters~\cite{CuckooFilter,GQF,InfiniFilter,AlephFilter,Zeno,Memento}, users
must ensure that deletions only target keys previously inserted into Diva. This
is because a filter has no way of verifying a deletion's validity.}

\textbf{Queries.}
Point and range queries are processed similarly to the static case, with the
only difference being that the missing bits from shorter infixes are treated as
wildcards. 

\section{Diva++}\label{sec:diva++}
As stated in Section~\ref{sec:diva_static}, Diva provides a
semi-robust FPR guarantee when the dataset follows a well-behaved (i.e.,
smooth) distribution. Yet, ``jagged'' (i.e., non-smooth) data distributions are
prevalent in many settings, including natural language~\cite{enwiki},
emails~\cite{Enron}, and URLs~\cite{MemeTracker}. In such datasets, the length
of the longest common prefix of adjacent keys is jumpy.
Figure~\ref{fig:jagged_distribution}
presents example keys from a URL dataset~\cite{MemeTracker} and the process of
deriving a 2{\nobreakdash-}byte infix for each before their insertion into an
Infix Store. For the remainder of this section, each character is encoded as a
single byte in ASCII. As shown, the longest common prefix of the Infix Store's
predecessor and successor is 10 bytes long, yet many adjacent keys between them
have much longer common prefixes (highlighted). For instance, the first three
keys share a 19-byte prefix, while the last four keys share a 14-byte prefix.
Consequently, after suffix truncation, the infixes of these two groups of keys
(delineated by the two dotted lines) become indistinguishable. Any negative
query falling within such a group of keys sharing the same infix will match
that infix, resulting in a false positive. This phenomenon can substantially
increase the FPR, especially when the query endpoints follow a distribution
similar to that of the dataset. 

To address this challenge, we introduce Diva++, an extension of Diva that
provides the same FPR guarantees for a broader class of jagged data
distributions (characterized towards the end of this section). Crucially,
Diva++ does so without the added memory overhead that simple modifications to
Diva (such as storing more \ST samples or longer infixes) would entail. It
achieves this through two complementary techniques that distinguish adjacent
infixes by removing redundancies among them: Order-Preserving Entropy Encoding
(Section~\ref{sec:order_preserving_entropy_encoding})
and Bit-Level Full Key Differentiation
(Sections~\ref{sec:static_bit-level_full_key_diff}
and
\ref{sec:dynamic_bit-level_full_key_diff}).
We discuss how these techniques affect queries in
Section~\ref{sec:diva++_queries}. We also evaluate these
techniques on jagged data distributions in
Experiment~\hyperlink{experiment:fpr_string}{3} of
Section~\ref{sec:evaluation_static}.

\subsection{Order-Preserving Entropy Encoding}\label{sec:order_preserving_entropy_encoding}
To reduce the length and jumpiness of adjacent common prefixes, we apply
entropy encoding to the input keys as a preprocessing step before inserting
them into the filter. Entropy encoding replaces long, repeating byte sequences
with shorter codes, thereby shortening common prefixes and allowing more
distinguishing information from each key to fit in its respective infix. This
increases the likelihood of infixes being distinguishable, even under jagged
data distributions. A central challenge, however, is ensuring that these
benefits are not offset by decoding overheads during query processing. To
address this, Diva++ employs an \emph{order-preserving} variant of entropy
coding, which ensures encoded keys retain the same lexicographical ordering as
the original keys. This enables answering a query by operating directly over
the encoded representations of the keys, thus eliminating the need for
decoding.

Specifically, we build upon Mehlhorn's bisection~\cite{MehlhornBisection}, an
entropy encoding algorithm that produces \emph{near-optimal order-preserving
prefix codes}. The prefix-code property ensures that no code is the prefix of
another, allowing encoded byte sequences to be unambiguously decoded without
requiring delimiters. The near-optimality property refers to achieving a
compression ratio close to the theoretical limit while preserving the
lexicographic ordering of the original keys. In particular, the average code
length produced by Mehlhorn's bisection is at most one bit longer than the
average code length of Huffman coding~\cite{Huffman} in the worst case, which
is itself at most one bit longer than the zero-order entropy of the
data~\cite{MehlhornBisection}.

\textbf{Construction.}
We apply Mehlhorn's bisection by decomposing each key into its constituent
bytes and treating each byte as a \emph{symbol} to be encoded. Based on the
occurrence frequency of each symbol across all keys, Mehlhorn's bisection
recurses top-down to construct an encoding tree. It initializes the tree's root
with the full set of symbols sorted lexicographically. It partitions this set
into two segments of lexicographically smaller and larger symbols such that the
total symbol frequency in each segment is as balanced as possible. It then
creates nodes for the left and right segments and connects the root to them
with edges labeled with~0 and 1 to reflect the original symbol ordering. Each
segment is recursed on until each symbol is assigned to a dedicated leaf node.
The code for a symbol is the string of bits written on the path from the root
to its corresponding leaf node.

Figure~\ref{fig:mehlhorn_example}{\nobreakdash-}A
illustrates an example of encoding the symbols in the three keys ``SIGKDD,''
``SIGMOD,'' and ``VLDB''. The leaf nodes above these keys store their
characters in lexicographical order from left to right. Here, the most balanced
partitioning of the root node's set~$\{B,D,G,I,K,L,M,O,S,V\}$ is
a~44{\nobreakdash-}56 split into~$\{B,D,G\}$ and $\{I,K,L,M,O,S,V\}$. Thus,
Mehlhorn's bisection splits the symbols into these two sets and recurses. For
the left child's set~$\{B,D,G\}$, the most balanced partitioning is
a~31{\nobreakdash-}13 split into~$\{B,D\}$ and $\{G\}$, assigning~$G$ to its
own leaf node. The recursion concludes by splitting the set~$\{B,D\}$ into its
two leaf nodes $\{B\}$ and $\{D\}$. The same process is applied to the right
child of the root node. To give examples of the resulting encodings, the
letter~$B$ is encoded as~$(000)_2$ and~$G$ as~$(01)_2$, preserving the original
lexicographic ordering while significantly shortening the
old~8{\nobreakdash-}bit ASCII encoding. As shown, the produced codes are~3.3
bits long on average, which is close to not only the average Huffman code
length of~3.125 bits but also to the underlying zero-order entropy of~$\approx
2.17$ bits per symbol. We show both the original ASCII encoding and the
  Mehlhorn encoding of the keys below the encoding tree in
  Figure~\ref{fig:mehlhorn_example}{\nobreakdash-}A.

To show how this entropy encoding makes false positives less likely during
query processing, consider a scenario where ``SIGKDD'' and ``VLDB'' are the
predecessor and successor samples of the key ``SIGMOD,'' which is stored as an
infix within an Infix Store. Also consider the empty range query [``SIGO,''
``SIGS''], with the ASCII and encoded representations shown at the bottom of
Figure~\ref{fig:mehlhorn_example}{\nobreakdash-}A.
The highlighted bits in
Figure~\ref{fig:mehlhorn_example}{\nobreakdash-}B
illustrate how, when using 2-byte infixes along with the original ASCII
encoding of the keys, the left query endpoint matches the infix of the key
``SIGMOD'' and results in a false positive. Applying the entropy encoding of
Figure~\ref{fig:mehlhorn_example}{\nobreakdash-}A
turns
Figure~\ref{fig:mehlhorn_example}{\nobreakdash-}B
into
Figure~\ref{fig:mehlhorn_example}{\nobreakdash-}C,
where the range query endpoints no longer match the infix of ``SIGMOD,''
yielding a true negative.

\textbf{Higher-Order Patterns and Codes.}
Real-world data, particularly data commonly used as database keys or
attributes, often exhibits higher-order patterns wherein certain symbols appear
more frequently after specific substrings. For example, in English text, the
letter following the substring ``th'' is very likely to be ``e'' and highly
unlikely to be ``x.'' In such cases, the compression ratio can be improved by
assigning a shorter code to a symbol based on its preceding substring. 

This phenomenon can be exploited through \emph{higher-order entropy encoding},
a class of encoding techniques that utilize the ``\emph{conditional
frequency}'' of the symbols. In general, a~$k${\nobreakdash-}th order encoding
computes the frequency at which each symbol occurs following each
possible~$k$-symbol substring. The simplest way of utilizing these frequencies
in the context of Mehlhorn's bisection is to construct a separate encoding tree
for every possible~$k$-symbol substring using the corresponding conditional
frequencies.\footnote{More generally, this approach can be applied to derive a
higher-order encoding scheme from any zero-order scheme~\cite{kthOrderCodes}.}
Once these trees are constructed, each symbol in the data is encoded using the
tree corresponding to its~$k$ preceding symbols. A larger value of~$k$ allows
for exploiting more patterns, yielding a higher compression ratio in exchange
for higher construction and memory overheads due to having more encoding trees.

In Diva++, we employ second-order Mehlhorn codes ($k=2$). We found that this
choice significantly improves the compression ratio while ensuring that the
encoding trees incur a negligible memory overhead relative to the filter's
size. Table~\ref{tab:entropy_encoding_stats}
demonstrates this by comparing the tree sizes and compression ratios of
second-order Mehlhorn codes with zero-, one-, and third-order codes on the
URLs~dataset~\cite{MemeTracker}.

\begin{table}
    \centering
    \caption{Higher-order Mehlhorn codes yield higher compression ratios in
    exchange for a larger memory footprint for the encoding trees. The
    compression ratios reported are from the URLs dataset~\cite{MemeTracker}
    used in our evaluation.}
    \def\arraystretch{1.1}
    \setlength{\tabcolsep}{6.0pt}
    \begin{tabular}{ccccc}
        \toprule

        & \textbf{\small $k=0$}
        & \textbf{\small $k=1$}
        & \textbf{\small $k=2$} 
        & \textbf{\small $k=3$} \\

        \midrule

        \small \textbf{Compression Ratio} & \small 1.33 & \small 1.87 & \small 2.25 & \small 2.56 \\[4pt]

        \small \makecell{\textbf{Size Ratio} \\ \textbf{(Encoding Trees vs. Filter)}} & \small $<0.0001\%$ & \small 0.002\% & \small 0.1\% & \small 8.5\% \\[8pt]
        
        \small \makecell{\textbf{Construction Time Ratio} \\ \textbf{(Encoding Trees vs. Filter)}} & \small $<0.0001\%$ & \small 0.001\% & \small 0.04\% & \small 2.9\% \\
        \bottomrule
    \end{tabular}
    \label{tab:entropy_encoding_stats}
\end{table}

\textbf{Computing and Maintaining the Encoding.}
When the dataset is static, the encoding trees are constructed only once for
the entire filter during a preprocessing pass over the data, incurring a
negligible computational overhead ($\approx 0.04\%$ of total filter
construction time, as shown in the bottom row of
Table~\ref{tab:entropy_encoding_stats}).
When the dataset is dynamic, however, the initial dataset may be too small to
accurately capture the eventual symbol frequencies. Nevertheless, it is often
feasible to construct the encoding trees using domain knowledge. For example,
an application operating on English text only requires a tree encoding the
statistical structure of English, which can be computed in advance.
Alternatively, the encoding trees can be periodically reconstructed as the
dataset evolves. To support this, our implementation includes a library for
applying Mehlhorn's bisection that can be invoked at any time to recompute the
encoding trees.

\textbf{Other Encoding Schemes.}
Aside from Mehlhorn's bisection, several other order-preserving encoding
schemes have been proposed in the
literature~\cite{AntoshenkovOrderPreservingCompression,CarstenOrderPreservingCompression,HOPE,OptimumBSTs,HuTucker,GarsiaWachs,ElementsOfInformationTheory}.
Among these, Hu-Tucker coding~\cite{OptimumBSTs,HuTucker,GarsiaWachs} is an
order-preserving variant of Huffman coding~\cite{Huffman} that achieves the
optimal compression ratio while maintaining the lexicographic ordering of the
symbols. Although Hu-Tucker coding provides a slightly better compression ratio
than Mehlhorn's bisection, we opt for the latter for two reasons:
(1)~Mehlhorn's tree is faster to construct than a Hu-Tucker tree ($O(\sigma)$
vs. $O(\sigma \log \sigma)$, where $\sigma$ is the number of unique symbols),
and (2)~Mehlhorn's bisection is simpler to implement.

Other encoding methods aim to attain a better compression ratio by capturing
even higher-order patterns in the
data~\cite{AntoshenkovOrderPreservingCompression,CarstenOrderPreservingCompression,HOPE}.
They do this by heuristically maintaining a small dictionary that maps symbol
strings of various lengths to short codes rather than storing an exhaustive set
of encoding trees. This helps most with very long keys, where very high-order
patterns are more likely to occur. However, we found that these schemes yield a
worse FPR when used with Diva++. The reason is two-fold: (1)~keys in range
filters are usually short to medium length, leaving fewer high-order patterns
for these methods to exploit, and (2)~these methods prioritize encoding and
decoding speed by producing byte-aligned codes rather than bit-aligned ones,
thus compromising the compression ratio. In contrast, a second-order Mehlhorn
code exploits many patterns while retaining discriminative information at the
finest~granularity.

\begin{figure*}
    \centering
    \hspace*{0mm}
    \pgfdeclarelayer{background layer}
    \pgfsetlayers{background layer,main}
    \begin{tikzpicture}
        \def\encodingoffx{3.75}
        \def\encodingoffy{-3.9}
        \def\circlerad{1pt}

        \node[inner sep=0pt] (encoding) at (\encodingoffx,\encodingoffy) {\small Encoding: \hspace{8pt} 0 0 01 1 0 01 0 1 1 1 011 101 000};

        \begin{scope}[name prefix=trie_]
            \def\trieoffx{3.5}
            \def\trieoffy{0.0}
            \def\triehdiff{0.75*0.7}
            \def\trieleveldiff{0.5}

            \node[inner sep=\circlerad,fill=black,circle] (1) at (\trieoffx,\trieoffy) { };

            \node[inner sep=\circlerad,fill=black,circle] (2) at (\trieoffx,\trieoffy-1*\trieleveldiff) { };
            \draw (1) -- node[pos=0.5,inner sep=1pt,right=0pt] {\small 1} (2);
            \node[inner sep=0pt,above left=-1pt and 1pt of 2] (label_2) {\small $a$};

            \node[inner sep=\circlerad,fill=black,circle] (3) at (\trieoffx-2*\triehdiff,\trieoffy-2*\trieleveldiff) { };
            \draw (2) -- node[pos=0.3,inner sep=1pt,left=5pt] {\small 0} (3);

            \node[inner sep=\circlerad,fill=black,circle] (4) at (\trieoffx-2*\triehdiff,\trieoffy-3*\trieleveldiff) { };
            \draw (3) -- node[pos=0.5,inner sep=1pt,left=0pt] {\small 0} (4);
            \node[inner sep=0pt,above right=-1pt and 1pt of 4] (label_4) {\small $b$};

            \node[inner sep=\circlerad,fill=black,circle] (6) at (\trieoffx-3*\triehdiff,\trieoffy-4*\trieleveldiff) { };
            \draw (4) -- node[pos=0.35,inner sep=1pt,left=2pt] {\small 0} (6);
            \node[inner sep=0pt,above left=-1pt and 1pt of 6] (label_6) {\small $c$};

            \node[inner sep=\circlerad,fill=black,circle] (7) at (\trieoffx-1*\triehdiff,\trieoffy-4*\trieleveldiff) { };
            \draw (4) -- node[pos=0.35,inner sep=1pt,right=2pt] {\small 1} (7);
            \node[inner sep=0pt,above right=-1pt and 1pt of 7] (label_7) {\small $d$};

            \node[inner sep=\circlerad,fill=black,circle] (5) at (\trieoffx+2*\triehdiff,\trieoffy-2*\trieleveldiff) { };
            \draw (2) -- node[pos=0.3,inner sep=1pt,right=5pt] {\small 1} (5);
            \node[inner sep=0pt,above right=-1pt and 1pt of 5] (label_5) {\small $e$};

            \node[inner sep=1pt,below=0pt of 6,align=center] (suffix_1) {\small 1 \\[-4pt] \small 0 \\[-4pt] \small 1};
            \node[inner sep=1pt,below=0pt of 7,align=center] (suffix_2) {\small 0 \\[-4pt] \small 0 \\[-4pt] \small 0};
            \node[inner sep=1pt,below=0pt of 5,align=center] (suffix_3) {\small 0 \\[-4pt] \small 1 \\[-4pt] \small 1};

            \draw[decorate,decoration={brace,mirror,raise=1pt,amplitude=2pt}] (suffix_1.north west) -- (suffix_1.south west)
                node[pos=0.5,left=4pt,inner sep=1pt,align=center] (num_trailing_bits) {\small $t=3$ bits};

            \def\keyoffx{7.7}
            \def\cdotdist{4pt}
            \def\topoffy{44.25pt}
            \def\sharedprefixoffy{8.0pt}
            \def\sharedquotientoffy{-1.25pt}
            \def\sharedremainderoffy{-9.5pt}
            \node[inner sep=2pt,rotate=270,anchor=west] (full_suffix_1) at (\keyoffx,\trieoffy-0.5*\trieleveldiff) {\small \textcolor{gray}{$\cdots\cdots\cdots$\hspace{\cdotdist}$\cdots\cdot$} 100010101};
            \node[inner sep=2pt,rotate=270,above=1pt of full_suffix_1] (full_suffix_2) {\small \textcolor{gray}{$\cdots\cdots\cdots$\hspace{\cdotdist}$\cdots\cdot$} 10010000\hspace*{1.72pt}};
            \node[inner sep=2pt,rotate=270,above=1pt of full_suffix_2] (full_suffix_3) {\small \textcolor{gray}{$\cdots\cdots\cdots$\hspace{\cdotdist}$\cdots\cdot$} 11011\hspace*{10.9pt}};
            \node[inner sep=2pt,above left=2pt and -4pt of full_suffix_2] (full_suffix_label) {\small Keys};
            \draw[dotted] ($(full_suffix_1.south)+(0,\sharedquotientoffy)$) -- ($(full_suffix_1.south)+(28pt,\sharedquotientoffy)$);
            \draw[decorate,decoration={brace,mirror,raise=1pt,amplitude=2pt},gray] ($(full_suffix_1.south)+(0,\sharedprefixoffy)$) -- ($(full_suffix_1.south)+(0,\sharedquotientoffy)$)
                node[pos=0.5,left=7pt,inner sep=1pt,align=center,gray] (shared_infix_quotient_label) {\small 10-Bit Quotient};
            \draw[dotted] ($(full_suffix_1.south)+(0,\sharedremainderoffy)$) -- ($(full_suffix_1.south)+(28pt,\sharedremainderoffy)$);
            \draw[decorate,decoration={brace,mirror,raise=1pt,amplitude=2pt},gray] ($(full_suffix_1.south)+(0,\sharedquotientoffy)$) -- ($(full_suffix_1.south)+(0,\sharedremainderoffy)$)
                node[pos=0.5,left=4pt,inner sep=1pt,align=center,gray] (shared_infix_remainder_label) {\small 12-Bit Remainder};
            \node[inner sep=0pt,above=2pt of shared_infix_quotient_label,gray] (shared_infix_label) {\small Shared Infix:};
            \draw[dotted] ($(full_suffix_1.south)+(0,\sharedprefixoffy)$) -- ($(full_suffix_1.south)+(28pt,\sharedprefixoffy)$);
            \draw[decorate,decoration={brace,mirror,aspect=0.3,raise=1pt,amplitude=2pt},gray] ($(full_suffix_1.south)+(0,\topoffy)$) -- ($(full_suffix_1.south)+(0,\sharedprefixoffy)$)
                node[pos=0.3,left=4pt,inner sep=1pt,align=center,gray] (shared_infix_label) {\small Longest Common \\[-4pt] \small Prefix};
        \end{scope}

        \begin{scope}[name prefix=encoding_label_]
            \def\labeloffx{\encodingoffx-1.15}
            \def\labeloffy{-0.75}
            \node[inner sep=2pt,anchor=east] (format) at (\labeloffx,\encodingoffy+\labeloffy) {\small Format Bit (Unary)};
            \draw[-stealth] ($(encoding.south)+(-0.9425,0.0)$) |- (format.east);

            \node[inner sep=2pt,anchor=east] (branch) at (\labeloffx,\encodingoffy+\labeloffy-0.35) {\small Branch};
            \draw[-stealth] ($(encoding.south)+(-0.7405,0.0)$) |- (branch.east);

            \draw[decorate,decoration={brace,mirror,raise=1pt,amplitude=2pt}] ($(encoding.south)+(-0.5975,0.025)$) --($(encoding.south)+(-0.355,0.025)$) 
                node[pos=0.5,below=0.5pt,inner sep=1pt,align=center] (path_len_brace) { };
            \node[inner sep=1pt,anchor=east] (path_len) at (\labeloffx,\encodingoffy+\labeloffy-0.7) {\small Path Length ($=1$)};
            \draw[-stealth] (path_len_brace.south) |- (path_len.east);

            \node[inner sep=2pt,anchor=east] (path_contents) at (\labeloffx,\encodingoffy+\labeloffy-1.05) {\small Path Contents};
            \draw[-stealth] ($(encoding.south)+(-0.2025,0.0)$) |- (path_contents.east);

            \def\labeloffx{\encodingoffx+1.385}
            \node[inner sep=1pt,anchor=west] (leaf) at (\labeloffx,\encodingoffy+\labeloffy) {\small Leaf};
            \draw[-stealth] ($(encoding.south)+(0.725,0.0)$) |- (leaf.west);
            \draw[-stealth] ($(encoding.south)+(0.925,0.0)$) |- (leaf.west);
            \draw[-stealth] ($(encoding.south)+(1.125,0.0)$) |- (leaf.west);

            \node[inner sep=2pt,anchor=west] (path_contents_2) at (\labeloffx,\encodingoffy+\labeloffy-0.35) {\small Path Contents};
            \draw[-stealth] ($(encoding.south)+(0.52,0.0)$) |- (path_contents_2.west);

            \draw[decorate,decoration={brace,mirror,raise=1pt,amplitude=2pt}] ($(encoding.south)+(0.13,0.025)$) --($(encoding.south)+(0.3725,0.025)$) 
                node[pos=0.5,below=0.5pt,inner sep=1pt,align=center] (path_len_brace_2) { };
            \node[inner sep=2pt,anchor=west] (path_len_2) at (\labeloffx,\encodingoffy+\labeloffy-0.7) {\small Path Length ($=2$)};
            \draw[-stealth] (path_len_brace_2.south) |- (path_len_2.west);

            \node[inner sep=2pt,anchor=west] (branch_2) at (\labeloffx,\encodingoffy+\labeloffy-1.05) {\small Branch};
            \draw[-stealth] ($(encoding.south)+(-0.015,0.0)$) |- (branch_2.west);

            \draw[decorate,decoration={brace,mirror,raise=1pt,amplitude=2pt}] ($(encoding.south)+(1.275,1.5pt)$) --($(encoding.south)+(2.5625,1.5pt)$) 
                node[pos=0.5,below=1.5pt,inner sep=0pt,align=center] (trailing_bits_brace) { };
            \node[inner sep=2pt,anchor=east] (trailing_bits) at (\encodingoffx+4.5425,\encodingoffy+\labeloffy+0.35) {\small Trailing Strings};
            \draw[-stealth] (trailing_bits_brace) |- (trailing_bits.west);

            \node[inner sep=0pt,above left=4.9pt and -33.15pt of encoding] (corresponding_nodes) {\small Nodes:};
            \draw[decorate,decoration={brace,raise=1pt,amplitude=2pt}] ($(encoding.north)+(-0.795,0.0)$) --($(encoding.north)+(-0.165,0.0)$) 
                node[pos=0.5,above=4pt,inner sep=1pt,align=center] (node_1) {\small $a$};
            \draw[decorate,decoration={brace,raise=1pt,amplitude=2pt}] ($(encoding.north)+(-0.07,0.0)$) --($(encoding.north)+(0.57,0.0)$) 
                node[pos=0.5,above=4pt,inner sep=1pt,align=center] (node_2) {\small $b$};
            \node[inner sep=5pt,anchor=south] (node_3) at ($(encoding.north)+(0.720,0.0)$) {\small $c$};
            \node[inner sep=5pt,anchor=south] (node_4) at ($(encoding.north)+(0.925,0.0)$) {\small $d$};
            \node[inner sep=5pt,anchor=south] (node_5) at ($(encoding.north)+(1.120,0.0)$) {\small $e$};

            \node[inner sep=5pt,anchor=south] (trailing_1) at ($(encoding.north)+(1.46,0.0)$) {\small $c$};
            \node[inner sep=5pt,anchor=south] (trailing_2) at ($(encoding.north)+(1.92,0.0)$) {\small $d$};
            \node[inner sep=5pt,anchor=south] (trailing_3) at ($(encoding.north)+(2.38,0.0)$) {\small $e$};
        \end{scope}

        \node[inner sep=0pt,anchor=west] (fig_label) at (0.4,0.35) {\small \textbf{A) Unary Path Lengths}};
    \end{tikzpicture}
    \hspace{10.0mm}
    \pgfdeclarelayer{background layer}
    \pgfsetlayers{background layer,main}
    \begin{tikzpicture}
        \def\slotw{0.7}
        \def\sloth{0.35}
        \def\slotcnt{9}
        \def\storeoffx{1.5}
        \def\storeoffy{-3.0}
        \def\encodingoffx{3.9}
        \def\encodingoffy{-3.9}
        \def\circlerad{1pt}

        \node[inner sep=0pt] (encoding) at (\encodingoffx,\encodingoffy) {\small Encoding: \hspace{8pt} 1 0 001,11 101110 1 1 101 010};

        \begin{scope}[name prefix=trie_]
            \def\trieoffx{\storeoffx+\slotw*\slotcnt/2-1.6*\slotw}
            \def\trieoffy{0.0}
            \def\triehdiff{0.75*\slotw}
            \def\trieleveldiff{0.5}

            \node[inner sep=\circlerad,fill=black,circle] (1) at (\trieoffx,\trieoffy) { };

            \node[inner sep=\circlerad,fill=black,circle,below=0.21 of 1] (dummy_1) { };
            \node[inner sep=\circlerad,fill=black,circle,below=0.20 of dummy_1] (dummy_2) { };
            \node[inner sep=\circlerad,fill=black,circle,below=0.20 of dummy_2] (dummy_3) { };
            \node[inner sep=\circlerad,fill=black,circle,below=0.21 of dummy_3] (dummy_4) { };
            \node[inner sep=\circlerad,fill=black,circle,below=0.21 of dummy_4] (dummy_5) { };

            \node[inner sep=\circlerad,fill=black,circle] (2) at (\trieoffx,\trieoffy-3.85*\trieleveldiff) { };
            \draw (1) -- node[pos=0.5,inner sep=1pt,right=0pt,align=center] {\small 1 \\[-2pt] \small 0 \\[-2pt] \small 1 \\[-2pt] \small 1 \\[-2pt] \small 1 \\[-2pt] \small 0} (2);
            \node[inner sep=0pt,above left=-1pt and 1pt of 2] (label_2) {\small $a$};

            \node[inner sep=\circlerad,fill=black,circle] (3) at (\trieoffx-\triehdiff,\trieoffy-4.6*\trieleveldiff) { };
            \draw (2) -- node[pos=0.3,inner sep=1pt,left=5pt] {\small 0} (3);
            \node[inner sep=0pt,above left=-1pt and 1pt of 3] (label_3) {\small $b$};

            \node[inner sep=\circlerad,fill=black,circle] (4) at (\trieoffx+\triehdiff,\trieoffy-4.6*\trieleveldiff) { };
            \draw (2) -- node[pos=0.3,inner sep=1pt,right=5pt] {\small 1} (4);
            \node[inner sep=0pt,above right=-1pt and 1pt of 4] (label_4) {\small $c$};

            \node[inner sep=1pt,below=0pt of 3,align=center] (suffix_1) {\small 1 \\[-4pt] \small 0 \\[-4pt] \small 1};
            \node[inner sep=1pt,below=0pt of 4,align=center] (suffix_2) {\small 0 \\[-4pt] \small 1 \\[-4pt] \small 0};

            \def\trieoffx{\storeoffx+\slotw*\slotcnt/2}
            \def\cdotdist{4pt}
            \def\topoffy{46.25pt}
            \def\sharedprefixoffy{9.5pt}
            \def\sharedquotientoffy{0.5pt}
            \def\sharedremainderoffy{-7.5pt}
            \node[inner sep=2pt,rotate=270,anchor=west] (full_suffix_1) at (\trieoffx+3.25,\trieoffy-0.5*\trieleveldiff) {\small \textcolor{gray}{$\cdots\cdots\cdots$\hspace{\cdotdist}$\cdots\cdot$} 1011100101};
            \node[inner sep=2pt,rotate=270,above=0pt of full_suffix_1] (full_suffix_2) {\small \textcolor{gray}{$\cdots\cdots\cdots$\hspace{\cdotdist}$\cdots\cdot$} 1011101010};
            \node[inner sep=2pt,above left=2pt and 1pt of full_suffix_2] (full_suffix_label) {\small Keys};
            \draw[dotted] ($(full_suffix_1.south)+(0,\sharedquotientoffy)$) -- ($(full_suffix_1.south)+(18pt,\sharedquotientoffy)$);
            \draw[decorate,decoration={brace,mirror,raise=1pt,amplitude=2pt},gray] ($(full_suffix_1.south)+(0,\sharedprefixoffy)$) -- ($(full_suffix_1.south)+(0,\sharedquotientoffy)$)
                node[pos=0.5,left=7pt,inner sep=1pt,align=center,gray] (shared_infix_quotient_label) {\small 10-Bit Quotient};
            \draw[dotted] ($(full_suffix_1.south)+(0,\sharedremainderoffy)$) -- ($(full_suffix_1.south)+(18pt,\sharedremainderoffy)$);
            \draw[decorate,decoration={brace,mirror,raise=1pt,amplitude=2pt},gray] ($(full_suffix_1.south)+(0,\sharedquotientoffy)$) -- ($(full_suffix_1.south)+(0,\sharedremainderoffy)$)
                node[pos=0.5,left=4pt,inner sep=1pt,align=center,gray] (shared_infix_remainder_label) {\small 12-Bit Remainder};
            \node[inner sep=0pt,above=2pt of shared_infix_quotient_label,gray] (shared_infix_label) {\small Shared Infix:};
            \draw[dotted] ($(full_suffix_1.south)+(0,\sharedprefixoffy)$) -- ($(full_suffix_1.south)+(18pt,\sharedprefixoffy)$);
            \draw[decorate,decoration={brace,mirror,aspect=0.3,,raise=1pt,amplitude=2pt},gray] ($(full_suffix_1.south)+(0,\topoffy)$) -- ($(full_suffix_1.south)+(0,\sharedprefixoffy)$)
                node[pos=0.3,left=4pt,inner sep=1pt,align=center,gray] (shared_prefix_label) {\small Longest Common \\[-4pt] \small Prefix};
        \end{scope}

        \begin{scope}[name prefix=encoding_label_]
            \def\labeloffx{\encodingoffx-1.07}
            \def\labeloffy{-0.75}
            \node[inner sep=2pt,anchor=east] (format) at (\labeloffx,\encodingoffy+\labeloffy) {\small Format Bit (Binary)};
            \draw[-stealth] ($(encoding.south)+(-0.80,0.0)$) |- (format.east);

            \node[inner sep=2pt,anchor=east] (branch) at (\labeloffx,\encodingoffy+\labeloffy-0.35) {\small Branch};
            \draw[-stealth] ($(encoding.south)+(-0.60,0.0)$) |- (branch.east);

            \node[inner sep=1pt,anchor=east] (num_digits) at (\labeloffx,\encodingoffy+\labeloffy-0.7) {\small Number of Bits ($= 3$)};
            \draw[decorate,decoration={brace,mirror,raise=1pt,amplitude=2pt}] ($(encoding.south)+(-0.465,0.025)$) --($(encoding.south)+(-0.080,0.025)$) 
                node[pos=0.5,below=0.5pt,inner sep=1pt,align=center] (num_digits_brace) { };
            \draw[-stealth] (num_digits_brace.south) |- (num_digits.east);

            \def\labeloffx{\encodingoffx+1.70}
            \node[inner sep=1pt,anchor=west] (digit) at (\labeloffx,\encodingoffy+\labeloffy-0.7) {\small Path Length$+ 1$ ($= 6 + 1$)};
            \draw[decorate,decoration={brace,mirror,raise=1pt,amplitude=2pt}] ($(encoding.south)+(0.01,0.025)$) --($(encoding.south)+(0.25,0.025)$) 
                node[pos=0.5,below=0.5pt,inner sep=1pt,align=center] (digit_brace) { };
            \draw[-stealth] (digit_brace.south) |- (digit.west);

            \node[inner sep=1pt,anchor=west] (path_contents) at (\labeloffx,\encodingoffy+\labeloffy-0.35) {\small Path Contents};
            \draw[decorate,decoration={brace,mirror,raise=1pt,amplitude=2pt}] ($(encoding.south)+(0.350,0.025)$) --($(encoding.south)+(1.105,0.025)$) 
                node[pos=0.5,below=0.5pt,inner sep=1pt,align=center] (path_contents_brace) { };
            \draw[-stealth] (path_contents_brace.south) |- (path_contents.west);

            \node[inner sep=2pt,anchor=west] (leaf) at (\labeloffx,\encodingoffy+\labeloffy) {\small Leaf};
            \draw[-stealth] ($(encoding.south)+(1.2525,0.0)$) |- (leaf.west);
            \draw[-stealth] ($(encoding.south)+(1.4525,0.0)$) |- (leaf.west);

            \draw[decorate,decoration={brace,mirror,raise=1pt,amplitude=2pt}] ($(encoding.south)+(1.610,1.5pt)$) --($(encoding.south)+(2.435,1.5pt)$) 
                node[pos=0.5,below=1.5pt,inner sep=0pt,align=center] (trailing_bits_brace) { };
            \node[inner sep=2pt,anchor=east] (trailing_bits) at (\encodingoffx+4.4325,\encodingoffy+\labeloffy+0.35) {\small Trailing Strings};
            \draw[-stealth] (trailing_bits_brace) |- (trailing_bits.west);

            \node[inner sep=0pt,above left=4.9pt and -33.15pt of encoding] (corresponding_nodes) {\small Nodes:};
            \draw[decorate,decoration={brace,raise=1pt,amplitude=2pt}] ($(encoding.north)+(-0.65,0.0)$) --($(encoding.north)+(1.11,0.0)$) 
                node[pos=0.5,above=4pt,inner sep=1pt,align=center] (node_1) {\small $a$};
            \node[inner sep=5pt,anchor=south] (node_2) at ($(encoding.north)+(1.257,0.0)$) {\small $b$};
            \node[inner sep=5pt,anchor=south] (node_3) at ($(encoding.north)+(1.462,0.0)$) {\small $c$};

            \node[inner sep=5pt,anchor=south] (trailing_1) at ($(encoding.north)+(1.790,0.0)$) {\small $b$};
            \node[inner sep=5pt,anchor=south] (trailing_2) at ($(encoding.north)+(2.250,0.0)$) {\small $c$};
        \end{scope}

        \node[inner sep=0pt,anchor=west] (fig_label) at (0.35,0.35) {\small \textbf{B) Elias-Gamma Path Lengths}};
        \node[inner sep=0pt] (dummy) at (1.0,-5.85) { };
    \end{tikzpicture}
    \caption{Diva++ differentiates keys with identical infixes using binary
    D-Tries. It uses either a unary format (Part A) or a binary format (Part B)
    to save memory when encoding the lengths of a D-Trie's paths. The
    annotations above each encoding indicate which branch or leaf each part of
    the encoding represents.}
    \label{fig:d_tries}
\end{figure*}

\textbf{Improved FPR Guarantees.}
Second-order entropy encoding allows Diva++ to extend Diva's FPR bounds to all
data distributions that do not have third- or higher-order patterns in their
keys. Intuitively, when the data has no patterns spanning more than~$k$
symbols,\footnote{Formally, this corresponds to each symbol in the data being
independent of all symbols before it when conditioned on its~$k$ preceding
symbols.} $k$-th order entropy encoding ensures that each bit in an encoded key
is equally likely to be~0 or 1, yielding uniformly distributed
infixes~\cite{ShannonInformationTheory}. This uniformity bounds the FPR, since
a negative query's endpoints become less likely to match an infix.
Hypothetically, if one employs infinite-order entropy encoding,\footnote{One
can use a large language model to implement infinite-order entropy encoding.
This is because a large language model correlates each symbol with all
preceding symbols to produce a frequency distribution over the next symbol. The
trade-off would be slower construction and query times due to the model's
inference overhead.} Diva++'s FPR guarantee would extend to any data
distribution.

Second-order entropy encoding significantly improves the distinguishability of
infixes even when the keys have third- or higher-order patterns. Nevertheless,
the higher-order patterns within these datasets may still pollute the infixes
with uninformative bits. For example,
Experiment~\hyperlink{experiment:fpr_string}{3} of
Section~\ref{sec:evaluation_static} uses
the original Diva design with second-order entropy encoding to process the
English \textsc{EnWiki} dataset under a memory budget of~16~BPK. Had the
dataset contained only second- or lower-order patterns, the infixes would be
fully differentiated, and the FPR expression of
Section~\ref{sec:diva_static_sampling_keys_and_deriving_infixes}
would imply an expected FPR of~$\approx 0.01\%$. Yet, because the dataset
contains higher-order patterns, the entropy encoding reduces the proportion of
identical infixes only from~90\% to 13\%, resulting in an FPR of 17\%. In
general, when a proportion~$p$ of the infixes are identical, the FPR is at
least as large as~$p+\epsilon$, where $\epsilon$ is the expected FPR of the
filter.\footnote{When a range query's endpoints follow the same distribution as
the dataset, the range falls in-between two identical infixes with a
probability of~$p$. Such queries always return a false positive, while other
queries do so with an FPR of~$\epsilon$ or higher.} This stems from the Diva's
original design storing each infix from a group of identical infixes within a
dedicated slot, consuming space that could otherwise be used to store more bits
from each key and improve the FPR. We mitigate this shortcoming with our next
technique: bit-level full key
differentiation~(Sections~\ref{sec:static_bit-level_full_key_diff}~and~\ref{sec:dynamic_bit-level_full_key_diff}).

\subsection{Static Bit-Level Full Key Differentiation}\label{sec:static_bit-level_full_key_diff}
In the previous section, we discussed how Diva++ reduces the proportion of
identical infixes through order-preserving entropy encoding. We now show how to
fully disambiguate infixes that remain identical after entropy encoding. Diva++
achieves this by retaining a single infix from each group of identical infixes
and freeing the space taken up by the rest. It repurposes this freed space to
differentiate the group's keys, storing more bits from each key within a
compact binary trie called a \emph{\DT}. In this section, we describe the
logical structure of a \DT. We also introduce a novel \underline{bi}nary
\underline{tr}ie encoding (BITR) to encode each \DT. BITR represents both the
trie's topology and its edge labels in at most two bits per
node{\textemdash}improving on prior schemes, which can encode only the topology
in this space~\cite{LOUDS,SuRF}. This section discusses the static setting,
while Section~\ref{sec:dynamic_bit-level_full_key_diff}
covers the dynamic setting.

\textbf{\DTs.}
A \DT is a binary trie in which each edge is labeled with a single bit. It
contains a unique leaf node for each key, and the string of bits on each
root-to-leaf path represents the bits appearing after the infix of a key. That
is, each bit string is long enough to fully differentiate the corresponding
key. The top part of
Figure~\ref{fig:d_tries}{\nobreakdash-}A
presents a logical view of a \DT for the three keys on the right. As described
shortly, we encode each \DT succinctly using BITR, allowing it to take up less
space than the group of infixes it replaces.\footnote{The size of this encoding
further decreases when one uses higher-order entropy encoding or when the data
contains fewer higher-order patterns.} For example, when processing the URLs
dataset~\cite{MemeTracker}, the trie's encoding consumes~$\approx 10$ BPK on
average{\textemdash}smaller than the typical memory budget of a range filter,
i.e., 12--20 BPK.

\textbf{Trailing Strings.}
Diva++ stores more bits from each key to the right of the corresponding \DT,
using the remaining freed space from the original group of infixes. In
particular, it stores at each key's leaf node a~$t${\nobreakdash-}bit substring
called the key's \emph{trailing string}. If a key is too short to have
a~$t${\nobreakdash-}bit trailing string, we treat the bits beyond the key's end
as~0s to create the trailing string, thereby retaining the keys' original
lexicographic ordering. We tune the parameter~$t$ for each \DT during
construction so that the total size of the \DT and the trailing strings does
not exceed the size of the replaced group of infixes. Storing trailing strings
makes the \DT's keys more likely to be distinguishable from a query's endpoints
while maintaining the same space usage as the original Diva design.
Figure~\ref{fig:d_tries}{\nobreakdash-}A
illustrates storing a trailing string of length~$t=3$ bits for each key.

\textbf{Encoding \DTs with BITR.}
We design a succinct and self-delimiting encoding for \underline{bi}nary
\underline{tr}ies, dubbed \emph{BITR}. BITR represents the trie's branch and
leaf nodes (henceforth referred to as branches and leaves) in depth-first
order, followed by the trailing strings in the same order. For every branch and
leaf within the \DT, BITR describes the path connecting it to its nearest
ancestor branch. Specifically, it stores the length of this path and the bit
labels along it. To optimize for memory, BITR encodes the path lengths in one
of two formats, as discussed shortly. It indicates the format used with a
\emph{format bit} at the beginning of the encoding. It then represents the
\DT's branches and leaves as follows:

\emph{For each branch,}
BITR stores three fields: (1)~a~0 bit indicating that it is a branch, (2)~the
length of the path leading to it from its nearest ancestor branch (or the root
if no such ancestor branch exists), and (3)~the string of bits on the edges of
the aforementioned path. We save~$\approx 1$ BPK when encoding the bit strings
by noting that, for every branch except the first (i.e., the highest), the bit
on the first edge of its respective path need not be stored. The omitted bit is
inferable from two observations: (a)~the \DT's binary fanout ensures that the
labels on the left and right edges of any branch are~0~and~1, respectively, and
(b)~the depth-first ordering of the encoding indicates whether the omitted
bit's edge is the left or right edge of the ancestor branch, thus determining
its label.

\emph{For each leaf,} there is no need to encode a path. The reason is that,
unlike branches, the path leading to the leaf from the nearest ancestor branch
is always a single edge whose bit label is already determined by the preceding
encoding. This path is a single edge because the \DT keeps just enough bits
from each key (aside from the trailing strings) to differentiate it. Hence,
BITR stores only a single~1 bit to indicate that the current node is a leaf. 

\textbf{Path Length Formats.}
BITR allows representing path lengths of branches in (A)~\emph{unary} and
(B)~\emph{Elias-Gamma} formats. Unary is more efficient for \DTs with short
average path lengths (less than 3 edges), whereas Elias-Gamma is better for
\DTs with long path lengths. BITR chooses whichever format minimizes the total
encoding length for each \DT.

\textbf{(A)~Unary Lengths.}
Diva++'s order-preserving entropy encoding (described in
Section~\ref{sec:order_preserving_entropy_encoding})
often captures most of the data's patterns, yielding shallow \DTs with short
paths connecting branches (1-3 edges long in the URLs
dataset~\cite{MemeTracker}). The unary format targets these tries by
representing path lengths using unary codes. Specifically, it encodes lengths
of~$1,2,3,\dots$ as $1,01,001,\dots$. As a special case, the first (i.e.,
highest) branch in the \DT can have a path length of~0, since it may be the
root. Hence, for the first branch, we encode the lengths~$0,1,2,\dots$ as
$1,01,001,\dots$. Unary codes are space-optimal when the average path is~1-3
edges long~\cite{InfiniFilter,Huffman}.

The \DT at the top of
Figure~\ref{fig:d_tries}{\nobreakdash-}A
is encoded with unary path lengths at the bottom of the figure. The encoding
begins with a format bit of~0. The next~0 bit indicates that the first node
encoded in depth-first order, $a$, is a branch. The subsequent~01 bits indicate
that the path from the root to~$a$ is one edge long, and the next~1 is the bit
label of this edge. BITR then indicates the branch~$b$ with a~0 followed by a
path length of~01, meaning that the path from the nearest ancestor branch ($a$)
to~$b$ is two edges long. BITR stores the bit label~0 on the lower edge of this
path while omitting the bit label on its upper edge, as the upper edge is~$a$'s
left edge and therefore is known to be labeled~0. The encoding finishes with
three~1s to indicate the leaves~$c$, $d$, and $e$. The three trailing strings
of the keys are stored in the depth-first order of the \DT.

\textbf{(B)~Elias-Gamma Lengths.}
When the dataset contains many higher-order bit patterns not captured by the
entropy encoding, adjacent keys may have more common bits after their infixes.
This causes the corresponding \DT to contain longer paths. For example, in the
URLs dataset~\cite{MemeTracker}, approximately~14\% of the \DTs have paths
longer than~50 edges. Unary codes would waste significant space for such \DTs,
as they consume as much space as the path lengths. The Elias-Gamma format
avoids this wastage using Elias-Gamma coding~\cite{EliasGamma}, which is a
self-delimiting binary encoding. An Elias-Gamma code represents a non-negative
integer~$v$ by first storing the number of bits required to represent~$v$,
i.e., $\lceil\log_2(v+1)\rceil$, in unary (with 1, 01, 001, $\dots$, indicating
1, 2, 3, $\dots$ bits). Then, the code stores the bits in $v$'s binary
representation. The binary representation, in turn, has two cases: (1)~If~$v$'s
plain binary representation consists of more than one bit, the code stores all
its bits except for the most significant one, as it is always~1. (2)~If~$v$ has
a single bit, the code stores that single bit, as it \mbox{can be either~0 or
1}.

For a path of length~$l$, BITR stores the value~$v=(l+1)$ instead of~$l$ using
the Elias-Gamma code. This is to reserve the value~$v=0$ for representing keys
ending in internal nodes. We elaborate on this case when discussing dynamic
\DTs in Section~\ref{sec:dynamic_bit-level_full_key_diff}.

Figure~\ref{fig:d_tries}{\nobreakdash-}B
encodes the \DT at its top using the Elias-Gamma format. This format is
signaled by the leading~1 bit of the encoding shown at the bottom of the
figure. The following bit of the encoding is~0, meaning that the first encoded
node in depth-first order, $a$, is a branch. The next bits form an Elias-Gamma
code. The unary part of the code is~001, which indicates a 3-bit value.
Therefore, the next two bits, i.e., $(11)_2$, are the two lower-order bits of
the encoded value. Since the most-significant bit of this value is~1, the code
represents~$v=(111)_2=7$, so the path from the root to~$a$ has
length~$l=v-1=6$. The next six bits represent the bit labels along this path.
The remainder of the encoding follows the same structure as BITR's unary
format.

\textbf{Encoding Trailing Strings.}
After encoding the \DT's nodes, BITR represents the trailing string of each
key. For each trailing string, encoding its bits as edge labels within the \DT
would incur an unnecessary metadata overhead to represent the edges' topology.
Specifically, in the unary and Elias-Gamma formats, this overhead would be
linear and logarithmic in the trailing string's length, respectively. BITR
avoids this overhead entirely by packing the trailing strings and storing them
\mbox{to the right of the \DT encoding}.

\textbf{Storing and Locating \DTs within Runs.}
Diva++ stores each \DT next to its shared infix's remainder within the same
run. To preserve slot alignment, a small amount of space in the rightmost slot
storing the \DT and the trailing strings may go unused.
Figure~\ref{fig:d_trie_in_run}
illustrates this storage layout as applied to the \DT of
Figure~\ref{fig:d_tries}{\nobreakdash-}A.
Here, the run stores a~12-bit remainder for each infix. The first and last
slots contain infix remainders without \DTs. The middle three slots are where
the remainders of the three keys of the aforementioned \DT would have been
stored. These slots now contain the keys' shared infix, their \DT's encoding,
and their trailing strings. The two bits after the trailing strings are~unused.

\begin{figure}
    \centering
    \pgfdeclarelayer{background layer}
    \pgfsetlayers{background layer,main}
    \begin{tikzpicture}
        \def\slotw{1.53}
        \def\sloth{0.35}
        \def\slotcnt{5}
        \def\storeoffx{1.5}
        \def\storeoffy{-3.0}
        \def\circlerad{1pt}
        \def\trieleveldiff{0.5}

        \draw[very thick] (\storeoffx,\storeoffy) rectangle (\storeoffx+\slotcnt*\slotw,\storeoffy-\sloth);
        \node[inner sep=4pt,anchor=east] (run_label) at (\storeoffx,\storeoffy-\sloth/2) {\small Run};
        \foreach \x in {1, ..., \slotcnt} {
            \draw[dotted] (\storeoffx+\slotw*\x,\storeoffy-\sloth) -- (\storeoffx+\slotw*\x,\storeoffy);
        }

        \node at (\storeoffx+0*\slotw+0.5*\slotw,\storeoffy-0.5*\sloth) {\footnotesize 000000000101};
        \node at (\storeoffx+1*\slotw+0.5*\slotw,\storeoffy-0.5*\sloth) {\footnotesize 100000000011};
        \node at (\storeoffx+2*\slotw+0.5*\slotw,\storeoffy-0.5*\sloth) {\footnotesize 0\hspace{1.0pt}00011001011};
        \node at (\storeoffx+3*\slotw+0.41*\slotw,\storeoffy-0.5*\sloth) {\footnotesize 1011101000};
        \def\dtriestart{2.123}
        \def\dtrieend{3.8}
        \begin{pgfonlayer}{background layer}
            \fill[pattern=north east lines,pattern color=gray!50] (\storeoffx+\dtrieend*\slotw,\storeoffy-\sloth) rectangle (\storeoffx+4*\slotw,\storeoffy);
            \fill[fill=gray!20] (\storeoffx+\dtriestart*\slotw,\storeoffy-\sloth) rectangle (\storeoffx+\dtrieend*\slotw,\storeoffy);
        \end{pgfonlayer}
        \node at (\storeoffx+4*\slotw+0.5*\slotw,\storeoffy-0.5*\sloth) {\footnotesize 111111111111};

        \begin{scope}[name prefix=label_]
            \draw[decorate,decoration={brace,raise=1pt,amplitude=2pt}] (\storeoffx+1*\slotw,\storeoffy) -- (\storeoffx+2*\slotw,\storeoffy)
                node[pos=0.5,above=5.8pt,inner sep=1pt,align=center] (shared_infix) {\small Shared Infix};
            \draw[decorate,decoration={brace,raise=1pt,amplitude=2pt}] (\storeoffx+\dtriestart*\slotw,\storeoffy) -- (\storeoffx+\dtrieend*\slotw,\storeoffy)
                node[pos=0.5,above=4pt,inner sep=1pt,align=center] (d_trie) {\small D-Trie Encoding};

            \draw[decorate,decoration={brace,mirror,raise=1pt,amplitude=2pt}] (\storeoffx+1*\slotw,\storeoffy-\sloth) -- (\storeoffx+3*\slotw,\storeoffy-\sloth)
                node[pos=0.5,below=1.5pt,inner sep=0pt,align=center] (escape_sequence_connector) {};
            \node[inner sep=1pt,align=center] (escape_sequence_label) at (\storeoffx+1.02*\slotw,\storeoffy-2.1*\sloth) {\small Escape Sequence};
            \draw[-stealth] (escape_sequence_connector) |- (escape_sequence_label);
            \node[inner sep=1pt,align=center] (padding_label) at (\storeoffx+2.7*\slotw,\storeoffy-2.1*\sloth) {\small Padding};
            \draw[-stealth] (\storeoffx+2.071*\slotw,\storeoffy-0.85*\sloth) |- (padding_label);
            \draw[decorate,decoration={brace,mirror,raise=1pt,amplitude=2pt}] (\storeoffx+\dtrieend*\slotw,\storeoffy-\sloth) -- (\storeoffx+4*\slotw,\storeoffy-\sloth)
                node[pos=0.5,below=1.5pt,inner sep=0pt,align=center] (unused_space_connector) {};
            \node[inner sep=1pt,align=center] (unused_space_label) at (\storeoffx+\dtrieend*\slotw/2+2*\slotw,\storeoffy-2.1*\sloth) {\small Unused Space};
        \end{scope}
    \end{tikzpicture}
    \caption{Diva++ stores a D-Trie next to the corresponding shared infix's
    remainder within the same run.}
    \label{fig:d_trie_in_run}
\end{figure}

To enable unambiguous decoding, we store the infixes within each run in
increasing order, from left to right. This allows for signaling the existence
of a \DT for an infix by introducing a decrease in the run's slot values,
thereby creating an \emph{escape sequence}. We create this decrease by
0-padding the \DT's encoding on the left. Specifically, we add just enough 0s
so that the bit position corresponding to the highest-order~1 of the shared
infix contains a~0 bit. Doing so avoids potential ambiguities in the bit
position of the \DT's encoding. This process adds~1{\nobreakdash-}2 bits of
padding to each \DT on average. The reason is that the entropy encoding of the
keys makes each bit of the shared infix almost equally likely to be either~0 or
1, implying that its highest-order~1 most likely falls within
its~1{\nobreakdash-}2 most-significant bits. 

Figure~\ref{fig:d_trie_in_run}
presents an example. Here, the slot to the right of the shared infix contains a
smaller value, forming an escape sequence that indicates the existence of a
\DT. The encoding of the \DT, shown in gray, is padded with a single~0 bit
(corresponding to the infix's highest-order~1) to ensure unambiguous decoding.

\textbf{Fanout Discussion.}
Many trie-based data structures employ a high fanout to reduce the trie's depth
and lower the search cost~\cite{ART,HOT,Wormhole,SuRF}. Unlike these designs,
\DTs use a fanout of two, which is more space-efficient in our context for two
reasons: (1)~When differentiating two keys up to their first differing bit, a
higher fanout represents the keys at a coarser granularity than bits, thereby
storing more bits than necessary. (2)~A high-fanout node's edge labels often
have common prefixes that should ideally be stored only once. Known techniques
for suppressing these common prefixes (such as Golomb~\cite{Golomb} or
Elias-Fano~\cite{EliasFanoElias,EliasFanoFano} coding) waste memory by
introducing unnecessary metadata overheads. A fanout of two allows Diva++ to
store the entropy-encoded keys at the bit granularity without duplicate bits
across edge labels. As a result, Diva++ is more space-efficient than range
filters employing high-fanout tries at the same FPR level (e.g., as much
as~$1.4\times$ smaller than SuRF~\cite{SuRF}), as shown in
Experiment~\hyperlink{experiment:fpr_string}{3}
of Section~\ref{sec:evaluation_static}.

\textbf{Other Trie Encodings.}
While one can encode a \DT using any succinct, self-delimiting encoding
scheme~\cite{TreeEncoding1,TreeEncoding2,TreeEncoding3,TreeEncoding4,TreeEncoding5,TreeEncoding6,TreeEncoding7,TreeEncodingSODA,LOUDS},
we design BITR for two reasons: (1)~Existing encoding schemes are wasteful when
applied to \DTs, since they target general tries that can have nodes with an
arbitrarily high fanout. For example, the LOUDS encoding~\cite{LOUDS} used by
SuRF~\cite{SuRF} represents only the topology of a trie (without the edge
labels) by storing each node's degree in unary, incurring a memory footprint
of~2 bits per node. Completing this encoding requires storing the bit label of
each single-child node's edge, as this bit cannot be inferred from the topology
alone. Doing so incurs an additional overhead of up to~1 bit per node,
depending on the proportion of single-child nodes. Under well-behaved data
distributions common in practice (e.g., uniform), this translates to a total
memory footprint of~$\approx 2.25$ bits per node, since half of the nodes are
internal nodes and each has a single child with probability~$0.5$. Unlike
LOUDS, BITR jointly represents both the topology and the edge labels in at
most~2 bits per node by exploiting the \DTs' binary fanout (proven in
Section~\ref{sec:theoretical_analysis_and_results}).
BITR's memory footprint is optimal, which can be shown by relating the total
number of binary tries to Catalan numbers~\cite{TAOCP1}, as
in~\cite{TreeEncodingLowerBound}. (2)~For dynamic \DTs, introduced in
Section~\ref{sec:dynamic_bit-level_full_key_diff},
special cases arise wherein the bits stored for a key may end in an internal
node instead of a leaf. Standard trie encodings cannot represent such keys,
whereas BITR can do so using its Elias-Gamma format.

\begin{table}
    \centering
    \caption{Both order-preserving entropy encoding and bit-level full key
    differentiation play a significant role in Diva++'s improved FPR. Here, the
    memory budget is fixed to~16 BPK, and the percentages are the attained
    FPRs.}
    \def\arraystretch{1.1}
    \setlength{\tabcolsep}{5.5pt}
    \begin{tabular}{cccc}
        \toprule

        \small \textbf{Dataset} & \small \textsc{EnWiki}~\cite{enwiki} & \small \textsc{Emails}~\cite{Enron} & \small \textsc{URLs}~\cite{MemeTracker} \\[1pt]

        \midrule

        \small \textbf{Diva} & \small 63\% & \small 84\% & \small 88\% \\[4pt]

        \small \makecell{\textbf{Diva + Order-Preserving} \\ \textbf{Entropy Encoding}} & \small 16\% & \small 61\% & \small 58\% \\[8pt]

        \small \makecell{\textbf{Diva + Bit-Level} \\ \textbf{Full Key Differentiation}} & \small 21\% & \small 37\% & \small 40\% \\[8pt]
        
        \small \textbf{Diva++} & \small 7\% & \small 19\% & \small 15\% \\

        \bottomrule
    \end{tabular}
    \label{tab:diva++_ablation}
\end{table}

\begin{figure*}
    \centering
    \pgfdeclarelayer{background layer}
    \pgfsetlayers{background layer,main}
    \begin{tikzpicture}
        \def\circlerad{1pt}
        \def\trieoffy{0.0}
        \def\trieleveldiff{0.5}
        \def\triehdiff{0.75*0.75}
        \def\triecenteroff{0.25*\triehdiff}
        \def\colonex{0.0}
        \def\coltwox{4.12}
        \def\colthreex{8.72}
        \def\colfourx{13.32}
        \def\arrowonex{2.11}
        \def\arrowtwox{6.50}
        \def\arrowthreex{11.16}
        \def\arrowlen{1.4}
        \def\arrowy{\trieoffy-2.75*\trieleveldiff}
        \def\labeloffx{1.25}
        \def\labely{1.0*\trieleveldiff}
        \def\encodingy{\trieoffy-8.0775*\trieleveldiff}

        \def\trieoffx{\colonex}

        \begin{scope}[name prefix=t1_]
            \node[inner sep=\circlerad,fill=black,circle] (1) at (\trieoffx,\trieoffy) { };

            \node[inner sep=\circlerad,fill=black,circle] (2) at (\trieoffx,\trieoffy-1*\trieleveldiff) { };
            \draw (1) -- node[pos=0.5,inner sep=1pt,right=0pt] {\small 1} (2);
            \node[inner sep=0pt,above left=-1pt and 1pt of 2] (label_2) {\small $a$};

            \node[inner sep=\circlerad,fill=black,circle] (3) at (\trieoffx-\triehdiff,\trieoffy-2*\trieleveldiff) { };
            \draw (2) -- node[pos=0.3,inner sep=1pt,left=2pt] {\small 0} (3);
            \node[inner sep=0pt,above left=-1pt and 1pt of 3] (label_3) {\small $b$};

            \node[inner sep=\circlerad,fill=black,circle] (4) at (\trieoffx+\triehdiff,\trieoffy-2*\trieleveldiff) { };
            \draw (2) -- node[pos=0.3,inner sep=1pt,right=2pt] {\small 1} (4);
            \node[inner sep=0pt,above right=-1pt and 1pt of 4] (label_4) {\small $c$};

            \node[inner sep=1pt,below=0pt of 3,align=center] (suffix_1) {\small 0 \\[-4pt] \small 1 \\[-4pt] \small 1};
            \node[inner sep=1pt,below=0pt of 4,align=center] (suffix_2) {\small 1 \\[-4pt] \small 0 \\[-4pt] \small 0};
        \end{scope}

        \node[inner sep=0pt,align=center] (encoding_1) at (\colonex,\encodingy) {\small 0 0 01 1 1 1 \\[0pt] \small 011\underline{1} 100\underline{1}};
        \node[inner sep=0pt,above left=-18.25pt and 13pt of encoding_1,align=right] (encoding_label) {\small Nodes: \\[-2pt] \small \DT Encoding: \\[0pt] \small Trailing Strings:};

        \begin{scope}[name prefix=t1_label_]
            \draw[decorate,decoration={brace,raise=1pt,amplitude=2pt}] ($(encoding_1.north)+(-0.425,-0.015)$) --($(encoding_1.north)+(0.20,-0.015)$)
                node[pos=0.5,above=3pt,inner sep=1pt,align=center] (node_1) {\small $a$};
            \node[inner sep=5pt,anchor=south] (node_2) at ($(encoding_1.north)+(0.37,-1.4pt)$) {\small $b$};
            \node[inner sep=5pt,anchor=south] (node_3) at ($(encoding_1.north)+(0.57,-1.4pt)$) {\small $c$};
        \end{scope}

        \node[inner sep=0pt,align=left,anchor=north west] (fig_label_1) at (\colonex-\labeloffx,\labely) {\small \textbf{A)}};

        \draw[-stealth] (\arrowonex-\arrowlen/2,\arrowy) -- (\arrowonex+\arrowlen/2,\arrowy)
            node[pos=0.5,above=-8.1pt,inner sep=1pt,align=center] (transition_1) {\small Inserting \\[-1pt] \small 1000110};

        \def\trieoffx{\coltwox+\triecenteroff}

        \begin{scope}[name prefix=t2_]
            \node[inner sep=\circlerad,fill=black,circle] (1) at (\trieoffx,\trieoffy) { };

            \node[inner sep=\circlerad,fill=black,circle] (2) at (\trieoffx,\trieoffy-1*\trieleveldiff) { };
            \draw (1) -- node[pos=0.5,inner sep=1pt,right=0pt] {\small 1} (2);
            \node[inner sep=0pt,above left=-1pt and 1pt of 2] (label_2) {\small $a$};

            \node[inner sep=\circlerad,fill=black,circle] (3) at (\trieoffx-\triehdiff,\trieoffy-2*\trieleveldiff) { };
            \draw (2) -- node[pos=0.25,inner sep=1pt,left=2pt] {\small 0} (3);
            \node[inner sep=0pt,above left=-1pt and 1pt of 3] (label_3) {\small \sout{$b$}};

            \node[inner sep=\circlerad,fill=black,circle] (5) at (\trieoffx-\triehdiff,\trieoffy-3*\trieleveldiff) { };
            \draw (3) -- node[pos=0.5,inner sep=1pt,right=0pt] {\small 0} (5);
            \node[inner sep=0pt,above left=-1pt and 1pt of 5] (label_5) {\small $d$};

            \node[inner sep=\circlerad,fill=black,circle] (6) at (\trieoffx-1.5*\triehdiff,\trieoffy-4*\trieleveldiff) { };
            \draw (5) -- node[pos=0.4,inner sep=1pt,left=0.25pt] {\small 0} (6);
            \node[inner sep=0pt,above left=-1pt and 1pt of 6] (label_6) {\small $e$};

            \node[inner sep=\circlerad,fill=black,circle] (7) at (\trieoffx-0.5*\triehdiff,\trieoffy-4*\trieleveldiff) { };
            \draw (5) -- node[pos=0.4,inner sep=1pt,right=0.25pt] {\small 1} (7);
            \node[inner sep=0pt,above right=-2.75pt and 1pt of 7] (label_7) {\small $f$};

            \node[inner sep=\circlerad,fill=black,circle] (4) at (\trieoffx+\triehdiff,\trieoffy-2*\trieleveldiff) { };
            \draw (2) -- node[pos=0.25,inner sep=1pt,right=2pt] {\small 1} (4);
            \node[inner sep=0pt,above right=-1pt and 1pt of 4] (label_4) {\small $c$};

            \node[inner sep=1pt,below=0pt of 6,align=center] (suffix_1) {\small 1 \\[-4pt] \small 1 \\[-4pt] \small 0};
            \node[inner sep=1pt,below=0pt of 7,align=center] (suffix_2) {\small 1};
            \node[inner sep=1pt,below=0pt of 4,align=center] (suffix_3) {\small 1 \\[-4pt] \small 0 \\[-4pt] \small 0};
        \end{scope}

        \def\trieoffx{\coltwox}
        \node[inner sep=0pt,align=center] (encoding_2) at (\trieoffx,\encodingy) {\small 0 0 01 1 0 01 0 1 1 1 \\[0pt] \small 110\underline{1} 1\underline{100} 100\underline{1}};
        \begin{pgfonlayer}{background layer}
            \def\bith{0.15}
            \fill[fill=gray!20] (\trieoffx-0.205,\encodingy-\bith+0.215) rectangle (\trieoffx+0.93,\encodingy+\bith+0.215);
            \fill[fill=gray!20] (\trieoffx-0.875,\encodingy-\bith-0.190) rectangle (\trieoffx+0.30,\encodingy+\bith-0.155);
        \end{pgfonlayer}

        \begin{scope}[name prefix=t2_label_]
            \draw[decorate,decoration={brace,raise=1pt,amplitude=2pt}] ($(encoding_2.north)+(-0.89,-0.015)$) -- ($(encoding_2.north)+(-0.255,-0.015)$)
                node[pos=0.5,above=3pt,inner sep=1pt,align=center] (node_1) {\small $a$};
            \draw[decorate,decoration={brace,raise=1pt,amplitude=2pt}] ($(encoding_2.north)+(-0.17,-0.015)$) -- ($(encoding_2.north)+(0.49,-0.015)$)
                node[pos=0.5,above=3pt,inner sep=1pt,align=center] (node_2) {\small $d$};
            \node[inner sep=5pt,anchor=south] (node_3) at ($(encoding_2.north)+(0.635,-1.4pt)$) {\small $e$};
            \node[inner sep=5pt,anchor=south] (node_4) at ($(encoding_2.north)+(0.84,-3.225pt)$) {\small $f$};
            \node[inner sep=5pt,anchor=south] (node_5) at ($(encoding_2.north)+(1.035,-1.4pt)$) {\small $c$};
        \end{scope}

        \node[inner sep=0pt,align=left,anchor=north west] (fig_label_2) at (\coltwox-\labeloffx,\labely) {\small \textbf{B)}};

        \draw[-stealth] (\arrowtwox-\arrowlen/2,\arrowy) -- (\arrowtwox+\arrowlen/2,\arrowy)
            node[pos=0.5,above=-8.1pt,inner sep=1pt,align=center] (transition_2) {\small Inserting \\[-1pt] \small 111000001};

        \def\trieoffx{\colthreex+\triecenteroff}

        \begin{scope}[name prefix=t3_]
            \node[inner sep=\circlerad,fill=black,circle] (1) at (\trieoffx,\trieoffy) { };

            \node[inner sep=\circlerad,fill=black,circle] (2) at (\trieoffx,\trieoffy-1*\trieleveldiff) { };
            \draw (1) -- node[pos=0.5,inner sep=1pt,right=0pt] {\small 1} (2);
            \node[inner sep=0pt,above left=-1pt and 1pt of 2] (label_2) {\small $a$};

            \node[inner sep=\circlerad,fill=black,circle] (3) at (\trieoffx-\triehdiff,\trieoffy-2*\trieleveldiff) { };
            \draw (2) -- node[pos=0.25,inner sep=1pt,left=2pt] {\small 0} (3);
            \node[inner sep=0pt,above left=-1pt and 1pt of 3] (label_3) {\small \sout{$b$}};

            \node[inner sep=\circlerad,fill=black,circle] (5) at (\trieoffx-\triehdiff,\trieoffy-3*\trieleveldiff) { };
            \draw (3) -- node[pos=0.5,inner sep=1pt,right=0pt] {\small 0} (5);
            \node[inner sep=0pt,above left=-1pt and 1pt of 5] (label_5) {\small $d$};

            \node[inner sep=\circlerad,fill=black,circle] (6) at (\trieoffx-1.5*\triehdiff,\trieoffy-4*\trieleveldiff) { };
            \draw (5) -- node[pos=0.4,inner sep=1pt,left=0.25pt] {\small 0} (6);
            \node[inner sep=0pt,above left=-1pt and 1pt of 6] (label_6) {\small $e$};

            \node[inner sep=\circlerad,fill=black,circle] (7) at (\trieoffx-0.5*\triehdiff,\trieoffy-4*\trieleveldiff) { };
            \draw (5) -- node[pos=0.4,inner sep=1pt,right=0.25pt] {\small 1} (7);
            \node[inner sep=0pt,above right=-2.75pt and 1pt of 7] (label_7) {\small $f$};

            \node[inner sep=\circlerad,fill=black,circle] (4) at (\trieoffx+\triehdiff,\trieoffy-2*\trieleveldiff) { };
            \draw (2) -- node[pos=0.25,inner sep=1pt,right=2pt] {\small 1} (4);
            \node[inner sep=0pt,above right=-1pt and 1pt of 4] (label_4) {\small \sout{$c$}};

            \node[inner sep=\circlerad,fill=black,circle] (8) at (\trieoffx+\triehdiff,\trieoffy-4.0*\trieleveldiff) { };
            \draw (4) -- node[pos=0.5,inner sep=1pt,right=0pt,align=center] {\small 1 \\[-2pt] \small 0 \\[-2pt] \small 0} (8);
            \node[inner sep=0pt,right=1pt of 8] (prefix_node_indicator) {$*$};
            \node[inner sep=0pt,above left=-1pt and 1pt of 8] (label_8) {\small $g$};

            \node[inner sep=\circlerad,fill=black,circle,below=0.21 of 4] (dummy_1) { };
            \node[inner sep=\circlerad,fill=black,circle,below=0.20 of dummy_1] (dummy_2) { };

            \node[inner sep=\circlerad,fill=black,circle] (9) at (\trieoffx+\triehdiff/2,\trieoffy-5.0*\trieleveldiff) { };
            \draw (8) -- node[pos=0.375,inner sep=1pt,left=0pt] {\small 0} (9);
            \node[inner sep=0pt,left=1pt of 9] (label_9) {\small $h$};

            \node[inner sep=1pt,below=0pt of 6,align=center] (suffix_1) {\small 1 \\[-4pt] \small 1 \\[-4pt] \small 0};
            \node[inner sep=1pt,below=0pt of 7,align=center] (suffix_2) {\small 1};
            \node[inner sep=1pt,below=0pt of 9,align=center] (suffix_3) {\small 0 \\[-4pt] \small 0 \\[-4pt] \small 1};
        \end{scope}

        \def\trieoffx{\colthreex}
        \node[inner sep=0pt,align=center] (encoding_4) at (\trieoffx,\encodingy) {\small 1 0 01,0 1 0 01,1 0 1 1 0 1,0 001,01 100 10 1 \\[0pt] \small 110\underline{1} 1\underline{100} 001\underline{1}};
        \begin{pgfonlayer}{background layer}
            \def\bith{0.15}
            \fill[fill=gray!20] (\trieoffx-2.409,\encodingy-\bith+0.215) rectangle (\trieoffx-2.204,\encodingy+\bith+0.215);
            \fill[fill=gray!20] (\trieoffx+0.040,\encodingy-\bith+0.215) rectangle (\trieoffx+2.407,\encodingy+\bith+0.215);
            \fill[fill=gray!20] (\trieoffx+0.3,\encodingy-\bith-0.190) rectangle (\trieoffx+0.88,\encodingy+\bith-0.155);
        \end{pgfonlayer}

        \begin{scope}[name prefix=t3b_label_]
            \draw[decorate,decoration={brace,raise=1pt,mirror,amplitude=2pt}] (\trieoffx+0.084,\encodingy+0.135) -- (\trieoffx+0.594,\encodingy+0.135)
            node[pos=0.5,below=1.8pt,inner sep=0pt] (prefix_key_indicator_arrow_base) { };
            \node[inner sep=1pt,below right=12pt and 0pt of prefix_key_indicator_arrow_base] (prefix_node_indicator) {\small Prefix Node};
            \draw[-stealth] (prefix_key_indicator_arrow_base.south) -| (prefix_node_indicator.north);

            \draw[decorate,decoration={brace,raise=1pt,mirror,amplitude=2pt}] (\trieoffx+1.935,\encodingy+0.135) -- (\trieoffx+2.170,\encodingy+0.135)
            node[pos=0.5,below=1.5pt,inner sep=0pt] (children_bits_arrow_base) { };
            \node[inner sep=1pt,below=10pt of prefix_node_indicator.east,anchor=east] (children_bits) {\small Has Left Child, No Right Child};
            \draw[-stealth] (children_bits_arrow_base) |- (children_bits);

            \draw[decorate,decoration={brace,raise=1pt,amplitude=2pt}] ($(encoding_4.north)+(-2.172,-0.015)$) --($(encoding_4.north)+(-1.352,-0.015)$)
                node[pos=0.5,above=3pt,inner sep=1pt,align=center] (node_1) {\small $a$};
            \draw[decorate,decoration={brace,raise=1pt,amplitude=2pt}] ($(encoding_4.north)+(-1.26,-0.015)$) --($(encoding_4.north)+(-0.40,-0.015)$)
                node[pos=0.5,above=3pt,inner sep=1pt,align=center] (node_2) {\small $d$};
            \node[inner sep=5pt,anchor=south] (node_3) at ($(encoding_4.north)+(-0.26,-1.4pt)$) {\small $e$};
            \node[inner sep=5pt,anchor=south] (node_4) at ($(encoding_4.north)+(-0.065,-3.225pt)$) {\small $f$};
            \draw[decorate,decoration={brace,raise=1pt,amplitude=2pt}] ($(encoding_4.north)+(0.084,-0.015)$) --($(encoding_4.north)+(2.170,-0.015)$)
                node[pos=0.5,above=3pt,inner sep=1pt,align=center] (node_5) {\small $g$};
            \node[inner xsep=1pt,inner ysep=5pt,anchor=south] (node_6) at ($(encoding_4.north)+(2.32,-1.4pt)$) {\small $h$};
        \end{scope}

        \node[inner sep=0pt,align=left,anchor=north west] (fig_label_3) at (\colthreex-\labeloffx,\labely) {\small \textbf{C)}};

        \draw[-stealth] (\arrowthreex-\arrowlen/2,\arrowy) -- (\arrowthreex+\arrowlen/2,\arrowy)
            node[pos=0.5,above=-8.1pt,inner sep=1pt,align=center] (transition_3) {\small Deleting \\[-1pt] \small 11100011};

        \def\trieoffx{\colfourx+\triecenteroff}

        \begin{scope}[name prefix=t4_]
            \node[inner sep=\circlerad,fill=black,circle] (1) at (\trieoffx,\trieoffy) { };

            \node[inner sep=\circlerad,fill=black,circle] (2) at (\trieoffx,\trieoffy-1*\trieleveldiff) { };
            \draw (1) -- node[pos=0.5,inner sep=1pt,right=0pt] {\small 1} (2);
            \node[inner sep=0pt,above left=-1pt and 1pt of 2] (label_2) {\small $a$};

            \node[inner sep=\circlerad,fill=black,circle] (3) at (\trieoffx-\triehdiff,\trieoffy-2*\trieleveldiff) { };
            \draw (2) -- node[pos=0.25,inner sep=1pt,left=2pt] {\small 0} (3);
            \node[inner sep=0pt,above left=-1pt and 1pt of 3] (label_3) {\small \sout{$b$}};

            \node[inner sep=\circlerad,fill=black,circle] (5) at (\trieoffx-\triehdiff,\trieoffy-3*\trieleveldiff) { };
            \draw (3) -- node[pos=0.5,inner sep=1pt,right=0pt] {\small 0} (5);
            \node[inner sep=0pt,above left=-1pt and 1pt of 5] (label_5) {\small $d$};

            \node[inner sep=\circlerad,fill=black,circle] (6) at (\trieoffx-1.5*\triehdiff,\trieoffy-4*\trieleveldiff) { };
            \draw (5) -- node[pos=0.4,inner sep=1pt,left=0.25pt] {\small 0} (6);
            \node[inner sep=0pt,above left=-1pt and 1pt of 6] (label_6) {\small $e$};

            \node[inner sep=\circlerad,fill=black,circle] (7) at (\trieoffx-0.5*\triehdiff,\trieoffy-4*\trieleveldiff) { };
            \draw (5) -- node[pos=0.4,inner sep=1pt,right=0.25pt] {\small 1} (7);
            \node[inner sep=0pt,above right=-2.75pt and 1pt of 7] (label_7) {\small $f$};

            \node[inner sep=\circlerad,fill=black,circle] (4) at (\trieoffx+\triehdiff,\trieoffy-2*\trieleveldiff) { };
            \draw (2) -- node[pos=0.25,inner sep=1pt,right=2pt] {\small 1} (4);
            \node[inner sep=0pt,above right=-1pt and 1pt of 4] (label_4) {\small $c$};

            \node[inner sep=1pt,below=0pt of 6,align=center] (suffix_1) {\small 1 \\[-4pt] \small 1 \\[-4pt] \small 0};
            \node[inner sep=1pt,below=0pt of 7,align=center] (suffix_2) {\small 1};
            \node[inner sep=1pt,below=0pt of 4,align=center] (suffix_3) {\small 1 \\[-4pt] \small 0 \\[-4pt] \small 0};
            \node[inner sep=1pt,below=0pt of suffix_3,align=center] (truncated_suffix) {\small \textcolor{gray}{0} \\[-4pt] \small \textcolor{gray}{0} \\[-4pt] \small \textcolor{gray}{0} \\[-4pt] \small \textcolor{gray}{1}};
        \end{scope}

        \def\truncatedlineydiff{-21.4pt}
        \draw[thick, densely dotted] ($(t4_4.south)+(-6pt,\truncatedlineydiff)$) -- ($(t4_4.south)+(6pt,\truncatedlineydiff)$);
        \node[inner sep=1pt,below=0pt of t4_truncated_suffix,align=center] (truncated_suffix_label) {\small Truncated};

        \def\trieoffx{\colfourx}
        \node[inner sep=0pt,align=center] (encoding_6) at (\trieoffx,\encodingy) {\small 0 0 01 1 0 01 0 1 1 1 \\[0pt] \small 110\underline{1} 1\underline{100} 100\underline{1}};
        \begin{pgfonlayer}{background layer}
            \def\bith{0.15}
            \fill[fill=gray!20] (\trieoffx-1.145,\encodingy-\bith+0.215) rectangle (\trieoffx-0.935,\encodingy+\bith+0.215);
            \fill[fill=gray!20] (\trieoffx+0.3,\encodingy-\bith-0.190) rectangle (\trieoffx+0.88,\encodingy+\bith-0.155);
        \end{pgfonlayer}

        \begin{scope}[name prefix=t4b_label_]
            \draw[decorate,decoration={brace,raise=1pt,amplitude=2pt}] ($(encoding_6.north)+(-0.90,-0.015)$) --($(encoding_6.north)+(-0.25,-0.015)$)
                node[pos=0.5,above=3pt,inner sep=1pt,align=center] (node_1) {\small $a$};
            \draw[decorate,decoration={brace,raise=1pt,amplitude=2pt}] ($(encoding_6.north)+(-0.175,-0.015)$) --($(encoding_6.north)+(0.50,-0.015)$)
                node[pos=0.5,above=3pt,inner sep=1pt,align=center] (node_2) {\small $d$};
            \node[inner sep=5pt,anchor=south] (node_3) at ($(encoding_6.north)+(0.625,-1.4pt)$) {\small $e$};
            \node[inner sep=5pt,anchor=south] (node_4) at ($(encoding_6.north)+(0.835,-3.225pt)$) {\small $f$};
            \node[inner sep=5pt,anchor=south] (node_5) at ($(encoding_6.north)+(1.03,-1.4pt)$) {\small $c$};
        \end{scope}

        \node[inner sep=0pt,align=left,anchor=north west] (fig_label_4) at (\colfourx-\labeloffx,\labely) {\small \textbf{D)}};
    \end{tikzpicture}
    \caption{Diva++ inserts a key into a D-Trie by adding a branching path
    (Parts~A and B). As a result, the keys' trailing strings can shorten and
    vary in length, which Diva++ handles by padding the trailing strings to the
    same length with unary codes (Part B). If an inserted key matches an
    existing key along its path within the D-Trie and its trailing string,
    Diva++ extends the existing key's path. It marks the location along this
    path where the existing key was truncated with a prefix node (Part C,
    annotated with~$*$). Diva++ deletes a key from a D-Trie by removing the
    matching key with the longest path (Part~D).}
    \label{fig:updating_binary_tries}
\end{figure*}

\textbf{Ablation Study.}
Both order-preserving entropy encoding
(Section~\ref{sec:order_preserving_entropy_encoding})
and bit-level full key differentiation (this section and
Section~\ref{sec:dynamic_bit-level_full_key_diff})
play key roles in Diva++'s improved FPR. Table~\ref{tab:diva++_ablation}
illustrates this by comparing the FPR of Diva++ against variants applying each
technique in isolation, as well as the original Diva. We consider the three
datasets \textsc{EnWiki}~\cite{enwiki}, \textsc{Emails}~\cite{Enron}, and
\textsc{URLs}~\cite{MemeTracker}, and enforce a memory budget of~16 BPK. As
shown, the two techniques complement each other across all datasets, reducing
the FPR by as much as~$9\times$ relative to the original Diva design.
Experiment~\hyperlink{experiment:fpr_string}{3} of
Section~\ref{sec:evaluation_static}
further verifies this behavior under \mbox{different memory budgets}.

\subsection{Query Processing}\label{sec:diva++_queries}
To answer a range query, Diva++ first applies order-preserving entropy encoding
(Section~\ref{sec:order_preserving_entropy_encoding})
to the query endpoints. This allows the query endpoints to be compared directly
with the infixes of the entropy-encoded keys, without decompression. To answer
the entropy-encoded range query, Diva++ adopts a procedure similar to Diva's
(Section~\ref{sec:diva_queries}), where it
finds and searches the relevant runs within the corresponding Infix Store. The
core difference lies in how it checks the query against the infixes within a
run. Diva++ scans a run from left to right, skipping each infix lying outside
the query range. If such an infix has a \DT
(Section~\ref{sec:static_bit-level_full_key_diff}),
Diva++ skips it as well. As a \DT's size is unknown in advance, skipping it
entails sequentially parsing it to determine its size. Once a matching infix is
found, there are two cases: 

\emph{If the infix lies strictly in-between the encoded query endpoints,}
Diva++ immediately returns a true positive. Crucially, if this infix has a \DT,
Diva++ does not search or parse it. This is because any key within such a \DT
would also be strictly in-between the query endpoints, automatically leading to
a true positive.

\emph{If the infix matches either encoded query endpoint,}
Diva++ considers whether it has a \DT. If it does not, the query results in a
positive. Otherwise, Diva++ parses and searches the infix's \DT for a path and
a trailing string matching the query. Here, if a path is found that is strictly
in-between the entropy-encoded endpoints, the query results in a true positive,
similarly to how infixes strictly within the range yield true positives. 

If neither of the two cases above occur with a positive result, the query
returns a negative. Although parsing and searching \DTs raises the query cost
relative to Diva, it allows for better differentiating the keys and query
endpoints, significantly lowering the FPR.
Experiment~\hyperlink{experiment:fpr_string}{3} of
Section~\ref{sec:evaluation_static}
illustrates this trade-off.

\subsection{Dynamic Bit-Level Full Key Differentiation}\label{sec:dynamic_bit-level_full_key_diff}
In
Section~\ref{sec:static_bit-level_full_key_diff},
we described how Diva++ fully differentiates identical infixes using \DTs.
Under dynamic workloads, however, full differentiation is not always possible.
The difficulty arises because existing keys are truncated to save space.
Consequently, the bits of an existing key needed to distinguish it from an
insertion may have already been discarded. In this section, we describe how to
adapt the \DT structure to handle such cases and support both insertions and
deletions. We begin with insertions, which fall into three cases, each
requiring a different technique.

\textbf{Insertions Case (1): Divergence within the \DT.}
The first case occurs when the key being inserted diverges from some existing
key at a bit represented within the \DT. This case is straightforward: we
simply introduce a new branch at the first differentiating bit, and the two
keys remain fully distinguishable. As in
Section~\ref{sec:static_bit-level_full_key_diff},
if the inserted key is too short to provide a full-length trailing string, we
treat the bits beyond the key's end as~0s to form its trailing string while
preserving lexicographic ordering. Since an insertion enlarges a \DT's
encoding, Diva++ makes room for the new encoding by shifting physically
adjacent runs within the Infix Store; the same also applies to the other two
insertion cases.

\textbf{Insertions Case (2): Divergence along a Trailing String.}
The second case occurs when the inserted key matches some existing key in the
\DT, and first diverges from that key along its trailing string.
Figures~\ref{fig:updating_binary_tries}{\nobreakdash-}A
and
\ref{fig:updating_binary_tries}{\nobreakdash-}B
illustrate this case before and after inserting a key that matches the leftmost
existing key up to the second bit of its trailing string, where the two keys
differ.

We handle this case by converting the matching bits of the existing key's
trailing string into \DT edge labels, up to and including the first bit
differentiating it from the inserted key.
Figure~\ref{fig:updating_binary_tries}{\nobreakdash-}B
presents the resulting \DT after the insertion. As shown, the insertion leaves
a one-bit trailing string at leaf~$f$, because the existing key's truncation
has left too few bits to maintain a trailing string the same length as before.
Thus, unlike in the static setting of
Section~\ref{sec:static_bit-level_full_key_diff},
trailing strings may have variable length \mbox{in the dynamic setting}.

To accommodate variable-length trailing strings, we pad the trailing strings to
a common width with unary codes of the form~\underline{$10\dots00$} (analogous
to how Diva pads variable-length infixes to the slot width in
Section~\ref{sec:diva_dynamic}). We append a
unary code to the right of each trailing string, allowing the
delimiter~\underline{1} bit to separate the trailing string from the padding.
The delimiter bit elongates a trailing string's encoding by one bit. 

The encoding at the bottom of
Figure~\ref{fig:updating_binary_tries}{\nobreakdash-}A
illustrates the unary padding (underlined). In this case, the trailing strings
are all of full length, so only the delimiter bit of each unary code is stored.
Due to an insertion, the first two bits of the right key's trailing string
become edge labels within the \DT in
Figure~\ref{fig:updating_binary_tries}{\nobreakdash-}B.
This causes the trailing string's remaining bit (1) to be stored with unary
padding~\underline{100} following it, as shown at the bottom of the figure. The
newly introduced path and the modified trailing string are highlighted. Note
that since node~$b$ becomes a non-branching internal node after the insertion,
it no longer appears explicitly in the encoding, which is indicated by its
crossed-out label in
Figure~\ref{fig:updating_binary_tries}{\nobreakdash-}B.

\textbf{Insertions Case~(3): No Divergence.}
The third case occurs when the inserted key fully matches an existing key along
both its \DT path and trailing string. That is, the portion of the existing key
remaining after truncation is a prefix of the inserted key. Because the bits
distinguishing the existing key from the inserted key were discarded, the two
keys cannot be fully differentiated.
Figure~\ref{fig:updating_binary_tries}{\nobreakdash-}C
shows an example where the inserted key matches the rightmost key at leaf~$c$
in
Figure~\ref{fig:updating_binary_tries}{\nobreakdash-}B.

To handle this case, Diva++ converts the entire trailing string of the existing
key into \DT edge labels and adds a new leaf and trailing string for the
insertion to the end of the resulting path. We must now explicitly mark where
the existing key ends on the new path to prevent a range query including this
key but not the newly inserted key from returning a false negative. To this
end, we introduce a new type of internal node called a \emph{prefix
node}.\footnote{Prefix nodes also prove useful in static \DTs. In that setting,
we may have a key that is a proper prefix of another key, and the longer key
has only~0s past the end of the shorter key. When this happens, the usual
strategy of treating the bits beyond the shorter key's end as~0s does not help
with differentiating it. Therefore, we use a prefix node to represent the
shorter key instead.} Node~$g$ in
Figure~\ref{fig:updating_binary_tries}{\nobreakdash-}C
is a prefix node (marked with~$*$).

\textbf{Indicating a Prefix Node.}
We now extend BITR to allow encoding prefix nodes within a \DT. The central
challenge is indicating that a node in the encoding is a prefix node. Recall
from Section~\ref{sec:static_bit-level_full_key_diff}
that each node's encoding begins with a bit indicating whether it is a branch
or a leaf. A straightforward approach would be to extend this field to two bits
and reserve a bit combination for prefix nodes. However, doing so would enlarge
the encoding of every branch and leaf, substantially increasing the size of a
\DT even when it has no prefix nodes.

Our design takes a different approach, motivated by two observations: First,
BITR uses the Elias-Gamma format much less frequently in practice (e.g., on the
\textsc{Emails} and \textsc{URLs} datasets, only~15\%~and~14\% of the \DTs use
it, respectively). Second, insertions in most dynamic workloads fall into
cases~(1) or (2), and thus do not introduce prefix nodes (e.g., on both the
\textsc{Emails} and \textsc{URLs} datasets, less than~$0.1\%$ of the \DTs
contain a prefix node).

We therefore require any \DT containing a prefix node to use the Elias-Gamma
format and augment this format to indicate a prefix node using an escape
sequence. Doing so confines the overhead of supporting prefix nodes to the
relatively small fraction of \DTs that use the Elias-Gamma representation,
leaving the more common unary representation unchanged. In
Figure~\ref{fig:updating_binary_tries}{\nobreakdash-}C,
for example, the encoding transitioned from unary to the Elias-Gamma format to
accommodate the new prefix node, as shown by the encoding at the bottom of the
figure. The format bit is now set to~1, and the nodes~$a$~and~$b$, which
previously used the unary representation in
Figure~\ref{fig:updating_binary_tries}{\nobreakdash-}B,
are re-encoded using Elias-Gamma.

We define the required escape sequence as a~0~bit, which normally signals a
branch, along with the path length value~1,0 previously reserved in
Section~\ref{sec:static_bit-level_full_key_diff}.
Thus, the escape sequence for a prefix node is 010.
Figure~\ref{fig:updating_binary_tries}{\nobreakdash-}C
indicates that node~$g$ is a prefix node using this escape sequence. 

The reserved path length slightly increases the size of the Elias-Gamma
encoding of other path lengths, yet because relatively few \DTs use this
format, its effect on the average \DT encoding size remains small. In fact,
even when many \DTs use the Elias-Gamma format, the additional space overhead
(of both the escape sequence and the Elias-Gamma codes) is offset by BITR not
having to store a trailing string for the key represented by the prefix node.

\textbf{Encoding the Rest of Prefix Node.}
After the escape sequence, we encode the path from the prefix node's nearest
ancestor branch (or prefix node) as in
Section~\ref{sec:static_bit-level_full_key_diff}.
That is, we store an Elias-Gamma code for the path length followed by the
corresponding edge labels. As before, we omit the path's first edge label,
since it is inferrable from the \DT's topology. In
Figure~\ref{fig:updating_binary_tries}{\nobreakdash-}C,
the bits following the escape sequence encode a path length of four followed
by~100, which is the path's edge labels except for the omitted first one.

Finally, unlike a branch, which always has two children, a prefix node may have
only a left child, only a right child, or both. We therefore append two bits
indicating the presence of the left and right children, respectively. For
node~$g$ in
Figure~\ref{fig:updating_binary_tries}{\nobreakdash-}C,
these bits are~10, indicating that the left child is present and \mbox{the
right child is not}.

\textbf{Deletions.}
Diva++ handles deletions similarly to
Section~\ref{sec:diva_dynamic}. That is, it
removes a matching existing key with the longest \DT path to avoid introducing
false negatives in the future. As before, users must ensure deletions only
target existing keys only, since a filter cannot verify a deletion's validity.

A deletion may reduce the subtree of an ancestor of the removed node to a path
with no branches or prefix nodes. In
Figures~\ref{fig:updating_binary_tries}{\nobreakdash-}C~and~\ref{fig:updating_binary_tries}{\nobreakdash-}D,
for example, a deletion removes the prefix node~$g$, reducing the path from
node~$c$ to leaf~$h$ into a branchless, prefix-node-free path. Diva++ avoids
the metadata overhead of encoding the path's topology in BITR by removing the
path from the \DT and prepending its edge labels to the trailing string
previously stored at its end. If the resulting trailing string has more
than~$t$ bits (the original trailing string length, not counting unary
padding), Diva++ truncates it to~$t$ bits to ensure unambiguous decoding. In
Figure~\ref{fig:updating_binary_tries}{\nobreakdash-}D,
this process leads to a seven-bit trailing string~1000001 at leaf~$c$, which is
truncated to the original trailing string length of~$t=3$ bits.

When a \DT has no prefix nodes remaining after a deletion, encoding it in the
unary format may be more space-efficient. To determine whether this is the
case, we use the unary format's specification to compute the encoding's length
in this format. If it is smaller than the current encoding size, BITR
re-encodes the \DT in the unary format.
Figure~\ref{fig:updating_binary_tries}{\nobreakdash-}D
shows an example where switching back to the unary format saves space. As
shown, the format bit is set to~0 and the path lengths of the branches~$a$
and~$d$ revert to using the unary codes they had in
Figure~\ref{fig:updating_binary_tries}{\nobreakdash-}C.

\textbf{Queries.}
Diva++ answers a query similarly to
Section~\ref{sec:diva++_queries}. The only
difference is in how it handles the case where one of the query endpoints
matches an infix with a \DT. Since the bits missing from a key are unknown,
Diva++ compares only the available bits to the query endpoints when searching
the \DT for a key that lies in the range. Specifically, when Diva++ encounters
a prefix node, it compares only the bits on the path leading to the prefix node
with the endpoints. Similarly, when it encounters a trailing string with
missing bits, it compares only the available bits. The query returns a positive
if the compared bits match one of the endpoints or lie strictly in-between
them.

\subsection{Concurrency Support}\label{sec:concurrency}
Diva++ extends Diva to support concurrent operations using read-write locks.

\textbf{Read-Write Locks.}
We protect each Infix Store using a read-write lock. Since each Infix Store
contains~$\approx T = 1024$ keys (i.e., a small fraction of the dataset), there
is little lock contention when updates are independent and identically
distributed. The \ST also experiences low contention, since an insertion (resp.
deletion) of a sample occurs once in every~\mbox{$\approx T = 1024$} insertions
(resp. deletions). We use a concurrent variant of Wormhole that uses read-write
locks as our \ST.

Query threads take read locks over the \ST and the relevant Infix Stores.
Insertion and deletion threads take read locks over the \ST and write locks
over the Infix Stores. They upgrade their read locks on the \ST to write locks
only when it must be modified.

\textbf{Crabbing.}
Diva++ further reduces contention by employing the well-known crabbing
technique used in concurrent B-trees~\cite{Crabbing1,Crabbing2}: Threads first
take the required \ST locks, proceed to lock their target Infix Store to ensure
that it is not split or merged by another thread, and release the \ST lock if
the \ST does not require modification. This technique allows each thread to
only hold locks on its Infix Store and free up the \ST's structures for
modification by other threads. It also ensures
linearizability~\cite{Linearizability}, meaning that once a thread finishes an
update operation, the effects are visible to all future queries and updates.

\textbf{Read-Write Lock Implementation.}
We implement a read-write lock using compare-and-swap loops over a single byte,
thereby incurring a negligible memory overhead ($\approx 0.008$ BPK). The~7
lower-order bits of this byte track the number of querying threads accessing
the shared resource, and the higher-order bit indicates if an update thread is
modifying the resource. This layout supports a total of~128 threads. Using
larger data types for the lock would allow for supporting more threads.

\section{Theoretical Analysis and Results}\label{sec:theoretical_analysis_and_results}
\new{We prove Diva's semi-robustness by defining well-behaved data
distributions and using this notion to bound the FPR when the dataset comes
from such a distribution. We further argue that Diva++ moves this guarantee
closer to a robust one by bounding the FPR for a larger class of data
distributions. We also show that both Diva and Diva++ have low operation costs,
and that their FPR-space trade-off is optimal.} The last four rows in
Table~\ref{tab:method_complexities}
summarize this section's results.

\textbf{Setting.}
Formally, we consider the dataset's keys to be sampled independently from a
distribution with a cumulative distribution function (CDF) $F$ and probability
density function (PDF) $f$. For our analysis, we treat string keys as real
numbers in the range~$[0, 1)$ by interpreting their binary representation as a
fractional binary value. For example, we interpret the strings~$(1000)_2$ and
$(1100)_2$ as the real numbers $(0.1000)_2=0.5$ and $(0.1100)_2=0.75$.

\new{
The notion of FPR we adopt and analyze is a ``\emph{data-oblivious}'' one: the
query is chosen independently from the dataset. That is, the user has a query
in mind before they gain any knowledge of the specific dataset drawn from the
data distribution. This is a weaker notion than the classic
``\emph{data-aware}'' FPR, which also allows the query to depend on the
specific dataset (e.g., the query may be chosen to be just to the right of the
smallest key in the dataset). Nevertheless, data-oblivious robustness is
sufficient for filter data structures, as filters are used in settings where
users lack knowledge of the dataset and seek to gain information to avoid
redundant accesses to storage or the network. Interestingly, Goswami et al.'s
memory lower bound for range filters~\cite{Goswami} assumes a data-aware notion
of FPR, and does not apply to range filters with data-oblivious FPR guarantees.
Because of this, we are able to prove that Diva and Diva++ attain a better
FPR-memory footprint trade-off than the one dictated by this lower bound.
}

\begin{figure}
    \centering
    \begin{tikzpicture}
        \def\ox{-2}
        \def\oy{-2}
        \def\xaxislength{5}
        \def\yaxislength{1.8}
        \def\ooverlen{0.1}
        \def\xk{0.1}
        \def\xkk{0.9}
        \def\pk{0.1}
        \def\pkk{0.9}

        \draw[thick,-stealth] (\ox-\ooverlen,\oy) -- (\ox+\xaxislength,\oy);
        \draw[thick,-stealth] (\ox,\oy-\ooverlen) -- (\ox,\oy+\yaxislength);
        \node[inner sep=1pt,circle,fill=black,align=center] (start) at (\ox+\xk*\xaxislength,\oy+\pk*\yaxislength) {};
        \node[inner sep=1pt,circle,fill=black,align=center] (end) at (\ox+\xkk*\xaxislength,\oy+\pkk*\yaxislength) {};
        \draw plot[hobby] coordinates { (start) 
        (\ox+\xk*\xaxislength+0.2*\xaxislength,\oy+\pk*\yaxislength+0.075*\yaxislength) 
        (\ox+\xkk*\xaxislength-0.2*\xaxislength,\oy+\pkk*\yaxislength-0.075*\yaxislength) 
        (end) };
        \node[inner sep=0pt,align=center] (x_label) at (\ox+\xaxislength+2*\ooverlen,\oy) {$x$};
        \node[inner sep=0pt,align=center] (y_label) at (\ox+4*\ooverlen,\oy+\yaxislength) {$F(x)$};

        \draw[dashed] (\ox+\xk*\xaxislength,\oy) -- (start);
        \node[inner sep=2pt,anchor=north,align=center] (x_k) at (\ox+\xk*\xaxislength,\oy) {$k$-th \\[-4pt] Sample};
        \draw[dashed] (\ox+\xkk*\xaxislength,\oy) -- (end);
        \node[inner sep=2pt,anchor=north,align=center] (x_k1) at (\ox+\xkk*\xaxislength,\oy) {$(k+1)$-th \\[-4pt] Sample};

        \draw[dashed] (\ox,\oy+\pk*\yaxislength) -- (start);
        \draw[dashed] (\ox,\oy+\pkk*\yaxislength) -- (end);

        \draw[decorate,decoration={brace,raise=1pt,mirror,amplitude=2pt}] (\ox+\xk*\xaxislength,\oy-0.325*\yaxislength) -- (\ox+\xkk*\xaxislength,\oy-0.325*\yaxislength)
        node[pos=0.5,below=1pt,inner sep=4pt,align=center] (l_k) {$l_k$};

        \draw[dotted] (\ox+0.15*\xaxislength,\oy) -- (\ox+0.15*\xaxislength,\oy+0.10*\yaxislength);
        \draw[dotted] (\ox+0.20*\xaxislength,\oy) -- (\ox+0.20*\xaxislength,\oy+0.125*\yaxislength);
        \draw[dotted] (\ox+0.25*\xaxislength,\oy) -- (\ox+0.25*\xaxislength,\oy+0.13*\yaxislength);
        \draw[dotted] (\ox+0.30*\xaxislength,\oy) -- (\ox+0.30*\xaxislength,\oy+0.175*\yaxislength);
        \draw[dotted] (\ox+0.35*\xaxislength,\oy) -- (\ox+0.35*\xaxislength,\oy+0.225*\yaxislength);
        \draw[dotted] (\ox+0.40*\xaxislength,\oy) -- (\ox+0.40*\xaxislength,\oy+0.295*\yaxislength);
        \draw[dotted] (\ox+0.45*\xaxislength,\oy) -- (\ox+0.45*\xaxislength,\oy+0.39*\yaxislength);
        \draw[dotted] (\ox+0.50*\xaxislength,\oy) -- (\ox+0.50*\xaxislength,\oy+0.50*\yaxislength);
        \draw[dotted] (\ox+0.55*\xaxislength,\oy) -- (\ox+0.55*\xaxislength,\oy+0.61*\yaxislength);
        \draw[dotted] (\ox+0.60*\xaxislength,\oy) -- (\ox+0.60*\xaxislength,\oy+0.70*\yaxislength);
        \draw[dotted] (\ox+0.65*\xaxislength,\oy) -- (\ox+0.65*\xaxislength,\oy+0.78*\yaxislength);
        \draw[dotted] (\ox+0.70*\xaxislength,\oy) -- (\ox+0.70*\xaxislength,\oy+0.83*\yaxislength);
        \draw[dotted] (\ox+0.75*\xaxislength,\oy) -- (\ox+0.75*\xaxislength,\oy+0.85*\yaxislength);
        \draw[dotted] (\ox+0.80*\xaxislength,\oy) -- (\ox+0.80*\xaxislength,\oy+0.88*\yaxislength);
        \draw[dotted] (\ox+0.85*\xaxislength,\oy) -- (\ox+0.85*\xaxislength,\oy+0.89*\yaxislength);

        \draw[fill=gray!20] (\ox+0.50*\xaxislength,\oy) -- (\ox+0.55*\xaxislength,\oy) -- (\ox+0.55*\xaxislength,\oy+0.61*\yaxislength) -- (\ox+0.50*\xaxislength,\oy+0.50*\yaxislength) -- cycle;
        \node[inner sep=0pt,align=center,rotate=90] (infix_label) at (\ox+0.525*\xaxislength,\oy+0.28*\yaxislength) {\footnotesize Infix $n$};
        \node[inner sep=1pt,circle,fill=black,align=center] at (\ox+0.50*\xaxislength,\oy+0.50*\yaxislength) {};
        \node[inner sep=1pt,circle,fill=black,align=center] at (\ox+0.55*\xaxislength,\oy+0.61*\yaxislength) {};

        \draw[dotted] (\ox,\oy+0.50*\yaxislength) -- (\ox+0.50*\xaxislength,\oy+0.50*\yaxislength);
        \draw[dotted] (\ox,\oy+0.61*\yaxislength) -- (\ox+0.55*\xaxislength,\oy+0.61*\yaxislength);
        \draw[decorate,decoration={brace,raise=1pt,amplitude=2pt}] (\ox,\oy+0.50*\yaxislength) -- (\ox,\oy+0.61*\yaxislength) 
        node[pos=0.5,left=3pt,inner sep=1pt] (P_kn) {$P_{k,n}$};
    \end{tikzpicture}
    \caption{Infix Stores partition the key space into short ranges. Each of
    these ranges is partitioned into equal-width sub-ranges, each of which is
    condensed into an infix.}
    \label{fig:infix_key_space_partitioning_paper}
\end{figure}

\textbf{Infix Probability.}
Diva returns a false positive when a query and an input key from the data
distribution map to the same infix within the same Infix Store. We bound the
probability of this event by calculating the probability that a specific infix
exists. 

Intuitively, Diva's Infix Stores partition the key space into shorter ranges.
Diva discretizes each of these ranges by deriving infixes, further partitioning
it into~$\frac{2T}{\epsilon}$ equal-width sub-ranges, one for each possible
infix. Figure~\ref{fig:infix_key_space_partitioning_paper} illustrates this
partitioning for the $k$-th Infix Store. The probability that a particular
input key maps to a specific infix is equal to the probability that it lies in
the corresponding sub-range, which depends on the data distribution's CDF. We
denote this probability for infix~$n$ in the $k$-th Infix Store as~$P_{k,n}$.
We can guarantee an FPR of at most~$\epsilon$ by showing that~$P_{k,n} \leq
\frac{\epsilon}{N}$ and applying a union bound to all keys in the dataset.

\textbf{Example: Uniform Data Distribution.}
To give intuition why $P_{k,n}$ is small for well-behaved data distributions,
we first discuss the case where the data distribution is uniform, i.e.,
$F(x)=x$. Here, $P_{k,n}$ is exactly equal to the length of its infix's
sub-range. That is, it is equal to~$\frac{\epsilon}{2T} \cdot l_k$, where $l_k$
is the length of the range spanned by the $k$-th Infix Store, as shown in
Figure~\ref{fig:infix_key_space_partitioning_paper}.
Since there are~$\frac{N}{T}$ many Infix Stores, $l_k$ is in expectation
$\frac{T}{N}$. Thus, the probability of an input key landing in this sub-range
is equal to~$\frac{\epsilon}{2N}$, which is~less~than~$\frac{\epsilon}{N}$.

\textbf{Well-Behaved Data Distributions.}
In general, the CDF of the data distribution could make~$P_{k,n}$ arbitrarily
large, depending on how abruptly it changes within that particular sub-range.
Nevertheless, we can make a similar argument for \emph{well-behaved} data
distributions, defined as:
\begin{definition}[Well-Behaved Data Distribution]
    Let~$f'$ be the derivative of~$f$. A data distribution is well-behaved if,
    for~$\delta = T \cdot \frac{\log N}{N}$, a constant~$\alpha \in [0,1/4]$
    exists such that for all~$x$, 
    $ \delta^{\alpha} \leq f(x) \leq \delta^{-1 + 4\alpha} $
    and $f'(x) \leq \frac{1}{3} \cdot \delta^{-\alpha}. $
    \label{definition:well-behaved}
\end{definition}
The conditions above make well-behaved data distributions ``smooth.'' The condition
on~$f$ keeps the distribution's CDF from radically changing, while the
condition on~$f'$ limits the acceleration of change. Together, they control how
much the CDF can climb in an infix's sub-range, which is~$P_{k,n}$, as shown in
Figure~\ref{fig:infix_key_space_partitioning_paper}. We remark that the bounds
on these quantities grow polynomially in~$N$, thus becoming easier to satisfy
as the dataset grows. For example, $f'$ is allowed to take values as high
as~$\frac{\delta^{-\alpha}}{3} \geq \frac{\delta^{-1/4}}{3} \approx
\frac{N^{1/4}}{3}$. Intuitively, this is because~$P_{k,n}$ corresponds to a
smaller sub-range with larger~$N$, so our analysis can afford less smooth
distributions.

Many natural data distributions are well-behaved, as one can verify by plugging
their PDFs into
Definition~\ref{definition:well-behaved}.
For example, the uniform distribution is trivially well-behaved since it
has~$f(x)=1$ and $f'(x)=0$. Normal distributions with a standard deviation
of~$\sigma \gtrapprox \frac{3}{\sqrt{\pi} \cdot N^{1/5}}$ and power law
distributions (which generalize the Zipfian distribution) with an exponent
of~$\lambda \leq \frac{\log_2 N}{4}$ are also well-behaved. Note that~$\sigma$
and $\lambda$ often satisfy these conditions.


\new{In \apx{proof:uniformity-new},} we use the properties of well-behaved data
distributions in an argument similar to that of the uniform distribution to
show:
\begin{restatable}{theorem}{FPRnew}
    If the input distribution is well-behaved, then, with high probability over
    Diva's samples, in the static (dynamic) case, a particular infix exists
    with (expected) probability at most~$\epsilon$. 
    \label{theorem:uniformity-new}
\end{restatable}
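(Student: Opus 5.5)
We condition on the samples and treat the keys stored in one Infix Store as conditionally i.i.d. draws from the data distribution restricted to the gap between two adjacent samples. The bound then reduces to showing that, within every gap, the density $f$ is nearly constant at the scale of a gap. Fix the $k$-th Infix Store with predecessor $s_k$ and successor $s_{k+1}$. Let
\[
l_k = s_{k+1}-s_k \quad\text{and}\quad M_k = F(s_{k+1})-F(s_k).
\]
By the standard order-statistics property, conditioned on the samples, the $T-1$ keys between $s_k$ and $s_{k+1}$ are independent draws from $F$ restricted to $[s_k,s_{k+1})$. So if sub-range $I_n$ has $F$-mass $P_{k,n}$, each key lands in it with probability $P_{k,n}/M_k$. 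A union bound over the $T-1$ keys gives
\[
\Pr[\text{infix } n \text{ exists}] \le T\cdot \frac{P_{k,n}}{M_k}.
\]
It therefore suffices to show $P_{k,n}/M_k \le \epsilon/T$ for every $k,n$, with high probability over the samples.

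\textbf{Step 1 (gap masses).} In the static case the sample $s_k$ is the $kT$-th order statistic. The values $F(X_i)$ are i.i.d. uniform, so a Chernoff bound on the number of keys in a fixed $F$-interval shows the following. With probability $1-N^{-\Omega(1)}$, every interval of $F$-mass $c\,T\log N/N = c\,\delta$ contains more than $T$ keys. A union bound over a grid of $O(N/T)$ such intervals then gives $M_k \le c\,\delta$ for all $k$ simultaneously. Combined with $f \ge \delta^{\alpha}$, this yields
\[
l_k \le M_k/\min f \le c\,\delta^{1-\alpha}.
\]

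\textbf{Step 2 (width of a sub-range).} After removing $m^i_{\text{shared}}$ and $m^i_{\text{redundant}}$, the infix grid spans a dyadic interval containing $[s_k,s_{k+1})$. By the construction behind Lemma~\ref{lemma:total_infixes}, the span of this interval is less than $2\,l_k$ up to a constant: the redundant-bit removal is exactly what prevents the span from being much larger than $l_k$. Since Lemma~\ref{lemma:total_infixes} gives at least $T/\epsilon$ grid cells, each cell has width
\[
w \le \frac{\epsilon\, l_k}{T}\cdot O(1).
\]
Pinning down this constant is delicate, and I would match it against the $\frac{2T}{\epsilon}$ and $\frac{T}{\epsilon}$ counts rather than derive a new bound.

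\textbf{Step 3 (smoothness inside a gap).} By the mean value theorem with $|f'| \le \frac13\delta^{-\alpha}$, for all $x,y$ in the gap,
\[
|f(x)-f(y)| \le \tfrac13\,\delta^{-\alpha}\, l_k \le \tfrac{c}{3}\,\delta^{1-2\alpha}.
\]
Compare this with $f \ge \delta^{\alpha}$. Because $\alpha \le 1/4$, we have $\delta^{1-2\alpha} \le \delta^{1/2} \le \delta^{\alpha}$ (for small $\delta$), so $\max f/\min f \le 1+o(1)$ on the gap. The upper bound $f \le \delta^{-1+4\alpha}$ is what keeps $l_k$, and hence this variation, from being controlled only in the wrong direction. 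Consequently
\[
\frac{P_{k,n}}{M_k} \le \frac{w \max f}{l_k \min f} \le \frac{\epsilon}{T}\,(1+o(1))\cdot O(1).
\]
Choosing the constant in $m_{\text{infix}}=\lceil\log_2\frac{2T}{\epsilon}\rceil$ absorbs the $O(1)$ factor, which gives the static claim with high probability over the samples.

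\textbf{Dynamic case.} Each inserted key becomes a sample independently with probability $1/T$. The number of keys in a gap is then geometric with mean $T$, and the same Chernoff argument bounds every gap mass by $O(\delta)$ with high probability. Replacing $T$ in the union bound by the random gap occupancy and taking expectations gives the expected-probability version. Shortened infixes from earlier splits only make the relevant sub-range coarser by factors that are already accounted for in the expected-FPR averaging of Section~\ref{sec:diva_dynamic}.

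I expect the main obstacle to be Step 2. The subtle point is that the infix grid is aligned to bit boundaries rather than to $s_k$ and $s_{k+1}$, and turning Lemma~\ref{lemma:total_infixes} into a clean bound $w \le O(\epsilon l_k/T)$ with the right constant requires care. The choice $\delta = T\log N/N$ and the exponent bookkeeping in Step 3 should then go through routinely.
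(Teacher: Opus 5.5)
\textbf{The gap: the constant.} As written, your argument proves an $O(\epsilon)$ bound, not the stated $\epsilon$. Step~2 leaves the cell width at $O(\epsilon l_k/T)$ with an unpinned constant. Your closing move, choosing the constant in $m_{\text{infix}}=\lceil\log_2\frac{2T}{\epsilon}\rceil$ so that it absorbs the $O(1)$, changes the filter, so it is not available to you.

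The constant matters for two reasons. First, Lemma~\ref{lemma:total_infixes}'s count of at least $T/\epsilon$ values only yields $w\le l_k/(T/\epsilon-2)$, which leaves no slack at all for the factor $\max f/\min f$. Second, take the bit-aligned grid literally with $m_{\text{infix}}$ fixed at $\lceil\log_2\frac{2T}{\epsilon}\rceil$. The gap can then be barely more than a quarter of the interval the infixes tile; this happens when only one of the two samples breaks the $1\cdots1$/$0\cdots0$ pattern right after the redundant bits. Cells can then be almost $2\epsilon l_k/T$ wide, and already for uniform data a fixed infix can exist with probability close to $2\epsilon$.

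The paper avoids your Step~2 altogether. It models the store's range $[x_k,x_{k+1}]$ as split into exactly $\Delta=\frac{2T}{\epsilon}$ equal-width sub-ranges, so $w=l_k/\Delta$. The factor $2$ between $\Delta$ and the needed $T/\epsilon$ is precisely the slack that absorbs the variation of $f$ across the gap. You should adopt that model, or state an $O(\epsilon)$ result for the literal grid, instead of retuning $m_{\text{infix}}$.

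\textbf{Comparison with the paper's route.} Your skeleton is the paper's: condition on the samples, union bound over the store's keys, and reduce to $P_{k,n}/M_k\le\epsilon/T$ on a high-probability event for the gap masses. Under the equal-partition model, your Step~3 then closes the proof with explicit constants, and it is cleaner than the paper's local estimate.

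The paper Taylor-expands $F$ at the left end of the sub-range and $F^{-1}$ at $F(x_k)$, anchors both at $f(x_k)$, and shows the product of the two factors is at most $2$ (Claims~\ref{claim:eq1}--\ref{claim:conditions}). It uses the upper bound $f\le\delta^{-1+4\alpha}$ along the way. You instead use $P_{k,n}\le w\max f$, $M_k\ge l_k\min f$, and $\max f-\min f\le\frac13\delta^{-\alpha}l_k\le\frac13\delta^{-\alpha}M_k/\min f$. With $M_k\le c\delta$ and $\min f\ge\delta^{\alpha}$ this gives $\max f/\min f\le 1+\frac{c}{3}\delta^{1-3\alpha}$. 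That is at most $2$ whenever $c\le 3$, and at most $\frac43$ for $c=1$, which your Chernoff step easily delivers. Hence $P_{k,n}/M_k\le\frac{2}{\Delta}=\frac{\epsilon}{T}$. The upper bound on $f$ is never needed, so your sentence about it can go. Your Chernoff-over-a-grid Step~1 is a fine substitute for the paper's appeal to uniform spacings.

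One caveat: you use the two-sided bound $|f'|\le\frac13\delta^{-\alpha}$, whereas Definition~\ref{definition:well-behaved} states only $f'\le\frac13\delta^{-\alpha}$. The two-sided version is genuinely required, because a steep drop of $f$ inside a gap can concentrate the gap's mass in a single cell. The paper's Claim~\ref{claim:eq2} also needs it once the sign in $(F^{-1})''=-f'/f^3$ is tracked, so say explicitly that you are using it. Your dynamic case matches the paper's equally informal expectation argument.
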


\textbf{FPR.}
We now use Theorem~\ref{theorem:uniformity-new}
to bound Diva's FPR when the dataset comes from a well-behaved distribution:
\begin{theorem}[Static FPR]
    When the input distribution is well-behaved, the static variant of Diva
    returns a false positive for a point query~$q$ with probability at
    most~$\epsilon$. Moreover, it returns a false positive for a range
    query~$[q_l,q_r]$ with a probability of at most~$2\epsilon$.
    \label{theorem:static_FPR}
\end{theorem}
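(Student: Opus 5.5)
We reduce both parts to Theorem~\ref{theorem:uniformity-new}, which says that, with high probability over the samples, any particular infix value in any particular Infix Store is present with probability at most~$\epsilon$.

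\textbf{Point queries.} Fix a negative point query~$q$. Since~$q$ is chosen independently of the dataset, it determines, once the samples are fixed, a unique Infix Store~$k$ (the one between its predecessor and successor samples in the \ST). It also determines a unique infix value~$n$, obtained by removing the $m^k_{\text{shared}}$ and $m^k_{\text{redundant}}$ bits and truncating the suffix. If~$q$ were equal to a sample, it would be a true positive, so we may assume it falls strictly between two samples.

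By the correctness of the query procedure (Section~\ref{sec:diva_queries}), Diva answers positive only if infix~$n$ is present in Store~$k$, i.e., some key of the dataset maps to~$(k,n)$. By Theorem~\ref{theorem:uniformity-new}, this event has probability at most~$\epsilon$. The only subtlety is that the Store~$k$ containing~$q$ itself depends on the samples. We handle this by conditioning on the sample set. Theorem~\ref{theorem:uniformity-new} holds with high probability over samples, and the bound holds for every fixed pair~$(k,n)$. Since~$q$ is data-oblivious, the pair~$(k,n)$ is a function of the samples alone, so the conditional bound applies directly.

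\textbf{Range queries.} Fix a negative range query~$[q_l,q_r]$. Negativity means no key lies in~$[q_l,q_r]$. If the range contained a sample, that sample would be a key in the range, so both endpoints lie in the same Infix Store~$k$.

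Let~$l$ and~$r$ be the infixes of~$q_l$ and~$q_r$. Infix derivation is monotone: prefix removal and truncation preserve order. Therefore any key whose infix lies strictly between~$l$ and~$r$ would itself lie strictly between~$q_l$ and~$q_r$, contradicting negativity. So all infix values strictly between~$l$ and~$r$ are absent.

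It follows that the procedure of Section~\ref{sec:diva_queries} can return positive only if a key in the dataset has infix exactly~$l$ or exactly~$r$. (The scans of the endpoint runs compare remainders, which only refines this condition.) By Theorem~\ref{theorem:uniformity-new} each of these events has probability at most~$\epsilon$, and a union bound gives~$2\epsilon$.

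\textbf{Main obstacle.} The delicate step is the conditioning argument. Theorem~\ref{theorem:uniformity-new} concerns a fixed infix, whereas here the infix is chosen by the query through the random samples. I would resolve this by making explicit that Theorem~\ref{theorem:uniformity-new}'s bound holds conditionally on a good sample set, uniformly over all~$(k,n)$. The failure probability of that high-probability event can then either be absorbed into~$\epsilon$ or stated as part of the with-high-probability qualifier.
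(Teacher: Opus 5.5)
Your proposal is correct and matches the paper's proof essentially step for step. Both arguments reduce a negative point query to the presence of a single infix in a single Infix Store, and a negative range query to the presence of one of the two endpoint infixes (intermediate infixes being excluded by emptiness), then conclude via Theorem~\ref{theorem:uniformity-new} and a union bound. Your explicit conditioning on the sample set, and your treatment of the high-probability qualifier, spell out a step the paper leaves implicit without changing the argument.
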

\begin{proof}
    Consider the point query~$q$. Since we are analyzing the probability of a
    false positive, we can suppose that~$q$ is not in the dataset and thus does
    not equal any sample in Diva's \ST. Therefore, Diva checks if an infix
    corresponding to~$q$ exists to answer the query. By
    Theorem~\ref{theorem:uniformity-new},
    such an infix exists with a probability of at most~$\epsilon$, implying an
    FPR of at most~$\epsilon$ for~$q$.

    Now, consider the range query~$[q_l, q_r]$. As before, we can suppose that
    this is an empty range, implying that it includes none of the samples in
    the \ST and thus falls in a single Infix Store. Denoting the infixes
    corresponding to~$q_l$ and $q_r$ by $l$ and $r$, one can see that there
    must be no other infix in the Infix Store that is strictly between~$l$ and
    $r$, as it would cause the filter to return a true positive. Thus, Diva can
    only mistakenly return a false positive if some infix equals either~$l$ or
    $r$. By applying
    Theorem~\ref{theorem:uniformity-new},
    one can show that this happens with a probability of at most~$2\epsilon$.
\end{proof}

The FPR analysis of the dynamic variant of Diva follows a proof similar to that
of Theorem~\ref{theorem:static_FPR}. More specifically, since the
filter is initially constructed using the static bulk-loading method, it starts
with point and range query FPRs of $\epsilon$ and $2\epsilon$. After inserting
$N$ new keys into the filter with no distribution shifts, one can use an
analysis similar to that of InfiniFilter~\cite{InfiniFilter} to show that:
\begin{theorem}[Dynamic FPR]
    The dynamic variant of Diva returns a false positive for point query $q$
    with expected probability at most $\epsilon/2 \cdot (\log_2 N + 2)$.
    Moreover, it returns a false positive for a range query $[q_l,q_r]$ with
    expected probability at most $\epsilon \cdot (\log_2 N + 2)$.
    \label{theorem:dynamic_FPR}
\end{theorem}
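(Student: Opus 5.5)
We follow the InfiniFilter-style argument: classify infixes by their \emph{generation}, i.e., by how many splits they have experienced since insertion, and average the per-infix FPR over generations. The starting point is the static guarantee. After bulk-loading, every infix has full length $m_{\text{infix}}$, and by Theorem~\ref{theorem:uniformity-new} and Theorem~\ref{theorem:static_FPR} a particular full-length infix exists with probability at most $\epsilon$. This gives a point-query FPR of $\epsilon$ and a range-query FPR of $2\epsilon$.

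\textbf{Effect of one split.} We next quantify how a split degrades an infix. When an Infix Store splits, the longest common prefix of one of the two children grows by one bit in expectation, so each old infix loses one bit of resolution. The lost bit is absorbed by the unary padding, and a query must treat it as a wildcard. An infix that has lost $j$ bits therefore matches a query endpoint on a sub-range $2^j$ times wider than a full-length infix does. By the dynamic clause of Theorem~\ref{theorem:uniformity-new}, such an infix exists with expected probability at most $2^j$ times the full-length bound. At the same time, the children after a split hold roughly half as many keys per sub-range, which cancels part of this growth.

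\textbf{Weighting by generation.} Since each Infix Store doubles in expected size between splits, a $2^{-(j+1)}$ fraction of its infixes (up to the base generation) have lost $j$ bits. After $N$ insertions starting from $N$ keys, there are at most about $\log_2 N$ generations. The expected point-query FPR is then the weighted sum over $j$ of the fraction of infixes with $j$ lost bits times their collision probability. The $2^{-j}$ fraction and the $2^{j}$ widening cancel, so each generation contributes at most $\epsilon/2$. Summing over the $\log_2 N$ split generations, plus the base generation and the newest full-length generation (together contributing at most $\epsilon$), gives $\epsilon/2\cdot(\log_2 N+2)$. The range-query bound follows as in Theorem~\ref{theorem:static_FPR}: a false positive requires some infix to collide with one of the two endpoint infixes, so a union bound over the two endpoints doubles the point bound.

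\textbf{Main obstacle.} The hardest step is justifying the per-generation accounting in expectation. Splits are triggered by randomly sampled insertions, so store sizes, and hence generation proportions, are random and not exactly dyadic. We plan to handle this by conditioning on the sample sequence, applying the dynamic clause of Theorem~\ref{theorem:uniformity-new} within each store, and then using linearity of expectation over the number of splits each key has experienced. The key bound is that the expected number of splits a key has experienced is at most $\log_2 N$, since each split halves the expected key count per store.

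Duplicated infixes (those whose remainder has run out) need separate care. Each duplicate covers a disjoint part of the key space, so their collision probabilities add up to the same $2^j$-widened bound rather than exceeding it.
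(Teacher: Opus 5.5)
Your proposal takes the same route as the paper: start from the static bounds $\epsilon$ and $2\epsilon$, then apply an InfiniFilter-style weighted average in which half the remainders are full length, a quarter are one bit shorter, and so on, so that each generation contributes about $\epsilon/2$ and a union bound over the two endpoints doubles the point bound. One caution about your plan for the main obstacle: an infix that has lost $j$ bits collides with probability proportional to $2^{j}$, so bounding the \emph{expected} number of splits per key by $\log_2 N$ does not control the sum (Jensen's inequality goes the wrong way for the convex $2^{j}$); what you must show is that the expected number of keys in the query's Infix Store with $j$ lost bits is at most about $2^{-(j+1)}T$, and that only about $\log_2 N + 2$ values of $j$ occur, which is exactly the generation-fraction claim your weighted sum already relies on.
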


\new{
Since Diva++ expands upon Diva's techniques, it provides the same semi-robust
FPR guarantee for well-behaved data distributions. Moreover, when using~$k$-th
order entropy encoding, Diva++'s guarantees extend to datasets from jagged
distributions that do not have patterns of a higher order than~$k$. This is
because~$k$-th order entropy encoding ensures that the infixes resulting from
these datasets are uniformly distributed~\cite{ShannonInformationTheory},
making Theorem~\ref{theorem:uniformity-new}
and thus the same FPR guarantees as Theorems~\ref{theorem:static_FPR}
and \ref{theorem:dynamic_FPR} trivially hold. In
particular, this implies that an infinite-order entropy encoding scheme would
grant Diva++ a data-oblivious FPR guarantee that holds for any data
distribution, making it robust.
}

\textbf{Memory Footprint.}
Both static and dynamic Diva variants keep a~$\frac{1}{T}$ fraction of the keys
in their \ST. For each, they also store a total of 96 bits of metadata,
representing whether the sample is a partial sample, the size of the Infix
Store to its right, the number of infixes it stores, and a pointer to its
array. Assuming that the keys have an average length of $L$ bits, the \ST
consumes a total of~$\frac{(96+L) \cdot N}{T}$ bits.

In the static variant, each Infix Store has $T$ slots. It consumes $T$ bits for
the \texttt{occupieds} bitmap, one bit per slot for the \texttt{runends}
bitmap, and $\log_2 \frac{2}{\epsilon}$ bits for each slot. Since there are a
total of~$N/T$ Infix Stores, the total memory footprint of the Infix Stores is
$N/T \cdot (T \cdot (3 + \log_2 \frac{1}{\epsilon}))$ bits. Summing up the \ST
and Infix Store costs, noting that $T=1024$, and dividing by $N$ results in a
memory footprint of $\approx
3.09+\frac{L}{1024}+\log_2\frac{1}{\epsilon}$~BPK.

In the dynamic variant, an Infix Store with $n$ infixes uses at most
$\frac{n}{\alpha^2}$ slots. Since there are a total of $\frac{T-1}{T} \cdot N$
infixes, this translates to a total of $\frac{T-1}{T \cdot \alpha^2} \cdot N
\approx \frac{1}{\alpha^2} \cdot N$ slots. Moreover, each slot is one bit wider
than the static case to accommodate the unary counter. As there are an expected
of $N/T$ Infix Stores, this results in a total memory footprint of $\approx
1.09+\frac{L}{1024}+\frac{1}{\alpha^2}\cdot(2+\log_2 \frac{1}{\epsilon})$~BPK.

\new{
When the dataset follows a well-behaved distribution, Diva++ maintains the same
memory footprint as Diva in both static and dynamic settings, as it store store
a negligible number of \DTs due to the rarity of identical infixes. Under
jagged data distributions, Diva++ can potentially store more bits from each key
in its \DTs if there are many higher order patterns in the data that are
unexploited by the entropy encoding. In our experiments, the \DTs did not incur
an noticable memory overhead, as they occupied space reclaimed by removing
duplicate infixes. This is because, as the following theorem states, the
encoding of a \DT consumes at most~2 bits per node:
\begin{theorem}[D-Trie Memory Footprint]
    A \DT's encoding, save for the trailing bits, consumes at most~2 bits per
    node of space.
\end{theorem}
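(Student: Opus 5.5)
We prove the bound by a charging argument that assigns the bits of the BITR encoding to the nodes of the \DT's logical (uncompressed) binary trie, where every edge is a node of its own. We fix the unary format, since BITR picks whichever format is shorter; a bound for the unary encoding then also bounds the actual encoding. We count nodes of the full binary trie, including the non-branching internal nodes on each path.

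\emph{Structure.} Root-to-leaf paths end exactly where their key is differentiated, so every leaf hangs by a single edge from its nearest ancestor branch. The trie is therefore made of branches (exactly two children), leaves, and unary internal nodes lying on the paths between consecutive branches. Let $b$ be the number of branches and $\ell$ the number of leaves. Full binarity at branches gives $\ell = b+1$.

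\emph{Per-branch cost.} Consider a branch whose path from its nearest ancestor branch has length $p \ge 1$. Its unary encoding consists of:
\begin{itemize}
    \item one type bit;
    \item $p$ bits for the unary length code;
    \item $p-1$ label bits, since the first label is inferable.
\end{itemize}
This totals $2p$ bits. The path contributes exactly $p$ nodes: the $p-1$ unary internal nodes and the branch itself. Charging these $2p$ bits to those $p$ nodes gives exactly $2$ bits per node.

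For the first branch, with path length $p \ge 0$ from the root, the unary length code costs $p+1$ bits and all $p$ labels are stored. Together with the type bit, this gives $2p+2$ bits. The path contributes $p$ nodes below the root plus the root itself when $p>0$.

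\emph{Leaves, format bit, and the leftover constant.} Each leaf costs one bit, charged to the leaf node. This leaves only an additive constant to absorb: the format bit and the extra $+2$ of the first branch. We absorb it by charging the leaves' unused second bit. Each leaf is charged only $1$ bit, so the $\ell = b+1 \ge 2$ leaves (whenever there are at least two keys) provide $\ell$ bits of unused capacity. That capacity covers the format bit and the first branch's surplus.

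The main obstacle is this boundary accounting at the root when the first branch's path length is $0$ (the root is a branch). In that case the first branch costs $2$ bits for $1$ node plus the format bit. We check directly that the total is at most $1 + 2 + 2(\text{internal path nodes}) + \ell \le 2(\#\text{nodes})$ using $\ell \ge 2$.

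For the degenerate single-key \DT, we define its encoding as empty or treat it as a special case. For the Elias-Gamma format we need no separate proof, since BITR takes the minimum of the two formats. Prefix nodes in the dynamic setting fall outside this static statement; the theorem refers to the static encoding.
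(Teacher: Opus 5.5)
Your proposal is correct and is essentially the paper's argument: you bound the unary format (BITR takes the cheaper of the two), charge each branch's $2p$ bits to the $p$ nodes on its path, and charge each leaf's single bit to the leaf; your handling of the format bit and the first branch is a refinement the paper skips. Two small points: once the root is counted, the first branch's $2p+2$ bits are exactly paid for by its $p+1$ path nodes, so the encoding has exactly $2n+1-\ell$ bits for $n$ nodes and $\ell\ge 2$ leaves, and only the format bit draws on the leaves' slack; also, like the paper's proof, yours covers only prefix-node-free \DTs, and the paper's footnote says prefix nodes can arise even in static \DTs (forcing the Elias-Gamma format), so calling them purely dynamic is not quite accurate.
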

\begin{proof}
    A \DT's encoding uses whichever of the unary and the binary formats consume
    less space. As such, we prove the desired claim by bounding the memory
    footprint of the unary format. Observe that a branch node's encoding
    consumes exactly two bits for each node on the path it encodes. Combined
    with how each leaf is encoded using a single bit, this implies that the
    entire encoding consumes at most~2 bits per node of space.
\end{proof}
}

\textbf{Performance.}
Each of Diva \new{and Diva++'s} operations search for the
predecessor and successor of the query endpoints in the \ST. Since we use
Wormhole as the underlying \ST, and because Wormhole searches for an~$L$-bit
key using at most~$O(\log_2 L)$ cache misses, this search incurs a total
of~$O(\log_2 L)$ cache misses. Moreover, as the \texttt{occupieds} and
\texttt{runends} bitmaps are $T=1024$ and~$\frac{T}{\alpha}\approx1078$ bits
long, and since a typical cache line is 512 bits long, applying rank and select
operations on them incurs a constant number of cache misses.
\new{Finally, reading/modifying the slots in the Infix Store incurs
a single cache miss in expectation. Thus, Diva and Diva++} incur an expected
of~$O(\log_2 L + 1) = O(\log_2 L)$ cache misses. Notice that when keys are
fixed-length, the number of cache misses becomes constant.

\begin{table}
    \centering
    \caption{The memory footprint lower bound for supporting range queries of
    length~$R$ with an FPR of~$\epsilon$ weakens when the desired FPR guarantee
    is relaxed from a data-aware one to a data-oblivious one.}
    \def\arraystretch{2.0}
    \begin{tabular}{ccc}
        \toprule \\[-28pt]

        & \small \textbf{Robust} & \small \textbf{Semi-Robust} \\[-2pt]

        \midrule

        \small \textbf{Data-Aware} 
            & \makecell[c]{\small $\log_2 \frac{R}{\epsilon} - O(1)$ \\ \small Goswami et al.~\cite{Goswami}} 
            & \makecell[c]{\small $(1-o(1)) \cdot \log_2 \frac{R}{\epsilon} - O(1)$ \\ \small Theorem~\ref{theorem:data_aware_memory_lower_bound}} \\

        \small \textbf{Data-Oblivious} 
            & \makecell[c]{\small $\log_2 \frac{1}{\epsilon} - O(1)$ \\ \small Theorem~\ref{theorem:data_oblivious_memory_lower_bound}}
            & \makecell[c]{\small $\log_2 \frac{1}{\epsilon} - O(1)$ \\ \small Theorem~\ref{theorem:data_oblivious_memory_lower_bound}} \\

        \bottomrule
    \end{tabular}
    \label{tab:fpr_notions}
\end{table}

\new{
\textbf{Theoretical Limits.}
Diva and Diva++ achieve a lower memory footprint than the~$\log_2
\frac{R}{\epsilon}$ BPK lower bound of Goswami et al.~\cite{Goswami}. Here, $R$
is the maximum range query length one would want to support with an FPR
of~$\epsilon$. This improvement is because Diva and Diva++ provide a
data-oblivious FPR guarantee as opposed to a data-aware one. A natural
question, therefore, is whether one can improve upon Diva and Diva++'s FPR
guarantees. We answer this question negatively by proving several memory lower
bounds for various robustness and FPR requirements.
Table~\ref{tab:fpr_notions}
summarizes our resutls and compares them with the result of Goswami et al.
}

\new{
We first show that one cannot significantly improve upon Diva and Diva++'s
data-oblivious FPR guarantee at the same memory footprint. We do this in
\apx{proof:data_oblivious_memory_lower_bound} through an encoding argument
similar to that of Goswami et al., yielding the following theorem:
\begin{theorem}[Data-Oblivious Memory Lower Bound]
    Any range filter, semi-robust or robust, that attains a data-oblivious FPR
    of at most~$\epsilon$ must use a total memory of~$N \cdot \log_2
    \frac{1}{\epsilon} - O(N)$ bits on average, which is equivalent
    to~$\log_2\frac{1}{\epsilon}-O(1)$ BPK.
    \label{theorem:data_oblivious_memory_lower_bound}
\end{theorem}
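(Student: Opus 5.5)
We argue by a counting (encoding) argument in the style of Goswami et al., restricted to point queries. Point queries are a special case of range queries, so a lower bound for them applies to every range filter. Because the guarantee is data-oblivious, we fix a query distribution in advance, independent of the data: a query is a uniformly random element $q$ of a universe $U$ of size $u$. The dataset $S$ is a uniformly random $N$-subset of $U$. The uniform distribution is well-behaved, so any semi-robust filter must also meet the guarantee on this input. Data-obliviousness then means that for every fixed $q$, averaged over $S$ (and the filter's internal randomness), $\Pr[\text{filter answers positive on } q \mid q \notin S] \le \epsilon$. Averaging over $q$, the expected number of elements of $U \setminus S$ accepted by the filter is at most $\epsilon (u-N)$.

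The core step is the standard decoding scheme. Let $M$ be the filter's memory image built on $S$. From $M$ alone we recover the accepted set $A(M) \supseteq S$, with no false negatives. Its expected size is at most $N + \epsilon(u-N)$. Given $M$, we specify $S$ by its index among the $\binom{|A(M)|}{N}$ subsets of $A(M)$. Hence
\[
H(S) \le \mathbb{E}|M| + \mathbb{E}\log_2\binom{|A(M)|}{N} + O(\log u).
\]
Here $H(S) = \log_2\binom{u}{N}$, and a term of $O(\log u)$ bits suffices to encode $|A(M)|$. Two technical points remain:
\begin{itemize}
\item If the filter is randomized, we fix the random seed to one achieving at most the average FPR and memory (a simple averaging argument); alternatively, we give the seed to the decoder as shared randomness.
\item The concavity of $x \mapsto \log_2\binom{x}{N}$ (Jensen's inequality) lets us replace $|A(M)|$ by its expectation.
\end{itemize}

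Finally we compute. With $u = N/\epsilon$ (or any $u \gg N/\epsilon$), Stirling gives $\log_2\binom{u}{N} \ge N\log_2\frac{u}{N}$. On the other side, $\log_2\binom{N+\epsilon u}{N} \le N\log_2\frac{e(N+\epsilon u)}{N} = N\log_2\!\left(\frac{\epsilon u}{N}+1\right) + O(N)$. Subtracting yields $\mathbb{E}|M| \ge N\log_2\frac{1}{\epsilon} - O(N)$, i.e., $\log_2\frac1\epsilon - O(1)$ BPK. This is the claimed bound, and it covers robust and semi-robust filters alike, since both must satisfy the uniform-data case.

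The main obstacle is the conversion from a per-query, averaged-over-data guarantee to a bound on $\mathbb{E}|A(M)|$. We must check that linearity of expectation over uniform $q$ legitimately yields this bound even though the guarantee is only in expectation over $S$. We must also check that the Jensen step goes in the right direction, i.e., that the bound holds on average rather than for every $S$. This is why the theorem is stated "on average." A secondary subtlety is choosing $u$ so that the $O(N)$ slack from Stirling and from the $-N$ correction in $u-N$ does not swamp the main term.
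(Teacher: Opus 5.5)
Your proposal is correct and follows essentially the same route as the paper's proof: an encoding argument that stores the filter's memory image together with the index of $S$ among the $\binom{E(S)}{N}$ subsets of the accepted universe elements, where linearity of expectation over all point queries gives $\mathbb{E}_S[E(S)] \le \epsilon u + N$ and Jensen's inequality plus the standard binomial bounds finish the calculation. Two details of yours tighten what the paper leaves implicit: you take $u \geq N/\epsilon$, which the paper's final equality actually needs (it only states $u \gg N$), and you handle the filter's internal randomness explicitly via shared randomness.
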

}

\new{
Additionally, in \apx{proof:data_aware_memory_lower_bound}, we provide a simple
extension of Goswami et al.'s lower bound~\cite{Goswami}, showing that it is
impossible to design a range filter that extends Diva's data-oblivious
semi-robust guarantee to a data-aware semi-robust guarantee without incurring a
memory footprint proportional to the range query length: 
\begin{theorem}[Data-Aware Memory Lower Bound]
    Any semi-robust range filter that answers queries of length~$R$ with a
    data-aware FPR of at most~$\epsilon$ must use a total memory of~$(1-o(1))
    \cdot N \cdot \log_2 \frac{R}{\epsilon} - O(N)$ bits, which is equivalent
    to~$(1-o(1)) \cdot \log_2\frac{R}{\epsilon}-O(1)$ BPK. In other words,
    semi-robust range filters must use the same amount of memory as robust ones
    if the FPR must be data-aware.
    \label{theorem:data_aware_memory_lower_bound}
\end{theorem}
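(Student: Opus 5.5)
We reduce to Goswami et al.'s data-aware lower bound by exhibiting, inside the class of well-behaved distributions, a distribution whose typical datasets are essentially arbitrary enough for their encoding argument to go through. Concretely, we take the uniform distribution on $[0,1)$, which is trivially well-behaved under Definition~\ref{definition:well-behaved}. A semi-robust filter with data-aware FPR at most $\epsilon$ for queries of length $R$ must achieve this guarantee for every dataset in the support of this distribution (or at least for all datasets outside a set of vanishing probability). So it suffices to show that Goswami et al.'s argument only needs the guarantee on a large enough family of datasets.

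First, we recall the structure of Goswami et al.'s proof. Discretize the universe into $u$ points, where $u/R$ is a resolution chosen so that $N$ keys sampled uniformly are, with high probability, distinct and pairwise at distance more than $R$. Given the filter state for a set $S$, consider the union $U$ of all length-$R$ intervals on which the filter answers positive. Data-awareness lets the adversary choose queries adjacent to the keys, so $|U| \le NR + \epsilon u$ up to constants. Since $S \subseteq U$, one can encode $S$ using the filter's memory plus $\log_2\binom{|U|}{N}$ bits. Comparing this with $\log_2\binom{u}{N}$ gives memory $\ge N\log_2\frac{R}{\epsilon} - O(N)$ when the datasets range over all $\binom{u}{N}$ sets.

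Second, we restrict to typical sets. Under the uniform distribution, the sampled set, once rounded to the grid, is uniform over $N$-subsets, except for a collision event of probability $o(1)$. Let $\mathcal{G}$ be the family of subsets on which the filter meets the guarantee; it has mass $1-o(1)$. Then $\log_2|\mathcal{G}| \ge \log_2\binom{u}{N} - O(1)$. Since averaging the memory over $\mathcal{G}$ only needs $\log_2|\mathcal{G}|$ in place of $\log_2\binom{u}{N}$, the same bound follows for average memory. The $(1-o(1))$ factor absorbs the slack from choosing $u$ polynomial in $N$ so that the collision probability vanishes while $\log_2\frac{u}{N}$ still dominates. It also absorbs the requirement that $R \le u/N$ so that the keys are $R$-separated with high probability.

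The main obstacle is the choice of discretization. Keys are real numbers, so we must pick $u$ so that each key lies at a distinct grid point, separation exceeds $R$ with high probability, and the entropy accounting loses only a $(1-o(1))$ factor. If the uniform case turns out too restrictive for the separation requirement, we would instead use a distribution concentrated on a subinterval of length $\Theta(NR/\epsilon)$ with density near $\delta^{-1+4\alpha}$. This keeps the distribution well-behaved while letting $u$ match $NR/\epsilon$ exactly.
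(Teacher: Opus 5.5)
Your top-level reduction is the same as the paper's. You use the uniform distribution as the well-behaved witness, note that a uniformly random $N$-set is well-separated with probability $1-o(1)$, invoke Goswami et al.\ on that family, and lose a $(1-o(1))$ factor when averaging.

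The gap is in the sketch of Goswami et al.'s argument that you use to justify the transfer to your family $\mathcal{G}$: as stated, it is wrong. Your one-stage encoding uses fixed length-$R$ blocks, takes the union $U$ of positive blocks with $|U|\le NR+\epsilon u$, and then spends $\log_2\binom{|U|}{N}$ bits. That can certify a lower bound of only about $N\log_2\frac{u}{NR+\epsilon u}\le N\log_2\frac{1}{\epsilon}$. This is the data-oblivious bound of Theorem~\ref{theorem:data_oblivious_memory_lower_bound}, with no $\log_2 R$ term. The loss is unavoidable for this kind of encoding: the decoder must replay those queries without knowing $S$, so they are fixed in advance and never use data-awareness.

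In the actual argument (mirrored in the paper's proof of Theorem~\ref{theorem:combined_range_filtering_lower_bound}), this stage only pins each key to a non-empty block of length $\Theta(R)$, at a cost of about $N\log_2\frac{\epsilon u}{NR}$ bits. The $\log_2 R$ term comes from a second, data-dependent stage:
\begin{itemize}
\item Within each non-empty block, the decoder descends through the dyadic halves and queries both halves at each level.
\item Because the set is well-separated, the half without the key is empty. Under the data-aware guarantee it errs with probability at most $\epsilon$.
\item Locating the key therefore costs $\epsilon\log_2 R+O(1)$ expected bits rather than $\log_2 R$.
\end{itemize}
Subtracting these costs from the entropy $\approx N\log_2\frac{u}{N}$, $u$ cancels and one gets $N\log_2\frac{R^{1-O(\epsilon)}}{\epsilon}-O(N)$.

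The repair is what the paper does. Use Goswami et al.'s bound as a black box over the uniform distribution on $R$-well-separated sets. Define your good family as those sets, intersected with wherever the semi-robust guarantee is required, rather than merely as the sets on which the filter meets its guarantee. Separation is what makes the within-block queries empty.

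Two smaller corrections:
\begin{itemize}
\item $N$ uniform keys are pairwise $2R$-separated with probability $1-O(N^2R/u)$; the paper obtains $\prod_{i<N}\frac{u-4NR}{u-i}$ from Goswami et al.'s Lemma~2. So you need $u\gg N^2R$, not $R\le u/N$.
\item Because $u$ cancels, there is no need for $\log_2\frac{u}{N}$ to dominate. The $(1-o(1))$ factor comes only from the $o(1)$ mass of non-separated sets.
\end{itemize}
Your fallback distribution concentrated on a subinterval is unnecessary. It would also violate the requirement $f(x)\ge\delta^{\alpha}$ for all $x$ in Definition~\ref{definition:well-behaved}.
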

In particular, this theorem implies that supporting variable-length keys or
range queries with a data-aware FPR requires storing the full dataset, since in
these cases we have~$R=\infty$.
}

\new{
\textbf{Simultaneous Data-Awareness and Data-Obliviousness.}
The following natural question remains unaswered: \emph{Is it possible to
design a semi-robust range filter with a data-oblivious FPR guarantee that also
supports ``short or local'' range queries with a robust data-aware FPR
guarantee?} Formally, such a range filter has the following two properties when
processing fixed-length keys:
\begin{enumerate}
    \item For ``short'' range queries of length at most~$R$, the filter attains
        a data-aware FPR of at most~$\epsilon$.
    \item For any range query, it attains a semi-robust data-oblivious FPR of
        at most~$\epsilon$.
\end{enumerate}
A simple and dynamic approach to achieving these two properties over
fixed-length keys is to combine a Memento filter~\cite{Memento} instance (with
a memory footprint of~$\approx \log_2 \frac{R}{\epsilon}$ BPK) and Diva (with a
memory footprint of~$\approx \log_2 \frac{1}{\epsilon}$ BPK). The idea is to
use Memento filter to answer range queries of length at most~$R$ with a
data-aware FPR guarantee, while using Diva to answer all other range queries
with a data-oblivious FPR guarantee.
\ifarxiv
\Apx{proof:combined_range_filtering_lower_bound}
\else
The technical report
\fi
shows that this approach is space-optimal:
\begin{theorem}
    When the keys are fixed length, any range filter that answers range queries
    of length at most~$R$ with a data-aware FPR of at most~$\epsilon$ and
    data-oblivious range queries with a semi-robust FPR of at~$\epsilon$ must
    use a total memory of~$N \cdot \log_2 \frac{R^{1-O(\epsilon)}}{\epsilon^2}
    - O(N)$ bits on average, which is equivalent
    to~$\log_2{\frac{R}{\epsilon}}+\log_2{\frac{1}{\epsilon}}-O(1)$ BPK.
    \label{theorem:combined_range_filtering_lower_bound}
\end{theorem}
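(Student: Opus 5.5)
We prove the bound with an encoding (compression) argument in the style of Goswami et al.: we show that any filter with both properties can be used, together with a small amount of side information, to encode a random input set. The space must then be at least the entropy of that set minus the side information. The target is $N\log_2\frac{R^{1-O(\epsilon)}}{\epsilon^2}-O(N)$, which splits naturally into two pieces.

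\begin{itemize}
\item A data-aware piece of roughly $N\log_2\frac{R}{\epsilon}$, coming from short queries.
\item An additional data-oblivious piece of roughly $N\log_2\frac{1}{\epsilon}$, coming from the coarse location of the keys.
\end{itemize}

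\emph{Hard input distribution.} We fix a distribution that is well-behaved in the sense of Definition~\ref{definition:well-behaved}, so the semi-robust guarantee applies. Concretely, partition the universe into $N$ equal-width buckets and place one key uniformly at random in each bucket. This is essentially uniform, so the semi-robust data-oblivious guarantee holds. Each bucket is further viewed as consisting of $\frac{1}{\epsilon}$ blocks, and each block consists of $R$ consecutive positions. A key's location then has two independent parts:
\begin{itemize}
\item a block index worth $\log_2\frac{1}{\epsilon}$ bits;
\item an offset inside the block worth $\log_2 R$ bits.
\end{itemize}

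\emph{Recovering the block indices.} The decoder holds the filter and first recovers each key's block index using data-oblivious queries. For each block, it queries the whole block; this is a range query chosen independently of the data. By the semi-robust guarantee, each empty block answers positive with probability at most $\epsilon$. Hence the expected number of positive blocks per bucket is $1+O(1)$. By linearity of expectation, the encoder can specify the true block among the positive ones using $O(1)$ bits per key on average.

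\emph{Recovering the offsets.} Within the true block, which has length $R$, we run the data-aware argument of Goswami et al. Short queries of length at most $R$, scanned across the block together with its neighbors, reveal a set of candidate positions. Because these queries are allowed to depend on the data, the set has expected size $1+\epsilon R$ over the window. Naming the true offset among the candidates costs $\log_2(1+\epsilon R)$ bits, which saves $\log_2\frac{R}{\epsilon R+1}\approx\log_2\frac{1}{\epsilon}$ bits.

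\emph{The intended accounting, and the main obstacle.} The main obstacle is combining the two savings so that they add up to the stated total $\log_2\frac{R}{\epsilon}+\log_2\frac{1}{\epsilon}$. As described, the within-block step only saves about $\log_2\frac{1}{\epsilon}$ bits, not $\log_2\frac{R}{\epsilon}$, so the decomposition above does not yet give the claimed bound. To get the stronger saving, the refinement must exploit that a short-query false positive at an empty position forces a genuine false positive on a length-$\le R$ range. I would therefore try to structure the data-aware step differently:
\begin{itemize}
\item Let the data-aware queries range over a window of size $R/\epsilon$ around each key, in the style of Goswami's lower bound.
\item Choose the hard instance so that the data-oblivious guarantee certifies an additional $\frac{1}{\epsilon}$ factor independently of that window.
\end{itemize}

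The $R^{-O(\epsilon)}$ loss comes from a correction step. The data-oblivious guarantee holds only in expectation over random data, whereas the encoding needs it to hold for the specific instance being compressed. This gap is handled with Markov's inequality and by paying $O(\epsilon\log R)$ bits per key for exceptional buckets. Finally, entropy minus side information gives the bound, and dividing by $N$ yields the stated BPK.
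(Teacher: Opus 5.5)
There is a genuine gap, and it lies in your hard instance rather than in the accounting you flag. With $N$ buckets of width $R/\epsilon$ and one uniform key per bucket, the input has entropy exactly $N\log_2\frac{R}{\epsilon}$. An encoding argument can never certify more memory than the entropy of the set it encodes. In fact, for small $\epsilon$ the claimed bound is false for this distribution: storing each key's offset within its bucket exactly takes $\lceil\log_2\frac{R}{\epsilon}\rceil$ bits per key and has zero FPR. So no restructuring of the decoder (windows of size $R/\epsilon$, Markov corrections for exceptional buckets) can recover the missing $\log_2\frac{1}{\epsilon}$; the per-key range must be at least of order $R/\epsilon^2$. A stratified one-key-per-bucket set is also not an i.i.d.\ sample, so the semi-robust guarantee does not directly apply to it.

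Two further corrections. First, your within-block step names the offset among point candidates, which saves only about $\log_2\frac{1}{\epsilon}$. The step of Goswami et al.\ that saves $(1-O(\epsilon))\log_2 R$ is the dyadic recursion over the level-$i$ covering intervals, which pays one extra bit per false positive, i.e.\ $\epsilon\lceil\log_2 R\rceil+O(1)$ bits per key. Second, that recursion, not a Markov step, is the source of the $R^{-O(\epsilon)}$ factor; the expectation over $S$ is handled directly by linearity of expectation and Jensen's inequality.

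The paper's proof takes a uniformly random $N$-set from a universe $u\gg N,R$ and stacks three scales.
\begin{enumerate}
\item A fixed, data-independent grid of blocks of length $R'\gg R$. The data-oblivious guarantee bounds the expected number of positives by $\epsilon u/R'$, so naming the truly nonempty blocks costs about $N\log_2\frac{e\epsilon u}{R'N}$ bits.
\item Only inside those true blocks, the $R'/(2R)$ sub-intervals of length $2R$. This choice depends on the data, so the data-aware guarantee applies: about $\epsilon NR'/(2R)$ expected false positives, and naming the true sub-intervals costs about $N\log_2\frac{e\epsilon R'}{2R}$ bits.
\item The dyadic recursion inside each true sub-interval, costing at most $\epsilon N\lceil\log_2 R\rceil+O(N)$ bits.
\end{enumerate}
Subtracting these from $\log_2{u\choose N}\geq N\log_2\frac{u}{N}$, both $u$ and $R'$ cancel, leaving $N\log_2\frac{R^{1-O(\epsilon)}}{\epsilon^2}-O(N)$.

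In your decomposition, the block-index level at scale $R$ just plays the role of the pruning level already inside Goswami et al.'s argument. So even after fixing the within-block step, you would recover only their $N\log_2\frac{R^{1-O(\epsilon)}}{\epsilon}$. The extra factor $\frac{1}{\epsilon}$ needs a second, data-oblivious pruning level at a coarser scale $R'\gg R$, queried on a data-independent grid, above the data-aware one.
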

}

\section{Evaluation}\label{sec:evaluation}
We evaluate \new{Diva and Diva++} against existing range filters in
static and dynamic settings in Sections~\ref{sec:evaluation_static}
and \ref{sec:evaluation_dynamic},
respectively. We also conduct end-to-end evaluations on top of
WiredTiger~\cite{WiredTiger}, a popular B-Tree-based key-value store, in
Section~\ref{sec:evaluation_dynamic}.
\new{We implement and open source an extensible benchmarking suite
to conduct our experiments for easy reproducibility.}

\begin{figure*}
    \centering
    \includegraphics[width=0.85\textwidth]{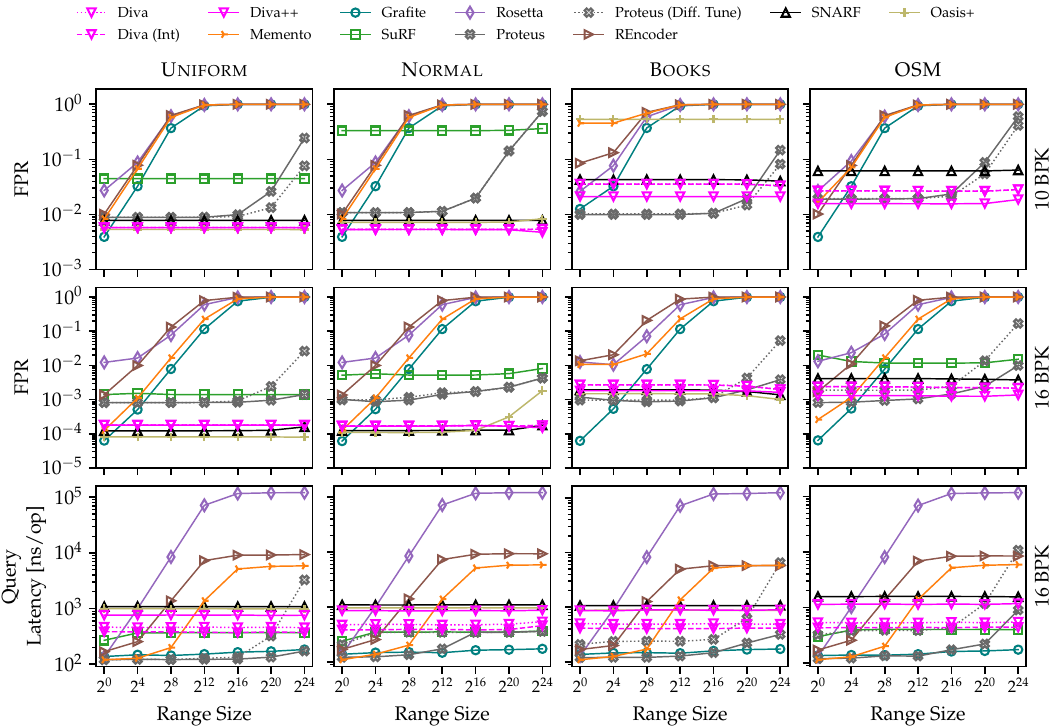}
    \caption{\new{Diva and Diva++ provide} the best balance between
    FPR and query latency for any workload across range query sizes.}
    \label{fig:experiment_1_fpr}
\end{figure*}

\textbf{Platform.}
We run experiments on a Fedora 39 machine with an Intel Xeon w7-2495X processor
(4.8 GHz) with 24 cores and 48 hyper-threads. Our machine has an 80 KB L1 cache
and a 2 MB L2 cache for each core, a 45 MB shared L3 cache, and 64 GB of main
memory. It also has two SK Hynix 512 GB PC611 M.2 2280 80mm SSDs, which are
used in the end-to-end experiments only.

\subsection{Static Evaluation}\label{sec:evaluation_static}
\textbf{Baselines.}
We compare \new{Diva and Diva++} to all existing range filters
except for bloomRF~\cite{bloomRF}, as its implementation is closed-source.
Because most other range filters only support integer keys, we specialize a
version of Diva for integers to enable a fair comparison while still
benchmarking the general-purpose version of Diva that supports variable-length
keys. We implement these versions in \texttt{C++}. All other
baselines we compare to are implemented in \texttt{C/C++} as well. We compile
all filters using \texttt{gcc-13}.

\textbf{Integer Datasets.}
Following prior
work~\cite{SuRF,Rosetta,REncoder,REncoder_Journal,bloomRF,Proteus,SNARF,Oasis,Grafite,Memento},
we employ the following synthetic and real-world~\cite{SOSD-vldb,SOSD-neurips}
datasets for our evaluation over integer data:
\begin{itemize}
    \item \textsc{Uniform}: 200M uniformly sampled 64-bit integers.
    \item \textsc{Normal}: 200M 64-bit integer samples from
        $\mathcal{N}(2^{63}, 2^{50})$.
    \item \textsc{Books}: 200M popularity scores for books on Amazon. This
        dataset is heavily skewed.
    \item \textsc{OSM}: 200M geocoordinates from the Open Street Map. It
        features several densely populated regions in the key space.
\end{itemize}

\textbf{Query Workloads.}
For integer datasets, we sample a start key~$x$ and a length $R$ to generate
the range $[x, x + R - 1]$. We vary~$R$ to showcase the effect of range lengths
on filter performance and FPR. For synthetic workloads, we choose the starting
key $x$ by sampling from the same distribution as the dataset. For real
workloads, we sample $x$ from the dataset and subsequently remove it from the
set. We generate point queries by using $R=1$.

Our workloads issue empty range queries, and we measure the FPR of each filter
by dividing the number of positive results by the size of the query batch. We
also conduct a separate experiment gauging performance for non-empty queries.
Our benchmarks focus on filter CPU times. All workloads issue a total of 10M
queries.

\hypertarget{experiment:fpr}{\textbf{Experiment 1: Integer FPR vs. Query Size Trade-off.}}
The first and second rows of Figure~\ref{fig:experiment_1_fpr}
depict the FPR of our baselines with a memory budget of 10 and 16 BPK,
respectively. \new{Here, the query size varies along the~$x$-axes.}
The bottom row of
Figure~\ref{fig:experiment_1_fpr}
compares the filters' query latencies for the experiments of the second row. We
omit query latency measurements for the first row since they are similar.
Figure~\ref{fig:experiment_1_fpr}
has both solid and dotted plots for Proteus. In the solid plots, Proteus is
tuned using a sample of the queries, while in the dotted plots, it is tuned
with range queries with uniformly distributed endpoints, simulating a workload
shift. Such workload shifts do not impact the other range filters. We tune
Rosetta and Memento filter to assume a maximum range query length of~$R=128$ to
keep their FPR for short range queries low and their memory footprint in line
with the other filters. If one keeps the memory footprint constant and tunes
Rosetta for longer ranges, its query speed deteriorates. Tuning Memento filter
for longer ranges yields faster queries at the expense of the FPR.

The FPRs of robust range filters, i.e., Rosetta, Grafite, and Memento filter,
approach one as the range query length increases. As mentioned in
Section~\ref{sec:problem_analysis}, this is due
to the filters issuing extra probes in their internal structures. This also
increases the number of cache misses in Rosetta and Memento filter, slowing
down range queries by orders of magnitude. Due to similar reasons, REncoder's
FPR and speed also rapidly deteriorate with longer queries.

While SuRF approximately matches Diva's query performance, its FPR is at least
an order of magnitude higher across experiments. \new{This is
because Diva is able to store more bits from each key at the same memory
footprint, since it learns the dataset's distribution and exploits its local
uniformity to implicitly encode many bits from an infix. In contrast, SuRF's
trie explicitly stores a prefix of each key long enough to differentiate it
from all other keys. As a result, SuRF cannot curb its memory footprint on some
datasets.} For instance, in
Figure~\ref{fig:experiment_1_fpr},
SuRF is missing from the \textsc{Books} column since it requires at least~21
BPK to create a trie that differentiates all of the dataset's keys, exceeding
the allotted memory budget.

SNARF and Oasis+ match Diva's FPR but exhibit~$\approx2\times$ slower queries
due to their slower binary searches. They also do not support variable-length
keys and dynamic operations. Moreover, in some scenarios, such as in the
\textsc{OSM} dataset, Diva achieves better FPRs than SNARF by a factor
of~$\approx2\times$. This is due to floating point errors in SNARF's model of
the data's distribution. We exclude Oasis+ from the \textsc{OSM} column since
its construction takes more than~12 hours.

Proteus' FPR is slightly lower than Diva's under the \textsc{Books} and
\textsc{OSM} datasets due to some redundancy of the keys carrying over into
Diva's \new{infixes. However, Proteus' FPR is worse than Diva by
multiple orders of magnitude under the \textsc{Uniform} and \textsc{Normal}
datasets, as the shallow SuRF instance within Proteus does not provide
significant filtering on such smooth datasets despite consuming a sizeable
amount of memory. Moreover, Proteus' dotted plots show that in the face of a
workload shift, its FPR shoots up for medium to long queries on all datasets,
resulting in orders of magnitude higher FPRs than Diva.} Although Proteus
achieves faster queries than Diva in the absence of workload shifts, it becomes
slower than Diva by an order of magnitude in the face of a workload shift as it
performs more Bloom filter probes.

\new{
While Diva++ attains the same FPR as Diva on the \textsc{Uniform} and
\textsc{Normal} datasets, it achieves a~$2\times$ lower FPR across all range
query lengths on the \textsc{Books} and \textsc{OSM} datasets. This is because
the skew in these datasets introduces a noticeable number of identical infixes
within Diva, driving up its FPR. Diva++'s bit-level full-key differentiation
counters this issue by creating \DTs. In exchange, Diva++ must parse these \DTs
and incur the associated branch-miss costs, leading to~$2\times$ slower queries
than Diva. Order-preserving entropy encoding does not significantly impact
Diva++'s FPR on these workloads, as the smooth distribution makes the keys'
bytes almost uniformly distributed and thus almost incompressible.
}

As shown in
Figure~\ref{fig:experiment_1_fpr},
any filter that outperforms \new{Diva or Diva++} in terms of FPR
has significantly slower queries and vice-versa. Hence, \new{Diva
and Diva++ provide} the best balance between FPR and query speed across the
board for variable-length range queries. Moreover, as shown by the range
size~$2^0$, \new{Diva and Diva++ are} also competitive for point
queries. As we show in subsequent experiments, \new{Diva and Diva++
are also more general-purpose than their competitors since they achieve all of
the other range filtering goals outlined in
Section~\ref{sec:introduction}.}

\hypertarget{experiment:fpr_memory}{\textbf{Experiment 2: Integer FPR vs. Memory Trade-off.}}
Figure~\ref{fig:experiment_2_fpr_memory}
depicts the FPRs of the baselines as we vary the memory budget while fixing the
query length to~$R=2^{10}$. Due to its semi-robust FPR guarantee, Diva achieves
a low FPR across the board, closely matching the best competitors. Although
Proteus, SNARF, and Oasis+ exhibit better filtering under certain datasets,
they have higher FPRs under others. Moreover, unlike Diva, none support
variable-length keys and dynamicity. Because Diva's memory footprint is
independent of~$R$ and its metadata overhead is small, it is operational under
budgets as low as 8 BPK (unlike SuRF or the robust filters).
\new{As in Experiment~\hyperlink{experiment:fpr}{1}, Diva++ attains
a similar FPR to Diva under the \text{Uniform} dataset. It also attains
a~$2\times$ lower FPR than Diva on the \textsc{OSM} dataset due to bit-level
full key differentiation better accommodating the data's skew.}

\hypertarget{experiment:fpr_string}{\textbf{Experiment 3: String FPR vs. Memory Trade-off.}}
We now experiment with datasets and workloads containing variable-length keys.
\new{
We use the following datasets:
\begin{itemize}
    \item \textsc{Normal}: 200M keys where the length of each key, in bytes, is
        chosen at random from the set $\{8, 16, 32, 64, 128, 256\}$, the first
        64 bits of each key is normally distributed~$\mathcal{N}(2^{63},
        2^{50})$, and the other bits are chosen uniformly at random. Generating
        keys in this way results in a smooth data distribution.
    \item \textsc{EnWiki}: 2.9M English words extracted from Wikipedia
        titles~\cite{enwiki}. This dataset has a moderately jagged
        distribution.
    \item \textsc{Emails}: 100K email addresses from the Enron email
        classification corpus~\cite{Enron}. We reverse the host name of each
        email address (e.g., ``com.domain@foo'') to reflect how emails are
        commonly indexed~\cite{SuRF}. The resultant hierarchical structure also
        makes the data distribution significantly jagged.
    \item \textsc{URLs}: 77M URLs extracted from the quotes listing of the
        MemeTracker dataset~\cite{MemeTracker}. The URLs within this dataset
        have deeply nested heirachies, leading to a heavily jagged data
        distribution.
\end{itemize}
We generate queries by sampling and removing pairs of adjacent keys from each
dataset.
}

\begin{figure}
    \centering
    \includegraphics[width=1.0\columnwidth]{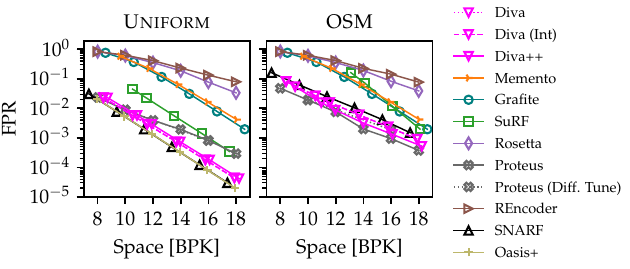}
    \caption{\new{Diva and Diva++ achieve} the best FPR scaling
    with~memory while \new{remaining operational} under stringent
    memory budgets.}
    \label{fig:experiment_2_fpr_memory}
\end{figure}

\begin{figure*}
    \centering
    \hspace*{-7mm}
    \includegraphics[width=0.9\textwidth]{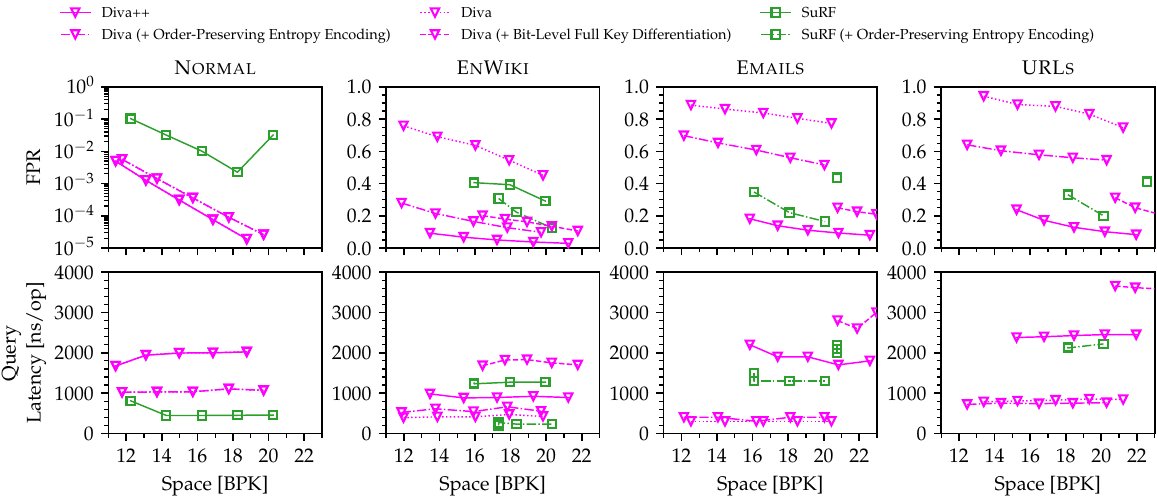}
    \caption{When operating on variable-length keys, Diva++ attains a better
    FPR than SuRF by several orders of magnitude in exchange for slower
    queries.}
    \label{fig:experiment_3_fpr_string}
\end{figure*}

\new{
Figure~\ref{fig:experiment_3_fpr_string}
evaluates Diva, Diva++, and SuRF, on the above string datasets. These filters
are the only ones that support variable-length keys. To isolate the benefits of
Diva++'s techniques, i.e., order-preserving entropy encoding
(Section~\ref{sec:order_preserving_entropy_encoding})
and bit-level full key differentiation
(Sections~\ref{sec:static_bit-level_full_key_diff}
and
\ref{sec:dynamic_bit-level_full_key_diff}),
we also benchmark versions of Diva that incorporate each technique separately.
We also illustrate how SuRF would benefit from Diva++'s techniques by including
a version of it that utilizes the order-preserving entropy encoding.
}

\new{
The top and bottom rows of Figure~\ref{fig:experiment_3_fpr_string}
plot the filters' FPR and query speed against the memory budget. On the
\textsc{Normal} dataset, which has a smooth data distribution, the original
version of Diva provides an FPR several orders of magnitude lower than both
versions of SuRF. However, as queries may terminate in
the upper levels of SuRF's trie, SuRF attains a modest~35\% query-speed advantage
over Diva. As shown in
Figure~\ref{fig:experiment_3_fpr_string},
Diva++ (and the version of Diva employing bit-level full key differentiation)
further improves Diva's FPR by as much as~$3\times$, in exchange for~$2\times$
slower queries. This is because bit-level full key differentiation
distinguishes the keys better using \DTs, which are more complex to parse than
simple infixes. Order-preserving entropy encoding contributes little on this
dataset, since the randomness of the keys makes their constituent bytes
incompressible. Trading slower queries for a lower FPR is worthwhile, since
range filters are typically used to reduce redundant accesses to slower storage
media such as disks{\textemdash}the core bottleneck of most systems.
}

\new{
On the real-world datasets \textsc{EnWiki}, \textsc{Emails}, and \textsc{URLs},
the original versions of Diva and SuRF suffer from a significantly worse FPR
of~40\% or higher. In particular, unlike Diva, SuRF is unable to differentiate
the keys using less than~20 BPK of memory when processing the \textsc{Emails}
and \textsc{URLs} datasets. These issues stem from the jagged distributions of
these datasets. Applying order-preserving entropy encoding in isolation
improves the FPR of both Diva and SuRF by as much as~$4\times$. Applying
bit-level full key differentiation to Diva in isolation improves the FPR by a
similar factor. However, it forces Diva to use at least~20 BPK of memory on the
\textsc{Emails} and \textsc{URLs} datasets, similarly to SuRF. By combining
both techniques, Diva++ attains significantly lower FPRs than all the other
baselines. In particular, Diva++ outperforms the enhanced version of SuRF by at
least~$2\times$ across the board while outperforming the original version of
SuRF by as much as an order of magnitude. These results demonstrate the
complementary nature of order-preserving entropy encoding and bit-level full
key differentiation. 
}

\hypertarget{experiment:true}{\textbf{Experiment 4: Non-Empty Query Performance.}}
Until now, we only issued empty queries in our evaluations. We now benchmark
non-empty query latency in Figure~\ref{fig:experiment_5_true}. The left
subfigure varies the memory budget while fixing the range query length
to~$R=2^{10}$, and the right subfigure varies the query size while fixing the
memory budget to 16~BPK. We only consider the \textsc{Uniform} integer dataset,
as other datasets yield similar results. Henceforth, Rosetta's hierarchy depth
and Memento filter's suffix size are tuned to the longest range query length to
enable their best performance. 

\begin{figure}
    \centering
    \includegraphics[width=0.9\columnwidth]{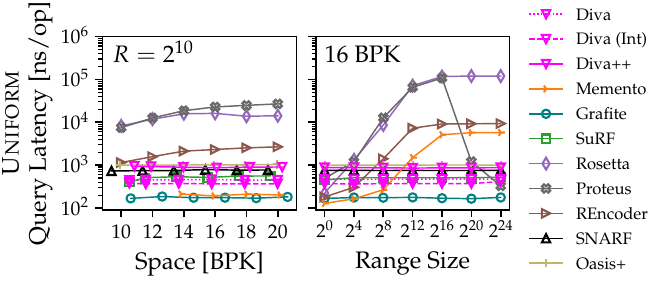}
    \vspace*{-3mm}
    \caption{Diva brings competitive performance to the table when processing
    non-empty range queries.}
    \label{fig:experiment_5_true}
    \vspace*{-3mm}
\end{figure}

Diva is faster than all range filters, except for Grafite and Memento filter,
across all memory budgets and range query lengths. \new{This speed
difference is due to other filters incurring} many cache misses whilst binary
searching or checking for existing keys in Bloom filters and tries. The closest
other range filter to Diva's speed, SuRF, is slower by as much as 35\%, as it
has to traverse all the levels in its trie down to a leaf node for a non-empty
query. Grafite and Memento filter achieve their speed by localizing similar
keys to the same memory region, forgoing support of variable-length keys and
queries, both of which are supported by Diva. Rosetta, REncoder, and Proteus
incur more cache misses as the memory budget or query length grows since they
check more bits in their Bloom filters. In contrast, \new{Diva has
a query complexity that is} independent of the memory budget and query size,
yielding consistently fast queries. \new{As in
Experiment~\hyperlink{experiment:fpr}{1}, Diva++ is slower than Diva on
non-empty queries by~$2\times$ due to the overhead of parsing \DTs.}

We have also experimented with mixed workloads, but exclude the plots, as all
filters behave similarly to Figure~\ref{fig:experiment_5_true}, except for
Proteus, which behaves similarly to the last row of
Figure~\ref{fig:experiment_1_fpr}.

\begin{figure}
    \centering
    \includegraphics[width=0.8\columnwidth]{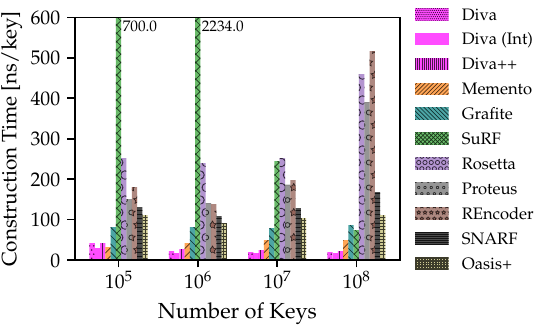}
    \caption{\new{Diva and Diva++ have} the fastest construction
    times among range filters with a large margin.}
    \label{fig:experiment_6_construction}
\end{figure}

\hypertarget{experiment:construction}{\textbf{Experiment 5: Construction Times.}}
Figure~\ref{fig:experiment_6_construction}
presents the construction times of the range filters on integer datasets of
varying sizes under a memory budget of 16 BPK. We use the \textsc{Uniform}
dataset, though the dataset choice has little influence on filter construction
times (except for Oasis+). \new{Diva and Diva++ are} significantly
faster to construct than all other range filters, with the closest range filter
being $2.7\times$ slower. This gap is due to \new{Diva and
Diva++'s} sequential memory access pattern in the input array and
\new{their} Infix Stores (which leverages the hardware prefetcher)
and due to \new{them} only hashing the sample keys in its $y$-Fast
trie, which comprise~$\approx0.1\%$ of all the keys.

\subsection{Dynamic Evaluation}\label{sec:evaluation_dynamic}
\textbf{Baselines.}
We compare \new{Diva and Diva++} to Rosetta~\cite{Rosetta},
REncoder~\cite{REncoder,REncoder_Journal}, SNARF~\cite{SNARF}, and Memento
filter~\cite{Memento}, as they are the only filters that provide insertion
APIs. Here, we implement an improved version of Memento filter that employs
Aleph filter's techniques~\cite{AlephFilter,PaghExpandability} to support
infinite expansions. \new{We then showcase how Diva and Diva++'s
performance scales with the number of threads present in the system.} In the
presence of deletions, we compare \new{Diva and Diva++} to SNARF
and Memento filter since they are the only other filters that have deletion
APIs. Finally, we integrate \new{Diva, Diva++, and Memento filter}
into WiredTiger~\cite{WiredTiger} to conduct an end-to-end evaluation on a
popular B-Tree-based key-value store. By default, the memory budget of each
filter is set to~16~BPK.

\textbf{Datasets and Workloads.}
We use the \textsc{Books} dataset with~200M keys. We construct the filters on a
randomly chosen~$\frac{1}{64}$ fraction of this dataset and insert the
remaining keys in a random order. We issue queries of length~$R=2^{10}$ (and
$R=2^{20}$ in Experiment~\hyperlink{experiment:expansion}{6}) with endpoints
from the same distribution and collect measurements as the dataset grows. Other
datasets result in similar measurements.

\hypertarget{experiment:expansion}{\textbf{Experiment 6: FPR, Space, and Insert Latency.}}
Figure~\ref{fig:experiment_7_expansion}
evaluates the baselines' FPR, space, and insert latency
as new keys are inserted. The $x$-axis shows the fraction of the dataset
ingested. The top row plots the filters' FPR, with the left and right
subfigures issuing shorter ($R=2^{10}$) and longer ($R=2^{20}$) range queries.
The left and right subfigures of the bottom row present the baselines' memory
footprint and insertion latency. We measure the memory footprint of the
baselines, as some deviate from the user-defined memory budget.

\begin{figure}
    \centering
    \includegraphics[width=0.85\columnwidth]{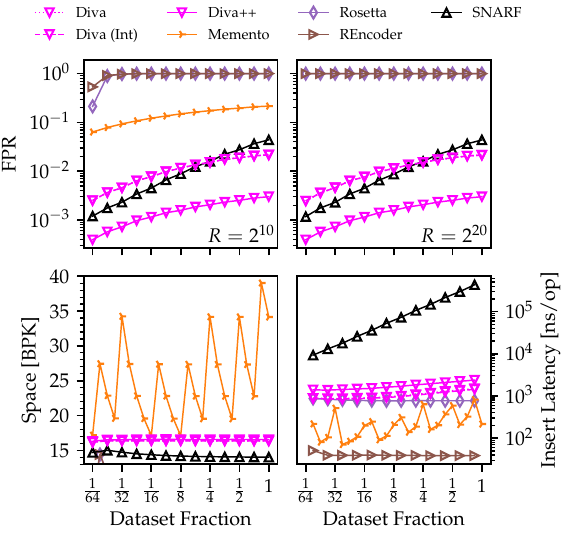}
    \caption{\new{Diva and Diva++ have} the best FPR scaling with
    insertions. \new{They are also the only efficient filters that
    have} stable memory footprints.}
    \label{fig:experiment_7_expansion}
\end{figure}

Since Rosetta and REncoder are based on Bloom filters, they are not expandable.
As such, new insertions increase the proportion of bits set to 1s in their
Bloom filters, causing their FPR to quickly converge to one, rendering them
useless as filters.

SNARF's initial FPR is better than Diva by~$\approx2.1\times$, as it does not
pay the extra cost of storing unary counters. However, its FPR deteriorates
with insertions, becoming worse than Diva by more than~$2\times$. This is
because SNARF does not update its distribution model. This is also what leads
to the slight drop in its memory footprint. In contrast, Diva maintains an
accurate distribution model by adding samples to its \ST. Moreover, SNARF's
insertion speed deteriorates with the dataset's size due to its increasing
block sizes, making it slower than Diva by as much as three orders of
magnitude. 

Since Memento filter does not support variable-length queries, its FPR is
higher than Diva by~$\approx25.3\times$ in the top-left subfigure of
Figure~\ref{fig:experiment_7_expansion}. Memento filter is missing from the
top-right subfigure since it requires more than 16~BPK of memory to support
these longer queries. Although Memento filter initially has faster inserts than
Diva, it becomes similar to Diva as the the dataset grows due to shifting more
slots in its hash table. Moreover, its memory consumption exceeds the
user-defined budget at times. This is because it expands to twice its original
size to accommodate new insertions, wasting as much as 50\% of its capacity
until it fills up again. In contrast, Diva expands its Infix Stores in small
increments, maintaining a constant memory footprint and avoiding wastage.

\new{
Diva++ attains a lower FPR than all other baselines, improving upon Diva by an
order of magnitude. Notably, this gap between Diva and Diva++ is significantly
larger than the~$2\times$ gap of Experiment~\hyperlink{experiment:fpr}{1}. This
is because, unlike Experiment~\hyperlink{experiment:fpr}{1}, where Diva builds
its \ST over the entire dataset from the onset, Diva begins this experiment
with an \ST built over only a~$\frac{1}{64}$ fraction of the keys. A smaller
initial dataset size makes the Infix Stores cover wider regions of the key
space, forcing Diva to truncate more of each key. Combined with the dataset's
skew, this truncation introduces many more identical infixes than in
Experiment~\hyperlink{experiment:fpr}{1}, inflating the FPR. Diva++ avoids this
issue altogether through bit-level full key differentiation, which retains
enough bits of each key within a \DT to differentiate it from the other keys no
matter how coarse the \ST is. In exchange, insertions into Diva++
are~$2.4\times$ slower than into Diva, as they have to parse and modify \DTs.
Nevertheless, Diva++'s insertion speed is comparable with Memento and Rosetta
in the long run, and is faster than SNARF's by \mbox{multiple orders of
magnitude}. }

\begin{figure}
    \centering
    \includegraphics[width=0.719\columnwidth]{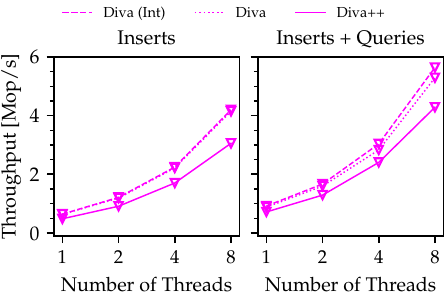}
    \vspace{1.75mm}
    \caption{\new{Diva and Diva++'s insertions and queries scale
    almost linearly with the number of threads.}}
    \label{fig:experiment_8_concurrency}
\end{figure}

In sum, \new{Diva and Diva++ maintain} the lowest FPR for range
queries of any length across expansions while supporting fast insertions.
\new{They are also the only dynamic filters that respect the memory
budget.}

\new{
\hypertarget{experiment:concurrency}{\textbf{Experiment 7: Concurrency.}}
We now measure how Diva and Diva++'s insertion and query speed scale with
multiple threads. Figure~\ref{fig:experiment_8_concurrency}
considers the same workload as the previous experiment, which inserts the keys
from the \textsc{Books} dataset in a random order and issues queries with
endpoints chosen and removed at random from the same dataset. As shown in the
left subfigure, both Diva and Diva++'s insertion speed scale almost linearly
with the number of threads in the system. This is because random insertions
following the same data distribution often fall into different Infix Stores,
resulting in low contention on the read-write locks. As the filter grows, the
number of Infix Stores increases, and the aforementioned contention drops even
further. For these reasons, the same near-linear scaling appears when the
workload mixes insertions and queries. The right subfigure of
Figure~\ref{fig:experiment_8_concurrency}
illustrates this for a workload composed~50\% insertions and~50\% queries.
}

\hypertarget{experiment:deletion}{\textbf{Experiment 8: Deletions.}}
Figure~\ref{fig:experiment_8_9_delete_wiredtiger}{\nobreakdash-}A measures the
deletion performance of the filters that provide a deletion API as the dataset
grows. It shows the latency of issuing 500K deletions each time the dataset
size doubles.
Figure~\ref{fig:experiment_8_9_delete_wiredtiger}{\nobreakdash-}A
demonstrates that Diva exhibits the fastest deletions, beating Memento filter
and SNARF by as much as $2.54\times$ and $300\times$. SNARF's deletion speed
deteriorates with insertions since its blocks grow, forcing it to rewrite
larger blocks. Memento filter also slows down since the number of slots it has
to shift increases. In contrast, Diva controls the amount of shifting and
rewriting it does by splitting Infix Stores. \new{Diva++'s
deletions are slower than Diva's by~$3\times$, as it incurs the additional
overhead of maintaining \DTs. Nevertheless, Diva++'s performance is stable, as
it employs Diva's techniques for limiting the amount of shifting within an
Infix Store.}

\hypertarget{experiment:wiredtiger}{\textbf{Experiment 9: End-to-End Query Latency.}}
We integrate \new{Diva, Diva++, and Memento filter} with WiredTiger
and measure the end-to-end empty range query latency. We exclude SNARF from
this experiment since it is unsuitable for dynamic workloads due to its
inefficient update APIs.
Figure~\ref{fig:experiment_8_9_delete_wiredtiger}{\nobreakdash-}B
measures end-to-end query latency each time the dataset size doubles. Each key
in this workload is associated with a 504-byte value, resulting in 512-byte
key-value pairs on disk. We configure WiredTiger with a buffer pool that is 1\%
of the data size on disk. We trade some of this memory for the range filter to
draw a fair comparison. The curve labeled ``Baseline'' in
Figure~\ref{fig:experiment_8_9_delete_wiredtiger}{\nobreakdash-}B
represents WiredTiger without a filter.

\new{Diva and Diva++ speed up WiredTiger's query processing by as
much as three orders of magnitude. Moreover, since they support variable-length
queries, they beat Memento filter's query latency~by~$\approx85\times$.}

\begin{figure}
    \centering
    \includegraphics[width=0.9\columnwidth]{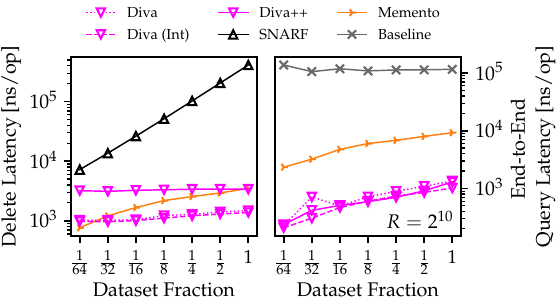}
    \begin{tikzpicture}
        \node[inner sep=1pt,align=center] (A) at (0,0) {\textbf{A)}};
        \node[inner sep=1pt,right=2.4 of A,align=center] (B) {\textbf{B)}};
        \node[inner sep=1pt,right=0.175 of B,align=center] (dummy) {};
    \end{tikzpicture}
    \vspace{-3mm}
    \caption{Diva achieves the best deletion and end-to-end query performance
    under dynamic workloads.}
    \label{fig:experiment_8_9_delete_wiredtiger}
\end{figure}

\section{Future Directions}
Diva and Diva++ open up multiple fronts for future research. We are integrating
Diva into OrcaDB, our lab’s fork of RocksDB~\cite{RocksDB}. Slow range queries
have long been a fundamental challenge for
LSM-trees~\cite{LSMTree,RocksDB,SuRF,Rosetta,REncoder,REncoder_Journal,bloomRF,Proteus,SNARF,Oasis,Grafite,Memento,Aeris},
and Diva offers a general-purpose path forward for real-world datasets and
workloads. Beyond range queries, Diva and Diva++ open new research directions
in storage engines, including adaptivity~\cite{ACF,AQF,Aeris} (proactively
mitigating recurring false positives) and efficient range
deletes~\cite{DontForgetRangeDelete} (marking entire key ranges as deleted
while preventing access to deleted keys in storage).

\begin{acks}
    We thank the reviewers for their insightful comments. This research was
    supported by the NSERC grant \#RGPIN-2023-03580.
\end{acks}

\section{Conclusion}
We introduced Diva, the first range filter to simultaneously attain the six
range filtering goals. We showed, both theoretically and empirically, that Diva
provides support for variable-length queries and keys with excellent FPR,
dynamicity, and performance.


\bibliographystyle{ACM-Reference-Format}
\bibliography{diva}

\ifarxiv

\pagebreak

\appendix
\section{Proof of Lemma~\ref{lemma:total_infixes}}\label{proof:total_infixes}
\begin{proof}
    Denote by $l$ and $r$ the first $m_{\text{infix}}$ bits of what remains of
    the predecessor and successor, respectively, after the removal of the
    $m^i_{\text{shared}}$ and $m^i_{\text{redundant}}$ bits. As infixes can be
    any value between $l$ and $r$, we analyze the difference $(r-l+1)$ to bound
    the number of potential values they can take. Consider an incremental
    process where we start with two-bit prefixes of $l$ and $r$, and increase
    their length until reaching $m_{\text{infix}}$. Initially, due to the
    removed $m^i_{\text{redundant}}$ bits, either $l=0$ and $r=2$, or $l=0$ and
    $r=3$, or $l=1$ and $r=3$. One can see that incrementing $m_{\text{infix}}$
    will turn the value $(r-l+1)$ into $2(r-l)$, $2(r-l)+1$, or $2(r-l)+2$.
    This implies that the difference $(r-l+1)$ increases to at least $2(r-l)$
    each iteration. Moreover, since $(r-l+1)$ was initially at least 3, one can
    conclude that after $k$ iterations, $(r-l+1)$ becomes at least $3\cdot2^k -
    2 \cdot (1+2+\dots+2^{k-1}) = 2^k+2$. Thus, with $m_{\text{infix}}-2=\lceil
    \log_2 \frac{T}{\epsilon} \rceil-2$ steps, the number of potential values
    becomes at least $\frac{T}{4\cdot\epsilon}+2$.

    Note that, depending on the predecessor and successor, $m_{\text{infix}} =
    \log_2 \frac{T}{\epsilon}$-bit infixes may result in as little as
    $\frac{T}{4 \cdot \epsilon}+2$ possible values, lower than what the lemma
    is trying to guarantee. Diva handles this case by incrementing
    $m_{\text{infix}}$ for only the considered Infix Store to enable
    representing at least $\frac{T}{2 \cdot \epsilon}$ different valid infixes.
\end{proof}

\section{Proof of Theorem~\ref{theorem:uniformity-new}}\label{proof:uniformity-new}
\begin{proof}
    \begin{figure}
        \centering
        \begin{tikzpicture}
            \def\ox{-2}
            \def\oy{-2}
            \def\xaxislength{5}
            \def\yaxislength{4}
            \def\ooverlen{0.1}
            \def\xk{0.1}
            \def\xkk{0.9}
            \def\pk{0.1}
            \def\pkk{0.9}

            \draw[thick,-stealth] (\ox-\ooverlen,\oy) -- (\ox+\xaxislength,\oy);
            \draw[thick,-stealth] (\ox,\oy-\ooverlen) -- (\ox,\oy+\yaxislength);
            \node[inner sep=1pt,circle,fill=black,align=center] (start) at (\ox+\xk*\xaxislength,\oy+\pk*\yaxislength) {};
            \node[inner sep=1pt,circle,fill=black,align=center] (end) at (\ox+\xkk*\xaxislength,\oy+\pkk*\yaxislength) {};
            \draw plot[hobby] coordinates { (start) 
            (\ox+\xk*\xaxislength+0.2*\xaxislength,\oy+\pk*\yaxislength+0.075*\yaxislength) 
            (\ox+\xkk*\xaxislength-0.2*\xaxislength,\oy+\pkk*\yaxislength-0.075*\yaxislength) 
            (end) };
            \node[inner sep=0pt,align=center] (x_label) at (\ox+\xaxislength+2*\ooverlen,\oy) {$x$};
            \node[inner sep=0pt,align=center] (y_label) at (\ox,\oy+\yaxislength+2*\ooverlen) {$F(x)$};

            \draw[dashed] (\ox+\xk*\xaxislength,\oy) -- (start);
            \node[inner sep=2pt,anchor=north] (x_k) at (\ox+\xk*\xaxislength,\oy) {$x_k$};
            \draw[dashed] (\ox+\xkk*\xaxislength,\oy) -- (end);
            \node[inner sep=2pt,anchor=north] (x_k1) at (\ox+\xkk*\xaxislength,\oy) {$x_{k+1}$};

            \draw[dashed] (\ox,\oy+\pk*\yaxislength) -- (start);
            \node[inner sep=2pt,anchor=east] (p_k) at (\ox,\oy+\pk*\yaxislength) {$p_k$};
            \draw[dashed] (\ox,\oy+\pkk*\yaxislength) -- (end);
            \node[inner sep=2pt,anchor=east] (p_k1) at (\ox,\oy+\pkk*\yaxislength) {$p_{k+1}$};

            \draw[decorate,decoration={brace,raise=1pt,mirror,amplitude=2pt}] (\ox+\xk*\xaxislength,\oy-0.1*\yaxislength) -- (\ox+\xkk*\xaxislength,\oy-0.1*\yaxislength)
            node[pos=0.5,below=3pt,inner sep=4pt,align=center] (l_k) {$l_k$ \\[6pt] $\Delta=\frac{2T}{\epsilon}$ Partitions of Length $\frac{\epsilon}{2T} \cdot l_k$};
            \draw[decorate,decoration={brace,raise=1pt,amplitude=2pt}] (\ox-0.15*\xaxislength,\oy+\pk*\yaxislength) -- (\ox-0.15*\xaxislength,\oy+\pkk*\yaxislength) 
            node[pos=0.5,left=3pt,inner sep=1pt] (delta_k) {$\delta\geq\delta_k$};

            \draw[dotted] (\ox+0.15*\xaxislength,\oy) -- (\ox+0.15*\xaxislength,\oy+0.10*\yaxislength);
            \draw[dotted] (\ox+0.20*\xaxislength,\oy) -- (\ox+0.20*\xaxislength,\oy+0.125*\yaxislength);
            \draw[dotted] (\ox+0.25*\xaxislength,\oy) -- (\ox+0.25*\xaxislength,\oy+0.13*\yaxislength);
            \draw[dotted] (\ox+0.30*\xaxislength,\oy) -- (\ox+0.30*\xaxislength,\oy+0.175*\yaxislength);
            \draw[dotted] (\ox+0.35*\xaxislength,\oy) -- (\ox+0.35*\xaxislength,\oy+0.225*\yaxislength);
            \draw[dotted] (\ox+0.40*\xaxislength,\oy) -- (\ox+0.40*\xaxislength,\oy+0.295*\yaxislength);
            \draw[dotted] (\ox+0.45*\xaxislength,\oy) -- (\ox+0.45*\xaxislength,\oy+0.39*\yaxislength);
            \draw[dotted] (\ox+0.50*\xaxislength,\oy) -- (\ox+0.50*\xaxislength,\oy+0.50*\yaxislength);
            \draw[dotted] (\ox+0.55*\xaxislength,\oy) -- (\ox+0.55*\xaxislength,\oy+0.61*\yaxislength);
            \draw[dotted] (\ox+0.60*\xaxislength,\oy) -- (\ox+0.60*\xaxislength,\oy+0.70*\yaxislength);
            \draw[dotted] (\ox+0.65*\xaxislength,\oy) -- (\ox+0.65*\xaxislength,\oy+0.78*\yaxislength);
            \draw[dotted] (\ox+0.70*\xaxislength,\oy) -- (\ox+0.70*\xaxislength,\oy+0.83*\yaxislength);
            \draw[dotted] (\ox+0.75*\xaxislength,\oy) -- (\ox+0.75*\xaxislength,\oy+0.85*\yaxislength);
            \draw[dotted] (\ox+0.80*\xaxislength,\oy) -- (\ox+0.80*\xaxislength,\oy+0.88*\yaxislength);
            \draw[dotted] (\ox+0.85*\xaxislength,\oy) -- (\ox+0.85*\xaxislength,\oy+0.89*\yaxislength);

            \draw[fill=gray!20] (\ox+0.50*\xaxislength,\oy) -- (\ox+0.55*\xaxislength,\oy) -- (\ox+0.55*\xaxislength,\oy+0.61*\yaxislength) -- (\ox+0.50*\xaxislength,\oy+0.50*\yaxislength) -- cycle;
            \node[inner sep=0pt,align=center,rotate=90] (infix_label) at (\ox+0.525*\xaxislength,\oy+0.28*\yaxislength) {\footnotesize Infix $n$};
            \node[inner sep=1pt,circle,fill=black,align=center] at (\ox+0.50*\xaxislength,\oy+0.50*\yaxislength) {};
            \node[inner sep=1pt,circle,fill=black,align=center] at (\ox+0.55*\xaxislength,\oy+0.61*\yaxislength) {};

            \draw[dotted] (\ox,\oy+0.50*\yaxislength) -- (\ox+0.50*\xaxislength,\oy+0.50*\yaxislength);
            \draw[dotted] (\ox,\oy+0.61*\yaxislength) -- (\ox+0.55*\xaxislength,\oy+0.61*\yaxislength);
            \draw[decorate,decoration={brace,raise=1pt,amplitude=2pt}] (\ox,\oy+0.50*\yaxislength) -- (\ox,\oy+0.61*\yaxislength) 
            node[pos=0.5,left=3pt,inner sep=1pt] (P_kn) {$P_{k,n}$};
        \end{tikzpicture}
        \caption{The infix values partition the range between two samples~$x_k$ and
        $x_{k+1}$ into $\frac{2T}{\epsilon}$ equi-width partitions. The shaded
        partition in the figure delineates the partition of the key space whose
        keys create an infix equal to~$n$. The change in CDF in this partition
        corresponds to~$P_{k,n}$.}
        \label{fig:infix_key_space_partitioning}
    \end{figure}

    Let~$x_1,\dots,x_{N/T}$ be Diva's samples, ordered by their CDF values. Now
    consider the~$k$-th Infix Store defined by the sample keys~$x_k$ and
    $x_{k+1}$ and let~$l_k$ denote the length of its corresponding range, i.e.,
    $l_k = x_{k+1}-x_k$. Diva partitions this range into~$\Delta =
    \frac{2T}{\epsilon}$ equal-width sub-ranges, each condensing a unique
    infix. 

    We first focus on the static case and let~$\mathcal{S}_k$ denote the set
    of~$T$ input keys that go into the~$k$-th infix store. (The dynamic case
    differs only in that there might be more than~$T$ input keys in the~$k$-th
    infix store, even though there are in expectation~$T$ of them.) Since these
    keys are independently drawn from the distribution defined by~$F$, each is
    included in~$\mathcal{S}_k$ with probability~$\delta_k=F(x_{k+1})-F(x_k)$.
    In order to map to a specific infix~$n$, the keys further have to be
    sampled in the interval~$(s_n,s_{n+1}]$, where $s_i = x_k + i \cdot
    l_k/\Delta$. We denote the probability of being sampled in this interval
    by~$P_{k,n}=F(s_{n+1})-F(s_n)$. Thus, using conditional probability, we can
    express the probability of a specific key in~$\mathcal{S}_k$ generating
    infix~$n$ as
    $$ \frac{F(s_{n+1}) - F(s_n)}{F(x_{k+1}) - F(x_k)} = \frac{P_{k,n}}{\delta_k}. $$

    In other words, given that~$x$ already is between~$x_k$ and $x_{k+1}$, we
    are interested in the probability that it lies in the specific
    interval~$(s_n, s_{n+1}]$. Fig.~\ref{fig:infix_key_space_partitioning}
    visualizes these quantities.

    Our goal is to show that~$\frac{P_{k,n}}{\delta_k} \leq \frac{2}{\Delta} =
    \frac{\epsilon}{T}$, which we then use in a union bound over all~$T$ keys
    in~$\mathcal{S}_k$ to show that the probability of seeing an infix equal
    to~$n$ in the~$k$-th Infix Store is at most~$\epsilon$. To this end, we
    would ideally like to claim that each of the sub-ranges~$(s_i,s_{i+1}]$
    have equal probability mass~$F(s_{i+1})-F(s_i)=\frac{\delta_k}{\Delta}$,
    i.e., a key is equally likely to land in any of them. If this were the
    case, we would immediately get that~$\frac{P_{k,n}}{\delta_k} \leq
    \frac{1}{\Delta} = \frac{\epsilon}{2T}$. While this is true for the uniform
    distribution (due to the fact that the sub-ranges have equal length), this
    is not true in general. Nevertheless, we will show that, if for
    well-behaved distributions, $\frac{P_{k,n}}{\delta_k}$ is not bigger
    than~$\frac{2}{\Delta}$. We do this by first bounding~$P_{k,n}$ in terms of
    the distance~$l_k = x_{k+1}-x_{k}$ and then bounding~$\delta_k$ in terms
    of~$l_k$. For this, we recall the following special case of a Taylor
    approximation:
    \begin{theorem}[Taylor's Theorem,  page 274 of\cite{CalculusAdams}]
        If the second derivative of a function~$g(\cdot)$ exists for all points
        in an interval around~$x$ containing~$x+y$, we have that, for
        some~$\theta \in [0,1]$
        $$ g(x+y) - g(x) = \frac{d}{dt} g(x) \cdot y + \frac{\frac{d^2}{dt^2} g(x+\theta y)}{2} \cdot y^2. $$
    \end{theorem}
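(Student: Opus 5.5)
The plan is the standard Rolle's-theorem argument for the Lagrange form of the remainder, applied twice. If $y=0$ both sides vanish and any $\theta$ works, so assume $y\neq 0$. By symmetry (replace $g(\cdot)$ by $g(2x-\cdot)$ and $y$ by $-y$, which leaves the first-order term and the second derivative term unchanged), it suffices to treat $y>0$. I will also write $g'$ and $g''$ for $\frac{d}{dt}g$ and $\frac{d^2}{dt^2}g$.

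\textbf{Step 1: auxiliary function.} Define the constant $K$ by
$$ K = \frac{g(x+y)-g(x)-g'(x)\,y}{y^2}, $$
and the function on $[0,y]$
$$ h(t) = g(x+t)-g(x)-g'(x)\,t - K t^2 . $$
By construction $h(0)=0$ and $h(y)=0$. The goal becomes showing $K=\frac{1}{2}g''(x+\theta y)$ for some $\theta\in[0,1]$, since substituting this into the definition of $K$ gives exactly the claimed identity.

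\textbf{Step 2: two applications of Rolle's theorem.} Since $g''$ exists on an interval containing $[x,x+y]$, the function $g'$ is differentiable there, hence continuous. So $g$ and $h$ are continuously differentiable on $[0,y]$, and $h'$ is differentiable. Rolle's theorem on $[0,y]$ gives $c_1\in(0,y)$ with $h'(c_1)=0$.

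Next, compute $h'(t)=g'(x+t)-g'(x)-2Kt$, so $h'(0)=0$. Applying Rolle's theorem to $h'$ on $[0,c_1]$ gives $c_2\in(0,c_1)$ with
$$ h''(c_2)=g''(x+c_2)-2K=0 . $$
Setting $\theta=c_2/y\in(0,1)\subseteq[0,1]$ then yields $K=\frac{1}{2}g''(x+\theta y)$, which is the claim.

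\textbf{Main obstacle.} The only delicate point is checking the hypotheses of Rolle's theorem at each stage. The first application needs $h$ continuous on $[0,y]$ and differentiable on $(0,y)$. The second needs $h'$ continuous on $[0,c_1]$ and differentiable inside. Both follow solely from the assumed existence of $g''$ on an interval around $x$ containing $x+y$, because differentiability of $g'$ implies its continuity. No continuity of $g''$ itself is required, which matches the weak hypothesis stated in the theorem.
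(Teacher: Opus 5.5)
Your proof is correct: Rolle's theorem applied to $h$ on $[0,y]$, and then to $h'$ on $[0,c_1]$ (using $h'(0)=h'(c_1)=0$), gives $K=\frac{1}{2}g''(x+\theta y)$ with $\theta\in(0,1)$, and the reflection $g(2x-\cdot)$ handles $y<0$. The paper gives no proof of its own and only cites Adams' \emph{Calculus}, whose standard proof applies Cauchy's generalized mean value theorem to $g(x+t)-g(x)-g'(x)t$ and $t^2$ and then the mean value theorem to $g'$; this is your argument with the generalized mean value theorem unpacked into Rolle's theorem.
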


    Intuitively, both~$P_{k,n}$ and ${\delta_k}$ can be expressed as special
    instances of the form~$F(x+y)-F(y)$ and we will use Taylor's approximation
    to bound them in terms of the distance~$y$, which ends up being
    proportional to~$l_k$. Formally, we get that:
    \begin{claim}\label{claim:eq1} If~$f' (x)\leq M_1$ for all~$x$, then:
        \begin{align*}
            \frac{P_{k,n}}{l_k/\Delta}& \leq  f\left(x_k\right) + E_1\;,  
        \end{align*}
        where~$E_1 = \frac{3}{2} M_1 l_k$.
    \end{claim}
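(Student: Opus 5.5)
The plan is to write $P_{k,n}$ as an increment of $F$ over an interval of length $y = l_k/\Delta$, apply the quoted special case of Taylor's Theorem, and control the first- and second-order terms separately. Concretely, take $g = F$, so $\frac{d}{dt} g = f$ and $\frac{d^2}{dt^2} g = f'$. Expand around the left endpoint $s_n = x_k + n \cdot l_k/\Delta$ with step $y = l_k/\Delta$. This gives, for some $\theta \in [0,1]$,
\begin{align*}
P_{k,n} = F(s_{n+1}) - F(s_n) = f(s_n)\cdot \frac{l_k}{\Delta} + \frac{f'(s_n + \theta l_k/\Delta)}{2}\cdot \frac{l_k^2}{\Delta^2}.
\end{align*}
Dividing by $l_k/\Delta$ and using $f' \le M_1$, the second term contributes at most $\frac{M_1}{2}\cdot \frac{l_k}{\Delta} \le \frac{1}{2}M_1 l_k$, since $\Delta \ge 1$.

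Next, $f(s_n)$ has to be compared with $f(x_k)$. Apply the mean value theorem to $f$ on $[x_k, s_n]$; equivalently, this is the first-order case of the same Taylor statement. We get $f(s_n) = f(x_k) + f'(\xi)(s_n - x_k)$ for some $\xi$. Since $0 \le s_n - x_k \le l_k$ and $f' \le M_1$, this yields $f(s_n) \le f(x_k) + M_1 l_k$. Only an upper bound on $f'$ is needed, because we only want an upper bound on $f(s_n)$. Adding the two contributions gives
\[
\frac{P_{k,n}}{l_k/\Delta} \le f(x_k) + M_1 l_k + \tfrac12 M_1 l_k = f(x_k) + \tfrac32 M_1 l_k = f(x_k) + E_1 .
\]

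The main point needing care is the hypotheses of the Taylor statement. It requires $f'$ to exist on the interval $[x_k, x_{k+1}]$, which is implicit in Definition~\ref{definition:well-behaved} because that definition refers to $f'$. The sign issue is harmless, since we use only the one-sided bound $f' \le M_1$ in both steps and never need $|f'|$. If $M_1$ happened to be negative, the bounds $M_1(s_n - x_k) \le M_1 l_k$ and $\frac{M_1}{2}\cdot\frac{l_k}{\Delta} \le \frac{1}{2}M_1 l_k$ would reverse direction. I would handle this by replacing $M_1$ with $\max(M_1, 0)$, which only weakens the statement; for well-behaved distributions $M_1 = \frac{1}{3}\delta^{-\alpha} > 0$ anyway.
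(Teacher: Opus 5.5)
Your proof is correct and matches the paper's argument step for step. It Taylor-expands $F$ around $s_n$ with step $l_k/\Delta$, bounds the second-order term by $\frac{1}{2} M_1 l_k$ using $\Delta \ge 1$, and applies the mean value theorem with $0 \le s_n - x_k \le l_k$ to get $f(s_n) \le f(x_k) + M_1 l_k$.

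Your attention to the sign of $M_1$ is a small improvement over the paper. The claim as stated does need $M_1 \ge 0$, and the paper's intermediate step $f'(x)(s_n - x_k) \le f'(x)(x_{k+1} - x_k)$ silently assumes $f'(x) \ge 0$. Your ordering avoids that: bound $f'(\xi)$ by $M_1$ first, then use $M_1 \ge 0$.
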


    Before we present the proof, we note that intuitively, the~$f(x_k)$ term
    controls how far~$P_{k,n}$ is from~$l_k/\Delta$ and the~$E_1$ term should
    be thought of as a low-order error term (as long as~$M_1$ is small with
    respect to~$f(x_k)$).  For the uniform distribution in particular, we would
    have that~$f(x_k)=1$ and $E_1=0$.
    \begin{proof}  

        As mentioned, we bound~$P_{k,n}=F(s_{n+1})-F(s_n)$ by using the above
        Taylor approximation of~$F$ around $s_n$ and get that, for some~$\theta
        \in [0,1]$:
        \begin{align*}
            F(s_{n+1}) - F(s_n)& = \frac{d}{dt} F\left(s_n\right) \cdot \frac{l_k}{\Delta} + \frac{\frac{d^2}{dt^2} F\left(s_n+\theta \cdot \frac{l_k}{\Delta}\right)}{2} \cdot \left(\frac{l_k}{\Delta}\right)^2 \\
            & = f\left(s_n\right) \cdot \frac{l_k}{\Delta} + \frac{1}{2} \cdot f'\left(s_n+\theta \cdot \frac{l_k}{\Delta}\right) \cdot \left(\frac{l_k}{\Delta}\right)^2.
        \end{align*}	

        We replace the dependency on~$f(s_n)$ with a dependency on~$f(x_k)$. By
        the Mean Value Theorem, we have that there exists an~$x\in (x_k, s_n)$
        such that~$f(s_n)-f(x_k)=f'(x)\cdot(s_n-x_k)$. Since~$s_n \leq
        x_{k+1}$, we get that: 
        $$ f(s_n) - f(x_k) \leq f'(x) \cdot (x_{k+1} - x_k) \leq M_1 \cdot l_k. $$
        \noindent We now upper bound~$f'\left(s_n+\theta \cdot
        \frac{l_k}{\Delta}\right)$ by $M_1$ and get that:
        $$ \frac{P_{k,n}}{l_k/\Delta} \leq  f(x_k) + M_1 l_k + \frac{1}{2}\cdot M_1 \cdot \frac{l_k}{\Delta} \leq f(x_k) +\frac{3}{2} M_1 l_k. $$
        \noindent where in the last inequality, we used the fact that~$\Delta
        \geq 1$.
    \end{proof}
    \noindent We now bound~$\frac{l_k}{\delta_k}$ in terms of~$\delta_k$ and
    $f(x_k)$ as such:
    \begin{claim}\label{claim:eq2}  
        If~$f'(x) \leq M_1$ and $f(x) \geq M_2$ for all~$x$, then
        \begin{align*}
            \frac{l_{k}}{\delta_k} & \leq \frac{1}{f(x_k)} + E_2,
        \end{align*}
        where $E_2 = \frac{\delta_k}{2} \cdot \frac{M_1}{M_2^3}$.
    \end{claim}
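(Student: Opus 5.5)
The plan is to mirror the proof of Claim~\ref{claim:eq1}, but now expand $F$ around $x_k$ over the whole Infix Store range instead of a single sub-range. Writing $\delta_k = F(x_{k+1}) - F(x_k) = F(x_k + l_k) - F(x_k)$, Taylor's Theorem (as stated above) gives, for some $\theta\in[0,1]$,
\[
\delta_k = f(x_k)\, l_k + \tfrac{1}{2} f'(x_k+\theta l_k)\, l_k^2 .
\]
The goal is an upper bound on $l_k/\delta_k$, so we need a lower bound on $\delta_k$. The quadratic term can be negative, so we bound it below by $-\tfrac12 M_1 l_k^2$, reading the hypothesis $f'\le M_1$ as a bound on $|f'|$, as in Definition~\ref{definition:well-behaved}. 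This yields $\delta_k \ge f(x_k) l_k - \tfrac12 M_1 l_k^2$.

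Next we convert this into the claimed form. Dividing by $\delta_k f(x_k)$ gives
\[
\frac{l_k}{\delta_k} \le \frac{1}{f(x_k)} + \frac{M_1 l_k^2}{2\, f(x_k)\,\delta_k}.
\]
It remains to control the error term by expressing $l_k$ through $\delta_k$. Since $f \ge M_2$ everywhere, integrating the density over $[x_k,x_{k+1}]$ gives $\delta_k \ge M_2 l_k$, hence $l_k \le \delta_k/M_2$. Substituting this into $l_k^2$, and using $1/f(x_k) \le 1/M_2$, the error term is at most
\[
\frac{M_1 (\delta_k/M_2)^2}{2 M_2 \delta_k} = \frac{\delta_k}{2}\cdot\frac{M_1}{M_2^3} = E_2,
\]
which is exactly the claimed bound.

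The only delicate step is the sign of the second-order term: using the one-sided bound $f'\le M_1$ alone would not lower-bound $\delta_k$. If only the one-sided hypothesis is available, I would try avoiding Taylor altogether. Instead of using $f \ge f(x_k) - M_1 l_k$ (which again needs $|f'|$), I would write $l_k/\delta_k \le 1/\min_{[x_k,x_{k+1}]} f$ and relate that minimum to $f(x_k)$ via the Mean Value Theorem. Failing that, I would explicitly adopt the two-sided bound $|f'|\le M_1$, which is what the later application to well-behaved distributions needs anyway.
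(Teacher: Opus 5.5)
Your argument is correct, but it takes a different route from the paper. The paper Taylor-expands the inverse CDF around $p_k=F(x_k)$, writing $l_k=F^{-1}(p_{k+1})-F^{-1}(p_k)=\delta_k/f(x_k)+\frac{1}{2}(F^{-1})''(p_k+\theta\delta_k)\,\delta_k^2$. It then bounds the second derivative of the inverse by $M_1/M_2^3$. The factor $\delta_k^2$, and hence $E_2$, comes out directly, and the cube comes from the inverse-function derivative formula. You instead expand $F$ itself around $x_k$, which leaves the error in terms of $l_k^2$, and then convert using $l_k\le\delta_k/M_2$ and $1/f(x_k)\le 1/M_2$. This lands on exactly the same $E_2$. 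Your route is more elementary: it never differentiates $F^{-1}$ twice, it uses the hypotheses only on $[x_k,x_{k+1}]$, and your intermediate bound $\frac{M_1 l_k^2}{2f(x_k)\delta_k}$ is slightly sharper before you coarsen it.

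Your worry about the sign is justified, and it catches a slip in the paper. The correct identity is $(F^{-1})''(p)=-f'(y)/f(y)^3$ with $y=F^{-1}(p)$. The paper's display drops this minus sign, which is what makes the one-sided hypothesis $f'\le M_1$ look sufficient. With the sign restored, the paper's route also needs $f'\ge -M_1$, exactly as yours does.

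Two corrections to your last paragraph. First, the definition of well-behaved distributions is also one-sided as written, so you are strengthening the hypothesis rather than reading it off that definition. Second, your fallback cannot succeed, because under only $f'\le M_1$ the claim is false. Take a density with $f(x_k)\ge 2M_2$ that drops steeply (very negative $f'$) to $M_2$ just after $x_k$ and stays there until $x_{k+1}$. Then $l_k/\delta_k$ is arbitrarily close to $1/M_2$. Meanwhile the right-hand side is about $1/f(x_k)+l_kM_1/(2M_2^2)\le 1/(2M_2)+l_kM_1/(2M_2^2)$, which is below $1/M_2$ whenever $l_k<M_2/M_1$. So the lower bound $f'\ge -M_1$ is genuinely required. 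Since the companion bound on $P_{k,n}$ needs $f'\le M_1$, the right fix is to assume $|f'|\le M_1$, both in this claim and in the definition.
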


    Similarly to the case of Claim~\ref{claim:eq1}, the~$\frac{1}{f(x_k)}$ is
    the dominant term in the right-hand side, while the~$E_2$  can be
    considered to be a low-order error-term as long as~$M_1$ and $M_2$ are
    bounded with respect to~$f(x_k)$.
    \begin{proof}
        We first define~$p_{k+1} = F(x_{k+1})$, $p_k = F(x_k)$ and note
        that~$\delta_k = p_{k+1}-p_k$ by definition. We write the Taylor
        approximation of~$F^{-1}$ around~$p_k$ to show that there
        exists~$\theta \in [0,1]$ such that
        \begin{align*}
            l_k & = F^{-1}(p_{k+1}) - F^{-1}(p_k) \\ 
            & = \frac{d}{dt} F^{-1}(p_k) \cdot \delta_k + \frac{1}{2} \cdot \frac{d^2}{dt^2} F^{-1}\left(p_k + \theta\delta_k\right) \cdot (\delta_k)^2.
        \end{align*}
        \noindent We simplify this expression by noticing that:
        $$ \frac{d}{dt} F^{-1}(p_k) = \frac{1}{\frac{d}{dt} F\left(F^{-1}(p_k)\right)} = \frac{1}{\frac{d}{dt} F(x_k)} = \frac{1}{f(x_k)}. $$ 
        \noindent Similarly, by denoting~$F^{-1}(p_k+\theta\delta_k) = y_k$, we
        get that:
        \begin{align*} 
            \frac{d^2}{dt^2} F^{-1}(p_k+\theta\delta_k) & = \frac{\frac{d^2}{dt^2} F\left(F^{-1}(p_k+\theta\delta_k)\right)}{\left(\frac{d}{dt} F\left(F^{-1}(p_k+\theta\delta_k)\right)\right)^3} \\
            & = \frac{\frac{d^2}{dt^2} F\left(y_k\right)}{\left(\frac{d}{dt} F\left(y_k\right)\right)^3} = \frac{f'(y_k)}{f^3(y_k)} \leq \frac{M_1}{M_2^3}\;.
        \end{align*}
        \noindent Putting everything together, we get the claim.
    \end{proof}
    \noindent Multiplying the bounds in Claim~\ref{claim:eq1} and
    Claim~\ref{claim:eq2}, we get that:
    \begin{align*}
        \frac{P_{k,n}}{\delta_k} & \leq \frac{1}{\Delta} \cdot \left( f(x_k) + E_1 \right)\cdot \left(\frac{1}{f(x_k)} + E_2\right) \\
        & = \frac{1}{\Delta} \cdot \left( 1 + f(x_k) \cdot E_2 + E_1 \cdot  \left(\frac{1}{f(x_k)} + E_2\right)\right)
    \end{align*}

    Recall that our goal is to show that~$\frac{P_{k,n}}{\delta_k} \leq
    \frac{2}{\Delta}$, which requires that the extra terms depending
    on~$f(x_k)$, $E_1$, and $E_2$ be smaller than~$1$. This motivates the
    conditions for a well-behaved data distribution: namely, that~$f(x_k)$
    cannot be too small or large, and that~$f'(x_k)$ cannot be too big.
    Specifically, we have the following:
    \begin{claim}\label{claim:conditions} 
        Let~$\delta = T \cdot \frac{\log N}{N}$ and suppose there exists a
        constant~$\alpha \in [0,1/4]$ such that:
        \begin{itemize}
            \item $\delta^{\alpha} = M_2 \leq f(x) \leq M_3=\delta^{-1+4\alpha}$,
            \item $f'(x) \leq M_1=1/3\cdot \delta^{-\alpha}$.
        \end{itemize}
        If $\delta_k \leq \delta$, then:
        \begin{align*}
            f(x_k) \cdot E_2 + E_1 \cdot \left(\frac{1}{f(x_k)} + E_2\right) &\leq 1.
        \end{align*}
    \end{claim}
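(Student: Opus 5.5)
The plan is to bound each of the three error terms separately, using only the three conditions in the claim together with $\delta_k \le \delta$ and $\delta \le 1$. The last inequality holds whenever $T \log N \le N$, which we assume throughout; it means that any nonnegative power of $\delta$ is at most $1$. The only quantity not directly controlled by the hypotheses is $l_k$, which appears inside $E_1$, so the first step is to bound it.

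\emph{Bounding $l_k$.} By the Mean Value Theorem applied to $F$ on $[x_k, x_{k+1}]$, there is a point $\xi$ in this interval with
\[
  \delta_k = F(x_{k+1}) - F(x_k) = f(\xi)\, l_k \ge M_2\, l_k .
\]
Hence $l_k \le \delta_k / M_2 \le \delta^{1-\alpha}$. Substituting this into $E_1 = \frac{3}{2} M_1 l_k$ gives
\[
  E_1 \le \tfrac{3}{2} \cdot \tfrac{1}{3}\,\delta^{-\alpha} \cdot \delta^{1-\alpha} = \tfrac{1}{2}\,\delta^{1-2\alpha}.
\]
The second error term is bounded directly from its definition:
\[
  E_2 = \frac{\delta_k}{2} \cdot \frac{M_1}{M_2^3} \le \frac{\delta}{2} \cdot \frac{\delta^{-\alpha}/3}{\delta^{3\alpha}} = \tfrac{1}{6}\,\delta^{1-4\alpha}.
\]

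\emph{The three terms.} With these bounds, each summand follows by adding exponents.
\begin{itemize}
  \item For the first term, $f(x_k) \le M_3 = \delta^{-1+4\alpha}$, so $f(x_k) E_2 \le \tfrac{1}{6}$. The upper bound on $f$ in the definition of a well-behaved distribution is tuned exactly to cancel the exponent in $E_2$.
  \item For the second term, $1/f(x_k) \le 1/M_2 = \delta^{-\alpha}$, so $E_1 / f(x_k) \le \tfrac{1}{2}\,\delta^{1-3\alpha} \le \tfrac{1}{2}$. This uses $1 - 3\alpha \ge \tfrac{1}{4} > 0$ and $\delta \le 1$.
  \item For the third term, $E_1 E_2 \le \tfrac{1}{12}\,\delta^{2-6\alpha} \le \tfrac{1}{12}$, again since the exponent is nonnegative for $\alpha \le 1/4$.
\end{itemize}
Summing gives a total of at most $\tfrac{1}{6} + \tfrac{1}{2} + \tfrac{1}{12} = \tfrac{3}{4} \le 1$, which proves the claim.

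\emph{Main obstacle.} The step I expect to need the most care is the bound on $l_k$, since $l_k$ is not controlled by any hypothesis of the claim directly. It has to be routed through $\delta_k$ and the lower bound $M_2$ on $f$. One must then verify that the resulting exponent $1-3\alpha$ remains nonnegative over the whole range $\alpha \in [0, 1/4]$; it does. The same check is needed for the exponents $1-4\alpha$ in the first term and $2-6\alpha$ in the third, and both are nonnegative on this range. The remaining work is bookkeeping of exponents of $\delta$.
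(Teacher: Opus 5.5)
Your proof is correct and uses the same term-by-term exponent bookkeeping as the paper. The only difference is how you control $l_k$: you apply the Mean Value Theorem to $F$ with $f \ge M_2$ and get $l_k \le \delta^{1-\alpha}$ directly, whereas the paper first shows $f(x_k)E_2 \le \frac{1}{6}$, deduces $\frac{1}{f(x_k)} + E_2 \le \frac{7}{6}\delta^{-\alpha}$, and feeds this through Claim~\ref{claim:eq2} to get $l_k \le \frac{7}{6}\delta^{1-\alpha}$; this is why the paper's total is $\frac{61}{72}$ while yours is the slightly tighter $\frac{3}{4}$.
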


    \begin{proof}
        We bound the sum one term at a time. First, we show that~$f(x_k)\cdot
        E_2 \leq 1/6$ using Claim~\ref{claim:eq2} as such:
        \begin{align*}
            f(x_k) \cdot E_2 & \leq  M_3\cdot \frac{\delta_k}{2} \cdot \frac{M_1}{(M_2)^3} \leq \delta^{-1+4\alpha} \cdot \frac{\delta}{2} \cdot \frac{\delta^{-\alpha}}{3\delta^{3\alpha}} =\frac{1}{6}.
        \end{align*}	
        \noindent We now bound the term $\frac{1}{f(x_k)} + E_2$ as such:
        \begin{align}
            \frac{1}{f(x_k)} + E_2 & = \frac{1}{f(x_k)} \cdot \left( 1 + f(x_k) \cdot E_2\right) \nonumber \\
            & \leq \delta^{-\alpha} \cdot \left(1 + \frac{1}{6} \right) = \frac{7}{6} \cdot \delta^{-\alpha} \label{eq:1}.
        \end{align}

        It then follows that~$l_k = \delta_k \cdot \left(\frac{1}{f(x_k)} +
        E_2 \right) \leq \frac{7}{6} \cdot \delta^{1-\alpha}$. We are now ready
        to bound~$E_1$ using Claim~\ref{claim:eq1}:
        \begin{align}\label{eq:2}
            E_1 & = \frac{3}{2} M_1 l_k \leq \frac{3}{2} \cdot \frac{1}{3} \cdot \delta^{-\alpha} \cdot \frac{7}{6} \cdot \delta^{1-\alpha} = \frac{7}{12} \cdot \delta^{1-2\alpha}, 
        \end{align}

        Combining Equations (\ref{eq:1}) and (\ref{eq:2}), and noticing
        that~$\delta\leq 1$ and $1-3\alpha>0$, we get
        \begin{align*}
            E_1 \cdot \left(\frac{1}{f(x_k)} + E_2\right) & \leq \frac{7}{12} \cdot \delta^{1-2\alpha} \cdot \frac{7}{6} \cdot \delta^{-\alpha}= \frac{49}{72} \cdot \delta^{1-3\alpha} \leq \frac{49}{72}.
        \end{align*}
        The error terms add up to at most~$1/6+49/72$ which is at most $1$,
        implying the main claim.
    \end{proof}

    Finally, we note that the extra condition that~$\delta_k \leq \delta$ holds
    with high probability because of the following:

    \begin{lemma}
        With high probability over the samples, it holds for all~$k$ that
        $\delta_k \leq \delta = T \cdot \frac{\log N}{N}$.
    \end{lemma}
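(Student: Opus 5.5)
The cleanest route is to recast the statement in probability space, where the claim reduces to a gap bound for uniform order statistics. Since $F$ is continuous and increasing (as $f \geq \delta^\alpha > 0$), the values $U_i = F(X_i)$ of the $N$ i.i.d.\ input keys are i.i.d.\ uniform on $[0,1]$. Diva's samples are every $T$-th key in sorted order, so $p_k = F(x_k)$ is the $(kT)$-th order statistic of $N$ uniforms. Then $\delta_k = p_{k+1}-p_k$ is the gap between the $kT$-th and $(k+1)T$-th order statistics, for $k=1,\dots,N/T$; the two boundary regions before the first sample and after the last one are handled the same way. It therefore suffices to show that every interval of probability mass larger than $\delta = T\log N/N$ that lies between two consecutive samples contains at least $T$ input keys. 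That would contradict the interval sitting between consecutive samples, which hold only $T-1$ keys.

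The argument would go by discretization and a union bound. First, partition $[0,1]$ into $m = 2N/(T\log N)$ fixed buckets $I_1,\dots,I_m$, each of mass $\delta/2$. If $\delta_k > \delta$, the interval $(p_k,p_{k+1})$ fully contains at least one bucket $I_j$, and that bucket then holds at most $T-1$ keys. Second, the number of keys in a fixed bucket is $\mathrm{Bin}(N,\delta/2)$ with mean $\mu = T\log N/2$. A Chernoff lower-tail bound gives $\Pr[\mathrm{Bin} < T] \leq \exp(-\mu(1-2/\log N)^2/2)$, roughly $N^{-T/4}$, which is certainly below $N^{-c}$ for any constant $c$ once $T \geq 8$, say. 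Third, a union bound over the $m \leq N$ buckets shows that with probability at least $1 - N^{-c+1}$ every bucket has at least $T$ keys. Hence $\delta_k \leq \delta$ for all $k$ simultaneously.

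In the dynamic case the samples are a random $1/T$ fraction of insertions rather than every $T$-th key. The argument adapts by bounding, for each bucket, the probability that none of the keys falling in it is chosen as a sample. Each key in the bucket is picked with probability $1/T$, so given at least $\mu/2$ keys in the bucket this probability is at most $(1-1/T)^{\mu/2} \leq N^{-\log N\cdot 1/4}$ up to constants. A union bound again finishes, though one may need a constant factor in $\delta$. Since the static case is the one used in Claim~\ref{claim:conditions}, I would state the result for it and remark on the dynamic adaptation.

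The main obstacle is making the constants line up. The buckets must be small enough that any gap exceeding $\delta$ swallows a whole bucket, which forces mass $\delta/2$ rather than $\delta$. The binomial mean must still be comfortably above $T$, which needs $\log N > 2$ plus Chernoff slack. I would first try mass $\delta/2$ with natural log and check the Chernoff exponent. If it is too tight, I would note that ``with high probability'' tolerates replacing $\log N$ by $C\log N$ for a constant $C$, which only changes constants in Definition~\ref{definition:well-behaved}.
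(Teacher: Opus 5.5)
Your static argument is correct, and it takes a different route from the paper's. The paper writes $\delta_k$ as a sum of $T$ consecutive spacings of the $N$ uniform values $F(X_i)$ and bounds it by $T$ times the maximum spacing. It then cites the classical fact that this maximum is about $\log N/N$, and handles the dynamic variant with the same fact applied to roughly $N/T$ i.i.d.\ sample points. You instead bucket $[0,1]$ into cells of mass $\delta/2$ and apply a Chernoff lower tail, so you use concentration of the sum of $T$ spacings rather than bounding each spacing by the maximum.

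This is self-contained and much stronger. The failure probability is at most about $\frac{2N}{T\ln N}(e^4/N)^{T/4}$, i.e.\ $N^{-\Omega(T)}$, and it does not depend on the constant in front of $\log N$. By contrast, $T$ times the maximum spacing sits right at the threshold: the maximum of $N$ uniform spacings is $(\ln N+O_P(1))/N$, so the paper's reduction needs $\log=\log_2$ or a constant larger than $1$. The paper's route buys brevity and a single argument for both variants. (Minor point: your bound gives $N^{-c}$ only for $c$ up to roughly $T/4$, not for every constant $c$.)

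The dynamic case has a genuine gap, and it cannot be demoted to a remark. Theorem~\ref{theorem:uniformity-new} applies Claim~\ref{claim:conditions}, and hence this lemma, to the dynamic variant as well. Your estimate is off by a factor of $\log N$ in the exponent: $(1-1/T)^{\mu/2}\le e^{-\mu/(2T)}=e^{-(\log N)/4}=N^{-1/4}$, not $N^{-(\log N)/4}$. Even without conditioning, a cell of mass $\delta/2$ is sample-free with probability about $e^{-N\delta/(2T)}=N^{-1/2}$. The union bound over $\Theta(N/(T\log N))$ cells therefore diverges.

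The cause is structural. In the dynamic case there is no factor-$T$ slack, because $\delta_k$ is a single spacing of about $N/T$ i.i.d.\ points. The largest such spacing is about $T\ln(N/T)/N$, essentially $\delta$ itself, so the factor $2$ you lose by using cells of mass $\delta/2$ is fatal.

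To repair it, use cells of mass $\delta/r$ and note that a gap exceeding $\delta$ contains $r-1$ consecutive sample-free cells. With $\delta=cT\ln N/N$, each such block is sample-free with probability at most $N^{-c(r-1)/r}$. There are at most $rN/(cT\ln N)$ blocks, so any $c>1$ works once $c(r-1)\ge r$; this includes reading $\log$ as $\log_2$. With natural log and $c=1$ exactly, no argument can succeed: the probability that the largest gap exceeds $T\ln N/N$ tends to $1-e^{-1/T}$, which does not vanish as $N$ grows. So the constant-factor enlargement you anticipate is necessary in that reading, and you should state it outright rather than hedge.
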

    \begin{proof}
        Let~$x_k$ denote the~$k$-th sample key and let~$p_k=F(x_k)$. We analyze
        the static and dynamic variants of Diva separately:
        \begin{itemize}
            \item \textbf{Static:} 
                Here, one can see that~$p_k$ is the CDF values of the $kT$-th
                key in the dataset. Therefore, since the CDF values of the keys
                in the dataset are uniform samples from~$\mathcal{U}(0,1)$, the
                set of all~$p_k$ is equivalent to a set of multiple-of-$T$
                order statistics from~$N$ samples from~$\mathcal{U}(0,1)$. It
                is known that the gap between any two consecutive such samples
                from~$\mathcal{U}(0,1)$ is upper bounded by $\frac{\log N}{N}$
                with high
                probability~\cite{OrderStatisticsSpacings,OrderStatisticsIrwin,OrderStatisticsFisher}.
                Therefore, since there are $T-1$ keys in the dataset
                between~$x_k$ and $x_{k+1}$, the gap between~$p_k$
                and~$p_{k+1}$ for any~$k$ is at most~$T\cdot\frac{\log
                N}{N}=\delta$ with high probability.

            \item \textbf{Dynamic:}
                Here, the set of all~$p_k$ form a set of~$N/T$ i.i.d. samples
                from~$\mathcal{U}(0,1)$, as the~$x_k$ are i.i.d. samples from
                the dataset's distribution. Therefore, by a similar argument,
                the gap between~$p_k$ and~$p_{k+1}$ for any~$k$ is at
                most~$\frac{\log N/T}{N/T} \leq T\cdot\frac{\log N}{N}=\delta$
                with high probability. 
        \end{itemize}
    \end{proof}

    We thus get that, with high probability, the conditions in
    Claim~\ref{claim:conditions} holds and hence,~$\frac{P_{k,n}}{\delta_k}
    \leq 2/\Delta$. In the static case, there are~$T$ keys in
    the~$\mathcal{S}_k$ (i.e., the $k^{\text{th}}$ Infix Store), so we can do a
    union bound over all such~$T$ keys and get that the probability that any
    one of them creates infix~$n$ is at most 
    $$ \frac{2}{\Delta} \cdot T = \frac{\epsilon}{T} \cdot T = \epsilon. $$

    In general, the previous bound on the size of~$\mathcal{S}_k$ hold only for
    the static case. In the dynamic case, we are not guaranteed that at
    most~$T$ keys lie between~$x_k$ and $x_{k+1}$. What is, true, however, is
    that in \emph{expectation}, the number of keys in~$\mathcal{S}_k$ is $T$.
    We then get that the expected probability of seeing infix~$n$ in the~$k$-th
    store is at most~$\epsilon$. 
\end{proof}

\section{Proof of Theorem~\ref{theorem:data_oblivious_memory_lower_bound}}\label{proof:data_oblivious_memory_lower_bound}
\begin{proof}
    Our proof rests upon an encoding argument. We will show that if a range
    filter~$F$ with a data-oblivious FPR guarantee of~$\epsilon$ consumes
    significantly less than~$\log_2 \frac{1}{\epsilon}$ bits per key of memory,
    it can be used to encode a set of keys that are chosen uniformly at random
    beyond the entropy limit, which must be impossible. Our argument applies to
    both semi-robust and robust filters.

    We consider sets of~$N$ keys drawn uniformly at random from a universe of
    size~$u \gg N$. We denote such a set by~$S$. The data-oblivious FPR
    guarantee of~$F$, whether it is robust or semi-robust, must ensure that for
    a random~$S$, the probability of~$F$ returning a false positive for a given
    query is at most~$\epsilon$. That is, for any fixed query~$q$, we have 
    $$ \Pr_S[F \text{ returns a false positive for } q] \leq \epsilon. $$
    Here, the probability is over the random choice of the set~$S$. Through the
    linearity of expectation, this implies that the expected number of keys
    from the universe that~$F$ returns a false positive for is at
    most~$\epsilon \cdot u$. That is, denoting the number of positives returned
    by~$F$ on the universe's keys on a dataset~$S$ by $E(S)$, we have
    $$ \mathbb{E}_S[E(S)] \leq \epsilon \cdot u + N, $$
    where once again, the
    expectation ranges over the random choice of~$S$. Note that the addition
    of~$N$ on the right-hand side is to account for the true positives returned
    by~$F$.

    Given~$F$, we can represent the exact set~$S$ as follows. We first store
    the internal representation of~$F$. We denote the size of this
    representation, in bits, when~$F$ encodes the set~$S$, by~$s(S)$. We then
    simply store which of the~$E(S)$ positives $F$ returns are true positives.
    One can do this using an efficient code of size~$\log_2 {E(S) \choose N}$.
    Since representing a random~$S$ requires at least as much memory as its
    entropy, which is~$\log_2 {u \choose N}$, it must be that
    $$ \mathbb{E}_S \left[ s(S) + \log_2 {E(S) \choose N} \right] \geq \log_2 {u \choose N}. $$
    Using the linearity of expectation and rearranging, this implies that
    $$ \mathbb{E}_S [s(S)] \geq \log_2 {u \choose N} - \mathbb{E}_S \left[ \log_2 {E(S) \choose N} \right]. $$
    Now, using the well-known inequalities~${n \choose k} \geq (n/k)^k$ and ${n
    \choose k} \leq (en/k)^k$~\cite{MitzenmacherProbability}, we lower bound
    the right-hand side as
    \begin{align*}
        \mathbb{E}_S [s(S)] & \geq \log_2 \left(\frac{u}{N}\right)^N - \mathbb{E}_S \left[ \log_2 \left(\frac{eE(S)}{N}\right)^N \right] \\
        & = N \cdot \log_2 u - N \cdot \mathbb{E}_S[\log_2 E(S)] - O(N). 
    \end{align*}
    Applying Jensen's inequality~\cite{MitzenmacherProbability} to the
    right-hand side then yields
    \begin{align*}
        \mathbb{E}_S [s(S)] & \geq N \cdot \log_2 u - N \cdot \log_2 \mathbb{E}_S \left[ E(S) \right] - O(N) \\
        & \geq N \cdot \log_2 u - N \cdot \log_2 \mathbb{E}_S \left[ \epsilon \cdot u + N \right] - O(N) \\
        & = N \cdot \log_2 \frac{1}{\epsilon} - O(N),
    \end{align*}
    proving that on average, $F$ must use~$\log_2 \frac{1}{\epsilon} -
    O(1)$ BPK of memory.
\end{proof}

\section{Proof of Theorem~\ref{theorem:data_aware_memory_lower_bound}}\label{proof:data_aware_memory_lower_bound}
\begin{proof}
    Our proof connects the components of Goswami et al.'s
    argument~\cite{Goswami} in a slightly different way than their approach.
    Goswami et al. prove their lower bound by showing that a range filter~$F$
    with a data-aware FPR guarantee can be used to encode a set of keys from a
    universe of size~$u \gg N, R$. Specifically, they restrict their attention
    to encoding sets of keys chosen from a family of \emph{$R$-Well-Separated
    Sets}. Any set~$S$ in this family satisfies the following two conditions:
    \begin{itemize}
        \item For any $x, y \in S, x \neq y$, it must be that $|x-y| \geq 2R$.
        \item $\min(S) \geq 2R-1$ and $\max(S) \leq u-2R$.
    \end{itemize}
    Goswami et al. prove that this family of sets is large:
    \begin{lemma}[Lemma~2 of \cite{Goswami}]
        There are at least~$\frac{(u-4NR)^N}{N!} \geq
        \left(\frac{u-4NR}{N}\right)^N$ many $R$-well-separated sets of keys of
        size~$N$ in a universe of size~$u$.
        \label{lemma:well_separated_set_count}
    \end{lemma}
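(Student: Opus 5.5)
We would prove the lemma by a direct greedy counting argument: count \emph{ordered} sequences $(x_1,\dots,x_N)$ of distinct keys whose underlying set is $R$-well-separated, then divide by the $N!$ orderings of each set.

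\textbf{Step 1 (available range).} The second condition restricts every key to the interval $[2R-1,\,u-2R]$. Assuming the universe is $\{0,\dots,u-1\}$, this interval contains $u-4R+2$ points. (Under a slightly different indexing convention the count changes by $O(1)$, which the slack below absorbs.)

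\textbf{Step 2 (sequential choice).} Choose $x_1,\dots,x_N$ one at a time. When choosing $x_{i+1}$, each previously chosen $x_j$ forbids only the points $y$ with $|y-x_j|<2R$. There are at most $4R-1$ such points. Hence the number of admissible choices for $x_{i+1}$ is at least
\[
u-4R+2-i(4R-1)\;\ge\; u-4(i+1)R\;\ge\; u-4NR,
\]
for every $0\le i\le N-1$. Here we assume $u\ge 4NR$, which holds in the regime $u\gg N,R$; otherwise the bound is vacuous or trivial. Any sequence built this way has pairwise gaps of at least $2R$, and all its keys lie in the allowed interval, so its underlying set is $R$-well-separated. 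Multiplying over the $N$ steps gives at least $(u-4NR)^N$ ordered sequences.

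\textbf{Step 3 (unordering).} Every $R$-well-separated set of size $N$ corresponds to exactly $N!$ orderings, so the number of such sets is at least $\frac{(u-4NR)^N}{N!}$. Finally, $N!\le N^N$ yields the weaker form $\left(\frac{u-4NR}{N}\right)^N$.

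The only delicate point is the off-by-one bookkeeping in Steps 1 and 2: checking that the $+2$ from the interval size and the $-i$ from forbidding $4R-1$ rather than $4R$ points per key keep the per-step count at least $u-4NR$. Everything else is routine.
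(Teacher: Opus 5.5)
Your proof is correct. The allowed interval $[2R-1,u-2R]$ has $u-4R+2$ points, and each earlier key rules out at most $4R-1$ of them. So the $(i+1)$-st choice has at least $u-4(i+1)R+i+2\ge u-4NR$ options. Dividing by $N!$ is exact, because every ordering of a well-separated set is produced by the greedy process.

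The paper gives no proof of its own; it simply imports Lemma~2 from Goswami et al. Your sequential-choice count is a complete, self-contained argument for that bound.

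One small correction: the hypothesis $u\ge 4NR$ is genuinely required, not merely convenient. For even $N$ and $u<4NR$ the right-hand side is positive and can exceed the true count. For example, $u=1$, $N=2$, $R=1$ gives $49/2$, while there are $0$ such sets. So "vacuous or trivial" is not accurate, and the lemma should be read with that hypothesis. This does not affect the paper, which only applies the lemma with $u\gg N,R$.
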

    They show that if the filter~$F$ is ``too accurate or space-efficient,''
    the encoding will surpass the theoretical entropy limit, which is
    impossible. Specifically, they show through an encoding argument that the
    range filter~$F$ allows for representing any set of keys from this family,
    thus yielding the lower bound.

    To generalize their bound to any well-behaved dataset, as defined in
    Definition~\ref{definition:well-behaved},
    we consider sets of~$N$ keys whose keys are chosen uniformly at random.
    Doing so generalizes the lower bound to semi-robust filters, since the
    uniform distribution is trivially well-behaved.
    Lemma~\ref{lemma:well_separated_set_count}
    implies that the probability of a uniformly random set of keys being well
    separate is at least
    \begin{align*}
        \frac{\frac{(u-4NR)^N}{N!}}{{u \choose N}} & = \frac{\frac{(u-4NR)^N}{N!}}{\frac{\prod_{i=0}^{N-1} (u-i)}{N!}} = \prod_{i=0}^{N-1} \frac{u-4NR}{u-i} \\
        & = \prod_{i=0}^{N-1} \left( 1 - \frac{4NR-i}{u-i} \right) = 1 - o(1).
    \end{align*}
    Here, the~$o(1)$ term vanishes as the universe size~$u$ increases. This
    implies that when~$u \gg N,R$, with overwhelming probability, a uniformly
    distributed dataset will also be~$R$-well-separated. When this is the case,
    by Goswami et al.'s lower bound, $F$ must consume at least~$\log_2
    \frac{R^{1-O(\epsilon)}}{\epsilon} - O(1)$ BPK of memory. Therefore, for
    uniformly distributed data, $F$ must use an average memory of at
    least~$(1-o(1)) \cdot \log_2 \frac{R^{1-O(\epsilon)}}{\epsilon} - O(1)$
    BPK, which can be made arbitrarily close to the original~$\log_2
    \frac{R^{1-O(\epsilon)}}{\epsilon} - O(1)$ bound.
\end{proof}

\section{Proof of Theorem~\ref{theorem:combined_range_filtering_lower_bound}}\label{proof:combined_range_filtering_lower_bound}
\begin{proof}
    Our proof follows an encoding argument similar to that of Goswami et
    al.~\cite{Goswami}. We use a range filter~$F$ with qualities described in
    the theorem to encode a set of uniformly random keys from a universe of
    size~$u \gg N, R$. If the filter~$F$ is ``too accurate or
    space-efficient'', our encoding will surpass the theoretical entropy limit,
    resulting in a contradiction and implying the theorem.

    We consider a set~$S$ of $N$ uniformly random keys from the universe. We
    encode~$S$ by querying~$F$ to determine whether ranges coming from the
    following classes contain keys:
    \begin{enumerate}[label=(\Alph*)]
        \item All ranges of the form~$[k \cdot R', k \cdot R' + R')$ for
            some~$R' \gg R$. Since~$F$ provides a data-oblivious FPR guarantee,
            it returns at most an~$\epsilon$ fraction of false positives for
            these ranges for a random~$S$.
        \item Ranges of the form~$[k \cdot 2R, k \cdot 2R + 2R)$ overlapping
            with the actual non-empty ranges of the previous class.
        \item For each~$x \in S$ and $i=1,\dots,\lceil \log_2 R \rceil + 1$, the
            left and right \emph{Level-$i$-Covering} intervals of $x$, i.e., 
            $$ I^l_i(x) = \left[ \left\lfloor \frac{x}{2^i} \right\rfloor \cdot 2^i, \left\lfloor \frac{x}{2^i} \right\rfloor \cdot 2^i + 2^{i-1} \right), $$
            $$ I^r_i(x) = \left[ \left\lfloor \frac{x}{2^i} \right\rfloor \cdot 2^i + 2^{i-1}, \left\lfloor \frac{x}{2^i} \right\rfloor \cdot 2^i + 2^i \right). $$
            We will supplement the filter with enough additional information
            that the decoding algorithm can identify and query the above ranges
            without knowing the~$x \in S$.
    \end{enumerate}
    We denote the number of positives returned by the filter~$F$ for
    class{\nobreakdash-}(A) ranges by~$A(F)$, class-(B) ranges by $B(F)$, and
    class-(C) ranges by $C(F)$. The data-aware and data-oblivious guarantees
    of~$F$ imply that
    \begin{align*}
        & \mathbb{E}_S[A(F)] \leq \epsilon \cdot \frac{u}{R'}, \\
        & \mathbb{E}_F[B(F)] \leq \epsilon \cdot N \cdot \frac{R'}{2R} = \epsilon \cdot \frac{NR'}{2R}, \\[3pt]
        & \mathbb{E}_F[C(F)] \leq \epsilon \cdot N \lceil \log_2 R \rceil.
    \end{align*}
    Note that, in line with the data-oblivious and data-aware FPR notions, the
    expectation in the first inequality is taken over the random choice of the
    dataset~$S$, while the expectations in the other two ineqalities are taken
    over the internal randomness used within the filter~$F$. The second line
    follows from there being at most~$N$ truly non-empty class-(A) ranges, and
    the third line follows from there being~$N$ keys within $S$.

    Similarly to the proof of Goswami et al.~\cite{Goswami}, we encode the set
    of keys~$S$ by applying the following three steps:
    \begin{enumerate}
        \item We write the internal representation of~$F$.

        \item Let~$\mathcal{A}$ denote the set of class-(A) intervals that were
            deemed non-empty by the filter. We represent the set of truly
            non-empty intervals, of which there are at most~$N$, and append it
            to the encoding. This set can be represented by first encoding the
            true number of non-empty intervals~$N'$ in $\log_2 N$ bits and
            using a code of length~$\log_2 {|\mathcal{A}| \choose N'} \leq
            \log_2 {|\mathcal{A}| \choose N}$ bits to represent the set itself.

        \item Let $\mathcal{B}$ denote the set of class-(B) intervals
            overlapping with the $N$ truly non-empty class-(A) intervals that
            are also deemed non-empty by the filter. Note that this set of
            exact intervals is known from the optimal encoding constructed in
            the previous step. As before, we encode the actual set of such
            non-empty intervals by first representing the true number of
            non-empty intervals~$N''$ in at most $\log_2 N$ bits and using a
            code of length~$\log_2 {|\mathcal{B}| \choose N''} \leq \log_2
            {|\mathcal{B}| \choose N}$ bits to represent the set itself.

        \item We iterate over the intervals of~$\mathcal{B}$ from left to right
            and for each, encode the number of keys from~$S$ it contains in
            unary. The total space cost of doing so is~$2N$ bits. Then, for
            each of these non-empty intervals such as~$I^*$, we employ the
            following recursive procedure. For all~$i$ from $\lceil \log_2 R
            \rceil+1$ down to 1, we check whether~$F$ returns a positive for
            both level-$i$-covering intervals contained within the non-empty
            level-$(i+1)$-covering intervals. If not, the filter itself
            indicates that the keys algorithm must encode/decode are in either
            the left or right interval. Otherwise, we must indicate the number
            of keys that appear in each half to disambiguate the process. To
            this end, we consider two cases based on the number of keys~$k$ within
            the current interval of the recursion. Note that~$k$ is known from
            either the unary encoding from the beginning of this step or the
            information recorded in the upper levels of the recursion.
            If~$k=1$, we store a single bit to indicate whether the key in
            question is in the left or the right half interval. If~$k>1$, we
            encoding in~$\log_2 (k+1)$ bits the number of keys within the left
            interval, which also determines the number of keys contained in the
            right interval. The algorithm then recurses on each non-empty
            interval.

            Notice how the number of false positives returned by~$F$ for
            type-(C) intervals is at most the number of left and right interval
            pairs that both result in a positive. This, combined with the two
            encoding cases of the recursive procedure above, implies that this
            step adds at most~$C(F)+O(N)$ bits to the encoding.
    \end{enumerate}
    The last two steps of the encoding above are where we crucially use the
    data-aware guarantee of~$F$, since the ranges we query are based on the
    knowledge of non-empty class-(A) and class-(B) ranges.

    A decoding algorithm similar to Goswami et. al.'s~\cite{Goswami} can
    recover the set~$S$ from the above encoding by simply reversing the
    encoding. We thus analyze the size of the encoding and compare it to the
    entropy of~$S$. Denoting the size of the filter~$F$ when it encodes $S$ by
    $s(S)$, the total size of the encoding is at most
    $$ s(S) + \log_2 {A(F) \choose N} + \log_2 {B(F) \choose N} + C(F) + O(\log N). $$
    On average, this size must be at least as large as the entropy
    of a uniformly random set. Since the logarithmic term is insignificant
    compared to the other terms, we can ignore it and deduce that
    $$ \mathbb{E}_S[s(S)] + \mathbb{E}_F\left[\log_2 {A(F) \choose N}\right] + \mathbb{E}_F\left[\log_2 {B(F) \choose N}\right] + \mathbb{E}_F[C(F)] \geq \log_2 {u \choose N}. $$
    By rearranging and using the well-known inequalities~${n \choose k} \geq
    (n/k)^k$ and ${n \choose k} \leq (en/k)^k$~\cite{MitzenmacherProbability},
    we get 
    \begin{align*}
        \mathbb{E}_S[s(S)] \geq \; & \log_2 {u \choose N} - \mathbb{E}_F\left[\log_2 {A(F) \choose N}\right] - \mathbb{E}_F\left[\log_2 {B(F) \choose N}\right] \\ 
        & - \mathbb{E}_F[C(F)] \\ 
        \geq \; & N \cdot \log_2 \frac{u}{N} - N \cdot \mathbb{E}_F\left[ \log_2 \frac{eA(F)}{N} \right] \\ 
        & - N \cdot \mathbb{E}_F\left[ \log_2 \frac{eB(F)}{N} \right] - \mathbb{E}_F[C(F)] \\
        = \; & N \cdot \left( \log_2 \frac{u}{N} - \mathbb{E}_F\left[ \log_2 \frac{A(F)}{N} \right] - \mathbb{E}_F\left[ \log_2 \frac{B(F)}{N} \right] \right) \\ 
        & - \mathbb{E}_F[C(F)] - O(N).
    \end{align*}
    Applying Jensen's inequality~\cite{MitzenmacherProbability} to the
    right-hand side then yields
    \begin{align*}
        \mathbb{E}_S[s(S)] \geq \; & N \cdot \left( \log_2 \frac{u}{N} - \log_2 \frac{\mathbb{E}_F[A(F)]}{N} - \log_2 \frac{\mathbb{E}_F[B(F)]}{N} \right) \\ 
        & - \mathbb{E}_F[C(F)] - O(N) \\
        \geq \; & N \cdot \left( \log_2 \frac{u}{N} - \log_2 \frac{\epsilon \cdot u/R'}{N} - \log_2 \frac{\epsilon \cdot NR'/2R}{N} \right) \\ 
        & - \epsilon \cdot N \lceil \log_2 R \rceil - O(N) \\
        = \; & N \cdot \left( (1-2\epsilon) \cdot \log_2 R + 2 \cdot \log_2 \frac{1}{\epsilon} \right) - O(N) \\
        = \; & N \cdot \log_2 \frac{R^{1-O(\epsilon)}}{\epsilon^2} - O(N)
    \end{align*}
    Thus, on average for a random uniformly distirbuted dataset~$S$, the
    filter~$F$ must consume at least~$\log_2 \frac{R^{1-O(\epsilon)}}{\epsilon^2} -
    O(1)$ BPK of memory, thereby proving the theorem.
\end{proof}

\fi 

\end{document}
\endinput